\documentclass[a4paper,UKenglish,cleveref, autoref, thm-restate]{lipics-v2021}

\pdfoutput=1 \hideLIPIcs

\title{Complexity of Clique-Guarded First-Order Logic with Counting}

\author{Steffen {van Bergerem}}{Humboldt-Universität zu Berlin, Germany}{steffen.van.bergerem@hu-berlin.de}{https://orcid.org/0000-0002-5212-8992}{}
\author{Johannes Friedrich Lange}{Humboldt-Universität zu Berlin, Germany}{langefri@hu-berlin.de}{https://orcid.org/0009-0001-0791-7327}{}
\author{Nicole Schweikardt}{Humboldt-Universität zu Berlin, Germany}{schweikn@hu-berlin.de}{https://orcid.org/0000-0001-5705-1675}{}

\authorrunning{S.\ van Bergerem, J.\,F.\ Lange, and N.\ Schweikardt}

\Copyright{Steffen van Bergerem, Johannes F. Lange, and Nicole Schweikardt}

\ccsdesc[500]{Theory of computation~Finite Model Theory}
\ccsdesc[500]{Theory of computation~Complexity theory and logic}
\ccsdesc[500]{Computing methodologies~Supervised learning}

\keywords{First-order logic with counting, VC dimension, graph dimension,
algorithmic metatheorems, enumeration, nowhere dense, locally bounded expansion, PAC learning}

\category{} 

\relatedversion{}

\funding{This work was funded by the Deutsche Forschungsgemeinschaft
(DFG, German Research Foundation) --- project number 541000908
(gefördert durch die Deutsche Forschungsgemeinschaft (DFG) --- Projektnummer 541000908).}

\nolinenumbers

\EventEditors{Michal Kouck\'{y} and Daniela Petri\c{s}an}
\EventNoEds{2}
\EventLongTitle{51st International Symposium on Mathematical Foundations of Computer Science (MFCS 2026)}
\EventShortTitle{MFCS 2026}
\EventAcronym{MFCS}
\EventYear{2026}
\EventDate{August 24--28, 2026}
\EventLocation{Paris, France}
\EventLogo{}
\SeriesVolume{386}
\ArticleNo{15}

\usepackage{stmaryrd}
\usepackage{xspace}
\usepackage{aligned-overset}
\usepackage{thmtools}
\usepackage[capitalise,noabbrev]{cleveref}
\usepackage{xcolor}

\definecolor{ibm-ultramarine}{HTML}{648fff}
\definecolor{ibm-indigo}{HTML}{785ef0}
\definecolor{ibm-magenta}{HTML}{dc267f}
\definecolor{ibm-orange}{HTML}{fe6100}
\definecolor{ibm-gold}{HTML}{ffb000}

\newcommand{\N}{\ensuremath{\mathbb{N}}}
\newcommand{\Npos}{\ensuremath{\mathbb{N}_{\geq 1}}}
\newcommand{\Z}{\ensuremath{\mathbb{Z}}}
\newcommand{\Q}{\ensuremath{\mathbb{Q}}}
\newcommand{\Qpos}{\ensuremath{\mathbb{Q}_{> 0}}}
\newcommand{\R}{\ensuremath{\mathbb{R}}}

\newcommand{\A}{\ensuremath{\mathcal{A}}}
\newcommand{\B}{\ensuremath{\mathcal{B}}}
\newcommand{\G}{\ensuremath{\mathcal{G}}}

\newcommand{\K}{\ensuremath{\mathcal{K}}}
\DeclareMathOperator{\overlap}{overlap}

\newcommand{\D}{\ensuremath{\mathcal{D}}}
\newcommand{\Hypo}{\ensuremath{\mathcal{H}}}
\DeclareMathOperator{\err}{err}

\renewcommand{\phi}{\varphi}
\renewcommand{\epsilon}{\varepsilon}

\newcommand{\logic}[1]{\ensuremath{\mathsf{#1}}}

\newcommand{\FO}{\logic{FO}}

\newcommand{\FOC}{\logic{FOC}}
\newcommand{\cgFOC}{\logic{cgFOC}}

\newcommand{\vars}{\logic{vars}}

\newcommand{\sem}[1]{\llbracket #1 \rrbracket} \newcommand{\Land}{\ensuremath{\bigwedge}}
\newcommand{\Lor}{\ensuremath{\bigvee}}
\newcommand{\FOCCount}[2]{\ensuremath{\# {#1}.{#2}}}
\newcommand{\I}{\ensuremath{\mathcal{I}}} 

\newcommand{\SType}[3]{\ensuremath{\textup{S}^{#1}_{\!#2}(#3)}}
\newcommand{\MType}[3]{\ensuremath{M^{#1}_{\!#2}(#3)}}

\newcommand{\MTypeAVW}[1]{\MType{#1}{\A}{V/W}}
\newcommand{\STypePhiA}[1]{\SType{\phi}{\A}{#1}}

\newcommand{\STypeTA}[1]{\SType{t}{\A}{#1}}

\newcommand{\MTypeTVW}[1]{\MType{t}{#1}{V/W}}
\newcommand{\STypePhiAVW}{\SType{\phi}{\A}{V/W}}
\newcommand{\STypeTAVW}{\SType{t}{\A}{V/W}}

\newcommand{\MTypeTAVW}{\MType{t}{\A}{V/W}}
\newcommand{\absSType}[3]{\bigabs{\mkern1mu\SType{#1}{#2}{#3}}}

\newcommand{\absSTypePhiA}[1]{\absSType{\phi}{\A}{#1}}

\newcommand{\absSTypePhiAVW}{\absSType{\phi}{\A}{V/W}}

\DeclareMathOperator{\ar}{ar}

\DeclareMathOperator{\qr}{qr}
\DeclareMathOperator{\tp}{tp}
\DeclareMathOperator{\ltp}{ltp}
\DeclareMathOperator*{\free}{free}
\DeclareMathOperator{\col}{col}
\DeclareMathOperator{\wcol}{wcol}

\newcommand{\closure}{\ensuremath{\textup{cl}}}

\newcommand{\complexityclass}[1]{\ensuremath{\mathsf{#1}}}

\newcommand{\AWstar}{\ensuremath{\complexityclass{AW}[*]}}

\renewcommand{\O}{\ensuremath{\mathcal{O}}}
\newcommand{\bigO}{\O}

\newcommand{\bigmid}{:}

\newcommand{\set}[1]{\ensuremath{\{#1\}}} \newcommand{\setc}[2]{\ensuremath{\set{#1:#2}}}  \newcommand{\bigset}[1]{\ensuremath{\bigl\{ #1 \bigr\}}}
\newcommand{\bigsetc}[2]{\bigset{#1 \bigmid #2}}

\newcommand{\abs}[1]{\left\lvert#1\right\rvert}
\newcommand{\bigabs}[1]{\bigl\lvert#1\bigr\rvert}
\newcommand{\smallabs}[1]{\lvert#1\rvert}
\newcommand{\norm}[1]{\left\lVert#1\right\rVert}
\newcommand{\deff}{\coloneqq}
\newcommand{\ffed}{\eqqcolon}

\DeclareMathOperator{\dist}{dist}
\DeclareMathOperator{\rank}{rank}

\newcommand{\name}[1]{#1}

\newcommand{\ie}{\mbox{i.e.}\xspace}

\newcommand{\eg}{\mbox{e.g.}\xspace}

\newcommand{\iid}{\mbox{i.i.d.}\xspace}

\newcommand{\UC}{\textup{\textsc{uc}}} \newcommand{\VC}{\textup{\textsc{vc}}}

\newcommand{\C}{\ensuremath{\mathcal{C}}}

\renewcommand{\P}{\ensuremath{\mathbb{P}}}
\newcommand{\Pred}{\ensuremath{\mathsf{P}}}

\newcommand{\CS}{\ensuremath{\mathcal{S}}}
\newcommand{\T}{\ensuremath{\mathcal{T}}}

\newcommand{\tuple}[1]{\ensuremath{\bar{#1}}}

\newcommand{\tc}{\tuple{c}}

\newcommand{\tu}{\tuple{u}}
\newcommand{\tv}{\tuple{v}}
\newcommand{\tw}{\tuple{w}}
\newcommand{\tx}{\tuple{x}}
\newcommand{\ty}{\tuple{y}}
\newcommand{\tz}{\tuple{z}}

\newcommand{\neighb}[3]{\ensuremath{N_{#1}^{#2}(#3)}} \newcommand{\neighbr}[2]{\neighb{r}{#1}{#2}} \newcommand{\neighbA}[2]{\neighb{#1}{\A}{#2}} \newcommand{\Neighb}[3]{\ensuremath{\mathcal{N}_{#1}^{#2}(#3)}} \newcommand{\Neighbr}[2]{\Neighb{r}{#1}{#2}} \newcommand{\NeighbA}[2]{\Neighb{#1}{\A}{#2}} \newcommand{\nrA}[1]{\ensuremath{\neighb{r}{\A}{#1}}}
\newcommand{\NrA}[1]{\ensuremath{\Neighb{r}{\A}{#1}}}

\usepackage{tikz}
\usetikzlibrary{backgrounds,calc,shadows,shapes.geometric,shapes.symbols}
\tikzset{dropshadow/.style={drop shadow={opacity=.4, shadow xshift=.25ex, shadow yshift=-.25ex}},
  vertex template/.style={draw, semithick, circle, inner sep=.8ex, fill=white},
  small vertex template/.style={draw, semithick, circle, inner sep=.3ex, fill=white},
  vertex/.style={vertex template, dropshadow},
  small vertex/.style={small vertex template, dropshadow},
  edge vertex/.style={vertex, inner sep=.3ex, font=\scriptsize},
  first number relation/.style={vertex template, star, star points=5, star point ratio=2, inner sep=.3ex, fill=ibm-ultramarine},
  first number vertex/.style={vertex, first number relation},
  second number relation/.style={vertex template, regular polygon, regular polygon sides=6, inner sep=.475ex, fill=ibm-orange},
  second number vertex/.style={vertex, second number relation},
  edge/.style={thick},
}
\newcommand{\firstvertexrelation}{\begin{tikzpicture}\node[first number relation] {};\end{tikzpicture}}
\newcommand{\secondvertexrelation}{\begin{tikzpicture}\node[second number relation] {};\end{tikzpicture}}
\newcommand{\smallfirstvertexrelation}{\text{\footnotesize \firstvertexrelation}}
\newcommand{\smallsecondvertexrelation}{\text{\footnotesize \secondvertexrelation}}
 
\begin{document}

\maketitle

\begin{abstract}
  We introduce \emph{clique-guarded first-order logic with counting} (\(\cgFOC\)),
  a fragment of the first-order logic with counting \(\FOC\)
  [Kuske and Schweikardt, LICS 2017],
  and we study the complexity of this fragment.
  In particular, we prove computable upper bounds on the Vapnik--Chervonenkis (VC) dimension
  of \(\cgFOC\) formulas and on the graph dimension of \(\cgFOC\) counting terms
  on nowhere dense classes of relational structures.
  Furthermore, we show algorithmic metatheorems for \(\cgFOC\)
  for query answering, enumeration, and probably approximately correct (PAC) learning
  for Boolean and multiclass classification problems
  on classes of locally bounded expansion.
  On the other hand, we show that a slight extension of \(\cgFOC\) is already intractable on trees.
\end{abstract}

\section{Introduction}
\label{sec:intro}

The complexity of first-order logic (\(\FO\))
has been studied intensively for several complexity measures over the last decades.
For example, it has been studied in terms of the \emph{Vapnik--Chervonenkis (VC) dimension},
which has been introduced in the 1970s in model theory and learning theory
\cite{VapnikChervonenkis1971,Sauer1972,Shelah1972}
and measures the complexity of set systems.
In Valiant's framework of \emph{probably approximately correct (PAC) learning} \cite{Valiant_PAC},
a Boolean concept class is learnable if and only if it has finite VC dimension.
Grohe and Tur{\'{a}}n \cite{GroheTuran_Learnability} showed that set systems
definable by \(\FO\) formulas on classes of bounded local clique-width,
including classes of bounded genus and classes of bounded degree,
have bounded VC dimension.
Adler and Adler~\cite{AdlerAdler_NowhereDenseVC2014} generalised this
to all \emph{nowhere dense} graph classes,
a robust notion of sparsity introduced by Nešetřil
and Ossona de Mendez~\cite{NesetrilOssonaDeMendez_NowhereDense2011}.
The result of \cite{AdlerAdler_NowhereDenseVC2014} has only recently been generalised
by Braunfeld, Dawar, Eleftheriadis, and Papadopoulos to nowhere dense classes
of relational structures~\cite{BraunfeldDawarEleftheriadisPapadopoulos_NIP2023}.
While the proofs in
\cite{AdlerAdler_NowhereDenseVC2014,BraunfeldDawarEleftheriadisPapadopoulos_NIP2023}
rely on non-constructive compactness arguments for first-order logic,
Pilipczuk, Siebertz, and Toru{\'{n}}czyk \cite{PilipczukSiebertzTorunczyk_Types2018}
gave a combinatorial proof for the result of \cite{AdlerAdler_NowhereDenseVC2014},
yielding a computable upper bound on the VC dimension for \(\FO\) formulas
on effectively nowhere dense graph classes.

With respect to computational complexity,
in a seminal result, Grohe, Kreutzer, and Siebertz \cite{GroheKreutzerSiebertz_NowhereDense2017}
gave an almost-linear-time algorithm for the \(\FO\) model-checking problem
on nowhere dense graph classes.
That is, for every nowhere dense graph class \(\C\),
they showed that there is an algorithm that checks
whether a given \(\FO\) sentence \(\phi\) holds in a given graph \(G \in \C\)
in time \(\bigO(n^{1+\epsilon})\) for every \(\epsilon > 0\),
where \(n\) is the number of vertices in the graph,
and the constants in the \(\bigO\)-notation may depend on \(\C\), \(\phi\), and \(\epsilon\).
This generalises previous results for classes of bounded degree
\cite{Seese_LinearTimeBoundedDegree1996},
bounded expansion,
and locally bounded expansion
\cite{DvorakKralThomas_LocallyBoundedExpansion2013}.
Kreutzer and Dawar \cite{KreutzerDawar_2009} showed that
for classes that are closed under taking subgraphs, the result is optimal,
\ie for every such graph class that is \emph{not} nowhere dense,
the \(\FO\) model-checking problem is just as hard
as the model-checking problem on the class of all graphs,
which is hard for the parameterised complexity class \(\AWstar\).
Schweikardt, Segoufin, and Vigny \cite{SchweikardtSegoufinVigny_Enumeration2022}
generalised the result of \cite{GroheKreutzerSiebertz_NowhereDense2017}
to first-order testing and enumeration on nowhere dense classes.
On a nowhere dense class \(\C\) of relational structures,
this shows that for every \(\FO\) formula \(\phi\) and every relational structure \(\A \in \C\),
after an almost-linear-time preprocessing step,
we can check whether a given tuple of elements satisfies \(\phi\) on \(\A\) in constant time,
and we can enumerate all tuples satisfying \(\phi\) on \(\A\) with constant delay.

\paragraph*{First-order logic with counting}
Such \emph{algorithmic metatheorems},
which state that problems that can be formalised in a certain logic
can be solved efficiently on certain classes of inputs,
have also been obtained beyond first-order logic.
The first-order logic with counting \(\FOC\),
introduced by Kuske and Schweikardt in \cite{KuskeSchweikardt_FOCN},
extends first-order logic by \emph{counting terms} and \emph{numerical predicates}.
Counting terms are polynomials of \emph{$\#$-terms} of the form \(\FOCCount{\tx}{\phi}\),
which provide a means of counting the number of satisfying assignments of a formula \(\phi\).
The numerical predicates allow for the comparison of counting terms;
that is, for counting terms \(t_1, \dots, t_m\) and a numerical predicate \(\Pred \subseteq \Z^m\),
\(\Pred(t_1, \dots, t_m)\) is a formula.
For example, the numerical predicate \(\Pred_= \deff \setc{(i,i)}{i \in \Z}\)
facilitates checking that the results of two counting terms are equal.
The authors gave algorithms for constant-time testing
and constant-delay enumeration for \(\FOC\) on classes of bounded degree.
Beyond classes of bounded degree, Grohe and Schweikardt~\cite{GroheSchweikardt2018}
showed that the model-checking problem for \(\FOC\)
is already \(\AWstar\)-hard on the very restrictive class of unranked trees of height \(3\).
The article \cite{GroheSchweikardt2018} also defined the fragment \(\FOC_1\) of \(\FOC\)
and showed that the model-checking problem for \(\FOC_1\) can be solved in almost-linear time
on nowhere dense classes.
The logic \(\FOC_1\) restricts \(\FOC\) by requiring that for every formula of the form
\(\Pred(t_1, \dots, t_m)\),
the counting terms \(t_1, \dots, t_m\) have only one free variable in total.
Generalising the results of \cite{PilipczukSiebertzTorunczyk_Types2018} from \(\FO\) to \(\FOC_1\),
\cite{vanBergeremSchweikardt_VC2025}~proves a computable upper bound on the VC dimension
of \(\FOC_1\) formulas on effectively nowhere dense graph classes.
Toru{\'{n}}czyk \cite{Torunczyk_Aggregate2020} introduced the logic \(\FO_G[\mathbb{C}]\),
which extends \(\FO\) by aggregation in semirings.
For \(\FO_G[\mathbb{C}]\) on classes of bounded expansion,
the author gave an algorithm for constant-time query answering
after linear-time preprocessing.

\paragraph*{Agnostic PAC learning}
Valiant's PAC-learning setting mentioned above has been generalised
by Haussler \cite{Haussler_PAC} to the \emph{agnostic-PAC-learning} setting.
In this setting, within the logical framework for Boolean classification problems
from \cite{GroheTuran_Learnability},
we assume a probability distribution \(\D\) on \(A^\ell \times \set{0,1}\)
for a relational structure \(\A\) with universe \(A\) and for some \(\ell \in \Npos\).
Our goal is to find a function \(h \colon A^\ell \to \set{0,1}\),
called a \emph{hypothesis},
that (approximately) minimises the probability of making an error on a randomly drawn example,
that is, it should minimise \(\Pr_{(\tw, \lambda) \sim \D} \bigl(h(\tw) \neq \lambda\bigr)\).
Hypotheses are represented as pairs \((\phi, \tv)\),
where \(\phi(\tx, \ty)\) is a formula with \(\abs{\ty} = \ell\) and \(\tv \in A^{\abs{\tx}}\).
This represents the hypothesis \(h \colon A^\ell \to \set{0,1}\)
with \(h(\tw) = 1\) if and only if \(\A \models \phi[\tv, \tw]\).
In an agnostic-PAC-learning algorithm, we assume oracle access to \(\D\).
The algorithm shall draw a few examples from \(\D\)
and then construct a hypothesis based on these examples.
To make the problem feasible at all,
one needs to limit the complexity of the hypotheses we are allowed to return;
we do this by setting an upper bound on the length of the formula \(\phi\),
which also upper bounds other relevant complexity measures
such as number of variables and quantifier rank.
The agnostic-PAC-learning problem for Boolean classification has been studied before
for first-order logic
on nowhere dense classes~\cite{vanBergeremGroheRitzert_Parameterized,vanBergerem_PhDThesis}
and for first-order logic and extensions of first-order logic by counting and weight aggregation
on classes of
small degree~\cite{GroheRitzert_FO,vanBergerem_FOCNLearning2025,vanBergeremSchweikardt_FOWA2021}.
In \cite{vanBergeremSchweikardt_Multiclass2025},
the authors generalise the logical framework for Boolean classification problems
from \cite{GroheTuran_Learnability} to multiclass classification problems,
where hypotheses are of the form \(h \colon A^\ell \to \Z\),
and counting terms from \(\FOC_1\) are used to define these hypotheses.
However, \cite{vanBergeremSchweikardt_Multiclass2025} only considers the so-called
\emph{consistent-learning} setting, which is conceptually much simpler than the
(agnostic-)PAC-learning setting.

There have been various approaches to generalise the VC dimension to the multiclass setting
in order to characterise multiclass learnability.
The \emph{Natarajan dimension} and the \emph{graph dimension}
were both introduced by Natarajan \cite{Natarajan_GraphDimension1989}.
As it turns out, neither of them characterise multiclass PAC learnability,
since the Natarajan dimension underestimates the complexity
(there is a non-learnable concept class of Natarajan dimension \(1\))
and the graph dimension overestimates the complexity
(there is a learnable concept class of infinite graph dimension).
Only recently, the characterisation of multiclass learnability
was resolved by~\cite{BrukhimCDMY_MulticlassLearnability2022}
via the \emph{Daniely--Shalev-Shwartz (DS) dimension},
which was introduced in \cite{DanielyShalev-Shwartz_ImproperLearning2014}.
However, the algorithmic approaches to solve the multiclass agnostic-PAC-learning problem
for classes of bounded DS dimension are much more involved than in the Boolean case.
Meanwhile, for classes of bounded graph dimension
(which implies that they also have bounded DS dimension),
the paper \cite{DanielySBS_GraphDimensionERM2015} shows that one can still use
the simpler \emph{empirical-risk-minimisation (ERM)} procedure,
which is also used in the Boolean case.

\paragraph*{Our contributions}
We introduce \emph{clique-guarded first-order logic with counting} (\(\cgFOC\)),
a fragment of \(\FOC\) that is substantially more expressive than \(\FOC_1\).
Instead of asking that the counting terms \(t_1, \dots, t_m\)
in a formula of the form \(\Pred(t_1, \dots, t_m)\) have only one free variable in total,
we require every pair of free variables of \(t_1, \dots, t_m\) to be guarded by an atom.
On graphs, this means that one can compare counting terms in formulas of the form
\(\Land_{\set{x,y} \in \binom{X}{2}} E(x,y) \land \Pred(t_1, \dots, t_m)\)
for \(X \deff \bigcup_{i=1}^m \free(t_i)\),
ensuring that the free variables of \(t_1, \dots, t_m\) form a clique in the graph.
On relational structures \(\A\), formulas in \(\cgFOC\) ensure
that the free variables of the counting terms \(t_1, \dots, t_m\)
form a clique in the Gaifman graph \(G_\A\) of the structure \(\A\).
The formal definition of the logic \(\cgFOC\) can be found in \cref{sec:cgfoc}.

In \cref{sec:graph-dimension}, as the main result of this paper,
we show that counting terms from \(\cgFOC\) have bounded graph dimension
(see \cref{sec:graph-dimension} for the definition)
on nowhere dense classes.
This result can be viewed as a bound on the expressive power of \(\cgFOC\).
\begin{restatable}{theorem}{graphDimension}
  \label{thm:graph-dimension}
  For every (effectively) nowhere dense graph class \(\C\),
  there is a (computable) function \(f \colon \cgFOC \to \N\)
  such that for every \(\cgFOC\) counting term \(t(\tx,\ty)\)
  and every \(\sigma\)-structure \(\A\) with \(\sigma \supseteq \sigma(t)\)
  and \(G_\A \in \C\),
  the graph dimension of \(t(\tx,\ty)\) in \(\A\) is at most \(f(t)\).
\end{restatable}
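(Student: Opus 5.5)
We follow the combinatorial route of \cite{PilipczukSiebertzTorunczyk_Types2018} and its extension to \(\FOC_1\) in \cite{vanBergeremSchweikardt_VC2025}. Recall that the graph dimension of \(t(\tx,\ty)\) in \(\A\) is the largest \(d\) for which there are \(\tw_1,\dots,\tw_d\) and labels \(c_1,\dots,c_d \in \Z\) such that the family
\[
  \bigsetc{\setc{i}{t^\A[\tv,\tw_i] = c_i}}{\tv \in A^{\abs{\tx}}}
\]
shatters \(\set{1,\dots,d}\). Equivalently, we must bound the VC dimension of the set systems defined by the family \(\setc{t(\tx,\ty) = y_c}{c \in \Z}\). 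This family has unboundedly many parameters \(c\), so it is not a single formula. The plan is therefore to bound, for a fixed parameter tuple \(\tv\), the number of distinct ``types'' \(\tw\) can have, uniformly in \(c\). Concretely, we aim for a type-counting statement: for every \(\A\) with \(G_\A\in\C\), every \(\epsilon>0\) and every \(V\subseteq A\), the number of distinct functions \(\tw \mapsto t^\A[\tv,\tw]\) restricted to \(\tv\in V^{\abs{\tx}}\) (the ``\(t\)-types over \(V\)'') is at most \(\bigO(\abs{V}^{\abs{\tx}+\epsilon})\), with computable constants. Counting-type bounds of this polynomial form yield bounded VC dimension by the standard argument. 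If \(d\) points are shattered, take \(V\) to be the union of the \(2^d\) shattering parameter tuples, so \(\abs{V}\le \abs{\tx}\,2^d\). The \(\tw_i\) must then have pairwise distinct \(t\)-types over \(V\), and moreover the \(2^d\) induced patterns force a contradiction once \(d\) is large compared with \(\log\) of the type bound. In the multiclass setting we apply this argument to the \(\ty\)-side by symmetry.

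The core is a locality/decomposition theorem for \(\cgFOC\) counting terms. Step one is a Gaifman-style normal form: every \(\cgFOC\) counting term \(t(\tx,\ty)\) is equivalent on all structures to a polynomial combination of local \(\#\)-terms and ``basic counting'' terms. The clique-guard is what makes this possible. Whenever a numerical predicate compares counting terms with several free variables, those variables lie in a clique of \(G_\A\), so they are pairwise at distance \(1\) and the subformula \(\Pred(t_1,\dots,t_m)\) is \(r\)-local around any one of them. The bad case of \cite{GroheSchweikardt2018}, where free variables of compared counts are far apart, is thereby excluded. I would prove this normal form by induction on the formula, treating numerical predicates over guarded tuples as new ``atoms'' whose truth depends only on a bounded-radius neighbourhood of the clique. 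This yields, for each \(\cgFOC\) formula, an equivalent \(\FO\)-like formula over an expanded signature. The added relations are defined by guarded numerical conditions, and each added relation holds only on cliques of \(G_\A\). Hence the Gaifman graph is unchanged, and nowhere denseness of \(\C\) is preserved.

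Step two handles the counting itself. Given the expansion, a \(\#\)-term \(\FOCCount{\tz}{\psi(\tx,\ty,\tz)}\) with \(\psi\) (now \(\FO\) over the expanded signature) decomposes by locality into a sum over connectivity patterns of the variables. Each summand is a product of counts of local neighbourhood configurations. As in \cite{vanBergeremSchweikardt_VC2025}, the value \(t^\A[\tv,\tw]\) is determined by three ingredients: the local \(r\)-types of \(\tw\) relative to \(\tv\); the projection of \(\tw\) onto a bounded-size ``closure'' \(\closure(V)\); and a bounded number of global integers that depend only on \(\tv\). Combining this with the polynomial bound on \(\abs{\closure(V)}\) in uniformly quasi-wide (nowhere dense) classes from \cite{PilipczukSiebertzTorunczyk_Types2018} gives the desired type-count bound.

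I expect the main obstacle to be step one: showing that the guarded numerical predicates can be absorbed into new relations without destroying the local-type machinery. The difficulty is that these predicates may nest counting terms whose own values are not local. To handle this, I would try first to prove a stronger inductive invariant: each \(\cgFOC\) formula \(\phi(\tx)\) is equivalent to a Boolean combination of formulas that are \(r\)-local around \(\tx\) in the expanded structure, together with sentences. Free variables not guarded together then only interact through \(\FO\) connectives. This is exactly the situation in which the \(\FO\) argument of \cite{PilipczukSiebertzTorunczyk_Types2018} applies, and computability of \(f\) follows from the effectiveness of every step.
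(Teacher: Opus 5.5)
Your Step one, which replaces each guarded \(\Pred(t_1,\dots,t_m)\) by a fresh relation that holds only on cliques so that \(G_\A\) is unchanged, matches \cref{lem:cgfoc-preprocessing}. The new relations are not local as you claim, since a \(t_i\) can be a ground term like \(\FOCCount{(z)}{z=z}\); this is harmless, because only the clique property is used. The genuine gap is the type-counting strategy. You claim that the number of distinct \(t\)-types over \(V\) is \(\bigO(\abs{V}^{\abs{\tx}+\epsilon})\), where a type is a function \(\tv\mapsto\sem{t(\tv,\tw)}^\A\) on \(V^{\abs{\tx}}\) for some \(\tw\in A^{\abs{\ty}}\). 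This is false for counting terms. So is your Step-two claim that \(\sem{t(\tv,\tw)}^\A\) is determined by bounded local types, a projection onto \(\closure(V)\), and finitely many integers depending on \(\tv\) alone. Take \(t(x,y)\deff\FOCCount{(z)}{E(y,z)}\) on disjoint unions of the stars \(K_{1,1},\dots,K_{1,n}\), a nowhere dense class, and let \(V=\set{v}\). The type of \(w\) over \(V\) is just \(\deg(w)\), so a one-element set already has \(n\) distinct types. No count of distinct \(\Z\)-valued rows can be bounded here; the paper notes that even the number of distinct rows of \(\MTypeTAVW\) for separated \(V,W\) is unbounded. Your transfer step is also oriented the wrong way. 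Having \(d\) distinct types over a set of size \(\le\abs{\tx}2^d\) yields no contradiction. You would need to count the types of the \(2^d\) parameter tuples over the \(\le\abs{\ty}d\) elements of \(\tw_1,\dots,\tw_d\), and that count is unbounded for the same reason.

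The missing idea is to measure the type matrix by its rank over \(\R\) rather than by its number of distinct rows. \Cref{lem:cgfoc-rank} bounds \(\rank\bigl(\MTypeTAVW\bigr)\) whenever \(V,W\) are \(r(t)\)-separated by a set of size \(m\). Rank absorbs unbounded values; the degree term above contributes a rank-one matrix. It also behaves well under the term constructors: constants give rank \(\le1\), sums add ranks, and products multiply them (Hadamard product). For a \(\#\)-term the paper proceeds in steps:
\begin{enumerate}
  \item use your Step one to reduce to \(\FO\);
  \item pass to Gaifman normal form;
  \item induct on the separator size, deleting one separator vertex and encoding its tuples in new relations;
  \item induct over connectivity patterns, where Feferman--Vaught writes \(\sem{\FOCCount{\tz}{\psi}}^\A\) as a bounded sum of products of a count depending only on \(\tv\) and a count depending only on \(\tw\), minus terms whose pattern has fewer components.
\end{enumerate}
The theorem then follows via the ladder index (\cref{thm:ladder-index-term}). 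A graph-shattered set of size \(d\) yields a \(t\)-ladder of length \(d\). Uniform quasi-wideness thins a long ladder to one whose tuples are pairwise separated by a small \(S\). Splitting it alternately into \(V\) and \(W\) and subtracting the rank-one matrix of labels \(g(\tw_j)\) gives a matrix with a triangular zero pattern. Its rank is therefore at least \(\abs{I_1}-2\), contradicting the rank bound.
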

Our proof relies on locality arguments and a lemma that bounds the rank of a `type matrix'.
To the best of our knowledge, we are the first to study graph dimension in a logical framework.
By the connection between the graph dimension and PAC learning described above,
\cref{thm:graph-dimension} enables us to solve multiclass PAC-learning problems on sparse classes,
where the task is to find a suitable hypothesis \(h\) that
maps tuples in \(A^\ell\) to values from the \emph{infinite} set \(\Z\).

As a direct consequence of \cref{thm:graph-dimension},
we obtain (computable) upper bounds on the VC dimension of \(\cgFOC\) formulas
on (effectively) nowhere dense classes of relational structures (see \cref{cor:VCdimension}).
This generalises the computable upper bounds on the VC dimension
of~\cite{PilipczukSiebertzTorunczyk_Types2018,vanBergeremSchweikardt_VC2025}
from \(\FO\) and \(\FOC_1\) to \(\cgFOC\)
and from classes of graphs to classes of relational structures.
In particular, this gives the first \emph{computable} upper bound on the VC dimension
of \(\FO\) formulas on effectively nowhere dense classes of relational structures.
This strengthens the result from~\cite{BraunfeldDawarEleftheriadisPapadopoulos_NIP2023},
and we even prove it for the much stronger logic \(\cgFOC\).

In addition to the complexity in terms of the graph dimension and the VC dimension,
we also study the computational complexity of evaluating \(\cgFOC\).
On classes of relational structures of locally bounded expansion,
we give constant-time query-answering and constant-delay enumeration algorithms for \(\cgFOC\)
with almost-linear-time preprocessing (see \cref{thm:answering-enumeration}).
This generalises the model-checking result from~\cite{GroheSchweikardt2018}
for classes of locally bounded expansion
from \(\FOC_1\) to \(\cgFOC\) and the query-answering and enumeration results
from~\cite{SegoufinVigny_LocallyBoundedExpansion2017} from \(\FO\) to \(\cgFOC\).

In \cref{sec:learning-formulas}, we combine the computable bounds on the graph dimension
and the VC dimension with the algorithmic metatheorems for \(\cgFOC\)
to develop agnostic-PAC-learning algorithms for \(\cgFOC\)-definable concepts.
In fact, we even solve a harder problem, \emph{agnostic PAC enumeration},
where we enumerate all hypotheses up to a given complexity bound with constant delay,
sorted by their accuracy (descending) and their complexity (ascending).
The first hypothesis we output will then be a correct solution
for the agnostic-PAC-learning problem.
We solve the multiclass agnostic-PAC-enumeration problem
on classes of locally bounded expansion for concepts definable via \(\cgFOC\) counting terms.
The algorithm we present (see \cref{thm:pac-learn-locally-bounded-expansion})
preprocesses the input in time polynomial in the size of the relational structure,
but the exponent depends on the desired complexity bound of the hypotheses.
For classes of bounded degree, we give a different algorithm that has only linear time preprocessing
and even works for the full logic \(\FOC\).
Moreover, for the restriction to Boolean classification,
we present an algorithm for the agnostic-PAC-enumeration problem for \(\cgFOC\)
that has almost-linear-time preprocessing on classes of locally bounded expansion.

Finally, in addition to our complexity analysis of \(\cgFOC\),
we also study variants of this logic.
In \(\cgFOC\), whenever we want to compare counting terms \(t_1, \dots, t_m\)
using a predicate \(\Pred \subseteq \Z^m\), we need to add guards that ensure that the
free variables of \(t_1, \dots, t_m\) form a clique,
so the free variables shall have pairwise distance at most \(1\).
Slightly weakening these guards leads to a logic
that is intractable on very simple graph classes.
As we discuss at the end of \cref{sec:cgfoc},
constructs of the form \(E(x,y) \land E(y,z) \land \Pred_=(t_1(x), t_2(z))\)
allow us to interpret the class of all graphs
in the class \(\T_2\) of coloured trees of height at most \(2\).
Hence, this would give us a formula with unbounded VC dimension
and an \(\AWstar\)-hard model-checking problem on \(\T_2\).
Moreover, this would give us a counting term with unbounded graph dimension on \(\T_2\),
and it is easy to show that it also has unbounded DS dimension and Natarajan dimension.

We also show that even the very restricted fragment of \(\cgFOC\)
with constructs of the form \(E(x,y) \land \Pred_=(t_1(x), t_2(y))\)
allows us to interpret the class of all graphs
in a class of coloured graphs of shrub-depth at most \(2\).
This shows that, while \(\cgFOC\) is well-suited for sparse classes of relational structures,
even much more restricted variants are too strong for very simple dense classes.
 \section{Preliminaries}
\label{sec:prelims}

We let \(\Z\), \(\N\), \(\Npos\), \(\Q\), and \(\Qpos\) denote the sets of
integers, non-negative integers, positive integers, rationals, and positive rationals, respectively.
For \(m, n \in \Z\), we let \([m, n] \deff \setc{i \in \Z}{m \leq i \leq n}\)
and \([n] \deff [1, n]\).
For a \(k\)-tuple \(\tv = (v_1, \dots, v_k)\),
we write \(\abs{\tv}\) to denote its \emph{length} \(k\),
and we let \(\tilde{v} \deff \set{v_1, \dots, v_k}\).
For a set \(X\), we let \(2^X \deff \setc{Y}{Y \subseteq X}\),
and for \(k \in \N\), we let \(\binom{X}{k} \deff \setc{Y}{Y \subseteq X, \abs{Y} = k}\).

\paragraph*{Graphs and Sparse Graph Classes}
When speaking of \emph{graphs}, we mean undirected graphs without self-loops.
For a graph \(G\), we write \(V(G)\) and \(E(G)\)
to denote its vertex set and edge set, respectively.
For \(V' \subseteq V(G)\), we write \(G[V']\) to denote the subgraph of \(G\) induced by \(V'\).

For two vertices \(v,w\) of a graph \(G\), we let \(\dist^G(v,w)\),
called the \emph{distance} between \(v\) and \(w\) in \(G\),
be the \emph{length} (that is, the number of edges) of a shortest path between \(v\) and \(w\).
If there is no such path, then \(\dist^G(v,w) \deff \infty\).
For \(r \in \N\), we let \(\neighbr{G}{v} \deff \setc{w \in V(G)}{\dist^G(v,w) \leq r}\)
be the \emph{\(r\)-neighbourhood} of \(v\) in \(G\),
we let \(\Neighbr{G}{v} \deff G[\neighbr{G}{v}]\) be the corresponding induced subgraph,
and we extend this to tuples \(\tv \in \bigl(V(G)\bigr)^k\) for \(k \in \Npos\)
by setting \(\neighbr{G}{\tv} \deff \bigcup_{i \in [k]} \neighbr{G}{v_i}\)
and \(\Neighbr{G}{\tv} \deff G[\neighbr{G}{\tv}]\).
We say that a graph \(G\) has \emph{radius at most \(r\)} if there
exists a \(v \in V(G)\) such that \(G = \Neighbr{G}{v}\).
For sets \(V, W, S \subseteq V(G)\),
we say that \emph{\(V\) and \(W\) are \(r\)-separated by \(S\) (in \(G\))}
if every path of length \(\leq r\) in \(G\) from a vertex in \(V\) to a vertex in \(W\)
contains a vertex from \(S\).
We use the same notion for tuples \(\tv\) and \(\tw\)
by considering the sets \(V=\tilde{v}\) and \(W=\tilde{w}\),
and we extend the notion to sets of tuples \(V'\) and \(W'\)
by considering the sets \(V = \bigcup_{\tv\in V'} \tilde{v}\)
and \(W = \bigcup_{\tw\in W'} \tilde{w}\).
If \(V\) and \(W\) are \(r\)-separated by the empty set,
then we say that \(V\) and \(W\) are \(r\)-separated.
A \emph{depth-\(r\) minor} of \(G\)
is a subgraph of a graph obtained from \(G\)
by contracting mutually vertex-disjoint connected subgraphs
of radius at most \(r\) to single vertices.

\begin{definition}
  \label{def:nowhere-dense}
  Let \(\C\) be a class of graphs.
We say that \(\C\) has \emph{bounded degree} if there is a constant \(d \in \N\)
  such that every graph \(G \in \C\) has maximum degree at most \(d\).

  \(\C\) has \emph{bounded expansion} if there is a function \(g \colon \N \to \N\)
  such that for every \(G \in \C\), \(r \in \N\),
  and every depth-\(r\) minor \(H\) of \(G\),
  it holds that \(\frac{\abs{E(H)}}{\abs{V(H)}} \leq g(r)\).

  \(\C\) has \emph{locally bounded expansion} if there is a function \(g \colon \N \times \N \to \N\)
  such that for every \(d \in \N\),
  the class of graphs \(\Neighb{d}{G}{v}\) of \(d\)-neighbourhoods of vertices \(v \in V(G)\)
  in graphs \(G \in \C\) has bounded expansion via the function \(g_d \colon \N \to \N, g_d(r) \deff g(d,r)\).
Equivalently, \(\C\) has locally bounded expansion
  if there is a function \(g \colon \N \times \N \to \N\)
  such that for every \(G \in \C\), \(v \in V(G)\), \(d, r \in \N\),
  and every depth-\(r\) minor \(H\) of \(\Neighb{d}{G}{v}\),
  it holds that \(\frac{\abs{E(H)}}{\abs{V(H)}} \leq g(d,r)\).

  Lastly, \(\C\) is \emph{nowhere dense} if there is a function \(g \colon \N \to \N\)
  such that for all \(r \in \N\), there is no graph \(G \in \C\) that contains
  the complete graph \(K_{g(r)}\) on \(g(r)\) vertices as a depth-\(r\) minor.

  If the function \(g\) above is computable,
  then we say that \(\C\) has \emph{effectively bounded expansion},
  \emph{effectively locally bounded expansion},
  or \(\C\) is \emph{effectively nowhere dense}, respectively.
\end{definition}

Every class of bounded degree has effectively bounded expansion,
every class of (effectively) bounded expansion has (effectively) locally bounded expansion,
and every class of (effectively) locally bounded expansion is (effectively) nowhere dense.
We refer to \cite{NesetrilOssonaDeMendez_Sparsity2012}
for an in-depth introduction of these notions.

\paragraph*{Relational Structures}
We use standard notation for structures,
and we refer to \cref{sec:appendix-notation} for details.
All structures in this paper are finite,
all signatures are relational and finite,
and relation symbols \(R\) have a non-negative \emph{arity} \(\ar(R) \in \N\).
Unless explicitly stated otherwise,
we use \(\A\) and \(\B\) and variations such as \(\A_1\), \(\A'\), etc.~to denote structures,
we denote their universes by \(A\), \(B\), \(A_1\), \(A'\), etc., respectively,
and \(\sigma(\A)\) denotes the signature of \(\A\).
We let the \emph{representation size} of a structure \(\A\) be
\(\norm{\A} \deff \abs{A} + \sum_{R \in \sigma(\A)} \max(\ar(R),1) \cdot \abs{R(\A)}\).

For our algorithmic results, we assume a linear order \(<\) on the universe \(A\).
For example, in our enumeration result, we will output tuples in lexicographic order
based on the linear order on \(A\),
and we could assume that \(A \subset \N\).
However, we do not assume that this linear order is realised by some relation symbol \(R \in \sigma\).
In fact, the classes of structures that we consider in this paper
will rule out linear orders of arbitrary size being implemented in \(\sigma\).

The \emph{\name{Gaifman} graph} of a structure \(\A\) is the graph \(G_\A\)
with \(V(G_\A) = A\) and where,
for \(v, w \in A\), we have \(\set{v,w} \in E(G_\A)\) if and only if \(v \neq w\)
and \(v,w \in \tilde{v}\) for some \(\tv \in R(\A)\) with \(R \in \sigma\).
The degree of a structure \(\A\) is the degree of its Gaifman graph \(G_\A\).
For a number \(r \in \N\) and elements \(v, w \in A\),
the notions of \emph{distance} \(\dist^\A(v,w)\) and
\emph{neighbourhood} \(\nrA{v}\)
naturally extend from graphs to structures by considering the Gaifman graph \(G_\A\) of \(\A\).
We write \(\NrA{v}\) to denote the induced substructure
\(\A[\nrA{v}]\) of \(\A\) on the set \(\nrA{v}\).

\paragraph*{First-Order Logic with Counting}
We assume familiarity with first-order logic (\(\FO\)).
Let \(\sigma\) be a signature,
and let \(\vars\) be a fixed and countably infinite set of variables.
A \emph{\(\sigma\)-interpretation} \(\I = (\A, \beta)\) consists of a
\(\sigma\)-structure \(\A\) and an \emph{assignment} \(\beta \colon \vars \to A\).
For \(k \in \N\), pairwise distinct variables \(x_1, \dots, x_k \in \vars\),
and elements \(v_1, \dots, v_k \in A\),
we write \(\I \frac{v_1, \dots, v_k}{x_1, \dots, x_k}\)
for the interpretation \((\A, \beta \frac{v_1, \dots, v_k}{x_1, \dots, x_k})\),
where \(\beta \frac{v_1, \dots, v_k}{x_1, \dots, x_k}\) is the assignment \(\beta'\) with
\(\beta'(x_i) = v_i\) for every \(i \in [k]\)
and \(\beta'(y) = \beta(y)\) for all \(y \in \vars \setminus \set{x_1, \dots, x_k}\).

We consider the logic \(\FOC\) that Kuske and Schweikardt introduced
in~\cite{KuskeSchweikardt_FOCN}.
This logic allows building numerical statements based on
counting terms as well as numerical predicates.
A \emph{numerical predicate collection} is a triple \((\P, \ar, \sem{.})\)
where \(\P\) is a countable set of \emph{predicate names},
and, to each \(\Pred \in \P\),
\(\ar\) assigns an \emph{arity} \(\ar(\Pred) \in \Npos\)
and \(\sem{.}\) assigns a \emph{semantics}
\(\sem{\Pred} \subseteq \Z^{\ar(\Pred)}\).
For the remainder of this paper,
fix a numerical predicate collection \((\P, \ar, \sem{.})\).
In algorithms, we will assume that machines have access to oracles
for evaluating the numerical predicates in constant time.
That is, given a predicate \(\Pred \in \P\)
and a tuple \((i_1, \dots, i_{\ar(\Pred)})\) of integers,
answering the oracle call `\((i_1, \dots, i_{\ar(\Pred)}) \in \sem{\Pred}\)?'
takes time \(\bigO(1)\).

For a signature \(\sigma\),
the set of \emph{formulas} and \emph{counting terms} for \(\FOC[\sigma]\) is defined as follows.
\emph{Formulas} are built according to the same rules as formulas in \(\FO[\sigma]\)
(in particular, \(R()\) is a formula if \(R \in \sigma\) with \(\ar(R) = 0\))
and the following additional rule:

\begin{enumerate}[(\(\star\))]
  \renewcommand{\labelenumi}{\textbf{(\theenumi)}}
  \renewcommand{\theenumi}{\(\star\)}
  \item\label{def:foc-P-short}
    If \(\Pred \in \P\), \(m = \ar(\Pred)\)
    and \(t_1, \dots, t_m\) are counting terms,
    then \(\Pred(t_1, \dots, t_m)\) is a formula.
\end{enumerate}

\emph{Counting terms} are built according to the following rules:
every integer \(i \in \Z\) is a counting term;
if \(t_1\) and \(t_2\) are counting terms,
then \((t_1 + t_2)\) and \((t_1 \cdot t_2)\) are also counting terms;
and

\begin{enumerate}[(\#)]
  \renewcommand{\labelenumi}{\textbf{(\theenumi)}}
  \renewcommand{\theenumi}{\#}
  \item\label{def:foc-countingterm-short}
    If \(\phi\) is a formula and
    \(\tx = (x_1, \dots, x_k)\) is a tuple of \(k\) pairwise distinct variables
    for some \(k \in \N\),
    then \(\FOCCount{\tx}{\phi}\) is a counting term.
    This includes \(k=0\), so \(\FOCCount{()}{\phi}\) is a counting term.
\end{enumerate}

Let \(\I = (\A, \beta)\) be a \(\sigma\)-interpretation.
For a formula \(\phi\) from \(\FOC[\sigma]\),
the semantics \(\sem{\phi}^\I\) is defined in the same way as for
\(\FO[\sigma]\), extended by

\begin{enumerate}[(\(\star\))]
  \item \(\sem{\Pred(t_1, \dots, t_m)}^\I = 1\)
    if \((\sem{t_1}^\I, \dots, \sem{t_m}^\I) \in \sem{\Pred}\),
    and \(\sem{\Pred(t_1, \dots, t_m)}^\I = 0\) otherwise.
\end{enumerate}

For a counting term \(t\) from \(\FOC[\sigma]\), the semantics \(\sem{t}^\I\) is defined via
\(\sem{i}^\I = i\) for \(i \in \Z\);
\(\sem{(t_1 + t_2)}^\I = \sem{t_1}^\I + \sem{t_2}^\I\)
and \(\sem{(t_1 \cdot t_2)}^\I = \sem{t_1}^\I \cdot \sem{t_2}^\I\);
and

\begin{enumerate}[(\#)]
  \item \(\sem{\FOCCount{\tx}{\phi}}^\I =
    \bigabs{\bigsetc{(v_1, \dots, v_k) \in A^k}
    {\sem{\phi}^{\I \frac{v_1, \dots, v_k}{x_1, \dots, x_k}} = 1}}\),
    where \(\tx = (x_1, \dots, x_k)\).
\end{enumerate}

We refer to \cref{def:foc} in \cref{sec:appendix-notation} for a more detailed definition
of \(\FOC[\sigma]\) that also includes the definition of \(\FO[\sigma]\),
and we refer to \cref{ex:cgfoc,ex:social-network}
as well as \cref{rem:weaker-guards,rem:dense-classes}
for examples of formulas and counting terms from \(\FOC[\sigma]\).
For counting terms \(t_1\) and \(t_2\),
we write \((t_1 - t_2)\) as a shorthand for \(\bigl(t_1 + ((-1) \cdot t_2)\bigr)\).
Counting terms of the form \(\FOCCount{\tx}{\phi}\) are also called \emph{$\#$-terms}.
An \emph{expression} is a formula or a counting term.
For an expression \(\xi\), we write \(\sigma(\xi)\) for the set of all
relation symbols that occur in \(\xi\).
The \emph{length} \(\abs{\xi}\) of an expression \(\xi\)
is its length when viewed as a word over the alphabet
\(\sigma \cup \vars \cup \P \cup \Z \cup \set{,} \cup
\set{=, \neg, \lor, (, ), \exists, \#, ., +, \cdot}\).

The \emph{free variables} \(\free(\xi)\) of an expression \(\xi\) are
inductively defined in the same way as for first-order logic, extended by
\(\free\bigl(\Pred(t_1, \dots, t_m)\bigr) = \bigcup_{i=1}^m
\free(t_i)\);
\(\free\bigl(\FOCCount{(x_1, \dots, x_k)}{\phi}\bigr)
= \free(\phi) \setminus \set{x_1, \dots, x_k}\);
\(\free(i) = \emptyset\) for \(i \in \Z\); and
\(\free\bigl((t_1 + t_2)\bigr) = \free\bigl((t_1 \cdot t_2)\bigr) = \free(t_1) \cup \free(t_2)\).
We write \(\xi(z_1, \dots, z_k)\) to indicate that
\(\free(\xi) \subseteq \set{z_1, \dots, z_k}\).
A \emph{sentence} is a formula without free variables, and
a \emph{ground term} is a counting term without free variables.

By \(\FOC\), we denote the union of all \(\FOC[\sigma]\) for arbitrary signatures \(\sigma\);
this applies analogously to \(\FO\).

For a formula \(\phi\) and a \(\sigma\)-interpretation \(\I\),
we write \(\I \models \phi\) if \(\sem{\phi}^\I = 1\).
Likewise, \(\I \not\models \phi\) indicates that
\(\sem{\phi}^\I = 0\).
For a formula \(\phi(x_1, \dots, x_k)\), a \(\sigma\)-structure \(\A\),
and a tuple \(\tv = (v_1, \dots, v_k) \in A^k\),
we write \(\A \models \phi[\tv]\) or \((\A, \tv) \models \phi\)
if \((\A, \beta) \models \phi\) for all assignments \(\beta\)
with \(\beta(x_i) = v_i\) for all \(i \in [k]\).
Furthermore, we set \(\sem{\phi(\tv)}^\A \deff 1\) if \(\A \models \phi[\tv]\),
and \(\sem{\phi(\tv)}^\A \deff 0\) otherwise.
If \(t(x_1, \dots, x_k)\) is a counting term,
we set \(\sem{t(\tv)}^\A \deff \sem{t}^{(\A,\beta)}\)
for one (and hence every) assignment \(\beta\)
with \(\beta(x_i) = v_i\) for all \(i \in [k]\).
Two expressions \(\xi, \xi'\) are \emph{equivalent}
if \(\sem{\xi}^\I = \sem{\xi'}^\I\) for all \(\sigma\)-interpretations \(\I\).
For \(d \in \N\), the expressions are called \emph{\(d\)-equivalent}
if \(\sem{\xi}^\I = \sem{\xi'}^\I\) for all \(\sigma\)-interpretations \(\I = (\A, \beta)\)
for all structures \(\A\) of degree at most \(d\).
For an \(\FOC\) expression \(\xi(\tx)\) with \(\abs{\tx} > 0\)
and for an \(r \in \N\),
we say that \(\xi(\tx)\) is \emph{\(r\)-local}
if for every \(\sigma(\xi)\)-structure \(\A\)
and every tuple \(\tv \in A^{\abs{\tx}}\),
it holds that \(\sem{\xi(\tv)}^\A = \sem{\xi(\tv)}^{\NrA{\tv}}\).
We say that \(\xi(\tx)\) is \emph{local} if it is \(r\)-local for some \(r \in \N\).
 \section{Clique-Guarded FOC}
\label{sec:cgfoc}

In this section, we introduce the logic
\emph{clique-guarded first-order logic with counting} (\(\cgFOC\)).
Moreover, we show that \(\cgFOC\) expressions can be efficiently evaluated
on classes of locally bounded expansion,
while variants of the logic are already intractable on very simple classes.

For a signature \(\sigma\),
the set of \emph{formulas} and \emph{counting terms} for \(\cgFOC[\sigma]\)
(clique-guarded first-order logic with counting) is built according to
the same rules as formulas and counting terms in \(\FOC[\sigma]\),
with the following modified version of rule~\eqref{def:foc-P-short}.

\begin{enumerate}[(\(\star_{cg}\))]
  \renewcommand{\labelenumi}{\textbf{(\theenumi)}}
  \renewcommand{\theenumi}{\(\star_{\text{cg}}\)}
  \item\label{def:cgfoc-P-short}
    If \(\Pred \in \P\), \(m = \ar(\Pred)\),
    \(k \in \N\),
    \(R_1, \dots, R_k \in \sigma\),
    \(\tz_i \in \vars^{\ar(R_i)}\) for all \(i \in [k]\),
    and \(t_1, \dots, t_m\) are counting terms
    such that for all \(z, z' \in \bigcup_{p=1}^m \free(t_p)\) with \(z \neq z'\),
    there is some \(i \in [k]\) with \(z, z' \in \tilde{z}_i\),
    then \(\bigl(\Land_{i=1}^k R_i(\tz_i) \land \Pred(t_1, \dots, t_m)\bigr)\) is a formula.
\end{enumerate}

\begin{remark}
  \label{rem:clique-guarded}
  Intuitively, this means that every pair of free variables
  of a subformula of the form \(\Pred(t_1, \dots, t_m)\) is guarded by an atom.
  Hence, the vertices that the free variables of \(\Pred(t_1, \dots, t_m)\)
  are interpreted with need to form a clique in the Gaifman graph of the structure.
In fact, for every signature \(\sigma\), every \(\Pred \in \P\), \(m \deff \ar(\Pred)\),
  and for all \(\cgFOC[\sigma]\) counting terms \(t_1, \dots, t_m\),
  there is a \(\cgFOC[\sigma]\) formula \(\phi\)
  such that for all \(\sigma\)-interpretations \(\I = (\A, \beta)\), it holds that \(\I \models \phi\)
  if and only if \((\sem{t_1}^\I, \dots, \sem{t_m}^\I) \in \sem{\Pred}\)
  and \(\setc{\beta(x)}{x \in \bigcup_{i=1}^m \free(t_i)}\) forms a clique in \(G_\A\).
  See \cref{subsec:appendix-cgfoc-clique-guarded} for such a formula.
\end{remark}

\begin{example}
  \label{ex:cgfoc}
  Let \(\sigma \deff \set{\firstvertexrelation, \secondvertexrelation, E, R}\)
  for unary relation symbols \(\firstvertexrelation, \secondvertexrelation\),
  a binary relation symbol \(E\), and a ternary relation symbol \(R\).
  Furthermore, let \(\P\) contain the binary predicate name \(\Pred_=\)
  with \(\sem{\Pred_=} \deff \setc{(i,i)}{i \in \Z}\).
  The \(\cgFOC[\sigma]\) counting terms
  \(t_{\smallfirstvertexrelation}(x)
  \deff \FOCCount{(z)}{\bigl(E(x,z) \land \firstvertexrelation(z)\bigr)}\) and
  \(t_{\smallsecondvertexrelation}(x)
  \deff \FOCCount{(z)}{\bigl(E(x,z) \land \secondvertexrelation(z)\bigr)}\)
  evaluate to the number of \firstvertexrelation{}- and \secondvertexrelation{}-coloured
  neighbours of \(x\), respectively.
  Furthermore, the \(\cgFOC[\sigma]\) formula
  \(\phi(x,y) \deff E(x,y) \land \Pred_{=}\bigl(t_{\smallfirstvertexrelation}(x),
  t_{\smallsecondvertexrelation}(y)\bigr)\)
  expresses that \(x\) and \(y\) are neighbours and
  the number of \firstvertexrelation{}-coloured neighbours of \(x\) is equal to
  the number of \secondvertexrelation{}-coloured neighbours of \(y\).

  In addition, \(u(x) \deff \FOCCount{(y,z)}{\bigl(\firstvertexrelation(y) \land
  \secondvertexrelation(z) \land R(x,y,z)\bigr)}\) is a \(\cgFOC[\sigma]\) counting term,
  and
  \[\psi(x_1, \dots, x_6) \deff
    \Bigl(R(x_1, x_2, x_4) \land R(x_2, x_3, x_5) \land R(x_1, x_3, x_6)
  \land \Pred_=\bigl(u(x_1), u(x_2) + u(x_3)\bigr)\Bigr)\]
  and \(\psi'(x_1, x_2, x_3) \deff \exists x_4 \exists x_5 \exists x_6\, \psi(x_1, \dots, x_6)\)
  are \(\cgFOC[\sigma]\) formulas.
  The free variables of \(\Pred_=\bigl(u(x_1), u(x_2) + u(x_3)\bigr)\) are \(x_1, x_2, x_3\).
  The guards ensure that \(x_1, x_2, x_3\) form a clique in the Gaifman graph.
  For this, the guards use additional variables \(x_4, x_5, x_6\) that do not need to be contained in any clique.
\end{example}

\begin{example}
  \label{ex:social-network}
  We consider a slight variation of the formula \(\phi(x, y)\) from \cref{ex:cgfoc}
  in the context of social networks.
  For this, let \(\sigma \deff \set{F}\) be a signature for directed graphs,
  \ie \(F\) is a binary relation symbol.
  A \(\sigma\)-structure \(\A\) then represents a social network,
  where the elements of \(A\) are the users of the network,
  and we have \((v,w) \in F(\A)\) if the user \(v\) is a follower of the user \(w\).
  Let \(\P\) contain the binary predicate name \(\Pred_>\)
  with \(\sem{\Pred_>} \deff \setc{(i,j)}{i, j \in \Z, i > j}\).
  Then
  \[\phi'(x,y) \deff F(x,y) \land \Pred_>\bigl(\FOCCount{(z)}{F(z,x)},
  10 \cdot \FOCCount{(z)}{F(z,y)}\bigr)\]
  is a \(\cgFOC[\sigma]\) formula.
  It expresses that \(x\) is a follower of \(y\)
  and \(x\) has more than ten times as many followers as \(y\).
  Therefore, \(x\) may be able to help \(y\) increase their popularity.
\end{example}

\begin{remark}
  The logic \(\FOC_1\) of \cite{GroheSchweikardt2018}
  is the fragment of \(\FOC\) where rule~\eqref{def:foc-P-short}
  may only be applied if \(\abs{\bigcup_{i=1}^m \free(t_i)} \leq 1\).
  It holds that \(\FOC_1 \subseteq \cgFOC\).
  The \(\cgFOC\) counting terms \(t_{\smallfirstvertexrelation}\),
  \(t_{\smallsecondvertexrelation}\), and \(u\) from \cref{ex:cgfoc}
  are also \(\FOC_1\) counting terms.
  However, the \(\cgFOC\) formulas \(\phi\), \(\psi\), \(\psi'\), and \(\phi'\)
  from \cref{ex:cgfoc,ex:social-network}
  are not contained in \(\FOC_1\), and there are no \(\FOC_1\) formulas equivalent to them.
\end{remark}

\begin{restatable}{lemma}{cgfocPreprocessing}
  \label{lem:cgfoc-preprocessing}
  For every signature \(\sigma\) and every \(\cgFOC[\sigma]\) formula \(\phi(\tx)\),
  there is a signature \(\sigma_\phi \supseteq \sigma\) and an \(\FO[\sigma_\phi]\) formula \(\phi'(\tx)\)
  such that for every \(\sigma\)-structure \(\A\),
  there is a \(\sigma_\phi\)-expansion \(\A_\phi\) of \(\A\)
  with \( G_{\A_\phi} = G_\A\) such that
  `\(\A \models \phi[\tv] \iff \A_\phi \models \phi'[\tv]\)'
  holds for all \(\tv \in A^{\abs{\tx}}\).
  Moreover, \(\sigma_\phi,\phi'\) can be computed from \(\sigma,\phi\),
  and \(\A_\phi\) can be computed from \(\phi,\A\).
\end{restatable}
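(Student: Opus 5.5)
Proceed by induction on the nesting depth of applications of rule~\eqref{def:cgfoc-P-short} in \(\phi\), replacing innermost numerical-predicate subformulas by fresh relation symbols. Call a subformula of the form \(\chi \deff \bigl(\Land_{i=1}^k R_i(\tz_i) \land \Pred(t_1,\dots,t_m)\bigr)\) \emph{innermost} if the counting terms \(t_1,\dots,t_m\) contain no further application of rule~\eqref{def:cgfoc-P-short}. Since formulas only use numerical predicates through this rule, every formula occurring inside \(t_1,\dots,t_m\) is then an \(\FO\) formula, and every counting term there is a polynomial in \(\#\)-terms over \(\FO\) formulas. Let \(\tz = (z_1,\dots,z_\ell)\) enumerate \(X \deff \bigcup_{p=1}^m \free(t_p)\) in some fixed order. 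I introduce a fresh relation symbol \(S_\chi\) of arity \(\ell\) and interpret it in \(\A\) as
\[
S_\chi(\A) \deff \bigsetc{\tv \in A^\ell}{\tilde v \text{ is a clique in } G_\A \text{ and } (\sem{t_1(\tv)}^\A,\dots,\sem{t_m(\tv)}^\A) \in \sem{\Pred}},
\]
where "clique" includes the degenerate cases \(\ell \le 1\) and allows repeated entries. I then replace \(\Pred(t_1,\dots,t_m)\) in \(\chi\) by the atom \(S_\chi(\tz)\), keeping the guard conjunction \(\Land_i R_i(\tz_i)\).

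The resulting formula is equivalent on the expansion. If the guards hold under an assignment, then the values of the variables in \(X\) form a clique by the guard condition of rule~\eqref{def:cgfoc-P-short}, so \(S_\chi(\tz)\) holds iff \(\Pred(t_1,\dots,t_m)\) holds. If the guards fail, both sides are false. The crucial point is that every tuple in \(S_\chi(\A)\) has pairwise adjacent (or equal) entries in \(G_\A\). Adding \(S_\chi\) therefore creates no new edges in the Gaifman graph, so \(G_{\A'} = G_\A\). This is exactly where the clique guards are needed, and it is the only point requiring care. Repeated variables in \(\tz\) are harmless because the definition of the Gaifman graph excludes loops.

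After replacing all innermost such subformulas, I obtain a \(\cgFOC[\sigma \cup \set{S_\chi}]\) formula of smaller nesting depth, and the expansion \(\A'\) satisfies \(G_{\A'} = G_\A\). Iterating yields a formula \(\phi''\) that uses no numerical predicates. It may still contain \(\#\)-terms, but these can only occur inside predicate applications, and none remain. Hence \(\phi''\) is an \(\FO[\sigma_\phi]\) formula. One subtlety is that bound variables of an enclosing \(\#\)-term or quantifier may occur free in \(\chi\). This is handled correctly because \(S_\chi\) is defined for all tuples of values of the free variables of \(\chi\), independent of context. Formally, I would do induction on the structure of \(\phi\), proving the claim simultaneously for all subformulas with their free variables as parameters.

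For computability, \(\sigma_\phi\) and \(\phi'\) are obtained syntactically. To compute \(\A_\phi\), I proceed bottom-up. At each stage all terms \(t_p\) are \(\FO\)-based counting terms over the current expansion, so for each \(\ell\)-tuple \(\tv\) that forms a clique I evaluate each \(t_p(\tv)\) by brute force, enumerating the tuples counted by each \(\#\)-term and evaluating \(\FO\) formulas. I then query the oracle for \(\Pred\). This is clearly computable. Efficiency is not required by the lemma, though the number of clique tuples is small on sparse classes, which presumably matters later.
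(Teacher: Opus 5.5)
Your proof is correct and uses the same key idea as the paper. Each guarded application \(\Pred(t_1,\dots,t_m)\) is replaced by an atom over a fresh relation symbol, interpreted as the set of clique tuples satisfying the predicate, and the clique guards are what guarantee \(G_{\A_\phi} = G_\A\).

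The only difference is in how the induction is organised. The paper uses plain structural induction and treats a guarded predicate application as a base case: its new relation is defined by evaluating the \(\cgFOC\) terms \(t_1,\dots,t_m\) directly in \(\A\), and nested applications inside those terms are never replaced. Your innermost-first replacement is therefore not needed for this lemma. It does, however, mirror the bottom-up recursion the paper later uses in \cref{lem:cgfoc-preprocessing-bounded-expansion}, where the \(\#\)-terms must first be made first-order before the counting algorithm can be applied.
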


This lemma provides a `computable reduction' from $\cgFOC$ formulas
to $\FO$ formulas; for our purposes it is crucial that the
Gaifman graph of the structure remains unchanged.
The proof of \cref{lem:cgfoc-preprocessing} is a straightforward structural induction on \(\phi\).
In this induction, constructs of the form \(\Pred(t_1, \dots, t_m)\)
are replaced by fresh relation symbols,
and for the corresponding relation we use the set of all tuples that satisfy \(\Pred(t_1, \dots, t_m)\)
and that form a clique in the Gaifman graph.
The formal proof is given in \cref{subsec:appendix-cgfoc-preprocessing}.
Enforcing in \(\cgFOC\) that the free variables in constructs of the form \(\Pred(t_1, \dots, t_m)\)
form a clique in the Gaifman graph is essential to obtain an enriched structure
\(\A_\phi\) with the same Gaifman graph as \(\A\);
we will heavily rely on this property throughout the paper.
On classes of locally bounded expansion,
the computations in \cref{lem:cgfoc-preprocessing} can be done efficiently,
as stated in the following result.

\begin{restatable}{lemma}{cgfocPreprocessingBoundedExpansion}
  \label{lem:cgfoc-preprocessing-bounded-expansion}
  Let \(\C\) be a graph class of locally bounded expansion.
  There is a function \(f\) such that the algorithm from \cref{lem:cgfoc-preprocessing}
  computing \(\A_\phi\) from \(\phi\) and any \(\A\) with \(G_\A \in \C\),
  given \(\epsilon \in \Qpos\) as an additional input,
  can be chosen so that it runs in time \(f(\phi, \sigma, \epsilon) \cdot \abs{A}^{1+\epsilon}\).
\end{restatable}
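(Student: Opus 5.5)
We follow the inductive construction behind \cref{lem:cgfoc-preprocessing} and show that each step can be carried out in almost-linear time. Process the subformulas of \(\phi\) of the form \(\Land_{i=1}^k R_i(\tz_i) \land \Pred(t_1, \dots, t_m)\) bottom-up, innermost first. When such a subformula \(\psi(\tz)\) is reached, all deeper \(\Pred\)-constructs inside \(t_1, \dots, t_m\) have already been replaced by fresh relation symbols. Hence each \(t_p\) is a polynomial of \(\#\)-terms \(\FOCCount{\ty}{\chi}\) with \(\chi\) an \(\FO\) formula over the current expanded signature \(\sigma'\). The current expansion \(\A'\) still satisfies \(G_{\A'} = G_\A \in \C\).

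We must compute the new relation \(R_\psi(\A')\), consisting of all tuples \(\tv\) over \(X = \bigcup_p \free(t_p)\) (together with the guard variables) that satisfy the guard and the predicate. The guard forces \(\tilde v\) restricted to \(X\) to be a clique in \(G_\A\). Since \(\C\) is nowhere dense, hence \(d\)-degenerate up to \(|A|^{\epsilon}\) factors, the number of cliques of size \(|X|\) is at most \(f\cdot|A|^{1+\epsilon}\). These cliques can be listed in that time from a low-degeneracy ordering (\(\C\) has locally bounded expansion). Alternatively, we enumerate the answers of the \(\FO[\sigma']\) formula \(\Land_i R_i(\tz_i)\) using the constant-delay enumeration of Segoufin and Vigny on locally bounded expansion classes. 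The guard relations themselves have size \(\bigO(\norm{\A'})\), which is almost linear because guard tuples are cliques.

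For each candidate tuple we evaluate the terms \(t_p\). Each \(\#\)-term \(\FOCCount{\ty}{\chi}\) with \(\chi \in \FO[\sigma']\) is a counting query. We handle it by invoking the known result that \(\FO\) counting terms with free variables can be evaluated on classes of locally bounded expansion. Specifically, for \(\FOC_1\)-style terms, \cite{GroheSchweikardt2018} computes all values \(\sem{\FOCCount{\ty}{\chi}(v)}\) for single free variables in time \(f\cdot|A|^{1+\epsilon}\). To reduce our multi-variable case to it, we use that the free variables lie in a clique. By Gaifman locality, we decompose the count \(\FOCCount{\ty}{\chi}(\tx)\) into local parts, where \(\ty\) is near \(\tx\), and a combination of independent one-variable counts. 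This is the standard Gaifman-normal-form/counting decomposition used in \cite{GroheSchweikardt2018,KuskeSchweikardt_FOCN}. Because \(\tx\) is a clique, its \(r\)-neighbourhood is contained in the \((r+1)\)-neighbourhood of its first element. The local count is then evaluated in a locally bounded expansion neighbourhood, and the global part factors into one-variable counts computable via \cite{GroheSchweikardt2018}. Then evaluating the predicate by an oracle call per candidate gives \(R_\psi(\A')\) in total time \(f(\phi,\sigma,\epsilon)\cdot|A|^{1+\epsilon}\). Since the number of nesting steps is bounded by \(|\phi|\), the total running time keeps the claimed form after rescaling \(\epsilon\).

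The main obstacle is the counting step for tuples with several free variables. The local parts must be computed for all almost-linearly many cliques simultaneously without blowing up beyond \(|A|^{1+\epsilon}\). We would first try to attach each clique to its first vertex \(v\) and compute the counts inside \(\Neighb{r+1}{\A}{v}\) with a cover-based approach (neighbourhood covers of degree \(|A|^{\epsilon}\), as in Grohe–Kreutzer–Siebertz). Within each cluster of bounded expansion we would use linear-time counting for \(\FO\) on bounded expansion classes in the style of Kazana–Segoufin.
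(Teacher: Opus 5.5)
Your overall architecture matches the paper's proof. You replace the $\Pred$-constructs bottom-up while keeping $G_{\A'}=G_\A$, list the almost-linearly many cliques on $X$, and evaluate the terms on each of them. Listing cliques via a degeneracy ordering is a good, more elementary substitute for the paper's use of \cref{thm:fo-enumeration} on $\phi_{\textup{clique}}$.

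However, the new relation must be defined on the variables of $X$ only, with the guard atoms $R_i(\tz_i)$ kept unchanged in $\phi'$. Neither tuples that include the guard variables nor your alternative of enumerating the answers of $\Land_i R_i(\tz_i)$ works. For the guard of $\psi$ in \cref{ex:cgfoc}, take a triangle $a,b,c$ and add $R(a,b,d_j)$, $R(b,c,e_j)$, $R(a,c,f_j)$ for $j\in[n]$. The Gaifman graph is planar, yet the guard conjunction has at least $n^3$ answers. A relation on all six variables would make $d_i,e_j,f_k$ pairwise adjacent, destroying $G_{\A'}=G_\A$.

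The real gap is the counting step, which you yourself flag as the main obstacle. What is needed is precisely \cref{thm:fo-counting-testing-locally-bounded-expansion}: after almost-linear preprocessing, return $\sem{\FOCCount{\ty}{\chi(\tv,\ty)}}$ for a given tuple $\tv$ in constant time. Your sketch does not yield this, for two reasons.

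First, after Gaifman normal form and a case split on the distance pattern $\delta^{\sigma}_{G,2r+1}$, the counted variables that are far from $\tx$ are \emph{constrained} to stay far. So the count does not factor into a local part times independent counts. The paper obtains a factorisation only by applying \cref{thm:FV} to the disjoint union of the neighbourhoods and then subtracting correction terms in which the far part comes within distance $2r+1$. These corrections are counts for patterns $G'$ with fewer connected components, so the argument is an induction on the number of components (\cref{lem:fo-dist-counting}). Without this inclusion--exclusion, the reduction to the one-variable values of \cite{GroheSchweikardt2018} does not go through.

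Second, inside a cluster you need parameterised counting with constant-time queries on bounded expansion classes. This is the special case of \cite{Torunczyk_Aggregate2020} stated as \cref{thm:fo-counting-testing-bounded-expansion}. Kazana--Segoufin-style counting only gives the total number of answers. Rerunning it for each clique with $\tv$ fixed costs $\abs{\K(v_1)}$ per clique, which is already quadratic for the edges of a star.

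With a distance-$(2r+1)(\abs{\tx}+\abs{\ty})$ neighbourhood cover, Toru\'nczyk's data structure in every cluster, and the component induction, the rest of your plan goes through.
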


We prove this result in \cref{subsec:appendix-cgfoc-preprocessing-bounded-expansion}.
Based on \cref{lem:cgfoc-preprocessing-bounded-expansion},
we can prove the following algorithmic metatheorem for \(\cgFOC\).

\begin{restatable}[Query Answering and Enumeration]{theorem}{answeringAndEnumeration}
  \label{thm:answering-enumeration}
  Let \(\C\) be a graph class of locally bounded expansion.
  There is a function \(f\) and an algorithm that does the following.
  Given a \(\cgFOC\) expression \(\xi(\tx)\),
  a \(\sigma\)-structure \(\A\) for some \(\sigma \supseteq \sigma(\xi)\)
  with \(G_\A \in \C\),
  and given an \(\epsilon \in \Qpos\),
  after preprocessing in time \(f(\xi, \sigma, \epsilon) \cdot \abs{A}^{1+\epsilon}\),
  \begin{alphaenumerate}
    \item\label{item:answering}
      the algorithm can answer the following queries in time \(f(\xi, \sigma, \epsilon)\):
      given a tuple \(\tv \in A^{\abs{\tx}}\), output \(\sem{\xi(\tv)}^\A\), and
    \item\label{item:enumeration}
      if \(\xi\) is a formula, then the algorithm can enumerate
      all tuples \(\tv \in A^{\abs{\tx}}\) such that \(\A \models \xi[\tv]\)
      with \(f(\xi, \sigma, \epsilon)\) delay in lexicographic order, without duplicates.
  \end{alphaenumerate}
\end{restatable}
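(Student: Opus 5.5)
We reduce everything to known results for \(\FO\) on classes of locally bounded expansion, via \cref{lem:cgfoc-preprocessing,lem:cgfoc-preprocessing-bounded-expansion}. For the case that \(\xi(\tx)\) is a formula, we first run the algorithm of \cref{lem:cgfoc-preprocessing-bounded-expansion}. It computes, in time \(f(\xi,\sigma,\epsilon)\cdot\abs{A}^{1+\epsilon}\), the signature \(\sigma_\xi\), the \(\FO[\sigma_\xi]\) formula \(\xi'(\tx)\), and the expansion \(\A_\xi\). Since \(G_{\A_\xi} = G_\A \in \C\), the structure \(\A_\xi\) still lies in a class of locally bounded expansion, and \(\A\models\xi[\tv]\iff\A_\xi\models\xi'[\tv]\) for all \(\tv\). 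We then invoke the \(\FO\) query-answering and constant-delay enumeration result of Segoufin and Vigny for classes of locally bounded expansion \cite{SegoufinVigny_LocallyBoundedExpansion2017}. Applied to \(\xi'\) and \(\A_\xi\), it gives preprocessing time \(f'(\xi',\sigma_\xi,\epsilon)\cdot\abs{A}^{1+\epsilon}\), constant-time testing, and enumeration in lexicographic order with constant delay and no duplicates. One point needs checking: \(\norm{\A_\xi}\) must be linear in \(\abs{A}\) for a bounded signature. This holds because every new relation consists only of tuples forming cliques in \(G_\A\), and in nowhere dense (indeed degenerate-ish) classes the number of cliques of fixed size is \(\O(\abs{A}^{1+\epsilon})\). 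This settles \eqref{item:answering} and \eqref{item:enumeration} for formulas.

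For a counting term \(\xi(\tx)\), only \eqref{item:answering} is required. A counting term is a polynomial in \(\#\)-terms \(\FOCCount{\ty}{\psi}\) with \(\free\) contained in \(\tx\). Evaluating the polynomial is trivial, so it suffices to handle a single \(\#\)-term \(t(\tx)=\FOCCount{\ty}{\psi(\tx,\ty)}\). We first convert \(\psi\) to \(\FO\) via \cref{lem:cgfoc-preprocessing-bounded-expansion}, obtaining \(\psi'\) over \(\A_\psi\). The task then becomes answering \(\FO\) counting queries \(\abs{\setc{\tw}{\A_\psi\models\psi'[\tv,\tw]}}\) in constant time after almost-linear preprocessing. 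To do this, we would use a Gaifman-type locality decomposition of \(\psi'\) into a Boolean combination of local formulas and sentences. The count then splits, by inclusion–exclusion over which components of \((\tv,\tw)\) are close together, into terms of two kinds. The first kind are counts that are local around \(\tv\), which we precompute for every tuple. The second kind are global counts over tuples far from \(\tv\), which are obtained as a global total minus a correction for tuples near \(\tv\). This is the standard approach of Kuske--Schweikardt and Grohe--Schweikardt for \(\FOC_1\) on these classes. The local quantities are stored in a constant-time-lookup structure. The lookup uses the fact that short-distance neighbourhoods in \(G_\A\) can be explored with bounded effort after computing a sparse neighbourhood cover, and that the relevant tuples \(\tv\) in the precomputation must be close together.

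The main obstacle is this counting-term case. The closeness patterns of \(\tx\) are arbitrary, so a query tuple \(\tv\) may be scattered. We must express the count as a polynomial combination of quantities each depending on a single cluster of \(\tv\) within bounded distance. Those quantities can be precomputed for all clusters, and there are only \(\O(\abs{A}^{1+\epsilon})\) short-distance clusters in the class. Establishing this decomposition, and making each piece computable within almost-linear time, is where the real work lies. A possible shortcut is to reuse the answering procedure for formulas for the atomic equalities \(\Pred_=(t,i)\); this does not work directly, because the value set of \(t\) is unbounded. So we would go through the locality decomposition described above.
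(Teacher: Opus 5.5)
Your treatment of formulas is essentially the paper's: run \cref{lem:cgfoc-preprocessing-bounded-expansion}, then an \(\FO\) testing/enumeration algorithm on \(\A_\xi\), using \(G_{\A_\xi}=G_\A\). The gap is in the counting-term case, which you explicitly leave open. Moreover, the two mechanisms you propose for the ``local'' counts do not work once degrees are unbounded.

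First, it is false that there are only \(\O(\abs{A}^{1+\epsilon})\) tuples of pairwise close elements. That bound holds for cliques (distance \(1\)), which is exactly why the guards make \cref{lem:cgfoc-preprocessing-bounded-expansion} work. But already in \(K_{2,n}\) (planar, hence in a class of bounded expansion) the \(n\) degree-\(2\) vertices are pairwise at distance \(2\). That gives \(\Theta(n^2)\) close pairs, so local counts cannot be tabulated for all close query tuples.

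Second, neighbourhoods cannot be explored with bounded effort. A sparse neighbourhood cover bounds radius and overlap, not cluster size, and a high-degree vertex of \(K_{2,n}\) has a \(1\)-neighbourhood with \(n+1\) elements. So even for the \(\cgFOC\) counting term \(\xi(x_1,x_2)=\FOCCount{(z)}{(E(x_1,z)\land E(z,x_2))}\), nothing in your sketch returns the number of common neighbours of an arbitrary query pair in constant time.

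The \(\FOC_1\) techniques you appeal to do not cover this either. There, only \(\#\)-terms with at most one free variable ever have to be evaluated, so their values can be stored once per element. Part (a), by contrast, concerns counting terms with arbitrarily many free variables that carry no guard.

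The missing ingredient is a genuine constant-time counting structure for \(\FO\), namely \cref{thm:fo-counting-testing-locally-bounded-expansion}; the main text announces it as the main tool behind \cref{lem:cgfoc-preprocessing-bounded-expansion}. The paper obtains it as follows:
\begin{itemize}
\item It runs Toru{\'{n}}czyk's counting algorithm for bounded expansion (\cref{thm:fo-counting-testing-bounded-expansion}) on every cluster of a distance-\(r'\) neighbourhood cover. The clusters lie in a class of bounded expansion precisely because \(\C\) has locally bounded expansion.
\item For a connected distance pattern \(\delta^{\sigma}_{G,2r+1}\), the count is then determined inside one cluster (namely \(\K(w_1)\) for a non-empty query tuple).
\item Disconnected patterns are handled by induction on the number of components (\cref{lem:fo-dist-counting}). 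A Feferman--Vaught split gives a product of two counts, minus the counts for the patterns with an additional edge.
\item The basic local sentences of the Gaifman normal form are decided once during preprocessing.
\end{itemize}
Your ``global total minus correction'' idea is in the spirit of that induction. Without the bounded-expansion counting oracle for the connected base case, however, it does not go through.

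Once \cref{thm:fo-counting-testing-locally-bounded-expansion} is available, the counting case is immediate, exactly as in the paper. Convert each \(\#\)-term with \cref{lem:cgfoc-preprocessing-bounded-expansion}, query the counting structure on \(\A_\psi\), and evaluate the polynomial.

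A smaller point for the formula case: the theorem asks for one algorithm that takes \(\epsilon\) as input. The Segoufin--Vigny result for locally bounded expansion is uniform in \(\epsilon\) only for effectively locally bounded expansion. You should instead use the uniform nowhere-dense results \cref{thm:fo-testing,thm:fo-enumeration}.
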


Our main tool in the proof of \cref{lem:cgfoc-preprocessing-bounded-expansion}
is the restriction of \cref{thm:answering-enumeration}\ref{item:answering} to \#-terms
of the form \(\FOCCount{\tx}{\phi(\tx, \ty)}\) for an \(\FO\) formula \(\phi\)
(see \cref{thm:fo-counting-testing-locally-bounded-expansion} in
\cref{subsec:appendix-cgfoc-preprocessing-bounded-expansion}).
This generalises a special case of \cite[Theorem~8]{Torunczyk_Aggregate2020}
from classes of bounded expansion to classes of locally bounded expansion.
In addition, we use the following result,
which states that the number of cliques of a given size in graphs from a nowhere dense class
is small and which we prove in \cref{subsec:appendix-nowhere-dense-cliques}.

\begin{restatable}{lemma}{nowhereDenseCliques}
  \label{lem:nowhere-dense-cliques}
  Let \(\C\) be a nowhere dense graph class.
  There is a function \(f \colon \N \times \Qpos \to \N\) such that
  \begin{enumerate}
    \item
      \label{item:nowhere-dense-cliques}
      for every \(k \in \Npos\), \(\epsilon \in \Qpos\), and \(G \in \C\),
      it holds that \(G\) contains at most \(f(k,\epsilon) \cdot \abs{V(G)}^{1+\epsilon}\)
      cliques of size \(k\), and
    \item
      \label{item:nowhere-dense-size}
      for every signature \(\sigma\),
      every \(\epsilon \in \Qpos\),
      and every \(\sigma\)-structure \(\A\) with \(G_\A \in \C\),
      it holds that \(\norm{\A} \leq \abs{\sigma} \cdot f(k, \epsilon) \cdot \abs{A}^{1+\epsilon}\),
      where \(k \in \N\) is the maximum arity of a relation symbol in \(\sigma\).
  \end{enumerate}
  If \(\C\) is effectively nowhere dense, then \(f\) is computable.
\end{restatable}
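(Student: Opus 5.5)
We rely on the known degeneracy bound for nowhere dense classes: for every \(\epsilon > 0\) there is a constant \(c_\epsilon\) such that every subgraph \(H\) of a graph in \(\C\) has at most \(c_\epsilon \cdot \abs{V(H)}^{1+\epsilon}\) edges. This follows from the standard fact (Nešetřil and Ossona de Mendez) that nowhere dense classes have edge density \(\abs{V(H)}^{o(1)}\). Consequently every \(H \subseteq G \in \C\) has a vertex of degree at most \(2c_\epsilon \abs{V(H)}^{\epsilon}\). Repeatedly removing a minimum-degree vertex yields an ordering \(v_1, \dots, v_n\) of \(V(G)\). In this ordering each \(v_i\) has at most \(d \deff 2c_\epsilon n^{\epsilon}\) neighbours among \(v_{i+1}, \dots, v_n\). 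The dependence of \(c_\epsilon\) on \(\epsilon\) is computable when \(\C\) is effectively nowhere dense, since the density bound is derived from the function \(g\) in \cref{def:nowhere-dense}.

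For part~\ref{item:nowhere-dense-cliques}, we charge each clique \(K\) of size \(k\) to its vertex \(v_i\) of smallest index. All other vertices of \(K\) are then forward neighbours of \(v_i\), so each vertex is charged at most \(\binom{d}{k-1} \le d^{k-1}\) times. The number of \(k\)-cliques is therefore at most
\[ n \cdot (2c_{\epsilon'} n^{\epsilon'})^{k-1}. \]
Choosing \(\epsilon' \deff \epsilon/\max(k-1,1)\) gives the bound \(f(k,\epsilon)\cdot n^{1+\epsilon}\) with \(f(k,\epsilon) \deff (2c_{\epsilon'})^{k-1}\). The case \(k=1\) is trivial.

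For part~\ref{item:nowhere-dense-size}, fix \(R \in \sigma\) of arity \(a \le k\). Every tuple \(\tv \in R(\A)\) has \(\tilde v\) forming a clique in \(G_\A\) of size \(s \deff \abs{\tilde v} \le a\). Conversely, each clique of size \(s\) is \(\tilde v\) for at most \(s^a \le k^k\) tuples of length \(a\). Hence
\[ \abs{R(\A)} \le k^k \sum_{s=1}^{k} (\#\text{cliques of size } s). \]
By part~\ref{item:nowhere-dense-cliques}, this is at most \(k^{k+1}\max_{s \le k} f(s,\epsilon)\cdot\abs{A}^{1+\epsilon}\). Multiplying by \(\max(\ar(R),1)\le \max(k,1)\), adding the term \(\abs{A}\), and summing over \(R\in\sigma\) gives the claimed bound after enlarging \(f\). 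The term \(\abs{A}\) is absorbed because \(\abs{\sigma}\ge 1\) may be assumed, or handled by adding \(1\) to \(f\). The modified \(f\) remains computable in the effective case.

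The main obstacle is the density step: we need a concrete, computable \(c_\epsilon\) from the nowhere-denseness function \(g\). For this we cite the known result that subgraphs of graphs in a nowhere dense class satisfy \(\abs{E(H)} \le c_\epsilon \abs{V(H)}^{1+\epsilon}\). We then check that its proof, via excluding \(K_{g(1)}\) as a depth-\(1\) minor and a Kővári–Sós–Turán type counting, yields an explicit function of \(g\) and \(\epsilon\).
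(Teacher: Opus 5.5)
Your counting skeleton is correct and is essentially the paper's. It bounds the degeneracy by roughly \(n^{\epsilon}\), charges each \(k\)-clique to its earliest vertex in a degeneracy ordering (this is exactly the cited bound \(\binom{d}{k-1}\abs{V(G)}\) for \(d\)-degenerate graphs), rescales \(\epsilon\) by a \(k\)-dependent factor, and counts the tuples of \(R(\A)\) per clique. Your explicit sum over clique sizes \(s \le \ar(R)\) is even slightly more careful than the paper's write-up. The only variation is where the degeneracy comes from: you use the subgraph density bound \(\abs{E(H)} \le c_\epsilon \abs{V(H)}^{1+\epsilon}\), while the paper uses \cref{lem:nowhere-dense-weak-colouring} with \(r = 1\), \(\wcol_1(G) = \col(G)\), and \((\col(G)-1)\)-degeneracy. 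For the mere existence of \(f\), the two are interchangeable and your citation is fine.

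The gap is the effectiveness claim, which you yourself single out as the main obstacle. The derivation you propose, excluding \(K_{g(1)}\) as a depth-\(1\) minor plus a K\H{o}v\'ari--S\'os--Tur\'an count, cannot yield exponent \(1+\epsilon\) for every \(\epsilon\). Since \(K_{t,t}\) contains \(K_t\) as a depth-\(1\) minor (contract a perfect matching), that hypothesis only gives \(K_{t,t}\)-freeness for \(t = g(1)\). K\H{o}v\'ari--S\'os--Tur\'an then bounds the number of edges by \(O(n^{2-1/t})\), which is a fixed exponent.

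No cleverer argument at a fixed depth can help. A graph of girth at least \(10\) has no \(K_3\) as a depth-\(1\) minor: three pairwise adjacent branch sets of radius at most \(1\) yield a closed walk of length at most \(9\) that traverses an edge between two branch sets exactly once, hence a cycle of length at most \(9\). Yet incidence graphs of generalised hexagons have girth \(12\) and \(\Theta(n^{6/5})\) edges. The same reasoning with girth above \(6r+3\) defeats any fixed depth \(r\). So a computable \(c_\epsilon\) must come from excluding cliques at a depth that grows as \(\epsilon \to 0\), that is, from evaluating \(g\) at \(\epsilon\)-dependent depths.

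The simplest repair is the paper's. Use \cref{lem:nowhere-dense-weak-colouring}, whose threshold function is computable for effectively nowhere dense classes, to get degeneracy at most \(n^{\epsilon}\) for \(n\) above a computable threshold depending on \(\epsilon\), and bounded by that threshold otherwise. Then run your counting argument unchanged.
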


We prove \cref{thm:answering-enumeration}\ref{item:answering}
by combining \cref{lem:cgfoc-preprocessing-bounded-expansion}
with an analogous result for checking whether a tuple satisfies an \(\FO\) formula
in nowhere dense classes in constant time after almost-linear-time preprocessing
\cite[Corollary~2.4]{SchweikardtSegoufinVigny_Enumeration2022}.
Furthermore, we prove \cref{thm:answering-enumeration}\ref{item:enumeration}
by combining \cref{lem:cgfoc-preprocessing-bounded-expansion}
with the enumeration result \cite[Corollary~2.5]{SchweikardtSegoufinVigny_Enumeration2022}
for \(\FO\) in nowhere dense classes.
The proof can be found in \cref{subsec:appendix-evaluation}.

\Cref{thm:answering-enumeration}\ref{item:answering} generalises the \(\FO\) model-checking result
for classes of locally bounded expansion
\cite[Corollary~1.2]{DvorakKralThomas_LocallyBoundedExpansion2013}
and the corresponding \(\FO\) testing result
\cite[Corollary~15]{SegoufinVigny_LocallyBoundedExpansion2017}
to the stronger logic \(\cgFOC\),
and it generalises the \(\FO\) counting result
for classes of locally bounded expansion
\cite[Theorem~6]{SegoufinVigny_LocallyBoundedExpansion2017},
to a result on constant-time evaluation of \(\cgFOC\) counting terms
after almost-linear time preprocessing.
\Cref{thm:answering-enumeration}\ref{item:enumeration} generalises the \(\FO\) enumeration result
on classes of locally bounded expansion
\cite[Corollary~5]{SegoufinVigny_LocallyBoundedExpansion2017}
to the stronger logic \(\cgFOC\).

\begin{remark}
  \label{rem:weaker-guards}
  Rule~\eqref{def:cgfoc-P-short} enforces that vertices
  interpreting the free variables of \(\Pred(t_1, \dots, t_m)\)
  have pairwise distance at most \(1\) in the Gaifman graph.
  Weakening this to pairwise distance at most \(2\)
  yields a logic that is intractable on very simple classes of graphs.
  More precisely, as we discuss below,
  there is a transduction that uses constructs of the form
  \(E(x,y) \land E(y,z) \land \Pred_=(t_1(x), t_2(z))\) and that produces the class of all graphs
  from the class of coloured trees of height at most \(2\), using only two colours.
  This shows that the model-checking problem
  on the class of coloured trees of height at most \(2\)
  for a logic that allows such constructs
  is at least as hard as the first-order model-checking problem
  on the class of all graphs. The latter is known to be \(\AWstar\)-hard
  \cite{FlumGrohe_Parameterized}.

  We consider the signature \(\sigma \deff \set{E, \firstvertexrelation, \secondvertexrelation}\)
  of coloured graphs, where \(E\) is a binary relation symbol,
  and \firstvertexrelation{} and \secondvertexrelation{} are unary relation symbols.
Let \(n \in \Npos\), and let \(G\) be a graph with \(V(G) = [n]\).
  We define two formulas \(\phi_V\) and \(\phi_E\) and a coloured tree \(T\)
  with colours \firstvertexrelation{} and \secondvertexrelation{}
  such that \(\phi_V\) and \(\phi_E\) do not depend on \(G\) or \(n\),
  we have \(V(G) \subseteq V(T)\),
  and for all \(v \in V(T)\), we have \(T \models \phi_V[v]\) if and only if \(v \in V(G)\).
  Moreover, for all \(v_1, v_2 \in V(G)\),
  we will have \(T \models \phi_E[v_1, v_2]\) if and only if \(\set{v_1, v_2} \in E(G)\).
  \begin{figure}
    \begin{subfigure}{.2\textwidth}
      \centering
      \begin{tikzpicture}
        \node[vertex]                                (1) {\(1\)};
\node[vertex, right of=1, node distance=3.5em] (2) {\(2\)};
\node[vertex, below of=1, node distance=3.5em] (3) {\(3\)};
\node[vertex, right of=3, node distance=3.5em] (4) {\(4\)};
\draw[edge] (1) -- (2) -- (3) -- (1) -- (4) -- (3);
       \end{tikzpicture}\caption{Input graph \(G\).}
    \end{subfigure}
    \hfill
    \begin{subfigure}{.7\textwidth}
      \centering
      \begin{tikzpicture}
        \coordinate (topcenter);
\node[vertex, left of=topcenter, node distance=9em]  (1) {\(1\)};
\node[vertex, left of=topcenter, node distance=3em]  (2) {\(2\)};
\node[vertex, right of=topcenter, node distance=3em] (3) {\(3\)};
\node[vertex, right of=topcenter, node distance=9em] (4) {\(4\)};

\node[vertex, below of=topcenter, node distance=4em] (0) {\(0\)};

\node[edge vertex, below of=topcenter, node distance=8em] (e14) {\(\set{1,4}\)};
\node[edge vertex, left of=e14, node distance=11em]        (e12) {\(\set{1,2}\)};
\node[edge vertex, left of=e14, node distance=5.5em]         (e13) {\(\set{1,3}\)};
\node[edge vertex, right of=e14, node distance=5.5em]        (e23) {\(\set{2,3}\)};
\node[edge vertex, right of=e14, node distance=11em]       (e34) {\(\set{3,4}\)};

\node[first number vertex] at ($(1) + ( 0.0em, 3em)$) (1v1) {};
\node[first number vertex] at ($(2) + (-0.4em, 3em)$) (2v1) {};
\node[first number vertex] at ($(2) + ( 0.4em, 3em)$) (2v2) {};
\node[first number vertex] at ($(3) + (-0.8em, 3em)$) (3v1) {};
\node[first number vertex] at ($(3) + ( 0.0em, 3em)$) (3v2) {};
\node[first number vertex] at ($(3) + ( 0.8em, 3em)$) (3v3) {};
\node[first number vertex] at ($(4) + (-1.2em, 3em)$) (4v1) {};
\node[first number vertex] at ($(4) + (-0.4em, 3em)$) (4v2) {};
\node[first number vertex] at ($(4) + ( 0.4em, 3em)$) (4v3) {};
\node[first number vertex] at ($(4) + ( 1.2em, 3em)$) (4v4) {};
\draw[edge] (1) -- (1v1);
\draw[edge] (2) -- (2v1);
\draw[edge] (2) -- (2v2);
\draw[edge] (3) -- (3v1);
\draw[edge] (3) -- (3v2);
\draw[edge] (3) -- (3v3);
\draw[edge] (4) -- (4v1);
\draw[edge] (4) -- (4v2);
\draw[edge] (4) -- (4v3);
\draw[edge] (4) -- (4v4);

\node[first number vertex]  at ($(e12) + (-0.8em, -3em)$) (e12v11) {};
\node[second number vertex] at ($(e12) + ( 0.0em, -3em)$) (e12v21) {};
\node[second number vertex] at ($(e12) + ( 0.8em, -3em)$) (e12v22) {};
\node[first number vertex]  at ($(e13) + (-1.2em, -3em)$) (e13v11) {};
\node[second number vertex] at ($(e13) + (-0.4em, -3em)$) (e13v21) {};
\node[second number vertex] at ($(e13) + ( 0.4em, -3em)$) (e13v22) {};
\node[second number vertex] at ($(e13) + ( 1.2em, -3em)$) (e13v23) {};
\node[first number vertex]  at ($(e14) + (-1.6em, -3em)$) (e14v11) {};
\node[second number vertex] at ($(e14) + (-0.8em, -3em)$) (e14v21) {};
\node[second number vertex] at ($(e14) + ( 0.0em, -3em)$) (e14v22) {};
\node[second number vertex] at ($(e14) + ( 0.8em, -3em)$) (e14v23) {};
\node[second number vertex] at ($(e14) + ( 1.6em, -3em)$) (e14v24) {};
\node[first number vertex]  at ($(e23) + (-1.6em, -3em)$) (e23v11) {};
\node[first number vertex]  at ($(e23) + (-0.8em, -3em)$) (e23v12) {};
\node[second number vertex] at ($(e23) + ( 0.0em, -3em)$) (e23v21) {};
\node[second number vertex] at ($(e23) + ( 0.8em, -3em)$) (e23v22) {};
\node[second number vertex] at ($(e23) + ( 1.6em, -3em)$) (e23v23) {};
\node[first number vertex]  at ($(e34) + (-2.4em, -3em)$) (e34v11) {};
\node[first number vertex]  at ($(e34) + (-1.6em, -3em)$) (e34v12) {};
\node[first number vertex]  at ($(e34) + (-0.8em, -3em)$) (e34v13) {};
\node[second number vertex] at ($(e34) + ( 0.0em, -3em)$) (e34v21) {};
\node[second number vertex] at ($(e34) + ( 0.8em, -3em)$) (e34v22) {};
\node[second number vertex] at ($(e34) + ( 1.6em, -3em)$) (e34v23) {};
\node[second number vertex] at ($(e34) + ( 2.4em, -3em)$) (e34v24) {};
\draw[edge] (e12) -- (e12v11);
\draw[edge] (e12) -- (e12v21);
\draw[edge] (e12) -- (e12v22);
\draw[edge] (e13) -- (e13v11);
\draw[edge] (e13) -- (e13v21);
\draw[edge] (e13) -- (e13v22);
\draw[edge] (e13) -- (e13v23);
\draw[edge] (e14) -- (e14v11);
\draw[edge] (e14) -- (e14v21);
\draw[edge] (e14) -- (e14v22);
\draw[edge] (e14) -- (e14v23);
\draw[edge] (e14) -- (e14v24);
\draw[edge] (e23) -- (e23v11);
\draw[edge] (e23) -- (e23v12);
\draw[edge] (e23) -- (e23v21);
\draw[edge] (e23) -- (e23v22);
\draw[edge] (e23) -- (e23v23);
\draw[edge] (e34) -- (e34v11);
\draw[edge] (e34) -- (e34v12);
\draw[edge] (e34) -- (e34v13);
\draw[edge] (e34) -- (e34v21);
\draw[edge] (e34) -- (e34v22);
\draw[edge] (e34) -- (e34v23);
\draw[edge] (e34) -- (e34v24);

\draw[edge] (0) -- (1)
            (0) -- (2)
            (0) -- (3)
            (0) -- (4);

\draw[edge] (0) -- (e12)
            (0) -- (e13)
            (0) -- (e14)
            (0) -- (e23)
            (0) -- (e34);
       \end{tikzpicture}\caption{Resulting tree \(T\).}
    \end{subfigure}
    \caption{Construction of a coloured tree \(T\) based on a graph \(G\)
      such that the transduction \((\phi_V, \phi_E)\) from \cref{rem:weaker-guards},
    applied on \(T\), produces \(G\).}
    \label{fig:hardness-example-tree}
  \end{figure}

  See \cref{fig:hardness-example-tree} for an example for the construction of the tree \(T\).
  The tree has a vertex \(0\) as its root
  with a child \(i\) for every vertex \(i \in [n] = V(G)\)
  and a child \(e\) for every edge \(e \in E(G)\).
  In addition, every vertex \(i \in V(G)\) has \(i\) \firstvertexrelation-neighbours,
  and every vertex \(\set{i,j} \in E(G) \subseteq V(T)\)
  with \(i < j\)
  has \(i\) \firstvertexrelation-neighbours
  and \(j\) \secondvertexrelation-neighbours.
  Note that \(T\) can be computed from \(G\) in polynomial time.

  The formula \(\phi_V(x)\) then defines the set of vertices \(V(G)\) in \(T\)
  by stating that \(x\) should have a \firstvertexrelation{}-coloured neighbour
  but no \secondvertexrelation{}-coloured neighbour.
  The formula \(\phi_E(x,y)\) defines the set of edges \(E(G)\) in \(T\)
  by stating that \(x\) and \(y\) form an edge in \(G\) if there is a vertex \(e\) in \(T\)
  such that the number of \firstvertexrelation{}-coloured neighbours of \(x\)
  coincides with the number of \firstvertexrelation{}-coloured neighbours of \(e\)
  and the number of \firstvertexrelation{}-coloured neighbours of \(y\)
  coincides with the number of \secondvertexrelation{}-coloured neighbours of \(e\),
  or the same holds with reversed roles for \(x\) and \(y\).
  For this, one can use the \(\cgFOC\) counting terms
  \(t_{\smallfirstvertexrelation}(x)
  \deff \FOCCount{(y)}{\bigl(E(x,y) \land \firstvertexrelation(y)\bigr)}\)
  and
  \(t_{\smallsecondvertexrelation}(x)
  \deff \FOCCount{(y)}{\bigl(E(x,y) \land \secondvertexrelation(y)\bigr)}\)
  to count the number of neighbours of a certain colour.
  To compare the number of neighbours of two vertices,
  we can use the fact that the distance between every pair of different vertices
  from \(V(G) \cup E(G) \subset V(T)\) is exactly \(2\) in \(T\).
  Hence, for example, to check that the number of \firstvertexrelation{}-coloured neighbours
  of \(x\) and \(e\) coincide, we can use the formula
  \(\exists z_0\ \bigl( E(x,z_0) \land E(z_0, e) \land \Pred_=\bigl(
    t_{\smallfirstvertexrelation}(x),
    t_{\smallfirstvertexrelation}(e)
  \bigr)\bigr)\).
  Note that this formula is an \(\FOC\) formula, but not a \(\cgFOC\) formula,
  since the pair \(x, e\) of free variables is not guarded by an edge.

  We refer to \cref{subsec:appendix-weaker-guards} for more details on the construction.
\end{remark}

\begin{remark}
  \label{rem:dense-classes}
  A similar argument shows that even stricter guards than the ones used for
  rule~\eqref{def:cgfoc-P-short} are not suitable for dense classes:
  using only constructs of the form
  \(E(x,y) \land \Pred_=(t_1(x), t_2(y))\),
  that is, where the free variables are guarded by a single edge,
  we give a transduction in \cref{subsec:appendix-dense-classes}
  that produces the class of all graphs
  from a class of coloured graphs of shrub-depth at most \(2\).
  Hence, the model-checking problem for this restricted logic is already \(\AWstar\)-hard
  on a class of coloured graphs of shrub-depth at most \(2\),
  which is a very simple dense class \cite{GanianHNOMR_Shrubdepth}.
  See \cref{subsec:appendix-dense-classes} for details.
\end{remark}
 \section{Graph Dimension of \texorpdfstring{$\cgFOC$}{cgFOC} on Nowhere Dense Classes}
\label{sec:graph-dimension}

In this section, we generalise the results of
\cite{PilipczukSiebertzTorunczyk_Types2018,vanBergeremSchweikardt_VC2025}
from \(\FO\) and \(\FOC_1\) on graphs to \(\cgFOC\) on relational structures.
Motivated by the multiclass learning setting as introduced
in \cite{vanBergeremSchweikardt_Multiclass2025},
which we study in \cref{sec:learning-formulas}, we also generalise these results
from formulas to counting terms,
proving that for every nowhere dense class \(\C\)
and every \(\cgFOC\) counting term \(t\), the \emph{graph dimension} of \(t\) in \(\C\),
a generalisation of the \emph{VC dimension},
both of which are defined below, is bounded by a value depending only on \(\C\) and \(t\).
If the class is effectively nowhere dense,
the bound will be computable.

The general notions of \emph{VC dimension} and \emph{graph dimension}
are defined as follows \cite{VapnikChervonenkis1971,Natarajan_GraphDimension1989}.
Let \(X\) be a set.
For a set \(\Hypo \subseteq 2^X\) and \(X' \subseteq X\),
we say that \(\Hypo\) \emph{shatters} \(X'\) if \(\setc{X' \cap H}{H \in \Hypo} = 2^{X'}\).
The \emph{VC dimension} of \(\Hypo\) is the maximum cardinality of a set \(X' \subseteq X\)
that is shattered by \(\Hypo\).
For a set \(\Hypo\) of functions \(h \colon X \to \Z\) and \(X' \subseteq X\),
we say that \(\Hypo\) \emph{graph shatters} \(X'\) if there is a function \(g \colon X' \to \Z\)
such that for every \(Y \subseteq X'\), there is a function \(h \in \Hypo\) with
\(g(x) = h(x)\) for all \(x \in Y\) and \(g(x) \neq h(x)\) for all \(x \in X' \setminus Y\).
The \emph{graph dimension} of \(\Hypo\) is the maximum cardinality of a set \(X' \subseteq X\)
that is graph shattered by \(\Hypo\).

\begin{lemma}
  \label{lem:vc-dim-vs-graph-dim}
  Let \(X\) be a set, and let \(\Hypo \subseteq 2^X\).
  We let \(\Hypo' \deff \setc{h_Y}{Y \in \Hypo}\),
  where, for a set \(Y \subseteq X\),
  we let \(h_Y \colon X \to \Z\) be the indicator function for \(Y\)
  with \(h_Y(x) \deff 1\) if \(x \in Y\), and \(h_Y(x) \deff 0\) otherwise.

  A set \(X' \subseteq X\) is shattered by \(\Hypo\)
  if and only if it is graph shattered by \(\Hypo'\).
  This implies that the VC dimension of \(\Hypo\) is equal to the graph dimension of \(\Hypo'\).
\end{lemma}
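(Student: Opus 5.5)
The plan is to verify both directions of the equivalence directly from the definitions. For a subset \(X' \subseteq X\), the indicator functions only take the values \(0\) and \(1\), so any witnessing function \(g\) for graph shattering can be compared against these two values. The VC/graph dimension equality then follows immediately, since both are maxima over the same family of subsets \(X'\).

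For the forward direction, suppose \(\Hypo\) shatters \(X'\). I would choose the constant function \(g \colon X' \to \Z\) with \(g(x) \deff 1\). Given \(Y \subseteq X'\), shattering provides some \(H \in \Hypo\) with \(X' \cap H = Y\). Then \(h_H \in \Hypo'\) satisfies:
\begin{itemize}
\item \(h_H(x) = 1 = g(x)\) for every \(x \in Y\);
\item \(h_H(x) = 0 \neq g(x)\) for every \(x \in X' \setminus Y\).
\end{itemize}
Hence \(\Hypo'\) graph shatters \(X'\).

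For the backward direction, suppose \(\Hypo'\) graph shatters \(X'\) via some \(g \colon X' \to \Z\). The main (mild) obstacle is that \(g\) need not be the constant \(1\).
\begin{itemize}
\item First, \(g(x) \in \set{0,1}\) for every \(x \in X'\). Indeed, apply the property with \(Y = \set{x}\): some \(h_H\) satisfies \(h_H(x) = g(x)\), and \(h_H\) only takes values in \(\set{0,1}\).
\item Now let \(Y \subseteq X'\) be arbitrary. Define \(Z \deff \setc{x \in Y}{g(x)=1} \cup \setc{x \in X' \setminus Y}{g(x) = 0}\).
\item The graph shattering property applied to \(Z\) yields \(H \in \Hypo\) with \(h_H = g\) on \(Z\) and \(h_H \neq g\) on \(X' \setminus Z\). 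Since \(g\) and \(h_H\) are \(\set{0,1}\)-valued, \(h_H \neq g\) means \(h_H = 1 - g\).
\item Checking the four cases (\(x \in Y\) or not, \(g(x) = 0\) or \(1\)) shows \(h_H(x) = 1\) exactly when \(x \in Y\). Thus \(X' \cap H = Y\), so \(\Hypo\) shatters \(X'\).
\end{itemize}

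Finally, since a set is shattered by \(\Hypo\) exactly when it is graph shattered by \(\Hypo'\), the maximum cardinalities coincide. This includes the case where both are infinite or unbounded.
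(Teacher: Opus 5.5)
Your proposal is correct and follows essentially the same route as the paper. Both use the constant witness $g \equiv 1$ for the forward direction; for the backward direction, both force $g$ to be $\{0,1\}$-valued and then apply graph shattering to the relabelled set $Z$, which is exactly the paper's $\pi(Y)$. The only cosmetic difference is that you deduce $g(x) \in \{0,1\}$ from the singletons $\{x\}$, whereas the paper uses the single choice $Y = X'$; either works.
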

\begin{proof}
  For the forward direction, let \(X' \subseteq X\),
  and let \(g \colon X' \to \Z\) with \(g(x) = 1\) for all \(x \in X'\).
  If \(X'\) is shattered by \(\Hypo\),
  then \(\setc{X' \cap H}{H \in \Hypo} = 2^{X'}\).
  Hence, for every \(Y \subseteq X'\),
  there is a set \(H \in \Hypo\) such that \(X' \cap H = Y\).
  This implies that, for all \(x \in X'\),
  we have \(h_H(x) = 1 = g(x)\) if and only if \(x \in Y\),
  and we know that \(h_H \in \Hypo'\).
  Since this holds for every \(Y \subseteq X'\),
  this shows that \(X'\) is graph shattered by \(\Hypo'\).

  Now, for the backward direction,
  suppose that \(X' \subseteq X\) is a set that is graph shattered by \(\Hypo'\),
  and let \(g \colon X' \to \Z\) be a witness for this,
  that is, for every \(Y \subseteq X'\),
  there is a function \(h \in \Hypo'\) with \(h(x) = g(x)\) for all \(x \in Y\)
  and \(h(x) \neq g(x)\) for all \(x \in X' \setminus Y\).
  In particular, there needs to be such a function \(h \in \Hypo'\) for \(Y = X'\),
  where \(h(x) = g(x)\) for all \(x \in X'\).
  This shows that \(g(x) = h(x) \in \set{0,1}\) for all \(x \in X'\).
  Based on \(g\), we define a mapping \(\pi \colon 2^{X'} \to 2^{X'}\),
  where \(\pi(Y) \deff \setc{x \in Y}{g(x) = 1} \cup \setc{x \in X' \setminus Y}{g(x) = 0}\).
  Now let \(Y \subseteq X'\), \(Y' \deff \pi(Y) \subseteq X'\),
  and let \(h \in \Hypo'\) with \(h(x) = g(x)\) for all \(x \in Y'\)
  and \(h(x) \neq g(x)\) for all \(x \in X' \setminus Y'\).
  Then there is some \(H \in \Hypo\) such that \(h = h_H\).
Let \(x \in X'\).
  If \(g(x) = 1\), then \(x \in Y\) if and only if \(x \in Y'\)
  if and only if \(h(x) = g(x)\) if and only if \(h(x) = 1\)
  if and only if \(x \in H\).
  On the other hand, if \(g(x) = 0\),
  then \(x \in Y\) if and only if \(x \in X' \setminus Y'\)
  if and only if \(h(x) \neq g(x)\)
  if and only if \(h(x) = 1\)
  if and only if \(x \in H\).
  Combined, this shows that \(H \cap X' = Y\).
  Since \(Y\) was chosen arbitrarily, this shows that \(X'\) is shattered by \(\Hypo\).
\end{proof}

To define the notions of VC dimension and graph dimension for \(\cgFOC\),
we need the following notation.
For a \(\cgFOC\) formula \(\phi(\tx, \ty)\),
a \(\sigma\)-structure \(\A\) for some \(\sigma \supseteq \sigma(\phi)\),
and a set \(W \subseteq A\),
we let \(\tp^\phi_\A(\tv/W) \deff \setc{\tw \in W^{\abs{\ty}}}{\A \models \phi[\tv, \tw]}\)
for all \(\tv \in A^{\abs{\tx}}\).
Alternatively, we can view \(\tp^\phi_\A(\tv/W)\)
as a function from \(W^{\abs{\ty}}\) to \(\set{0,1}\) that maps a tuple
\(\tw\) to \(\sem{\phi(\tv, \tw)}^\A\).
Accordingly, for a \(\cgFOC\) counting term \(t(\tx, \ty)\), we let
\(\tp^t_\A(\tv/W) \colon W^{\abs{\ty}} \to \Z\)
with \(\bigl(\tp^t_\A(\tv/W)\bigr)(\tw) \deff \sem{t(\tv, \tw)}^\A\)
for all \(\tw \in W^{\abs{\ty}}\).
For a \(\cgFOC\) expression \(\xi(\tx, \ty)\) and sets \(V, W \subseteq A\), we define
\(\SType{\xi}{\A}{V/W} \deff\ \setc{\tp^\xi_\A(\tv/W)}{\tv \in V^{\abs{\tx}}}.\)
For a \(\cgFOC\) counting term \(t(\tx, \ty)\),
a \(\sigma\)-structure \(\A\) for some \(\sigma \supseteq \sigma(t)\),
two sets \(V, W \subseteq A\), and \(X \deff W^{\abs{\ty}}\),
we study the graph dimension of
\[\Hypo \deff \STypeTAVW = \setc{\tp^t_\A(\tv/W)}{\tv \in V^{\abs{\tx}}}
= \bigsetc{W^{\abs{\ty}} \to \Z, \tw \mapsto \sem{t(\tv, \tw)}^\A}{\tv \in V^{\abs{\tx}}}.\]
We let the \emph{graph dimension of \(t(\tx, \ty)\) in \(\A\)}
be the graph dimension of \(\STypeTA{A/A}
= \bigsetc{A^{\abs{\ty}} \to \Z, \tw \mapsto \sem{t(\tv, \tw)}^\A}{\tv \in A^{\abs{\tx}}}\).
The \emph{VC dimension} of a \(\cgFOC\) formula \(\phi(\tx,\ty)\) in \(\A\)
is defined as the VC dimension of \(\STypePhiA{A/A}
= \bigsetc{\setc{\tw \in A^{\abs{\ty}}}{\A \models \phi[\tv, \tw]}}{\tv \in A^{\abs{\tx}}}\).
Next, we prove our main result of this paper, which we repeat for convenience.

\graphDimension*

In fact, we prove a stronger result on the \emph{ladder index} of \(\cgFOC\) counting terms.
For a counting term \(t(\tx, \ty)\) and \(L \in \Npos\),
a \emph{\(t\)-ladder} of length \(L\) in a structure \(\A\)
is a sequence \(\tv_1, \dots, \tv_L, \tw_1, \dots, \tw_L\)
such that \(\tv_i \in A^{\abs{\tx}}\) and \(\tw_i \in A^{\abs{\ty}}\) for all \(i \in [L]\)
and there is a function \(g \colon \set{\tw_1, \dots, \tw_L} \to \Z\)
with \(\sem{t(\tv_i, \tw_j)}^\A = g(\tw_j)\) iff \(i \leq j\), for all \(i, j \in [L]\).
The smallest \(L\) for which there is no \(t\)-ladder of length \(L\) in \(\A\)
is called the \emph{ladder index of \(t(\tx,\ty)\) in \(\A\)}.

\begin{restatable}{theorem}{ladderIndexTerm}
  \label{thm:ladder-index-term}
  There are computable functions \(f \colon \cgFOC \times \N \to \N\)
  and \(g \colon \cgFOC \to \N\) such that,
  for every \(\cgFOC\) counting term \(t(\tx,\ty)\),
  every \(p \in \N\),
  and every \(\sigma\)-structure \(\A\) with \(\sigma \supseteq \sigma(t)\)
  such that \(G_\A\) excludes \(K_p\) as a depth-\(g(t)\) minor,
  the ladder index of \(t(\tx,\ty)\) in \(\A\) is at most \(f(t, p)\).
\end{restatable}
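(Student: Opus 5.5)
Our plan is to reduce the counting-term statement to a statement about first-order types over a structure with unchanged Gaifman graph, and then bound ladders by a rank argument on type matrices combined with the Pilipczuk--Siebertz--Toruńczyk bound on the number of types over sets in graphs excluding \(K_p\) as a shallow minor.

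\textbf{Step 1: normal form for counting terms.} By structural induction, every \(\cgFOC\) counting term \(t(\tx,\ty)\) is a polynomial in \(\#\)-terms \(\FOCCount{\tz}{\psi}\), where each \(\psi\) is a \(\cgFOC\) formula. By \cref{lem:cgfoc-preprocessing}, each such \(\psi\) can be replaced by an \(\FO[\sigma_\psi]\) formula \(\psi'\) evaluated in an expansion \(\A_\psi\) with \(G_{\A_\psi}=G_\A\). Taking the union of all these expansions gives a single structure \(\A'\). It has the same Gaifman graph, so the hypothesis that \(G_\A\) excludes \(K_p\) as a depth-\(g(t)\) minor is preserved. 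In \(\A'\), \(t\) is a polynomial in \(\FO\) \(\#\)-terms \(\FOCCount{\tz}{\psi'(\tz,\tx,\ty)}\).

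\textbf{Step 2: decompose the \(\#\)-terms by locality.} Using Gaifman locality, we write \(\FOCCount{\tz}{\psi'}\) as a sum, over the ways in which \(\tz\) splits into components close to \(\tx\), close to \(\ty\), or far from both, of counts of local formulas. Here the bounded-rank idea enters. Fix sets \(V\ni\tv_i\) and \(W\ni\tw_j\) of bounded size. Nowhere denseness (uniform quasi-wideness, computable in \(p\)) lets us pass to a large sub-ladder whose \(\tv\)'s and \(\tw\)'s are \(r\)-separated by a set \(S\) of size bounded in terms of \(t\) and \(p\). For separated pairs, the value \(\sem{t(\tv,\tw)}^{\A'}\) is then a fixed polynomial in quantities of three kinds: quantities depending only on \(\tv\) and \(S\), quantities depending only on \(\tw\) and \(S\), and quantities depending only on \(S\). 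The crucial consequence is that the type matrix \(M=(\sem{t(\tv_i,\tw_j)})_{i,j}\) has rank bounded by a computable function of \(t\), and this holds after we classify \(\tv_i\) by its \(\FO\) type over \(S\) (boundedly many classes) and \(\tw_j\) likewise. We expect the main obstacle to be this rank lemma for the type matrix. Counts of \(\tz\) that are ``split'' between the \(\tv\)-side and the \(\tw\)-side, and are joined only through \(S\), contribute products of one-sided counts. Hence we must show that the total is a sum of boundedly many rank-one terms. To do this, we would first enumerate the bounded number of local types of the pieces of \(\tz\) attached through each vertex of \(S\).

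\textbf{Step 3: from bounded rank to bounded ladder.} Within a Ramsey-homogeneous sub-ladder, in which all \(\tv_i\) have the same type over \(S\) and all \(\tw_j\) do as well (types taken with respect to the local formulas), the matrix \(M-G\), where \(G_{ij}=g(\tw_j)\), still has bounded rank \(k\). Its entries satisfy \(M_{ij}-g(\tw_j)=0\) if and only if \(i\le j\). A matrix whose zero pattern is upper-triangular-with-diagonal, and which is nonzero strictly below the diagonal, contains a nonsingular square submatrix of size roughly its length. We take rows \(i_1,\dots\) and columns \(j\) with \(j=i_{s+1}-1\), so that the submatrix is triangular with nonzero diagonal. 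Consequently the ladder length is at most about \(k+1\) after the Ramsey reductions. Composing the computable bounds from uniform quasi-wideness, the number of types over \(S\), and Ramsey's theorem yields \(f(t,p)\), and \(g(t)\) is the radius required by locality of \(t\). Finally, \cref{thm:graph-dimension} follows because a graph-shattered set of size \(\ell\) yields a ladder of length about \(\log \ell\).
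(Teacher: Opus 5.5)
Your overall architecture is the paper's: a rank bound for $\MTypeTAVW$ whenever $V,W$ are $r$-separated by a small set, uniform quasi-wideness applied to the concatenated tuples $\tv_i\tw_i$, and a contradiction from the high rank of the half-graph pattern after subtracting the rank-one matrix coming from $g$. However, one step fails as stated. Uniform quasi-wideness only separates tuples with \emph{distinct} indices. Nothing separates $\tv_i$ from $\tw_i$, and in general no bounded $S$ does so for many $i$ (think of $\tv_i=\tw_i$). So the rank lemma does not apply to the square sub-ladder matrix $M-G$, whose diagonal entries come from non-separated pairs, and your bounded-rank claim for it is unjustified. The missing idea is the paper's alternating partition. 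Split the surviving indices into interleaving halves $I_1,I_2$, let $V$ consist of the elements of the tuples with index in $I_1$, and let $W$ consist of those with index in $I_2$. Then $V$ and $W$ are $r$-separated by $S$. Moreover, $(\tv_i)_{i\in I_1},(\tw_j)_{j\in I_2}$ is still a ladder, whose matrix minus the rank-one correction is triangular with nonzero diagonal. Your columns $j=i_{s+1}-1$ may coincide with row indices when $i_{s+1}=i_s+1$, so you must space them out (rows at even, columns at odd positions), which is exactly this partition. The classification by types over $S$ and the Ramsey step are unnecessary, since the rank bound holds uniformly for all separated $V,W$.

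Second, the rank lemma, which you correctly identify as the core, is only asserted, and your sketch is not yet an argument. It does not say how coordinates of $\tz$ lying in $S$ or reachable only through $S$ are treated. Also, products of one-sided counts overcount: ``far from $\tv$, from $\tw$, and from the other pieces of $\tz$'' is a joint condition, so correction terms are needed. The paper proves the lemma by two inductions.
\begin{itemize}
\item \textbf{Induction on $|S|$.} Delete one vertex $s_p\in S$ and replace each $R$ by symbols $R_f$ recording which positions held $s_p$. Translate $\phi$ into $\phi_f$ and split the matrix into blocks according to which coordinates of $\tx,\ty,\tz$ equal $s_p$. Each block is a type matrix in the reduced structure with separator $S\setminus\{s_p\}$.
\item \textbf{Separator-free case.} Put $\phi$ in Gaifman normal form with the basic local sentences fixed to $\top$ or $\bot$, and split by the distance pattern $\delta_{G,2r+1}$. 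Then induct on the number of components of $G$. Feferman--Vaught writes each count as a sum of products of one-sided counts, minus counts for patterns $G'$ with an edge between the two sides; these $G'$ have fewer components. The separation radius $(2r+1)(|\tz|+1)$ ensures that no connected pattern contains both an $x$- and a $y$-vertex.
\end{itemize}
Sums and products of terms are then handled by $\rank(M_1+M_2)\le\rank M_1+\rank M_2$ and $\rank(M_1\odot M_2)\le\rank M_1\cdot\rank M_2$. Finally, $g(t)$ must be the minor depth that uniform quasi-wideness requires for the separation radius (the paper uses $18r(t)$), not merely the locality radius.
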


In our proofs, it will be convenient to represent \(\STypeTAVW\) as a matrix.
For this, we let \(\MTypeTAVW \in \Z^{V^{\abs{\tx}} \times W^{\abs{\ty}}}\)
be the matrix with \(\bigl(\MTypeTAVW\bigr)_{\tv, \tw} \deff \sem{t(\tv, \tw)}^\A\).
The bound from \cite[Corollary~3.2]{PilipczukSiebertzTorunczyk_Types2018} for formulas
would correspond to a bound on the number of different functions in \(\STypeTAVW\)
or, equivalently, to the number of different rows in \(\MTypeTAVW\).
However, in the case of \(\cgFOC\) counting terms, in general,
the number of different functions cannot be constantly bounded.
This even holds when replacing all non-zero entries by ones.

In our main technical tool in this section,
which is the following generalisation of
\cite[Corollary~3.2]{PilipczukSiebertzTorunczyk_Types2018} and
\cite[Lemma~4.1]{vanBergeremSchweikardt_VC2025}
for counting terms instead of formulas,
we instead prove an upper bound on the \emph{rank} of the matrix \(\MTypeTAVW\) over the reals.
Remarkably, this rank bound will then suffice to prove a variety of results
concerning the graph dimension and ladder index of \(\cgFOC\) counting terms on nowhere dense classes.

\begin{restatable}{lemma}{cgfocRank}
  \label{lem:cgfoc-rank}
  There are computable functions \(T \colon \cgFOC \times \N \to \N\)
  and \(r \colon \cgFOC \to \N\) such that,
  for every \(\cgFOC\) counting term \(t(\tx, \ty)\),
  every \(m \in \N\),
  every \(\sigma\)-structure \(\A\) for some \(\sigma \supseteq \sigma(t)\),
  and all \(V, W \subseteq A\) that are \(r(t)\)-separated in \(G_\A\)
  by a set of size at most \(m\),
  we have \(\rank\bigl(\MTypeTAVW\bigr) \leq T(t, m)\).
\end{restatable}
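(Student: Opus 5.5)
We prove the statement by structural induction on the counting term \(t\), with a parallel statement for \(\cgFOC\) formulas. For a formula \(\phi(\tx,\ty)\), the relevant bound is on the number of distinct rows of \(\MType{\phi}{\A}{V/W}\); a bounded number of distinct rows bounds the rank by the same number. For counting terms we bound the rank over \(\R\).

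The base case and the arithmetic closure are handled by subadditivity and submultiplicativity of rank. Constants give rank at most \(1\). For \(t_1 + t_2\) we have \(\rank(M_1 + M_2) \le \rank M_1 + \rank M_2\). For \(t_1 \cdot t_2\) the matrix is the entrywise (Hadamard) product, and \(\rank(M_1 \circ M_2) \le \rank M_1 \cdot \rank M_2\). Here we take \(r(t)\) to be the maximum of \(r(t_1)\) and \(r(t_2)\), using that \(r\)-separation implies \(r'\)-separation for \(r' \le r\).

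The core case is a \#-term \(t = \FOCCount{\tz}{\phi(\tx,\ty,\tz)}\). First apply \cref{lem:cgfoc-preprocessing} to \(\phi\). This gives an \(\FO[\sigma_\phi]\) formula \(\phi'\) and an expansion \(\A_\phi\) with the same Gaifman graph, so separation is preserved. Hence it suffices to treat \(\FO\) formulas \(\phi'\) over arbitrary signatures, with computable dependence on \(\phi\); the size of the signature \(\sigma_\phi\) depends only on \(\phi\).

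Next, by Gaifman locality, specifically the version of \cite{PilipczukSiebertzTorunczyk_Types2018} relative to a separator \(S\) with \(\abs{S}\le m\), we classify each tuple \(\tz\) by how its components sit relative to \(\tv\), \(\tw\) and \(S\). Partition the components of \(\tz\) according to whether they are close to the \(V\)-side, close to the \(W\)-side, or far from both, after removing \(S\); components in \(S\) are handled by enumerating the at most \(m^{\abs{\tz}}\) choices explicitly. For each such partition pattern \(\pi\), by a Feferman--Vaught-style decomposition \(\phi'\) becomes equivalent to a finite Boolean combination of formulas \(\alpha(\tx,\tz_V)\), \(\beta(\ty,\tz_W)\) and \(\gamma(\tz_{\mathrm{far}})\), each also taking parameters from \(S\). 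The number of these formulas is bounded in terms of \(\phi\) and \(m\). This is where \(r(t)\) is chosen large, roughly exponential in the quantifier rank, to cover the locality radius plus the radius of \(\tz\)-components. The count of \(\tz\) realising pattern \(\pi\) then becomes a finite sum of products of the form \(a(\tv)\cdot b(\tw)\cdot c\). Here \(a(\tv)\) counts \(\tz_V\) with \(\alpha\) that lie in the \(V\)-side region, \(b(\tw)\) is the analogous count on the \(W\)-side, and \(c\) is constant. Each product is a rank-\(1\) matrix, so the rank is at most the number of summands. Distance constraints between components of \(\tz\) ("far from \(\tv\)", and so on) are handled by inclusion--exclusion, which again yields a bounded sum of rank-\(1\) or already bounded-rank terms.

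We expect the main obstacle to be the exact case split. The regions "close to \(V\)" and "close to \(W\)" must be definable in terms of \(\tv\) only, respectively \(\tw\) only, given \(S\), so that each factor genuinely depends on one side. Separation by \(S\) guarantees that no short path connects the two sides avoiding \(S\). However, the neighbourhoods must be measured in \(G_\A - S\), with \(S\) encoded as constants or colours, and the locality radii must be aligned so the decomposition is exact. We would first try to mirror the type-counting argument of \cite[Lemma~4.1]{vanBergeremSchweikardt_VC2025}, replacing "number of types" by "number of rank-\(1\) summands". All bounds arising in this way are computable.
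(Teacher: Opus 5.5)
Your proposal is correct and is essentially the paper's proof: structural induction with subadditivity and the Hadamard-product rank bound, then reduction of \#-terms to \(\FO\) via \cref{lem:cgfoc-preprocessing}, which keeps \(G_\A\) and hence the separation intact. After that come Gaifman normal form, a case split on the distance pattern of \(\tx\ty\tz\), a Feferman--Vaught decomposition into one-sided counts, and inclusion--exclusion over patterns with fewer components, exactly as in \cref{lem:fo-rank,lem:fo-dist-rank}.

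The only organisational difference is in how the separator is eliminated. The paper removes it one vertex at a time by induction on \(\abs{S}\), encoding tuples through the removed vertex by new symbols \(R_f\) and splitting the matrix into blocks according to which entries of \(\tv,\tw,\tz\) equal it. You remove all of \(S\) at once, which works equally well. In that case you must also split rows and columns according to which entries of \(\tv\) and \(\tw\) lie in \(S\), since \(V\) and \(W\) may intersect \(S\).
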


\begin{proof}[Proof sketch]
  Every \(\cgFOC\) counting term is a polynomial of integers and \#-terms.
  For a counting term \(t = i\) for an integer \(i\),
  and for \(r(t) \deff 0\), all entries of \(\MTypeTAVW\) are equal to \(i\),
  so the rank of \(\MTypeTAVW\) is at most \(1\).
  For a counting term \(t = t_1 + t_2\) or \(t = t_1 \cdot t_2\),
  and for \(r(t) \deff \max(r(t_1), r(t_2))\), it can be checked that we have
  \(\rank\bigl(\MTypeTAVW\bigr) \leq
  \rank\bigl(\MTypeAVW{t_1}\bigr) + \rank\bigl(\MTypeAVW{t_2}\bigr)\)
  or
  \(\rank\bigl(\MTypeTAVW\bigr) \leq
  \rank\bigl(\MTypeAVW{t_1}\bigr) \cdot \rank\bigl(\MTypeAVW{t_2}\bigr)\),
  respectively.
  Hence, it remains to handle the case of \#-terms.
For a \#-term \(t(\tx, \ty)\) of the form \( \FOCCount{\tz}{\phi(\tx, \ty, \tz)}\),
  by \cref{lem:cgfoc-preprocessing} applied to \(\phi\),
  we can show that it suffices to consider the case where \(\phi\) is an \(\FO\) formula.
  Moreover, by using Gaifman's Theorem \cite{Gaifman_Locality},
  we can show that it suffices to consider local formulas \(\phi\),
  and we define \(r(t)\) based on the locality radius of \(\phi\).

  We prove the result by induction on the size \(m\)
  of the set that \(r(t)\)-separates \(V\) and \(W\).
  In the induction step, we remove one of the separating vertices from the structure
  and encode the removed information using new relation symbols.
  This leaves us with the case where \(V\) and \(W\) are \(r(t)\)-separated by the empty set.

  Since the formula \(\phi\) is \(r'\)-local for some number \(r' \in \N\),
  the evaluation of \(\sem{\phi(\tv, \tw, \tc)}^\A\) for tuples \(\tv \in V^{\abs{\tx}}\),
  \(\tw \in W^{\abs{\ty}}\), and \(\tc \in A^{\abs{\tz}}\)
  only depends on the neighbourhood structure \(\NeighbA{r'}{\tv \tw \tc}\).
  We transform \(\phi\) into an equivalent formula that is a disjunction
  of mutually exclusive formulas and that corresponds to a case distinction
  on which of the vertices from \(\tv\), \(\tw\), and \(\tc\) appear together
  in the same connected components of \(\NeighbA{r'}{\tv \tw \tc}\).

  Since the formulas are mutually exclusive,
  we can replace the disjunction by a sum of counting terms
  and then use a Feferman--Vaught decomposition \cite{FefermanVaught}
  for each of these counting terms to express it as a sum of two-factor products,
  where each of the factors only depends on the neighbourhood of either \(V\) or \(W\).
  This implies that the matrix \(\MTypeTAVW\) can be interpreted as a sum of matrices
  where the rank of each of the matrices can be bounded
  by some value that does not depend on \(\A\).
The formal proof can be found in \cref{subsec:appendix-cgfoc-rank}.
\end{proof}

We prove \cref{thm:ladder-index-term} by assuming the existence of a long \(t\)-ladder
and showing that this contradicts \cref{lem:cgfoc-rank}.
Intuitively, starting with a long \(t\)-ladder \(\tv_1, \dots, \tv_L, \tw_1, \dots, \tw_L\),
we find a subsequence that is still long,
but where every pair of tuples in the sequence
is \(r\)-separated by the same small set, for a suitable \(r\).
Based on this subsequence, we construct \(r\)-separated sets \(V\) and \(W\),
and we argue that the rank of \(\MTypeTAVW\) has to be large,
which contradicts \cref{lem:cgfoc-rank}.
We refer to \cref{subsec:appendix-ladder-index-term}
for proof details and for the proof of \cref{thm:graph-dimension}
based on \cref{thm:ladder-index-term}.

For a \(\cgFOC\) formula \(\phi(\tx,\ty)\) and a structure \(\A\),
by applying \cref{lem:vc-dim-vs-graph-dim},
it is easy to show that the VC dimension of \(\phi(\tx,\ty)\) in \(\A\)
is equal to the graph dimension of the counting term \(t(\tx,\ty)\deff\FOCCount{()}{\phi}\) in \(\A\).
Hence, \cref{thm:ladder-index-term,thm:graph-dimension} imply the following
(see \cref{subsec:appendix-VC-dimension}).

\begin{restatable}{corollary}{VCdimension}
  \label{cor:VCdimension}
  For every (effectively) nowhere dense graph class \(\C\),
  there is a (computable) function \(f \colon \cgFOC \to \N\)
  such that for every \(\cgFOC\) formula \(\phi(\tx,\ty)\)
  and every \(\sigma\)-structure \(\A\) with \(\sigma \supseteq \sigma(\phi)\) and \(G_\A \in \C\),
  the VC dimension and the ladder index of \(\phi(\tx,\ty)\) in \(\A\) are at most \(f(\phi)\).
\end{restatable}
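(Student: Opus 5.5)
The plan is to reduce \cref{cor:VCdimension} to \cref{thm:graph-dimension} and \cref{thm:ladder-index-term} via the counting term \(t_\phi(\tx,\ty) \deff \FOCCount{()}{\phi}\). First I check that \(t_\phi\) is a \(\cgFOC\) counting term whenever \(\phi\) is a \(\cgFOC\) formula. This holds because rule~\eqref{def:foc-countingterm-short} with \(k=0\) is unchanged in \(\cgFOC\). We also have \(\free(t_\phi)=\free(\phi)\) and \(\sigma(t_\phi)=\sigma(\phi)\). By the semantics of \#-terms, \(\sem{t_\phi(\tv,\tw)}^\A\) is the number of assignments of the empty tuple satisfying \(\phi\), which is \(1\) if \(\A\models\phi[\tv,\tw]\) and \(0\) otherwise. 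Hence \(\tp^{t_\phi}_\A(\tv/A)\) is exactly the indicator function \(h_Y\) of \(Y=\tp^\phi_\A(\tv/A)\), and \(\STypeTA{A/A}\) (with \(t=t_\phi\)) equals \(\Hypo'\) for \(\Hypo=\STypePhiA{A/A}\), in the notation of \cref{lem:vc-dim-vs-graph-dim}.

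For the VC dimension, \cref{lem:vc-dim-vs-graph-dim} gives that the VC dimension of \(\phi(\tx,\ty)\) in \(\A\) equals the graph dimension of \(t_\phi(\tx,\ty)\) in \(\A\). By \cref{thm:graph-dimension} applied to \(\C\), this is at most \(f_1(t_\phi)\). The function \(\phi\mapsto t_\phi\) is trivially computable, so \(f_1(t_\phi)\) is computable in \(\phi\) when \(f_1\) is.

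For the ladder index, I interpret a \(\phi\)-ladder in the usual sense: \(\A\models\phi[\tv_i,\tw_j]\) iff \(i\le j\). Such a ladder is a \(t_\phi\)-ladder with witness \(g\equiv 1\). So the ladder index of \(\phi\) is at most that of \(t_\phi\). Conversely, if the paper defines ladders for formulas through \(t_\phi\) with arbitrary \(g\), the identification is immediate; even so, \(g\) must take values in \(\set{0,1}\), and the case \(g(\tw_j)=0\) corresponds to a complemented column, so nothing new is needed. To bound the ladder index of \(t_\phi\), I set \(r\deff g_2(t_\phi)\) from \cref{thm:ladder-index-term}. Since \(\C\) is nowhere dense, there is \(p\deff g_\C(r)\) such that no \(G_\A\in\C\) contains \(K_p\) as a depth-\(r\) minor; this \(p\) is computable if \(\C\) is effectively nowhere dense. \Cref{thm:ladder-index-term} then gives the bound \(f_2(t_\phi,p)\).

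The final bound is \(f(\phi)\deff\max\bigl(f_1(t_\phi), f_2(t_\phi, g_\C(g_2(t_\phi)))\bigr)\). The only step needing care is matching the ladder definitions for formulas and counting terms, in particular handling \(g\) with value \(0\). Since the required direction (a formula ladder yields a term ladder) uses \(g\equiv1\), I expect no real obstacle.
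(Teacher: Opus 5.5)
Your proposal is correct and follows the paper's own proof: pass to \(t_\phi \deff \FOCCount{()}{\phi}\), use \cref{lem:vc-dim-vs-graph-dim} to identify the VC dimension of \(\phi\) with the graph dimension of \(t_\phi\), observe that a \(\phi\)-ladder is a \(t_\phi\)-ladder with \(g \equiv 1\), and conclude via \cref{thm:graph-dimension,thm:ladder-index-term}. Your hedge about ladders with \(g\)-values \(0\) is unnecessary, since only the direction from \(\phi\)-ladders to \(t_\phi\)-ladders is used. Taking the maximum of the two bounds is also more than needed, because the paper's proof of \cref{thm:graph-dimension} already bounds the ladder index of \(t_\phi\) by the same \(f(t_\phi)\).
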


In addition, for a \(\cgFOC\) formula \(\phi(\tx,\ty)\),
the bound on the rank of \(\STypeTAVW\) for \(t(\tx,\ty) \deff \FOCCount{()}{\phi}\)
from \cref{lem:cgfoc-rank} implies a bound on the size of \(\STypePhiAVW\).
Based on this, it is straightforward to adapt the proof of
\cite[Theorem~1.3]{PilipczukSiebertzTorunczyk_Types2018}
to obtain a bound on the \emph{VC density} of \(\cgFOC\) formulas.
The \emph{VC density} of a formula \(\phi\) on a class of structures \(\C\)
is the infimum of all reals \(\alpha > 0\)
such that \(\absSTypePhiA{A/W} \in \bigO(\abs{W}^\alpha)\)
for all \(\A \in \C\) and all \(W \subseteq A\),
where constants hidden in the \(\bigO\)-notation may depend on \(\alpha\)
(see, \eg, \cite{AschenbrennerDolichHaskellMacphersonStarchenko_2016}).
In \cref{subsec:appendix-number-of-types-nowhere-dense}, we prove the following result:

\begin{restatable}{theorem}{VCdensity}
  \label{thm:VCdensity}
  Let \(\C\) be a nowhere dense graph class,
  let \(\phi(\tx, \ty)\) be a \(\cgFOC\) formula,
  let \(\sigma \supseteq \sigma(\phi)\) be a signature,
  and let \(\C'\) be the class of all \(\sigma\)-structures \(\A\)
  with \(G_\A \in \C\).
  The VC density of \(\phi(\tx, \ty)\) on \(\C'\) is at most \(\abs{\tx}\).
\end{restatable}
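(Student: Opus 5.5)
Following the paper's hint, I would adapt the proof of \cite[Theorem~1.3]{PilipczukSiebertzTorunczyk_Types2018}, replacing its type-counting lemma by \cref{lem:cgfoc-rank}. The first step is to turn the rank bound into a bound on the number of types. For \(t(\tx,\ty)\deff\FOCCount{()}{\phi}\) the matrix \(\MTypeTAVW\) is a \(0/1\) matrix whose distinct rows are exactly the elements of \(\STypePhiAVW\). A \(0/1\) matrix of real rank at most \(T\) has at most \(2^{T}\) distinct rows. To see this, pick \(T\) linearly independent columns; every row is a linear combination of them, and that combination is determined by its \(T\) coordinates on these columns. Hence, whenever \(V\) and \(W\) are \(r(t)\)-separated by a set of size at most \(m\), we get \(\absSTypePhiAVW \le 2^{T(t,m)}\). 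This bound does not depend on \(\A\). We may also replace \(\phi\) by the \(\FO\) formula \(\phi'\) from \cref{lem:cgfoc-preprocessing}, since the expansion \(\A_\phi\) keeps the Gaifman graph unchanged and so stays in \(\C\). Using \(\phi'\) is optional, though, because the rank bound already covers \(\cgFOC\).

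Next, I would fix \(\alpha>|\tx|\), a structure \(\A\in\C'\) and a set \(W\subseteq A\), and set \(r\deff r(t)\). I would invoke the combinatorial core of \cite{PilipczukSiebertzTorunczyk_Types2018}, which is purely graph-theoretic and applies to \(G_\A\in\C\). For every \(\epsilon>0\) it gives a set \(\closure(W)\supseteq W\) of size \(\bigO(|W|^{1+\epsilon})\) with the following property: each \(v\in A\) has a set \(S_v\subseteq\closure(W)\) with \(|S_v|\le c\) (a constant depending only on \(\C,r,\epsilon\)) that \(r\)-separates \(v\) from \(W\setminus S_v\). Moreover, the number of distinct sets \(S_v\) is only \(\bigO(|W|^{1+\epsilon})\), as these are the projection profiles onto the closure. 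For a tuple \(\tv\in A^{|\tx|}\) we let \(S_{\tv}\deff\bigcup_i S_{v_i}\), which has size at most \(|\tx|c\). This gives at most \(\bigO(|W|^{|\tx|(1+\epsilon)})\) choices of \(S_{\tv}\).

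Then I would bound the types per separator. Fix one separator \(S\) and let \(V_S\) be the set of vertices lying in the components of \(G_\A-S\) at distance \(\le r\) (avoiding \(S\)) from the tuples with \(S_{\tv}=S\). The type \(\tp^\phi_\A(\tv/W)\) is determined by two pieces of data: the restriction to tuples \(\tw\) with entries in \(W\setminus S\), and the finitely many \(\tw\) meeting \(S\). The second piece is handled by fixing the elements of \(S\) through constants, i.e.\ fresh unary predicates, which does not change the Gaifman graph. The resulting formulas \(\phi_\theta\) have parameters ranging over the at most \(|S|^{|\ty|}\) ways that entries of \(\tw\) can lie in \(S\). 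For the first piece, the sets \(V_S\) and \(W\setminus S\) are \(r\)-separated by \(S\), a set of size at most \(|\tx|c\). So the first step, applied to each \(\phi_\theta\), gives a constant number \(N\) of types per separator. Multiplying, \(\absSTypePhiA{A/W}\le N\cdot\bigO(|W|^{|\tx|(1+\epsilon)})\). Choosing \(\epsilon\) with \(|\tx|(1+\epsilon)\le\alpha\) yields VC density at most \(|\tx|\).

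I expect the main obstacle to be that last counting step. The separators for a tuple must be uniform: all coordinates of \(\tv\) have to be separated from \(W\) simultaneously with the same radius \(r(t)\), and \cref{lem:cgfoc-rank} requires \(V\) and \(W\) as sets. One also has to check that the number of distinct separator sets per vertex, not only their size, is \(\bigO(|W|^{1+\epsilon})\). I would try to import this directly from \cite[Theorem~1.3]{PilipczukSiebertzTorunczyk_Types2018} and its closure lemma, but first pass to the \(\FO\) formula \(\phi'\) so that the locality radius used there agrees with \(r(t)\). If needed, I would take the maximum of the two radii.
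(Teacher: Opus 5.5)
Your first step is fine. It is \cref{lem:boolean-number-of-rows}, though the correct phrasing is that every \emph{column} is a combination of $T$ basis columns, so a row is determined by its entries there.

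The gap is in the counting step. You assume that every $v\in A$ has a separator $S_v\subseteq\closure(W)$ of \emph{constant} size $c=c(\C,r,\epsilon)$, namely its projection onto the closure. That is the bounded-expansion form of the closure lemma. For nowhere dense classes, \cref{lem:nowhere-dense-closure} only gives $\abs{M_r^{G_\A}(u,\closure_{r,\delta}(W))}\le f_{\closure}(r,\delta)\cdot\abs{W}^{\delta}$. Constant projection bounds together with a near-linear closure are not available for general nowhere dense classes; they fail, e.g., for suitable classes of large girth and unbounded degeneracy.

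With separators of size about $\abs{W}^{\delta}$, your claim of ``a constant number $N$ of types per separator'' collapses. \cref{cor:number-of-types} gives at most $T(\phi,m)$ types with $m\approx\abs{\tx}\,f_{\closure}(r,\delta)\abs{W}^{\delta}$. The proof of \cref{lem:fo-rank} only yields a bound on $T(\phi,m)$ that is exponential in $m$, since each removed separator vertex multiplies it by up to $2^{\abs{\free(\phi)}}$. Passing to types via \cref{lem:boolean-number-of-rows} costs another exponential. So the count per separator class is not $\abs{W}^{\bigO(\delta)}$, and the final bound is not polynomial in $\abs{W}$. The difficulties you flag (uniform radius, number of separator sets) are not the real obstacle.

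What is missing is a way to count types inside one projection class polynomially in the projection size. The paper does this with uniform quasi-wideness for tuples (\cref{thm:uniform-quasi-wideness}):
\begin{enumerate}
  \item Fix a class $V_j$ with common projection $X=M_r^{G_\A}(\tv,W')$ onto $W'=\closure_{r,\delta}(W)$, so $\abs{X}=\bigO(\abs{W}^{\delta})$. Take a maximal $V'_j\subseteq V_j$ of tuples with pairwise distinct types.
  \item Let $N$ be the polynomial of degree $d$ from \cref{thm:uniform-quasi-wideness}, and let $m$ be maximal with $\abs{V'_j}\ge N(m)$. Quasi-wideness yields a set $S$ of \emph{constant} size $s$ and $Y\subseteq V'_j$ with $\abs{Y}\ge m$ whose tuples are mutually $2r$-separated by $S$.
  \item The tuples in $Y$ that are $r$-separated from $W'$ by $S$ number at most $T(\phi,s)=\bigO(1)$ by \cref{cor:number-of-types}.
  \item Every other tuple reaches a vertex of $X$ by an $S$-avoiding path of length at most $r$. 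Distinct tuples reach distinct vertices by the $2r$-separation, so there are at most $\abs{X}$ of them.
  \item Hence $m=\bigO(\abs{X})$ and $\abs{V'_j}<N(m+1)=\bigO(\abs{X}^{d})=\bigO(\abs{W'}^{d\delta})$. This is absorbed by choosing $\delta$ small relative to $\epsilon$ and $d$.
\end{enumerate}
Multiplying by the $\bigO(\abs{W'}^{(1+\delta)\abs{\tx}})$ projection classes then gives the claim. The constant-size separator thus comes from quasi-wideness, not from the closure.
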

 \section{Learning Concepts Defined by \texorpdfstring{$\cgFOC$}{cgFOC}}
\label{sec:learning-formulas}

Next, we apply the results of \cref{sec:graph-dimension} to show that concepts
that can be defined using \(\cgFOC\) expressions can be learned efficiently
on classes of effectively locally bounded expansion.
For this, we consider the logical framework introduced in \cite{GroheTuran_Learnability}
for Boolean classification problems and its extension to multiclass classification from \cite{vanBergeremSchweikardt_Multiclass2025},
more specifically, the setting of \emph{agnostic probably approximately correct (PAC) learning}
\cite{Haussler_PAC}, a generalisation of the \emph{PAC-learning} setting \cite{Valiant_PAC}.
We briefly introduce the considered algorithmic problems here,
and we refer to \cite{GroheRitzert_FO,vanBergerem_FOCNLearning2025} for an in-depth introduction.
Let \(\A\) be a structure, let \(\ell \in \Npos\),
and let \(\D\) be a probability distribution on \(A^\ell \times \Z\).
We are given oracle access to \(\D\),
\ie each oracle query yields an instance from \(A^\ell \times \Z\)
drawn from \(\D\).
We want to find a \emph{hypothesis},
consisting of a counting term \(t(\tx,\ty)\) with \(\abs{\ty} = \ell\)
and a tuple \(\tv\in A^{|\tx|}\),
with small \emph{generalisation error}
\[\err_\D(t, \tv) \deff {\textstyle\Pr_{(\tw, \lambda) \sim \D}}
\bigl(\sem{t(\tv, \tw)}^\A \neq \lambda\bigr),\]
that is, with small probability that our hypothesis disagrees with a randomly drawn instance.
Since a generalisation error of \(0\) might not be possible
(\eg, when \((\tw, \lambda)\) and \((\tw, \lambda')\) both have probability greater than \(0\)
for some \(\tw\) and \(\lambda \neq \lambda'\)), we want to find a hypothesis with a generalisation error close
to the best possible one.
In order to make this possible, it is essential to avoid \emph{overfitting},
where the output hypothesis perfectly fits the examples
drawn from \(\D\),
but the algorithm simply memorised the examples instead of learning an underlying rule.
Hence, we limit the complexity of the hypotheses we are allowed to return
by setting a bound \(m \in \N\) on the length of the counting term \(t\),
setting a bound on the length \(k\) of the tuple \(\tv\),
and fixing a finite set of integers \(I \subset \Z\) that may occur explicitly in \(t\).

Formally, in the \emph{agnostic-PAC-learning problem} for \(\cgFOC\),
we are given a structure \(\A\), \(k \in \N\), \(\ell, m \in \Npos\),
\(\epsilon, \delta \in \Qpos\),
a finite set \(I \subset \Z\),
and oracle access to a probability distribution \(\D\) on \(A^\ell \times \Z\).
We define \(T[\sigma, k, \ell, m, I]\) to be the set of all \(\cgFOC[\sigma]\) counting terms
\(t(x_1, \dots, x_{k'}, y_1, \dots, y_\ell)\)
with \(k' \leq k\), \(\abs{t} \leq m\)
and such that \(t\) only uses variables from
\(\set{x_1, \dots, x_k, y_1, \dots, y_\ell, z_1, \dots, z_m}\)
and integers from \(I\).
Note that \(T[\sigma, k, \ell, m, I]\) is finite and computable from
\(\sigma\), \(k\), \(\ell\), \(m\), \(I\).
An algorithm solving the agnostic-PAC-learning problem may access the oracle
to draw examples \((\tw_1, \lambda_1), (\tw_2, \lambda_2), \ldots \in A^\ell \times \Z\)
independently and identically distributed (\iid) from \(\D\).
Let \(S = \bigl((\tw_1, \lambda_1), (\tw_2, \lambda_2), \dots\bigr)\) be the sequence
of examples drawn by the algorithm.
Depending on the drawn sequence \(S\),
the algorithm should return a counting term \(t_S(\tx, \ty) \in T[\sigma, k, \ell, m, I]\)
and a tuple \(\tv_S \in A^{\abs{\tx}}\) such that,
with probability at least \(1-\delta\)
(over the choice of examples drawn \iid from \(\D\)),
it holds that \(\err_\D(t_S, \tv_S) \leq \epsilon' + \epsilon\),
where \(\epsilon'\) is the minimum generalisation error \(\err_\D(t, \tv)\)
of counting terms \(t(\tx, \ty) \in T[\sigma, k, \ell, m, I]\)
and tuples \(\tv \in A^{\abs{\tx}}\).

We also consider a stronger variant of this problem which we call \emph{agnostic PAC enumeration},
where the input is the same as for the learning problem,
but instead of asking for one hypothesis that is close to optimal,
we ask for an enumeration of \emph{all} hypotheses, sorted by a parameter \(p\)
that differs from the generalisation error of the hypothesis by at most \(\epsilon\).

The parameter \(\epsilon\), also called the accuracy parameter (`approximately correct'),
describes how far the hypothesis we return is allowed to be from an optimal hypothesis.
This allows the returned hypothesis to make a few mistakes,
for example in case of outliers that are manually handled by an optimal solution
but that we do not see in the limited number of examples that we draw in our algorithm.
The other parameter \(\delta\), also called the confidence parameter (`probably'),
describes how confident we are to return a good hypothesis on a randomly drawn sequence of samples.
This handles cases where the randomly drawn samples are not representative for \(\D\).

As shown in \cite[Theorem~5]{DanielySBS_GraphDimensionERM2015},
if a class of hypotheses has bounded graph dimension,
then there is a straightforward way to reduce the agnostic-PAC-learning problem
to a so-called \emph{empirical-risk-minimisation} (ERM) problem,
where the goal is to find a hypothesis with close-to-minimal error on the given sample instead of the entire domain.
By combining a refined version of this result
with our computable bound on the graph dimension of \(\cgFOC\) counting terms
from \cref{thm:graph-dimension}, we can show the following.

\begin{restatable}{theorem}{PACLearnLocallyBoundedExpansion}
  \label{thm:pac-learn-locally-bounded-expansion}
  For every graph class \(\C\) that has effectively locally bounded expansion,
  there is a function \(f\) and an algorithm that does the following.
  Given a signature \(\sigma\), a \(\sigma\)-structure \(\A\) with \(G_\A \in \C\),
  natural numbers \(k \in \N\) and \(\ell, m \in \Npos\) with \(k + \ell \leq m\),
  rational numbers \(\alpha, \epsilon, \delta \in \Qpos\),
  a finite set of integers \(I\),
  and given oracle access to a probability distribution \(\D\) on \(A^\ell \times \Z\),
  the algorithm runs in time
  \(f(\alpha, \epsilon, \delta, \sigma, I, m) \cdot \abs{A}^{\max(k,1 + \alpha)}\)
  and outputs a list consisting of triples \((t, \tv, p)\)
  in ascending lexicographic order on \((p, \abs{\tv}, \abs{t})\)
  such that the list contains exactly one such triple for every
  \(t(\tx, \ty) \in T[\sigma, k, \ell, m, I]\)
  and every tuple \(\tv \in A^{\abs{\tx}}\),
  and it holds that \(p \in \Q\).
Moreover, with probability at least \(1-\delta\) over the oracle calls to \(\D\),
  we have that
  \(\bigabs{\err_\D(t, \tv) - p} \leq \epsilon\)
  holds for every triple \((t, \tv, p)\) in the list.
\end{restatable}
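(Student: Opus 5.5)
The plan is to use uniform convergence via graph dimension, followed by exact evaluation of empirical errors with the metatheorem of \cref{thm:answering-enumeration}.

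\textbf{Sample size.} Let \(\mathcal{T} \deff T[\sigma,k,\ell,m,I]\), which is finite and computable. For each \(t(\tx,\ty)\in\mathcal{T}\), \cref{thm:graph-dimension} gives a computable bound \(d_t\) on the graph dimension of \(\STypeTA{A/A}\), since \(\C\) is effectively nowhere dense. The whole hypothesis set \(\Hypo \deff \bigcup_{t\in\mathcal{T}} \STypeTA{A/A}\) is a union of \(\abs{\mathcal{T}}\) classes. Its graph dimension is therefore bounded by a computable \(d = d(\sigma,I,m)\), using the standard bound on unions (dimension \(\bigO(\max_t d_t + \log\abs{\mathcal{T}})\)).

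By the uniform-convergence theorem for graph dimension \cite[Theorem~5]{DanielySBS_GraphDimensionERM2015}, the following holds for a sample \(S\) of size \(s = s(d,\epsilon,\delta)\), independent of \(\A\), drawn i.i.d.\ from \(\D\). With probability at least \(1-\delta\), the empirical error
\[\err_S(t,\tv) \deff \tfrac{1}{s}\,\bigabs{\setc{j\in[s]}{\sem{t(\tv,\tw_j)}^\A \neq \lambda_j}}\]
is within \(\epsilon\) of \(\err_\D(t,\tv)\), simultaneously for all pairs \((t,\tv)\). The proof applies this theorem to the 0/1 losses \(\tw,\lambda \mapsto [h(\tw)\neq\lambda]\); their VC dimension is controlled by the graph dimension of \(\Hypo\), so no finiteness of labels is needed. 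The algorithm draws \(S\) and sets \(p \deff \err_S(t,\tv)\), which makes the probabilistic guarantee immediate.

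\textbf{Computing \(p\).} The step I expect to be the main obstacle is computing all these values within time \(f\cdot\abs{A}^{\max(k,1+\alpha)}\) and producing the sorted output. For each \(t\in\mathcal{T}\) and each \(j\in[s]\), I would run the preprocessing of \cref{thm:answering-enumeration}\ref{item:answering}. To handle the fixed sample tuple \(\tw_j\), I expand \(\A\) by fresh unary relations \(C_{j,i} = \{w_{j,i}\}\), which leaves the Gaifman graph unchanged. The term \(t'(\tx) \deff \FOCCount{(\ty)}{\bigl(\Land_i C_{j,i}(y_i) \land \Pred_{=}(t(\tx,\ty),\lambda_j)\bigr)}\) then evaluates to \([\sem{t(\tx,\tw_j)}^\A=\lambda_j]\). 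However, \(t'\) is not \(\cgFOC\): the free variables \(\tx\) and \(\ty\) need not form a clique.

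I would avoid this in the simpler way. After preprocessing \(t(\tx,\ty)\) itself, I query \(\sem{t(\tv,\tw_j)}^\A\) directly in time \(f\) for every \(\tv\in A^{k'}\) and every \(j\). This costs \(f\cdot s\cdot\abs{A}^{k'}\) time, on top of preprocessing time \(f\cdot\abs{A}^{1+\alpha}\) (with \(\epsilon\mathrel{:=}\alpha\) in \cref{thm:answering-enumeration}). Since \(k'\le k\), the total is \(f(\alpha,\epsilon,\delta,\sigma,I,m)\cdot\abs{A}^{\max(k,1+\alpha)}\). The case \(k=0\) is covered by the exponent \(1+\alpha\).

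\textbf{Output.} Every \(p\) is a multiple of \(1/s\), so it takes one of only \(s+1\) values. The list sorted by \((p,\abs{\tv},\abs{t})\) can therefore be produced by bucket sort: one bucket per value of \(p\) and per \(\abs{\tv}\le k\), and within each bucket the terms ordered by length, a constant-size precomputed order. Ties beyond these keys are broken arbitrarily. This is linear in the number of triples, which is at most \(\abs{\mathcal{T}}\cdot\sum_{k'\le k}\abs{A}^{k'}\). The bucket sort thus stays within the claimed bound and finishes the proof.
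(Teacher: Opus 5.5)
Your proposal is correct and follows the same route as the paper. It bounds the graph dimension of \(\bigcup_{t} \STypeTA{A/A}\) via \cref{thm:graph-dimension} plus a union bound; the paper uses the cruder computable bound \(\abs{T[\sigma,k,\ell,m,I]} + \sum_t f_{\textup{graph}}(t)\) from \cref{lem:graph-dimension-union}, but any computable bound works. It gets uniform convergence by the same reduction to the VC dimension of the sets \(\setc{(\tw,h(\tw))}{\tw \in A^\ell}\) that you sketch (\cref{lem:uniform-convergence-graph-dimension}). It then computes all empirical errors by preprocessing each \(t\) with \cref{thm:answering-enumeration}, querying every \(\tv\) and sample tuple, and bucketing by \(s\cdot\err_S(t,\tv)\) while iterating terms in order of \((\abs{\tx},\abs{t})\). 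Your remark that the unary-relation trick fails because the comparison with \(\lambda_j\) is not clique-guarded is also why the paper's multiclass result keeps the factor \(\abs{A}^k\).
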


Our algorithm for \cref{thm:pac-learn-locally-bounded-expansion} will draw \(s\) examples
\iid from \(\D\) and then run an ERM algorithm
on these examples.
The main challenge for the proof is to find a suitable number~\(s\).
While \(s\) should be small enough to yield the stated running time,
it should also be large enough to ensure that the output of our algorithm is accurate enough
with high probability.
For this, we use a refinement of \cite[Theorem~5]{DanielySBS_GraphDimensionERM2015},
which is analogous to the result on the uniform convergence property
in~\cite[Sections~6.4--6.5]{Shalev-ShwartzBen-David_UnderstandingMachineLearning}.
We refer to \cref{subsec:appendix-pac-learn} for details.

Next, we give a much more efficient algorithm on classes of bounded degree
that guarantees linear-time preprocessing and constant-delay enumeration.
In particular, the exponent in the running time does not depend on \(k\) any more.
Moreover, the algorithm even works for the much more expressive logic \(\FOC\).
Analogously to \(T[\sigma, k, \ell, m, I]\),
we let \(T_{\FOC}[\sigma, k, \ell, m, I]\) be the set of all \(\FOC[\sigma]\) counting terms
\(t(x_1, \dots, x_{k'}, y_1, \dots, y_\ell)\)
with \(k' \leq k\), \(\abs{t} \leq m\)
and such that \(t\) only uses variables from
\(\set{x_1, \dots, x_k, y_1, \dots, y_\ell, z_1, \dots, z_m}\)
and integers from \(I\),
and this set is also finite and computable from
\(\sigma\), \(k\), \(\ell\), \(m\), and  \(I\).

\begin{restatable}{theorem}{PACLearnBoundedDegree}
  \label{thm:pac-learn-bounded-degree}
  If \(\A\) is known to have degree at most \(d\),
  the algorithm from \cref{thm:pac-learn-locally-bounded-expansion} can be modified
  to solve agnostic-PAC-enumeration for \(\FOC\) counting terms
  with delay \(f(\epsilon, \delta, \sigma, I, m, d)\)
  after preprocessing in time \(f(\epsilon, \delta, \sigma, I, m, d) \cdot \abs{A}\).
\end{restatable}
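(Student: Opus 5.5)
The plan is to keep the overall structure of the algorithm for \cref{thm:pac-learn-locally-bounded-expansion}: draw \(s\) examples \iid from \(\D\) and run an empirical-risk-minimisation step, outputting for every hypothesis \((t,\tv)\) its empirical error \(p\) on the sample. Two things have to be reorganised. First, \(s\) must depend only on \(\epsilon,\delta,\sigma,I,m,d\). Second, the ERM step cannot iterate over all \(\abs{A}^{k}\) tuples \(\tv\). Instead, I will group tuples into constantly many classes with identical empirical error, and enumerate each class with constant delay.

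\textbf{Step 1: bounding the sample size.} I will show that every \(\FOC\) counting term \(t(\tx,\ty)\) has graph dimension at most \(h(t,d)\) on structures of degree at most \(d\), for a computable \(h\). Then the refined version of \cite[Theorem~5]{DanielySBS_GraphDimensionERM2015} used for \cref{thm:pac-learn-locally-bounded-expansion} gives a suitable \(s\), after a union bound over the finite set \(T_{\FOC}[\sigma,k,\ell,m,I]\).

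For the dimension bound I would use the Kuske--Schweikardt decomposition \cite{KuskeSchweikardt_FOCN}. On structures of degree at most \(d\), \(t\) is \(d\)-equivalent to a polynomial in \emph{ground} terms and \(r\)-local \#-terms, where \(r\) is computable from \(t\) and \(d\). Fix \(\A\). The ground terms become integer constants, so \(t\) restricted to \(\A\) behaves like a polynomial in local terms. Now repeat the proof of \cref{lem:cgfoc-rank} with separating set of size \(0\), i.e.\ the case \(m=0\). Local \#-terms on \(V,W\) that are \(r\)-separated decompose, via Feferman--Vaught, into a bounded sum of products of a factor depending only on \(V\) and a factor depending only on \(W\). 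This bounds \(\rank M^{t}_{\A}(V/W)\).

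The ladder argument is then simpler than for nowhere dense classes. From a long \(t\)-ladder, degree \(\leq d\) lets me greedily extract a long subsequence whose tuples are pairwise \(2r\)-far, since each tuple blocks only \(d^{\bigO(r)}\) others. No separating set is needed, and the rank bound yields a contradiction exactly as in the proof of \cref{thm:ladder-index-term}.

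\textbf{Step 2: linear preprocessing.} Compute the values of all ground terms occurring in the decompositions in linear time \cite{KuskeSchweikardt_FOCN}. Draw the sample \(S\), and let \(N\) be the union of the \(R\)-neighbourhoods of the sample tuples, for \(R\) a suitable multiple of \(r\). Since the degree is bounded, \(\abs{N}\) is bounded by a constant. Call a tuple \(\tv\) \emph{near} if some entry lies in \(N\), and \emph{far} otherwise.

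There are constantly many near tuples, and I evaluate their errors directly in constant time each. For a far tuple \(\tv\), every \(\sem{t(\tv,\tw_i)}^\A\) is determined by two things: the isomorphism type \(\tau\) of its \(r\)-neighbourhood, which is one of constantly many types, and the fixed data at the sample. This uses the same splitting into connected components as in the proof of \cref{lem:cgfoc-rank}. Hence the empirical error of \((t,\tv)\) is a function \(p(t,\tau)\), and I compute it for each pair \((t,\tau)\).

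\textbf{Step 3: enumeration.} Sort the constantly many groups — pairs \((t,\tau)\) together with the near hypotheses — by \((p,\abs{\tv},\abs{t})\). For each group \((t,\tau)\) in this order, enumerate all far tuples of type \(\tau\) with constant delay. This is a first-order query on bounded-degree structures: ``\(\tv\) has \(r\)-type \(\tau\) and avoids the constant set \(N\)'', and \(N\) can be marked by a fresh unary relation. Constant-delay enumeration with linear preprocessing for such queries is known (Durand--Grandjean, Kazana--Segoufin). Since the number of groups is constant, I can precompute which groups are nonempty to avoid empty stretches, e.g.\ by testing each one during preprocessing. This gives constant delay overall.

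\textbf{Main obstacle.} I expect Step 1 to be the main difficulty, because \(\FOC\) is not \(\cgFOC\) and \cref{lem:cgfoc-rank} does not apply directly. Two points need care: replacing ground subterms by constants must be justified uniformly, and the Kuske--Schweikardt normal form must be shown to give the product decomposition across far-apart \(V\) and \(W\). Once that is in place, Steps 2 and 3 are routine locality bookkeeping.
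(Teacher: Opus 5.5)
There are two genuine gaps, one in Step~2 and one in Step~1.

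\textbf{Step~2: the near tuples.} This is the more serious gap. You claim there are only constantly many near tuples, but a tuple is near as soon as \emph{some} entry lies in $N$. For $|\tx| \geq 2$, every tuple $(u,v)$ with $u \in N$ and $v \in A$ arbitrary is near, and in general there are $\Theta(|A|^{k-1})$ near tuples. You can neither evaluate them one by one nor sort them within linear preprocessing.

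Your far-tuple argument does not cover these mixed tuples either. Their values $\sem{t(\tv,\tw_i)}^\A$ depend on where the near entries sit relative to $\tw_i$, and the isomorphism type of the neighbourhood of $\tv$ in $\A$ does not record this.

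The missing idea is to make the sample part of the structure. The paper expands $\A$ to $\A_S$ by fresh unary relations $R_{i,j}$ marking the entries of the sample tuples; this leaves the degree unchanged. For every $t$ and every agreement pattern $S'$, it writes an $\FOC$ formula $\phi_{t,S'}(\tx)$ saying that $t(\tx,\tw_i)$ evaluates to $\lambda_i$ exactly for the examples $(\tw_i,\lambda_i)$ in $S'$. It then enumerates the solutions of each of these constantly many formulas with the linear-preprocessing, constant-delay $\FOC$ enumeration of \cite[Corollary~5.7]{KuskeSchweikardt_FOCN}. Near, far and mixed tuples are thereby treated uniformly. Within your framework, the equivalent repair is to classify \emph{all} tuples by their local type in $\A_S$ rather than in $\A$.

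\textbf{Step~1: the greedy extraction.} This step fails as stated. A ladder tuple does not block only $d^{\bigO(r)}$ others, because the tuples $\tv_i\tw_i$ are merely pairwise distinct and may share entries, e.g.\ $\tv_i = (a,b_i)$ for a fixed vertex $a$. Then no two tuples are far apart, both $V$ and $W$ contain $a$, and your rank bound for an empty separator does not apply.

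You need a bounded separating set together with the separator-removal induction in the proof of \cref{lem:fo-rank}. \Cref{thm:uniform-quasi-wideness} supplies the separator, since degree at most $d$ excludes some $K_p$ as a depth-$18r$ minor, with $p$ depending only on $d$ and $r$.

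The paper does not redo any of this. It applies \cref{thm:hanf-normal-form} to the formula inside each \#-term of $t$ and replaces the $\FOC$ sentences by $\top$ or $\bot$. This yields a finite, computable set of $\cgFOC$ counting terms, one of which agrees with $t$ on $\A$. It then invokes \cref{thm:ladder-index-term} directly, followed by \cref{lem:graph-dimension-union}. This route also removes your worry about substituting $\A$-dependent integer constants for ground terms, since ground \#-terms simply remain \#-terms over $\FO$ formulas.
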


\begin{proof}[Proof sketch]
  The full proof can be found in \cref{subsec:appendix-pac-learn-bounded-degree}.
Using the Hanf normal form from \cite[Theorem~3.2]{KuskeSchweikardt_FOCN} for \(\FOC\),
  we can transform an \(\FOC\) counting term \(t\) into a set of \(\cgFOC\) counting terms,
  where for every structure \(\A\) of degree at most \(d\),
  one of them is equivalent to \(t\) on \(\A\).
  Since \cref{thm:graph-dimension} bounds the graph dimension of \(\cgFOC\) counting terms,
  \(\FOC\) counting terms also have bounded graph dimension on classes of bounded degree.
  Making use of the fact that the graph dimension bounds the number of required examples,
  we can translate the agnostic-PAC-enumeration problem
  into a standard enumeration problem for \(\FOC\),
  which can be solved using the result from \cite[Corollary~5.7]{KuskeSchweikardt_FOCN}.
\end{proof}

For the last result of this section,
we go back to the well-studied \emph{Boolean} agnostic-PAC-learning problem
(see, \eg, \cite{GroheRitzert_FO}), \ie we consider \(\cgFOC\) formulas instead of counting terms
and let the probability distribution \(\D\) be on \(A^\ell \times \set{0,1}\),
as discussed in \cref{sec:intro}.
Again, the exponent in the running time does not depend on \(k\).
\begin{restatable}{theorem}{learningFormulasPacEnumeration}
  \label{thm:learning-formulas-pac-enumeration}
  The algorithm from \cref{thm:pac-learn-locally-bounded-expansion} can be modified
  so that it solves agnostic-PAC-enumeration for \(\cgFOC\) formulas
  with delay \(f(\alpha, \epsilon, \delta, \sigma, I, m)\) after preprocessing
  in time \(f(\alpha, \epsilon, \delta, \sigma, I, m) \cdot \abs{A}^{1+\alpha}\).
\end{restatable}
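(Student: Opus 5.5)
The plan is to reuse the empirical-risk-minimisation approach behind \cref{thm:pac-learn-locally-bounded-expansion}. The only change is in how we evaluate the empirical error of all hypotheses \((\phi, \tv)\). We avoid iterating over all \(\abs{A}^{k}\) tuples \(\tv\); instead we express the empirical error of every candidate as a \(\cgFOC\) counting term over an expanded structure, and then enumerate via \cref{thm:answering-enumeration}.

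\textbf{Sample size.} First fix the sample size \(s\). By \cref{cor:VCdimension}, every \(\cgFOC\) formula \(\phi(\tx,\ty)\) with \(\phi \in T[\sigma,k,\ell,m,I]\) (formulas variant) has computably bounded VC dimension on \(\C\). We must take care that the sample structure below stays in a nowhere dense class; the bound is applied to the original \(\A\) with \(G_\A \in \C\), which is what matters for uniform convergence of the empirical errors over \(\tv \in A^{\abs{\tx}}\). Standard uniform convergence (as in \cite[Sections~6.4--6.5]{Shalev-ShwartzBen-David_UnderstandingMachineLearning}) then yields \(s = s(\epsilon,\delta,\sigma,I,m)\), independent of \(\abs{A}\). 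With this \(s\), with probability \(1-\delta\) the empirical error of each of the finitely many formulas \(\phi\), simultaneously for all \(\tv\), is within \(\epsilon\) of \(\err_\D(\phi,\tv)\). We draw \(S=((\tw_1,\lambda_1),\dots,(\tw_s,\lambda_s))\).

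\textbf{Encoding the sample.} Encode \(S\) into \(\A\) without changing its Gaifman graph too much. For each \(j\in[s]\) and each \(i\in[\ell]\), add a fresh unary relation \(U_{j,i}\) containing exactly the \(i\)-th entry of \(\tw_j\). Also add a 0-ary relation recording \(\lambda_j\). Unary and nullary symbols do not alter \(G_\A\), so the expanded structure \(\A_S\) satisfies \(G_{\A_S}=G_\A\in\C\).

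The empirical number of mistakes of \((\phi,\tv)\) is
\[\textstyle\sum_{j\in[s]} \bigl[\phi(\tv,\tw_j)\neq\lambda_j\bigr].\]
For fixed \(j\) and \(\lambda_j\), the indicator is expressed by the \(\cgFOC\) formula
\[\psi_j(\tx)\deff \exists y_1\dots\exists y_\ell\bigl(\Land_i U_{j,i}(y_i)\land(\neg)\phi(\tx,\ty)\bigr),\]
with the negation chosen according to \(\lambda_j\). Hence the count equals the counting term
\[t_\phi(\tx)\deff\textstyle\sum_j \FOCCount{()}{\psi_j}.\]
This is a legal \(\cgFOC\) term: no comparison predicate is introduced, since we only sum \#-terms, so no guard is needed. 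Its length depends only on \(s\) and \(m\).

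\textbf{Enumeration.} For each \(\phi\) and each possible value \(c\in[0,s]\), form the \(\cgFOC\) formula \(\chi_{\phi,c}(\tx)\deff \Pred_=(t_\phi(\tx),c)\). It has free variables \(\tx\), which in general are \emph{not} a clique. This is the main obstacle: the comparison with a constant needs guards on the free variables of \(t_\phi\).

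I would circumvent it as follows. Apply \cref{thm:answering-enumeration}\ref{item:enumeration} not to \(\chi_{\phi,c}\) but to the \(\FO\) formulas obtained through \cref{lem:cgfoc-preprocessing} (with \(\A_{S,\phi}\) computed in almost-linear time by \cref{lem:cgfoc-preprocessing-bounded-expansion}). Each \(\psi_j\) is then an \(\FO[\sigma_\phi\cup\{U_{j,i}\}]\) formula, and ``exactly \(c\) of the \(s\) formulas \(\psi_j(\tx)\) hold'' is the finite Boolean combination
\[\textstyle\Lor_{J\in\binom{[s]}{c}}\bigl(\Land_{j\in J}\psi_j\land\Land_{j\notin J}\neg\psi_j\bigr),\]
an \(\FO\) formula of size bounded by \(f(\epsilon,\delta,\sigma,I,m)\). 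So no counting comparison over \(\tx\) is needed at all.

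We then preprocess all these \(\FO\) formulas using the nowhere-dense enumeration result \cite[Corollary~2.5]{SchweikardtSegoufinVigny_Enumeration2022}, in time \(f\cdot\abs{A}^{1+\alpha}\). The output order is ascending in \((p,\abs{\tv},\abs{\phi})\) with \(p=c/s\). To achieve it, iterate \(c=0,\dots,s\), then over lengths \(\abs{\tx}\le k\), then over formulas \(\phi\) sorted by length, and run the constant-delay enumerator for each.

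Empty enumerations could break the delay bound. This is handled by precomputing, via model checking during preprocessing, which of the boundedly many enumerators are nonempty, and skipping the empty ones. Each emitted triple is \((\phi,\tv,c/s)\). Correctness with probability \(1-\delta\) follows from the uniform convergence step.
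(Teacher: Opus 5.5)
Your proposal is correct and is essentially the paper's proof. The paper likewise takes \(s\) from the VC bound of \cref{cor:VCdimension} and encodes the sample by singleton unary relations. It then enumerates, via \cref{thm:answering-enumeration}, the tuples satisfying a formula \(\phi_{S'}\) saying the hypothesis is correct exactly on the sub-sequence \(S'\); this is your conjunction for the mistake set complementary to \(S'\).

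Your detour through \cref{lem:cgfoc-preprocessing} is unnecessary, because your Boolean combination of the \(\psi_j\) is already a \(\cgFOC\) formula to which \cref{thm:answering-enumeration}\ref{item:enumeration} applies directly. Grouping by the error count \(c\) rather than by individual \(S'\) makes the required order on \((p,\abs{\tv},\abs{\phi})\) immediate.
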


The proof relies on the bound on the VC dimension of \(\cgFOC\) formulas from \cref{cor:VCdimension}
and a translation to a standard enumeration problem, this time for \(\cgFOC\) formulas,
which can be solved using \cref{thm:answering-enumeration}.
For details, we refer to \cref{subsec:appendix-learning-formulas}.

 \section{Final Remarks}
\label{sec:conclusion}

The discussion in \cref{rem:weaker-guards} shows that the logic \(\cgFOC\) is in some sense optimal,
as slightly strengthening the logic makes it intractable on very simple sparse classes.
Moreover, the discussion in \cref{rem:dense-classes} shows that such guarded logics
only work well on sparse classes and are too powerful for dense classes.
This, however, does not apply to the logic \(\FOC_1\).
In fact, it is easy to show that \(\FOC_1\) has bounded ladder index
or bounded VC dimension on a class if and only if \(\FO\) has;
and this holds for all classes, including dense ones.
Meanwhile, the computational complexity of \(\FOC_1\) on dense classes remains open.

On sparse classes, we expect the computational complexity of \(\cgFOC\) and \(\FOC_1\) to coincide,
and we believe that the techniques in the recently updated full version
of~\cite{GroheSchweikardt2018} for the evaluation of \(\FOC_1\) on nowhere dense classes
can be generalised to \(\cgFOC\).
It remains an interesting open question whether one can obtain
a multiclass PAC-enumeration algorithm for \(\cgFOC\) on classes of locally bounded expansion,
as in the Boolean case, with almost-linear-time preprocessing and constant delay.

\bibliography{main}

\clearpage
\appendix
\crefalias{section}{appendix}
\crefalias{subsection}{appendix}
\section*{Appendices}

\section{Notation}
\label{sec:appendix-notation}

A \emph{(relational) signature} is a finite set of relation symbols.
Every relation symbol \(R\) has an \emph{arity} \(\ar(R) \in \N\).
Let \(\sigma\) be a signature.
A \emph{\(\sigma\)-structure \(\A\)}
consists of a finite non-empty set \(A\), called the \emph{universe} of \(\A\),
and a relation \(R(\A) \subseteq A^{\ar(R)}\) for every \(R \in \sigma\).
A \emph{(relational) structure} is a \(\sigma\)-structure for some signature \(\sigma\),
and we denote the signature of a structure \(\A\) by \(\sigma(\A)\).
Unless explicitly stated otherwise,
we use \(\A\) and \(\B\) and variations such as \(\A_1\), \(\A'\), etc.~to denote structures,
and we denote their universes by \(A\), \(B\), \(A_1\), \(A'\), etc., respectively.

For a set \(X \subseteq A\), we let the \emph{induced substructure of \(\A\) on \(X\)}
be the \(\sigma\)-structure \(\A[X]\) with universe \(X\)
and \(R(\A[X]) \deff R(\A) \cap X^{\ar(R)}\) for every \(R \in \sigma\).
For a signature \(\sigma' \supseteq \sigma\), a \(\sigma'\)-structure \(\A'\)
is a \emph{\(\sigma'\)-expansion} of \(\A\) if the universe of \(\A'\) is \(A' = A\)
and it holds that \(R(\A') = R(\A)\) for all \(R \in \sigma\).
If \(\A'\) is a \(\sigma'\)-expansion of \(\A\),
then \(\A\) is the \emph{\(\sigma\)-reduct} of \(\A'\).
The \emph{union} of two \(\sigma\)-structures \(\A\) and \(\B\)
is the \(\sigma\)-structure \(\A \cup \B\) with universe \(A \cup B\)
and relations \(R(\A \cup \B) \deff R(\A) \cup R(\B)\).
If $A \cap B = \emptyset$, we write $\A \uplus \B$ instead and call this the \emph{disjoint union} of $\A$ and $\B$.

\begin{definition}[\(\FOC{[}\sigma{]}\)]
  \label[definition]{def:foc}
  The set of \emph{formulas} and \emph{counting terms} for \(\FOC[\sigma]\)
  is built according to the following rules.
\begin{bracketenumerate}
    \item\label{def:fo-atomic}
      \(x_1{=}x_2\) and \(R(x_1, \dots, x_k)\) are formulas for
      \(R \in \sigma\), \(k \deff \ar(R)\),
      and \(x_1, x_2, \dots, x_k \in \vars\).
      In particular, if \(\ar(R) = 0\), then \(R()\) is a formula.
    \item\label{def:fo-bool}
      If \(\phi\) and \(\psi\) are formulas,
      then \(\neg \phi\) and \((\phi \lor \psi)\) are formulas.
    \item\label{def:fo-exists}
      If \(\phi\) is a formula and \(x \in \vars\),
      then \(\exists x\, \phi\) is a formula.
    \item\label{def:foc-countingterm}
      If \(\phi\) is a formula and
      \(\tx = (x_1, \dots, x_k)\) is a tuple of \(k\) pairwise distinct variables
      for some \(k \in \N\),
      then \(\FOCCount{\tx}{\phi}\) is a counting term.
      This includes \(k=0\), so \(\FOCCount{()}{\phi}\) is a counting term.
    \item\label{def:foc-constterm}
      Every integer \(i \in \Z\) is a counting term.
    \item\label{def:foc-plustimesterm}
      If \(t_1\) and \(t_2\) are counting terms,
      then \((t_1 + t_2)\) and \((t_1 \cdot t_2)\) are also counting terms.
    \item\label{def:foc-P}
      If \(\Pred \in \P\), \(m = \ar(\Pred)\)
      and \(t_1, \dots, t_m\) are counting terms,
      then \(\Pred(t_1, \dots, t_m)\) is a formula.
  \end{bracketenumerate}
Let \(\I = (\A, \beta)\) be a \(\sigma\)-interpretation.
  For a formula or counting term \(\xi\) from \(\FOC[\sigma]\),
  the semantics \(\sem{\xi}^\I\) is defined as follows:
\begin{bracketenumerate}
    \item
      \(\sem{x_1{=}x_2}^\I = 1\) if \(\beta(x_1) = \beta(x_2)\),
      and \(\sem{x_1{=}x_2}^\I = 0\) otherwise;
      \(\sem{R(x_1, \dots, x_k)}^\I = 1\) if
      \(\bigl(\beta(x_1), \dots, \beta(x_k)\bigr) \in R(\A)\), and
      \(\sem{R(x_1, \dots, x_k)}^\I = 0\) otherwise.
    \item
      \(\sem{\neg \phi}^\I = 1 - \sem{\phi}^\I\)
      and \(\sem{(\phi \lor \psi)} = \max \set{\sem{\phi}^\I, \sem{\psi}^\I}\).
    \item
      \(\sem{\exists x\, \phi}^\I = \max \setc{\sem{\phi}^{\I\frac{v}{x}}}{v \in A}\).
    \item \(\sem{\FOCCount{\tx}{\phi}}^\I =
      \bigabs{\bigsetc{(v_1, \dots, v_k) \in A^k}
      {\sem{\phi}^{\I \frac{v_1, \dots, v_k}{x_1, \dots, x_k}} = 1}}\),
      where \(\tx = (x_1, \dots, x_k)\).
    \item \(\sem{i}^\I = i\) for \(i \in \Z\).
    \item \(\sem{(t_1 + t_2)}^\I = \sem{t_1}^\I + \sem{t_2}^\I\)
      and \(\sem{(t_1 \cdot t_2)}^\I = \sem{t_1}^\I \cdot \sem{t_2}^\I\).
    \item \(\sem{\Pred(t_1, \dots, t_m)}^\I = 1\)
      if \((\sem{t_1}^\I, \dots, \sem{t_m}^\I) \in \sem{\Pred}\),
      and \(\sem{\Pred(t_1, \dots, t_m)}^\I = 0\) otherwise.
  \end{bracketenumerate}
\end{definition}

We write \((\phi \land \psi)\), \((\phi \rightarrow \psi)\), and \(\forall x\, \phi\)
as shorthands for \(\neg(\neg \phi \lor \neg \psi)\), \((\neg \phi \lor \psi)\),
and \(\neg \exists x\, \neg \phi\).
For counting terms \(t_1\) and \(t_2\),
we write \((t_1 - t_2)\) as a shorthand for \(\bigl(t_1 + ((-1) \cdot t_2)\bigr)\).
Counting terms \(t\) of the form \(\FOCCount{\tx}{\phi}\)
(\ie, obtained by applying rule~\eqref{def:foc-countingterm})
are also called \emph{$\#$-terms}.
An \emph{expression} is a formula or a counting term.
For an expression \(\xi\), we write \(\sigma(\xi)\) for the set of all
relation symbols that occur in \(\xi\).
The \emph{length} \(\abs{\xi}\) of an expression \(\xi\)
is its length when viewed as a word over the alphabet
\(\sigma \cup \vars \cup \P \cup \Z \cup \set{,} \cup
\set{=, \neg, \lor, (, ), \exists, \#, ., +, \cdot}\).

The \emph{free variables} \(\free(\xi)\) of an expression \(\xi\) are inductively defined as follows:
\eqref{def:fo-atomic}
\(\free(x_1{=}x_2) = \set{x_1, x_2}\)
and \(\free\bigl(R(x_1, \dots, x_k)\bigr) = \set{x_1, \dots, x_k}\);
\eqref{def:fo-bool}
\(\free(\neg \phi) = \free(\phi)\)
and \(\free\bigl((\phi \lor \psi)\bigr) = \free(\phi) \cup \free(\psi)\);
\eqref{def:fo-exists}
\(\free(\exists x\, \phi) = \free(\phi) \setminus \set{x}\);
\eqref{def:foc-countingterm}
\(\free\bigl(\FOCCount{(x_1, \dots, x_k)}{\phi}\bigr)
= \free(\phi) \setminus \set{x_1, \dots, x_k}\);
\eqref{def:foc-constterm}
\(\free(i) = \emptyset\) for \(i \in \Z\);
\eqref{def:foc-plustimesterm}
\(\free\bigl((t_1 + t_2)\bigr) = \free\bigl((t_1 \cdot t_2)\bigr) = \free(t_1) \cup \free(t_2)\);
\eqref{def:foc-P} \(\free\bigl(\Pred(t_1, \dots, t_m)\bigr) = \bigcup_{i=1}^m \free(t_i)\).
We write \(\xi(z_1, \dots, z_k)\) to indicate that
\(\free(\xi) \subseteq \set{z_1, \dots, z_k}\).
A \emph{sentence} is a formula without free variables, and
a \emph{ground term} is a counting term without free variables.

The set of formulas for \(\FO[\sigma]\)
is built according to the rules \eqref{def:fo-atomic}--\eqref{def:fo-exists}.
The \emph{quantifier rank} \(\qr(\phi)\) of an \(\FO[\sigma]\) formula \(\phi\)
is the maximum nesting depth of constructs using rule~\eqref{def:fo-exists}
in order to construct \(\phi\).
By \(\FOC\), we denote the union of all \(\FOC[\sigma]\) for arbitrary signatures \(\sigma\).
This applies analogously to \(\FO\).
 \section{Further Preliminaries Used in the Appendices}
\label{sec:appendix-further-prelims}

Throughout the appendices,
for every signature \(\sigma\) and every \(\sigma\)-structure \(\A\),
we will assume that \(\top, \bot \in \sigma\) are \(0\)-ary relation symbols
with \(\top(\A) = \set{()}\) and \(\bot(\A) = \emptyset\).
This assumption does not weaken the results stated in the main part of the paper,
and they also hold without the assumption.
For the statements and proofs in the appendices, however,
it will be convenient to work with the formulas \(\top()\) and \(\bot()\) of quantifier rank \(0\),
where we always have \(\A \models \top()\) and \(\A \not\models \bot()\).
For convenience, we also use $\top$ or $\bot$ directly as formulas to mean $\top()$ or $\bot()$, respectively.

For a \(\cgFOC[\sigma]\) formula \(\phi(\tx, \ty)\), a \(\sigma\)-structure \(\A\),
and a tuple \(\tw \in A^{\abs{\ty}}\), we let \[\sem{\FOCCount{\tx}{\phi(\tx, \tw)}}^\A
\deff \sem{\FOCCount{\tx}{\phi(\tx, \ty)}}^{(\A, \beta)}\]
for one (and hence every) assignment \(\beta\)
with \(\beta(y_i) = w_i\) for all \(i \in [\abs{\ty}]\).

\begin{theorem}[Feferman--Vaught \cite{FefermanVaught}]
  \label{thm:FV}
  For every \(\FO\) formula \(\phi(\tx, \ty)\),
  where \(\tx\) and \(\ty\) are tuples of pairwise distinct variables,
  there is a finite set \(\Delta_{\phi, \tx, \ty}\) of pairs of \(\FO[\sigma(\phi)]\) formulas
  that is computable from \(\phi(\tx, \ty)\) such that the following holds.

  For all \(\sigma\)-structures \(\A, \B\) for some \(\sigma \supseteq \sigma(\phi)\)
  with \(A \cap B = \emptyset\)
  and for all \(\tv \in A^{\abs{\tx}}\) and \(\tw \in B^{\abs{\ty}}\),
  it holds that \(\A \uplus \B \models \phi[\tv, \tw]\) if and only if
  there is a pair \((\alpha(\tx), \beta(\ty)) \in \Delta_{\phi, \tx, \ty}\)
  with \(\A \models \alpha[\tv]\) and \(\B \models \beta[\tw]\).
  Moreover, for every pair \((\alpha, \beta) \in \Delta_{\phi, \tx, \ty}\),
  it holds that \(\qr(\alpha) \leq \qr(\phi)\) and \(\qr(\beta) \leq \qr(\phi)\),
  and the pairs in \(\Delta_{\phi, \tx, \ty}\) are mutually exclusive,
  that is, for two distinct pairs \((\alpha, \beta), (\alpha', \beta') \in \Delta_{\phi, \tx, \ty}\),
  it holds that \(\alpha \land \alpha'\) is unsatisfiable.

  We call \(\Delta_{\phi, \tx, \ty}\) a \emph{Feferman--Vaught decomposition}
  of \(\phi\) with respect to \(\tx, \ty\).
\end{theorem}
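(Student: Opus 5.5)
The statement immediately above is the Feferman--Vaught theorem, cited from \cite{FefermanVaught}. I would prove it by structural induction on the \(\FO\) formula \(\phi(\tx,\ty)\), building \(\Delta_{\phi,\tx,\ty}\) explicitly at each step. The core device is Hintikka formulas. For a fixed signature \(\sigma(\phi)\), a quantifier rank \(q\), and a variable tuple \(\tx\), there are only finitely many rank-\(q\) types of \(\tx\) up to logical equivalence. Each type is defined by a rank-\(q\) formula \(\chi(\tx)\), and these formulas are computable and pairwise mutually exclusive. I will show the following stronger claim, which immediately yields the stated form. For \(q \deff \qr(\phi)\), whether \(\A\uplus\B\models\phi[\tv,\tw]\) depends only on two things: the rank-\(q\) Hintikka type of \(\tv\) in \(\A\), and the rank-\(q\) Hintikka type of \(\tw\) in \(\B\). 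Given that, I set \(\Delta_{\phi,\tx,\ty}\) to be the set of pairs \((\chi_\A,\chi_\B)\) of Hintikka types whose combination forces \(\phi\). Mutual exclusivity then holds, with the sole subtlety of pairs sharing the same \(\alpha\). I handle this by letting \(\alpha\) range over types and letting \(\beta\) be the disjunction of all compatible \(\B\)-types; a disjunction of rank-\(q\) formulas still has rank \(q\). Each \(\alpha\) then occurs in exactly one pair, and different \(\alpha\)'s are exclusive.

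The key step is an Ehrenfeucht--Fraïssé composition argument. Suppose \((\A,\tv)\equiv_q(\A',\tv')\) and \((\B,\tw)\equiv_q(\B',\tw')\). Then Duplicator wins the \(q\)-round game on \((\A\uplus\B,\tv\tw)\) versus \((\A'\uplus\B',\tv'\tw')\) by playing the two component games in parallel. Whenever Spoiler picks an element of \(\A\), Duplicator answers in \(\A'\) according to the strategy for the first component, and similarly for the second component. The resulting map preserves all relations, because every tuple of a relation in a disjoint union lies entirely in one part. It also preserves equality, because the parts are disjoint. The only \(0\)-ary atoms are \(\top\) and \(\bot\), since \(R(\A\uplus\B)=R(\A)\cup R(\B)\), and these are handled trivially.

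For computability, I would not enumerate all structures. Instead I would do a syntactic induction: atoms are split via disjointness, Boolean combinations combine decompositions, and for \(\exists z\) one splits on whether the witness \(z\) lies in \(\A\) or in \(\B\). In the \(\exists z\) step, the quantifier is pushed into the corresponding component formula, which keeps the ranks at most \(\qr(\phi)\). Finally one normalises to Hintikka-form disjunctions to recover mutual exclusivity.

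I expect the main obstacle to be this effective construction in the quantifier step, in particular tracking that the ranks do not grow and that the final normalisation is computable. Both properties rely on the finiteness and computability of Hintikka types of bounded rank over a finite signature, which is standard.
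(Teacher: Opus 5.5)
The paper gives no proof of this statement: it is quoted as a black box from \cite{FefermanVaught} and only applied in the proofs of \cref{lem:fo-dist-counting,lem:fo-dist-rank}. There is therefore no in-paper argument to compare with, and your proposal is a correct version of the standard proof.

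The syntactic induction of your third paragraph proves the entire theorem on its own: existence, computability, and \(\qr(\alpha),\qr(\beta)\le\qr(\phi)\). For \(\neg\psi\) with \(\Delta_\psi=\setc{(\alpha_i,\beta_i)}{i\in[n]}\), take all pairs \(\bigl(\Land_{i\in P}\neg\alpha_i,\Land_{i\notin P}\neg\beta_i\bigr)\) with \(P\subseteq[n]\). So the Ehrenfeucht--Fra\"{\i}ss\'e composition only re-proves existence. The semantically defined set of `forcing' type pairs yields no algorithm by itself, as you rightly observe.

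For the final normalisation you do not need Hintikka types. The pairs \(\bigl(\Land_{i\in P}\alpha_i\land\Land_{i\notin P}\neg\alpha_i,\ \Lor_{i\in P}\beta_i\bigr)\), \(P\subseteq[n]\), have pairwise contradictory first components, do not raise ranks, and are trivially computable. If you do use Hintikka normal forms, you have to work with all formal Hintikka formulas, unsatisfiable ones included, together with a syntactic entailment test. This is because deciding which rank-\(q\) types are realised in finite structures is undecidable in general.

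One small inaccuracy: \(\top\) and \(\bot\) are not the only \(0\)-ary atoms. The paper allows arbitrary \(0\)-ary symbols, e.g.\ \(R_f\) with \(f\equiv 1\) in the proof of \cref{lem:fo-rank}. Both of your arguments still cover them, since \(\A\uplus\B\models R()\) iff \(\A\models R()\) or \(\B\models R()\). This yields the decomposition \(\set{(R(),\top),(\neg R(),R())}\).
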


For a signature \(\sigma\), a \(\sigma\)-structure \(\A\), numbers \(k,q \in \N\),
and a tuple \(\tv \in A^k\), we let
\[\tp_q(\A, \tv) \deff \bigsetc{\phi(x_1, \dots, x_k) \in \FO[\sigma]}
{\qr(\phi) \leq q, \A \models \phi[\tv]}.\]
Moreover, for \(r \in \N\),
we let \(\ltp_{q, r}(\A, \tv) \deff \tp_q\bigl(\NrA{\tv}, \tv\bigr)\).
As a direct consequence of \cref{thm:FV}, we have the following
(cf.~\cite[Lemma~2.3]{Grohe_LogicGraphsAlgorithms}).

\begin{lemma}
  \label{lem:FV}
  For every signature \(\sigma\),
  for all \(k, \ell, q \in \N\),
  for all \(\sigma\)-structures \(\A, \B\) with \(A \cap B = \emptyset\),
  and for all tuples \(\tv \in A^k\) and \(\tw \in B^\ell\),
  \(\tp_q(\A \uplus \B, \tv \tw)\)
  is uniquely determined by
  \(\tp_q(\A, \tv)\) and \(\tp_q(\B, \tw)\).
\end{lemma}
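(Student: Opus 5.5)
The plan is to derive the lemma from \cref{thm:FV} by induction on \(q\), working directly with type-sets. Let \(\sigma\), \(k,\ell,q\) be fixed. We must show that whenever \(\tp_q(\A,\tv)=\tp_q(\A',\tv')\) and \(\tp_q(\B,\tw)=\tp_q(\B',\tw')\), with \(A\cap B=\emptyset\) and \(A'\cap B'=\emptyset\), we also have \(\tp_q(\A\uplus\B,\tv\tw)=\tp_q(\A'\uplus\B',\tv'\tw')\).

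\textbf{Main step.} Fix an arbitrary \(\FO[\sigma]\) formula \(\phi(\tx,\ty)\) with \(\qr(\phi)\le q\), where \(\abs{\tx}=k\) and \(\abs{\ty}=\ell\). To place \(\phi\) in the form required by \cref{thm:FV}, I first rename variables so that \(\tx\) and \(\ty\) consist of pairwise distinct variables. Repeated entries in \(\tv\tw\) are handled by replacing \(\phi\) with a formula that adds equalities; this does not change the quantifier rank. Then apply \cref{thm:FV} to obtain the decomposition \(\Delta_{\phi,\tx,\ty}\), whose pairs \((\alpha,\beta)\) satisfy \(\qr(\alpha),\qr(\beta)\le q\). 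By that theorem, \(\A\uplus\B\models\phi[\tv,\tw]\) holds iff there is a pair \((\alpha,\beta)\in\Delta\) with \(\alpha\in\tp_q(\A,\tv)\) and \(\beta\in\tp_q(\B,\tw)\). This condition refers only to the two given types, and \(\Delta\) depends on \(\phi\) alone, not on the structures. So the same equivalence holds for \(\A',\B'\), and membership of \(\phi\) in the two combined types agrees.

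\textbf{Obstacle.} The only point needing care is the bookkeeping that makes \cref{thm:FV} applicable. Types are defined over variables \(x_1,\dots,x_{k+\ell}\), while the theorem splits these into the blocks \(\tx\) and \(\ty\); we identify the last \(\ell\) variables with \(\ty\). The signature also has to match: \cref{thm:FV} permits any \(\sigma\supseteq\sigma(\phi)\), and \(\alpha,\beta\) lie in \(\FO[\sigma(\phi)]\subseteq\FO[\sigma]\), so they belong to the \(\sigma\)-types. Since the argument holds for every \(\phi\) of quantifier rank at most \(q\), \(\tp_q(\A\uplus\B,\tv\tw)\) is a function of the pair of types, which proves the lemma.
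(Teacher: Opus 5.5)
Your proposal is correct and is the argument the paper has in mind when it calls the lemma a direct consequence of \cref{thm:FV}: for each \(\phi\) with \(\qr(\phi)\le q\), the structure-independent decomposition \(\Delta_{\phi,\tx,\ty}\) reduces \(\phi\in\tp_q(\A\uplus\B,\tv\tw)\) to the membership of its components \(\alpha,\beta\), which have rank at most \(q\), in the two given types. The announced induction on \(q\) is never used, and the equality adjustment for repeated entries is unnecessary because \cref{thm:FV} already allows arbitrary tuples; the one genuine bookkeeping step is renaming \(\ty=(x_{k+1},\dots,x_{k+\ell})\) to \((x_1,\dots,x_\ell)\), so that \(\beta\) literally belongs to \(\tp_q(\B,\tw)\).
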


For \(r \in \N\) and a signature \(\sigma\),
let \(\dist^\sigma_{\leq r}(x,y)\) be an \(\FO[\sigma]\) formula
such that for every \(\sigma\)-structure \(\A\)
and all \(v,w \in A\),
we have \(\A \models \dist^\sigma_{\leq r}[v,w]\) if and only if
\(\dist^{G_\A}(v,w) \leq r\).
Moreover, we set \(\dist^\sigma_{> r}(x,y) \deff \neg \dist^\sigma_{\leq r}(x,y)\).
We can define the formulas in such a way that
\(\qr\bigl(\dist^\sigma_{\leq r}\bigr) = \qr\bigl(\dist^\sigma_{> r}\bigr)
\leq \max_{R \in \sigma} \ar(R) \cdot \lceil\log(r+1)\rceil\).

A \emph{basic local sentence} is an \(\FO\) sentence of the form
\[\exists y_1 \dots \exists y_k\;\bigl(\Land_{1 \leq i < j \leq k} \dist^\sigma_{> 2r}(y_i, y_j)
\land \Land_{i \in [k]} \psi(y_i)\bigr),\]
where \(k \in \Npos\), \(r \in \N\), and \(\psi(y)\) is an \(r\)-local formula.
The number \(r\) is called the \emph{radius} of the basic local sentence.

For \(r \in \N\), an \(\FO\) formula is in \emph{Gaifman normal form of radius at most \(r\)}
if it is a Boolean combination of \(r\)-local formulas
and basic local sentences of radius at most \(r\).
We say that an \(\FO\) formula is in Gaifman normal form
if it is in Gaifman normal form of radius at most \(r\) for some \(r \in \N\).

\begin{theorem}[Gaifman \cite{Gaifman_Locality}]
  \label{thm:gaifman}
  There is an algorithm that, given an \(\FO\) formula \(\phi\),
  outputs an \(\FO\) formula \(\phi'\) in Gaifman normal form with radius at most \(7^{\qr(\phi)}\)
  such that \(\phi'\) is equivalent to \(\phi\).
\end{theorem}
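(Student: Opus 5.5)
We prove Gaifman's theorem by induction on the structure of \(\phi\). The induction hypothesis is that every subformula of quantifier rank \(q\) is equivalent to a formula in Gaifman normal form of radius at most \(7^{q}\), and that this formula can be computed. Atomic formulas are \(0\)-local. Boolean combinations preserve the normal form, with radius the maximum of the radii. So the only real case is \(\phi = \exists y\,\psi(\tx,y)\), where \(\psi\) is already in Gaifman normal form of radius \(r \le 7^{q}\).

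\textbf{Reducing to one local formula.} First, bring \(\psi\) into disjunctive normal form over its building blocks. The basic local sentences do not mention \(y\), so they can be pulled out of \(\exists y\). This reduces the task to \(\exists y\,\chi(\tx,y)\), where \(\chi\) is a single \(r\)-local formula. Then split on the distance of \(y\) from \(\tx\).
\begin{itemize}
  \item \emph{Near case}: \(\exists y\,(\Lor_i \dist_{\le 2r+1}(x_i,y)\land\chi)\). This is \((3r+1)\)-local around \(\tx\), because the whole \(r\)-neighbourhood of \(\tx y\) lies inside \(N_{3r+1}(\tx)\).
  \item \emph{Far case}: \(y\) has distance \(>2r+1\) from every \(x_i\). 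Then \(\mathcal N_r(\tx y)\) is the disjoint union \(\mathcal N_r(\tx)\uplus\mathcal N_r(y)\), with no edges between the two parts.
\end{itemize}
In the far case we apply \cref{thm:FV} (Feferman--Vaught) to the relativised \(\chi\). This rewrites \(\chi\) as a finite disjunction of mutually exclusive conjunctions \(\alpha(\tx)\land\beta(y)\), where \(\alpha\) is \(r\)-local in \(\tx\) and \(\beta\) is \(r\)-local in \(y\). It remains to express \(\exists y\,(\dist_{>2r+1}(\tx,y)\land\beta(y))\).

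\textbf{Main obstacle: \(y\) far from \(\tx\) with \(\beta(y)\).} This is where a genuinely non-local statement must be produced. Let \(k=\abs{\tx}\). We distinguish two cases.
\begin{itemize}
  \item \emph{Many scattered witnesses.} If there are \(k+1\) elements satisfying \(\beta\) that are pairwise at distance \(>2(2r+1)\), then at least one of them is far from all of \(\tx\). This situation is expressed by a basic local sentence of radius \(2r+1\), with inner formula the \((2r+1)\)-local formula \(\beta\).
  \item \emph{Few scattered witnesses.} Otherwise, for some \(\ell\le k\) there is a maximal scattered family \(z_1,\dots,z_\ell\) of \(\beta\)-witnesses. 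Every \(\beta\)-witness then lies within distance \(4r+2\) of some \(z_j\). We express "exactly \(\ell\) but not \(\ell+1\) scattered witnesses" by Boolean combinations of basic local sentences.
\end{itemize}
In the second case, we still have to decide whether some witness avoids \(N_{2r+1}(\tx)\), and this is the step that needs care. We handle it by a finite case distinction over which \(z_j\) are close to which \(x_i\). A witness far from \(\tx\) exists if and only if either some \(z_j\) is far from \(\tx\), or the existence of a far witness can be checked locally inside a bounded neighbourhood of \(\tx\). By the pigeonhole argument on scattered sets, the radius needed for this stays within roughly \(6r+3\le 7r\) when \(r\ge1\). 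We choose the constants so that all radii are bounded by \(7^{q+1}\).

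\textbf{Computability.} Every step is effective. The disjunctive-normal-form rewriting and the distance formulas are explicit. The Feferman--Vaught decomposition is computable by \cref{thm:FV}. The case distinctions range over finitely many configurations. Hence the resulting normal form \(\phi'\) is computable, and its radius is at most \(7^{\qr(\phi)}\).
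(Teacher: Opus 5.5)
The paper does not prove this theorem; it only cites Gaifman. Measured against the standard argument, most of your reductions are sound: the induction on \(\phi\), pulling the basic local sentences out of \(\exists y\), the near/far split, Feferman--Vaught on the disjoint neighbourhoods, and the case of \(k+1\) scattered witnesses. The gap is the `few scattered witnesses' case, which is the heart of the theorem. The family \(z_1,\dots,z_\ell\) is not determined by \(\tx\); it exists only under an existential quantifier ranging over the whole structure. Any formula expressing `some \(z_j\) is far from \(\tx\)', or a case distinction over which \(z_j\) lie near which \(x_i\), has the shape \(\exists z_1\cdots\exists z_\ell(\dots)\), with distance constraints tying the \(z_j\) to the free variables \(\tx\). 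Such a formula is neither local nor a sentence, and it is no easier than the statement \(\exists y\,(\dist_{>2r+1}(\tx,y)\land\beta(y))\) you started from. Your dichotomy is semantically true, but it does not yield a Gaifman normal form, and the proposal never says how it would.

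The missing idea is to replace the global family by a comparison of two counts. Let \(\theta_m\) be the basic local sentence (radius \(2r+1\)) saying that there are \(m\) \(\beta\)-witnesses pairwise at distance \(>4r+2\). Let \(\lambda_m(\tx)\) be the \((6r+3)\)-local formula saying that there are \(m\) such witnesses, pairwise at distance \(>4r+2\), all within distance \(2r+1\) of \(\tx\). Then
\[
\exists y\,\bigl(\dist_{>2r+1}(\tx,y)\land\beta(y)\bigr)\;\equiv\;\theta_{k+1}\lor\Lor_{m=1}^{k}\bigl(\theta_m\land\neg\lambda_m(\tx)\bigr)\lor\exists y\,\bigl(2r+1<\dist(\tx,y)\le 6r+3\land\beta(y)\bigr).
\]
Each disjunct forces a far witness. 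For the middle disjuncts, \(m\) scattered witnesses exist but not \(m\) near ones, so one of them is far. Conversely, suppose there is a far witness but none within distance \(6r+3\) of \(\tx\). Take a maximum near scattered family; it has size \(m_0\le k\) by pigeonhole. The far witness is at distance \(>4r+2\) from all its members, so adding it shows that \(\theta_{m_0+1}\) holds while \(\lambda_{m_0+1}\) fails.

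The radius also needs correcting. The last disjunct is \((7r+3)\)-local, not `roughly \(6r+3\le 7r\)', because you omitted the extra \(r\) needed to evaluate \(\beta\) around the witness. So the invariant `radius \(\le 7^q\)' does not propagate, since \(7\cdot 7^q+3>7^{q+1}\). Carry \(r_q=(7^q-1)/2\) instead; it satisfies \(r_{q+1}=7r_q+3\) and \(r_q\le 7^q\).
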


As a consequence of \cref{thm:gaifman}, we obtain the following.

\begin{lemma}
  \label{lem:gaifman}
  There is a computable function \(q' \colon \FO \to \N\)
  such that for every signature \(\sigma\),
  every \(k \in \Npos\),
  every formula \(\phi(x_1, \dots, x_k) \in \FO[\sigma]\),
  for \(q \deff q'(\phi)\) and \(r \deff 7^{\qr(\phi)}\),
  for every \(\sigma\)-structure \(\A\),
  and all \(\tv, \tv' \in A^k\)
  with \(\ltp_{q, r}(\A, \tv) =\ltp_{q, r}(\A, \tv')\),
  it holds that \(\A \models \phi[\tv]\) if and only if \(\A \models \phi[\tv']\).
\end{lemma}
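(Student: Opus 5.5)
The plan is to read the statement off the normal form of \cref{thm:gaifman}. We define \(q'(\phi)\) by running the algorithm of \cref{thm:gaifman} on \(\phi\) and letting \(q'(\phi)\) be the maximum quantifier rank of the local formulas in the output. Since that algorithm is effective, \(q'\) is computable. We then check that the local types \(\ltp_{q,r}\) with \(r = 7^{\qr(\phi)}\) fix the truth value of every building block of the normal form.

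\textbf{Setup.} Let \(\phi(x_1,\dots,x_k)\in\FO[\sigma]\) and let \(\phi'\) be the equivalent formula from \cref{thm:gaifman}. It is in Gaifman normal form of radius at most \(r = 7^{\qr(\phi)}\). So \(\phi'\) is a Boolean combination of:
\begin{itemize}
  \item basic local sentences \(\chi_1,\dots,\chi_a\);
  \item formulas \(\psi_1,\dots,\psi_b\), each \(r_j\)-local for some \(r_j \le r\).
\end{itemize}
We regard each \(\psi_j\) as a formula in the variables \(x_1,\dots,x_k\). Those of its free variables lie among the \(x_i\), since \(\free(\phi')\subseteq\free(\phi)\) can be arranged, or extra variables can be read as part of the tuple. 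Set \(q \deff q'(\phi) \ge \max_j \qr(\psi_j)\), and fix \(\A\) and \(\tv,\tv'\in A^k\) with \(\ltp_{q,r}(\A,\tv)=\ltp_{q,r}(\A,\tv')\).

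\textbf{Basic local sentences.} The \(\chi_i\) have no free variables. Hence \(\A\models\chi_i\) holds or fails independently of the tuple, and these blocks take the same value for \(\tv\) and \(\tv'\).

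\textbf{Local formulas.} First we observe that \(r_j\)-local implies \(r\)-local when \(r_j\le r\). Indeed, \(\NeighbA{r_j}{\tv}\) coincides with the \(r_j\)-neighbourhood of \(\tv\) inside \(\NrA{\tv}\), because \(\NrA{\tv}\) is an induced substructure containing all short paths from \(\tv\). Applying \(r_j\)-locality twice then gives
\[\sem{\psi_j(\tv)}^\A=\sem{\psi_j(\tv)}^{\NeighbA{r_j}{\tv}}=\sem{\psi_j(\tv)}^{\NrA{\tv}}.\]
Since \(\qr(\psi_j)\le q\), either \(\psi_j\) or \(\neg\psi_j\) lies in \(\tp_q(\NrA{\tv},\tv)=\ltp_{q,r}(\A,\tv)\). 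By assumption this set equals \(\ltp_{q,r}(\A,\tv')\), so
\[\A\models\psi_j[\tv] \iff \NrA{\tv}\models\psi_j[\tv] \iff \NrA{\tv'}\models\psi_j[\tv'] \iff \A\models\psi_j[\tv'].\]

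\textbf{Conclusion.} Every building block takes the same truth value on \(\tv\) and \(\tv'\), so the Boolean combination \(\phi'\) does too. As \(\phi'\) is equivalent to \(\phi\), we get \(\A\models\phi[\tv]\iff\A\models\phi[\tv']\).

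\textbf{Main obstacle.} The only step needing care is the bookkeeping for local formulas: monotonicity in the radius, the convention that locality is defined only for \(\abs{\tx}>0\) (guaranteed here by \(k\ge1\)), and the signature, since the types are taken over \(\sigma\supseteq\sigma(\phi)\). Everything else is immediate from \cref{thm:gaifman}.
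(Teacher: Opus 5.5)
Your proof is correct and follows the paper's argument: both take the Gaifman normal form \(\phi'\) from \cref{thm:gaifman}, observe that the basic local sentences do not depend on the tuple, and conclude that the remaining \(r\)-local parts are decided by \(\ltp_{q,r}\). The only difference is cosmetic. The paper substitutes \(\top/\bot\) for the basic local sentences to get a single \(r\)-local formula \(\phi'_\A\) of quantifier rank at most \(q=\qr(\phi')\) and checks its membership in the local type, whereas you argue block by block with \(q\) the maximal quantifier rank of the local blocks; your radius-monotonicity step is harmless but unnecessary, since the local formulas in the normal form are already \(r\)-local by definition.
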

\begin{proof}
  Let \(\phi(x_1, \dots, x_k) \in \FO[\sigma]\),
  and let \(\phi'(x_1, \dots, x_k)\) be the formula computed via \cref{thm:gaifman} from \(\phi\)
  such that \(\phi'\) is equivalent to \(\phi\)
  and \(\phi'\) is in Gaifman normal form with radius at most \(r = 7^{\qr(\phi)}\).
  We let \(q \deff q'(\phi) \deff \qr(\phi')\), which is computable from \(\phi\).

  Now let \(\A\) be a \(\sigma\)-structure,
  and let \(\phi'_\A(\tx)\) be the formula obtained from \(\phi'(\tx)\)
  by replacing every basic local sentence \(\chi\) occurring in \(\phi'\) by \(\top\)
  if \(\A \models \chi\), and replacing \(\chi\) by \(\bot\) if \(\A \not\models \chi\).
  Then, for all \(\tv \in A^k\), it holds that \(\A \models \phi[\tv]\)
  if and only if \(\A \models \phi'[\tv]\) if and only if \(\A \models \phi'_\A[\tv]\).
  Moreover, \(\phi'_\A(\tx)\) is an \(r\)-local formula of quantifier rank
  \(\qr(\phi'_\A) \leq \qr(\phi') = q\).
  Hence, \(\A \models \phi[\tv]\) if and only if \(\phi'_\A \in \ltp_{q, r}(\A, \tv)\).

  This shows that
  we have \(\A \models \phi[\tv]\) if and only if \(\A \models \phi[\tv']\)
  for all \(\tv, \tv' \in A^k\)
  with \(\ltp_{q, r}(\A, \tv) =\ltp_{q, r}(\A, \tv')\).
\end{proof}

\begin{remark}
  \label{rem:finite-fo}
  We can define a syntactic normal form for first-order logic
  such that for all signatures \(\sigma\) and numbers \(k, q \in \N\),
  there are only finitely many \(\FO[\sigma]\) formulas \(\phi\) in this normal form
  with \(\qr(\phi) \leq q\) and \(\free(\phi) \subseteq \set{x_1, \dots, x_k}\).
  Let \(\Phi[\sigma, k, q]\) denote the set of all these formulas.
  The normal form can be defined such that
  \(\Phi[\sigma, k, q]\) is computable from \(\sigma\), \(k\), and \(q\)
  and there is an algorithm that, given a first-order formula \(\phi\)
  with \(\free(\phi) \subseteq \set{x_1, \dots, x_k}\),
  outputs an equivalent formula \(\phi' \in \Phi[\sigma(\phi), k, \qr(\phi)]\).
\end{remark}
 \section{Details Omitted in Section~\ref{sec:cgfoc}}
\label{sec:appendix-cgfoc}

Based on \cref{def:foc}, the definition of \(\cgFOC\) looks as follows.

\begin{definition}[\(\cgFOC{[}\sigma{]}\)]
  \label[definition]{def:cgfoc}
  The set of \emph{formulas} and \emph{counting terms} for \(\cgFOC[\sigma]\)
  (clique-guarded first-order logic with counting) is built according to
  rules~\eqref{def:fo-atomic}--\eqref{def:foc-plustimesterm} of \cref{def:foc}
  and the following modified version of rule~\eqref{def:foc-P}.
  \begin{enumerate}[(cg1)]
    \setcounter{enumi}{6}
    \renewcommand{\labelenumi}{\textbf{(\theenumi)}}
    \renewcommand{\theenumi}{cg\arabic{enumi}}
    \item\label{def:cgfoc-P}
      If \(\Pred \in \P\), \(m = \ar(\Pred)\),
      \(k \in \N\),
      \(R_1, \dots, R_k \in \sigma\),
      \(\tz_i \in \vars^{\ar(R_i)}\) for all \(i \in [k]\),
      and \(t_1, \dots, t_m\) are counting terms
      such that for all \(z, z' \in \bigcup_{p=1}^m \free(t_p)\) with \(z \neq z'\),
      there is some \(i \in [k]\) with \(z, z' \in \tilde{z}_i\),
      then \(\bigl(\Land_{i=1}^k R_i(\tz_i) \land \Pred(t_1, \dots, t_m)\bigr)\) is a formula.
  \end{enumerate}
\end{definition}

\subsection{Details on Remark~\ref{rem:clique-guarded}}
\label{subsec:appendix-cgfoc-clique-guarded}

Let \(\sigma\) be a signature, let \(\Pred \in \P\), \(m \deff \ar(\Pred)\),
and let \(t_1, \dots, t_m\) be \(\cgFOC[\sigma]\) counting terms.
We set \(X \deff \bigcup_{i=1}^m \free(t_i)\)
and \(\phi' \deff \Land_{\set{x,y} \in \binom{X}{2}} \chi(x,y)\) for
\[\chi(x,y) \deff\ x=y \lor \Lor_{R \in \sigma} \exists z_1 \dots \exists z_{\ar(R)}
\Lor_{\substack{i,j \in [\ar(R)],\\ i \neq j}} R(\tz_{i \to x, j \to y}),\]
where the tuple \(\tz_{i \to x, j \to y}\) is obtained from \((z_1, \dots, z_{\ar(R)})\)
by replacing \(z_i\) with \(x\) and \(z_j\) with \(y\).
Then, for every \(\sigma\)-interpretation \(\I = (\A, \beta)\),
it holds that \(\I \models \phi'\) if and only if \(\setc{\beta(x)}{x \in X}\)
forms a clique in \(G_\A\).

We transform \(\phi'\) into an equivalent formula \(\phi''=\Lor_{i \in I} \exists z_1 \dots \exists z_{k_i}
\Land_{\set{x,y} \in \binom{X}{2}} \psi_{i,x,y}\),
where \(I\) is a finite index set,
\(k_i \in \N\) for all \(i \in I\),
and \(\psi_{i,x,y}\) is an atomic formula
with \(x,y \in \free(\psi_{i,x,y})\) for all \(i \in I\) and \(\set{x,y} \in \binom{X}{2}\).

Then, for all \(i \in I\), we can transform the \(\FOC[\sigma]\) formula
\(\phi'_i \deff \Land_{\set{x,y} \in \binom{X}{2}} \psi_{i,x,y} \land P(t_1, \dots, t_m)\)
into an equivalent \(\cgFOC[\sigma]\) formula \(\phi_i\)
by iteratively dropping atomic formulas of the form \(x=y\) and replacing \(y\) with \(x\)
in the remaining formula.

Hence, \(\phi \deff \Lor_{i \in I} \exists z_1 \dots \exists z_{k_i} \phi_i\)
is a \(\cgFOC[\sigma]\) formula such that, for all \(\sigma\)-interpretations \(\I = (\A, \beta)\),
it holds that \(\I \models \phi\)
if and only if \((\sem{t_1}^\I, \dots, \sem{t_m}^\I) \in \sem{\Pred}\)
and \(\setc{\beta(x)}{x \in \bigcup_{i=1}^m \free(t_i)}\) forms a clique in \(G_\A\).

\subsection{Proof of Lemma~\ref{lem:cgfoc-preprocessing}}
\label{subsec:appendix-cgfoc-preprocessing}

\cgfocPreprocessing*

\begin{proof}
  We prove this result by structural induction on \(\phi\).
If \(\phi\) is built without the use of rule~\eqref{def:cgfoc-P} of \cref{def:cgfoc},
  then \(\phi\) is an \(\FO[\sigma]\) formula,
  and we can simply set \(\sigma_\phi \deff \sigma\),
  \(\phi' \deff \phi\), and \(\A_\phi \deff \A\).

  If \(\phi\) is the result of applying rule~\eqref{def:cgfoc-P},
  then \(\phi\) is of the form
  \(\Land_{i=1}^k R_i(z_{i,1}, \dots, z_{i, \ar(R_i)}) \land \Pred(t_1, \dots, t_m)\).
  Let \(\tx'\) be a tuple of pairwise distinct variables such that
  \(\tilde{x}' = \bigcup_{i=1}^m \free(t_i)\).
We set \(\sigma_\phi \deff \sigma \uplus \set{R_\phi}\)
  for a fresh relation symbol \(R_\phi\) of arity \(\ell \deff \ar(R_\phi) \deff \abs{\tx'}\),
  and we set \(\phi'(\tx) \deff
  \Land_{i=1}^k R_i(z_{i,1}, \dots, z_{i, \ar(R_i)}) \land R_\phi(\tx')\).
  Furthermore, we let \(\A_\phi\) be the \(\sigma_\phi\)-expansion of \(\A\)
  where \(R_\phi(\A_\phi)\) is the set of all tuples \(\tw \in A^\ell\)
  that form a clique in \(G_\A\) and where
  \((\sem{t_1}^{(\A, \beta)}, \dots, \sem{t_m}^{(\A, \beta)}) \in \sem{\Pred}\) for \(\beta(x'_i) \deff w_i\) for all \(i \in [\ell]\).
  Since all tuples of \(R_\phi(\A_\phi)\) form a clique in \(G_\A\),
  it holds that \(G_\A = G_{\A_\phi}\).
  Moreover, for all \(\tv \in A^{\abs{\tx}}\),
  it holds that \(\A \models \phi[\tv]\) if and only if \(\A_\phi \models \phi'[\tv]\).
  It is easy to see that \(\A_\phi\) can be computed from \(\A\) and \(\phi\) by brute force.

  If \(\phi\) is of the form \(\neg \psi\),
  then we apply the induction hypothesis on \(\psi\) to obtain
  a signature \(\sigma_\psi \supseteq \sigma\),
  an \(\FO[\sigma_\psi]\) formula \(\psi'(\tx)\),
  and a \(\sigma_\psi\)-expansion \(\A_\psi\) of \(\A\)
  as described in \cref{lem:cgfoc-preprocessing}.
  We set \(\sigma_\phi \deff \sigma_\psi\),
  \(\phi' \deff \neg \psi'\),
  and \(\A_\phi \deff \A_\psi\).

  If \(\phi\) is of the form \(\psi_1 \lor \psi_2\),
  then we apply the induction hypothesis on \(\psi_1\) and \(\psi_2\) to obtain
  signatures \(\sigma_{\psi_1}, \sigma_{\psi_2} \supseteq \sigma\),
  formulas \(\psi_1'(\tx), \psi_2'(\tx)\),
  and structures \(\A_{\psi_1}, \A_{\psi_2}\)
  as described in \cref{lem:cgfoc-preprocessing}.
  Without loss of generality, by possibly renaming the relation symbols in \(\sigma_{\psi_2} \setminus \sigma\),
  we may assume that \(\sigma_{\psi_1} \cap \sigma_{\psi_2} = \sigma\).
  We set \(\sigma_\phi \deff \sigma_{\psi_1} \cup \sigma_{\psi_2}\),
  \(\phi' \deff \psi'_1 \lor \psi'_2\),
  and we let \(\A_\phi\) be the \(\sigma_\phi\)-expansion of \(\A_{\psi_1}\) and \(\A_{\psi_2}\).

  Finally, if \(\phi\) is of the form \(\exists x\, \psi\),
  we proceed analogously to the case \(\neg \psi\),
  and we set \(\sigma_\phi \deff \sigma_\psi\),
  \(\phi' \deff \exists x\, \psi'\),
  and \(\A_\phi \deff \A_\psi\).
\end{proof}

\subsection{Proof of Lemma~\ref{lem:nowhere-dense-cliques}}
\label{subsec:appendix-nowhere-dense-cliques}

In this subsection, we prove \cref{lem:nowhere-dense-cliques}.
Intuitively, the lemma says that the number of cliques of a given size in graphs from a nowhere dense class
is almost linear.
Moreover, also the representation size of structures from a nowhere dense class is almost linear.

\nowhereDenseCliques*

For \(d \in \N\), a graph \(G\) is \emph{\(d\)-degenerate}
if every non-empty subgraph of \(G\) contains a vertex of degree at most \(d\).
\begin{lemma}[{\cite[Lemma~3.1]{NesetrilOssonaDeMendez_Sparsity2012}}]
  \label{lem:degeneracy-cliques}
  Let \(d \in \N\), \(k \in \Npos\), and let \(G\) be a \(d\)-degenerate graph.
  Then \(G\) includes at most \(\binom{d}{k-1} \abs{V(G)}\) cliques of size \(k\).
\end{lemma}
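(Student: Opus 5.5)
The statement to prove is \cref{lem:degeneracy-cliques}: a \(d\)-degenerate graph \(G\) has at most \(\binom{d}{k-1}\abs{V(G)}\) cliques of size \(k\). I would argue via a degeneracy ordering, which turns counting cliques into counting subsets of bounded-size back-neighbourhoods.

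First, I construct the ordering. Since \(G\) is \(d\)-degenerate, \(G\) itself contains a vertex \(v_n\) of degree at most \(d\). Remove it; the remaining graph is a subgraph of \(G\) and hence again \(d\)-degenerate. Repeating this yields an ordering \(v_1, \dots, v_n\) of \(V(G)\) in which every \(v_i\) has at most \(d\) neighbours among \(v_1, \dots, v_{i-1}\). Here \(v_i\) has degree at most \(d\) in \(G[\set{v_1,\dots,v_i}]\) by construction. Let \(N^-(v_i)\) denote this set of earlier neighbours, so \(\abs{N^-(v_i)} \leq d\).

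Next, I assign each clique to a unique vertex. Every clique \(K\) of size \(k \geq 1\) has a unique vertex \(v_i \in K\) of largest index. All other vertices of \(K\) are adjacent to \(v_i\) and appear earlier in the ordering, so \(K \setminus \set{v_i}\) is a \((k-1)\)-element subset of \(N^-(v_i)\). The map \(K \mapsto (v_i, K \setminus \set{v_i})\) is injective, because \(K\) can be recovered from the pair.

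Finally, I count. For each \(i\), the number of possible second components is at most \(\binom{\abs{N^-(v_i)}}{k-1} \leq \binom{d}{k-1}\), using monotonicity of binomial coefficients in the upper argument. This also covers \(k=1\), since \(\binom{d}{0}=1\). Summing over the \(n = \abs{V(G)}\) choices of \(v_i\) gives the bound.

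There is no real obstacle here. The only point requiring care is that the degeneracy property passes to the subgraphs obtained by deleting vertices, and this holds because the definition quantifies over all subgraphs.
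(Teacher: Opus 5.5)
Your proof is correct. The paper gives no proof of its own and only cites \cite[Lemma~3.1]{NesetrilOssonaDeMendez_Sparsity2012}; your argument is essentially the standard one from that source: take a degeneracy ordering (equivalently, an acyclic orientation with in-degree at most \(d\)), charge each \(k\)-clique to its last vertex, and count the \((k-1)\)-subsets of that vertex's at most \(d\) earlier neighbours.
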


The following result gives a bound on the so-called \emph{weak \(r\)-colouring number}
\(\wcol_r(G)\) of graphs \(G\) in nowhere dense classes.

\begin{lemma}[\cite{NesetrilOssonaDeMendez_NowhereDense2011,Zhu_Colouring2009},
  cf.~{\cite[Lemma~6.3]{GroheKreutzerSiebertz_NowhereDense2017}}]
  \label{lem:nowhere-dense-weak-colouring}
  Let \(\C\) be a nowhere dense graph class.
  There is a function \(f \colon \N \times \Qpos \to \N\)
  such that for every \(r \in \N\),
  every \(\epsilon \in \Qpos\),
  and every graph \(G \in \C\) with \(n \deff \abs{V(G)} \geq f(r, \epsilon)\),
  it holds that \(\wcol_r(G) \leq n^\epsilon\).
  Furthermore, if \(\C\) is effectively nowhere dense, then \(f\) is computable.
\end{lemma}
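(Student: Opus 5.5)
The plan is to go through the standard chain ``nowhere dense \(\Rightarrow\) subpolynomial density of shallow minors \(\Rightarrow\) subpolynomial admissibility \(\Rightarrow\) subpolynomial weak colouring numbers''. Every step comes with an explicit quantitative bound, so the final threshold \(f(r,\epsilon)\) is computable from \(g\) whenever \(\C\) is effectively nowhere dense.

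\emph{Step 1 (density of shallow minors).} For \(s \in \N\) let \(\nabla_s(G)\) be the maximum of \(\abs{E(H)}/\abs{V(H)}\) over depth-\(s\) minors \(H\) of \(G\). I will show that for every \(s\) and every \(\delta \in \Qpos\) there is a computable \(n_0(s,\delta)\) such that \(\nabla_s(G) \leq \abs{V(G)}^{\delta}\) for all \(G \in \C\) with \(\abs{V(G)} \geq n_0\).
\begin{itemize}
  \item Suppose instead that some depth-\(s\) minor \(H\) has average degree \(d \geq \abs{V(G)}^\delta \geq \abs{V(H)}^\delta\).
  \item I will use the Dvořák-type lemma: a graph of average degree \(d\) either contains the \(1\)-subdivision of a clique of size \(d^{c}\), or a depth-\(1\) minor of density at least \(d^{1+c}\), for some absolute \(c > 0\).
  \item Iterate this lemma. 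Each round either yields a large clique as a depth-\((2s+1)\)-type minor of \(G\), or raises the density exponent of a shallow minor by a fixed amount.
  \item Since a density exponent above \(1\) is impossible (density is at most \(\abs{V(H)}\)), after a bounded number of rounds (bounded in terms of \(\delta\)) we obtain \(K_{g(r')}\) as a depth-\(r'\) minor, for some \(r'\) depending only on \(s\) and \(\delta\).
  \item This contradicts nowhere denseness once \(n\) is large enough that \(n^{\delta c'} \geq g(r')\).
\end{itemize}
This step is the main obstacle: one has to track precisely how the depth grows under composition of shallow minors, and make sure the exponent gain per round is uniform.

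\emph{Step 2 (admissibility).} Next, bound the \(r\)-admissibility \(\mathrm{adm}_r(G)\) polynomially in \(\nabla_{r'}(G)\) for some \(r' = O(r)\). I will use the Grohe--Kreutzer--Siebertz argument: take a vertex ordering built greedily by repeatedly removing a vertex with few disjoint short paths to the already removed set. If every remaining vertex had many such paths, then contracting these paths would give a depth-\(r\) minor of density exceeding \(\nabla_r\). This yields \(\mathrm{adm}_r(G) \leq \mathrm{poly}_r(\nabla_{r}(G))\).

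\emph{Step 3 (weak colouring number).} Apply Zhu's bound in the form \(\wcol_r(G) \leq (1 + \mathrm{adm}_r(G))^{r}\) (up to an \(r\)-dependent factor). Chaining the three steps gives
\[\wcol_r(G) \leq C_r \cdot \abs{V(G)}^{\delta \cdot p(r)}\]
for a computable constant \(C_r\) and a polynomial \(p\). Choose \(\delta \deff \epsilon/(2p(r))\), and choose \(f(r,\epsilon) \geq n_0(r',\delta)\) also large enough that \(C_r \leq n^{\epsilon/2}\). Then \(\wcol_r(G) \leq n^\epsilon\) for all \(G \in \C\) with \(n = \abs{V(G)} \geq f(r,\epsilon)\). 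All choices are explicit in terms of \(g\), which gives the computability claim.
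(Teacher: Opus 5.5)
You are right that the paper gives no proof of this lemma. It imports it from Ne\v{s}et\v{r}il--Ossona de Mendez (in a nowhere dense class, \(\nabla_s(G)\le\abs{V(G)}^{o(1)}\) for each fixed \(s\)) and from Zhu (a bound on \(\wcol_r(G)\) that is polynomial, of \(r\)-dependent degree, in densities of shallow minors of depth \(O(r)\)), combined as in Grohe--Kreutzer--Siebertz. Your chain is the same architecture.

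Step~1 is the first ingredient, and you correctly isolated its crux. The density-increment lemma must have a gain exponent \(c\) that does not depend on the clique size you are aiming for, with that clique size polynomial in \(d\). Otherwise the iteration is circular, because the target \(g(r')\) depends on the number of rounds. If you prefer to avoid iterating, Jiang's theorem on compact topological minors gives Step~1 in one go. It says that for \(n\) large in terms of \(\delta\) and \(t\), every \(n\)-vertex graph with \(n^{1+\delta}\) edges contains a subdivision of \(K_t\) with each edge subdivided \(O(1/\delta)\) times; apply it to the dense depth-\(s\) minor with \(t=g(r')\).

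Steps~2--3 replace Zhu's direct bound by a detour through admissibility. That is a legitimate alternative. Step~3 only needs some bound on \(\wcol_r\) that is polynomial in \(\mathrm{adm}_r\) with \(r\)-dependent degree, which your final choice of \(\delta\) already absorbs. Your explicit thresholds also make the computability clause transparent, which the citation leaves implicit.

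The step that fails as you describe it is Step~2. Greedily removing a vertex with few disjoint short paths \emph{to the already removed set} is the wrong direction, and for it the implication `if every remaining vertex has many such paths, then there is a dense shallow minor' is false. In a star \(K_{1,m}\), once all leaves have been removed, the remaining centre has \(m\) such paths, so the hypothesis holds. Yet every minor of a star is a forest and has density below \(1\). With unlucky tie-breaking in the first rounds, this greedy really does output an order of admissibility \(m\) on a star.

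The correct procedure (Dvo\v{r}\'ak's lemma) builds the order from the end. Among the remaining vertices \(R\), pick one with few length-\(\le r\) paths to \emph{other vertices of \(R\)} whose interiors lie in the already removed set, and make it the largest vertex of \(R\). Even then you cannot simply contract these path systems, because the systems of different vertices overlap. The argument takes a maximal family of such paths with pairwise disjoint interiors and contracts it to a shallow minor on \(R\). It then has to account separately for paths that meet this family, which is why the resulting bound on \(\mathrm{adm}_r(G)\) is polynomial rather than linear in \(\nabla_{r-1}(G)\). With Step~2 replaced by this lemma, your chain proves the statement, including the computability claim.
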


For every graph \(G\), it holds that \(\wcol_1(G) = \col(G)\),
and \(G\) is \((\col(G)-1)\)-degenerate.
For details, including definitions of \(\wcol_r(G)\) and \(\col(G)\),
see \cite[Section~4.9]{NesetrilOssonaDeMendez_Sparsity2012}.
Hence, \cref{lem:nowhere-dense-weak-colouring} implies the following
bound on the degeneracy.

\begin{lemma}
  \label{lem:nowhere-dense-degeneracy}
  Let \(\C\) be a nowhere dense graph class.
  There is a function \(f \colon \Qpos \to \N\)
  such that for every \(\epsilon \in \Qpos\)
  and every graph \(G \in \C\) with \(n \deff \abs{V(G)} \geq f(\epsilon)\),
  it holds that \(G\) has degeneracy at most \(n^\epsilon\).
  Furthermore, if \(\C\) is effectively nowhere dense, then \(f\) is computable.
\end{lemma}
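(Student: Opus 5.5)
The statement to prove is \cref{lem:nowhere-dense-degeneracy}. I would derive it directly from \cref{lem:nowhere-dense-weak-colouring} with \(r = 1\), using the standard facts recalled just before the statement: \(\wcol_1(G) = \col(G)\), and every graph \(G\) is \((\col(G)-1)\)-degenerate.

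Let \(f_0 \colon \N \times \Qpos \to \N\) be the function given by \cref{lem:nowhere-dense-weak-colouring} for the class \(\C\). Define \(f(\epsilon) \deff f_0(1, \epsilon)\) for \(\epsilon \in \Qpos\). Now take \(G \in \C\) with \(n = \abs{V(G)} \geq f(\epsilon)\). By \cref{lem:nowhere-dense-weak-colouring}, \(\wcol_1(G) \leq n^\epsilon\), so \(\col(G) \leq n^\epsilon\). Hence \(G\) is \((\col(G)-1)\)-degenerate, and in particular has degeneracy at most \(\col(G) - 1 < n^\epsilon\). This proves the bound.

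For the effectiveness claim: if \(\C\) is effectively nowhere dense, then \(f_0\) is computable by \cref{lem:nowhere-dense-weak-colouring}. The map \(\epsilon \mapsto f_0(1,\epsilon)\) is then computable as well.

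There is no real obstacle. The only point needing care is the convention for the colouring number, namely that \(\col(G)\) is defined so that \(G\) is \((\col(G)-1)\)-degenerate. The statement tolerates an off-by-one here, because the degeneracy bound \(\col(G)-1\) is at most \(n^\epsilon\) under either convention.
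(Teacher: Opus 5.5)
Your proposal is correct and follows the paper's own argument: the paper also obtains this lemma directly from \cref{lem:nowhere-dense-weak-colouring} with \(r = 1\), using \(\wcol_1(G) = \col(G)\) and the fact that \(G\) is \((\col(G)-1)\)-degenerate. Computability is inherited from the weak-colouring bound, exactly as you describe.
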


By combining \cref{lem:degeneracy-cliques,lem:nowhere-dense-degeneracy},
we can now prove \cref{lem:nowhere-dense-cliques}.

\begin{proof}[Proof of \cref{lem:nowhere-dense-cliques}]
  Let \(\C\) be a nowhere dense graph class,
  and let the function \(f' \colon \Qpos \to \N\) be the one from \cref{lem:nowhere-dense-degeneracy}.
  Without loss of generality, \(f'\) is monotonically decreasing,
  and \(f'(\epsilon) \geq 1\) for every \(\epsilon \in \Qpos\).

  For every \(\epsilon \in \Qpos\)
  and every graph \(G \in \C\) with \(n \deff \abs{V(G)} \geq f'(\epsilon)\),
  it holds that \(G\) has degeneracy at most \(n^\epsilon\).
Furthermore, for every graph \(G \in \C\) with \(\abs{V(G)} < f'(\epsilon)\),
  it holds that \(G\) has degeneracy less than \(f'(\epsilon)\).
  Hence, since all graphs we consider are non-empty,
  the degeneracy of \emph{every} graph \(G \in \C\)
  is bounded by \(f'(\epsilon) \cdot \abs{V(G)}^\epsilon\).

  Let \(G \in \C\), \(n \deff \abs{V(G)}\),
  let \(k \in \Npos\), \(\epsilon \in \Qpos\), and let \(\delta \deff \epsilon/k\).
  Then \(G\) has degeneracy at most \(f'(\delta) \cdot n^{\delta}\).
  By \cref{lem:degeneracy-cliques}, we know that \(G\) includes at most
  \[\binom{f'(\delta) \cdot n^{\delta}}{k-1} \cdot n
  \leq (f'(\delta))^k \cdot n^{1 + k \cdot \delta}
  = f_1(k, \epsilon) \cdot n^{1+\epsilon}\]
  cliques of size \(k\) for \(f_1(k, \epsilon) \deff (f'(\epsilon/k))^k\).
  This proves \cref{item:nowhere-dense-cliques} of \cref{lem:nowhere-dense-cliques}.
  Note that \(f_1(k, \epsilon)\) is monotonically decreasing in \(\epsilon\)
  and monotonically increasing in \(k\).

  Now let \(\sigma\) be a signature, and let \(\A\) be a \(\sigma\)-structure
  with \(G_\A \in \C\).
  Let \(R \in \sigma\) with \(\ar(R) \geq 1\).
  For every tuple \((v_1, \dots, v_{\ar(R)}) \in R(\A)\),
  the vertices \(v_1, \dots, v_{\ar(R)}\) form a clique in \(G_\A\) of size at most \(\ar(R)\).
  Furthermore, for every clique in \(G_\A\) of size at most \(\ar(R)\),
  there are at most \((\ar(R))^{\ar(R)}\) many tuples in \(R(\A)\)
  that only contain vertices from the clique.
  This shows that
  \(\abs{R(\A)} \leq (\ar(R))^{\ar(R)} \cdot f_1(\ar(R), \epsilon) \cdot \abs{A}^{1+\epsilon}\)
  for every \(\epsilon \in \Qpos\).
Let \(k\) be the maximum arity of a relation symbol in \(\sigma\).
  Then
  \begin{align*}
    \norm{\A}
    &= \abs{A} + \sum_{R \in \sigma} \max\set{\ar(R),1} \cdot \abs{R(\A)}\\
    &\leq \abs{A} + \abs{\sigma} \cdot (k+1) \cdot \max\setc{\abs{R(\A)}}{R\in\sigma}\\
    &\leq \abs{A} + \abs{\sigma} \cdot (k+1) \cdot (\max\{k,1\})^k \cdot f_1(k, \epsilon)
    \cdot \abs{A}^{1+\epsilon}\\
    &\leq \abs{A} + \abs{\sigma} \cdot (k+1) \cdot (k+1)^k \cdot f_1(k, \epsilon)
    \cdot \abs{A}^{1+\epsilon}.
  \end{align*}
  \Cref{item:nowhere-dense-size} of \cref{lem:nowhere-dense-cliques}
  follows with
  \(f(k, \epsilon) \deff (k+1)^{k+1} \cdot f_1(k, \epsilon) + 1\).
Finally, if \(\C\) is effectively nowhere dense,
  then \(f'\) is computable, and so are \(f_1\) and \(f\).
\end{proof}

\subsection{Proof of Lemma~\ref{lem:cgfoc-preprocessing-bounded-expansion}}
\label{subsec:appendix-cgfoc-preprocessing-bounded-expansion}

In the proof of \cref{lem:cgfoc-preprocessing-bounded-expansion},
we will use the following result.

\begin{theorem}
  \label{thm:fo-counting-testing-locally-bounded-expansion}
  Let \(\C\) be a graph class of locally bounded expansion.
  There is a function \(f\) and an algorithm that does the following.
  Given an \(\FO\) formula \(\phi(\tx, \ty)\),
  a \(\sigma(\phi)\)-structure \(\A\) with \(G_\A \in \C\),
  and given an \(\epsilon \in \Qpos\),
  after preprocessing in time \(f(\phi, \epsilon) \cdot \abs{A}^{1+\epsilon}\),
  the algorithm can answer the following queries in time \(f(\phi, \epsilon)\):
  given a tuple \(\tw \in A^{\abs{\ty}}\), output \(\sem{\FOCCount{\tx}{\phi(\tx, \tw)}}^\A\).
\end{theorem}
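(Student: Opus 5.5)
We reduce the count to a sum of local counts, so that every summand can be read off from a bounded-size neighbourhood of \(\tw\) or from globally precomputed numbers.

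\textbf{Step 1: make \(\phi\) local.} Apply \cref{thm:gaifman} to turn \(\phi(\tx,\ty)\) into Gaifman normal form of radius \(r \deff 7^{\qr(\phi)}\). The basic local sentences can be evaluated once during preprocessing, within the required time, by the \(\FO\) model-checking result for locally bounded expansion \cite{DvorakKralThomas_LocallyBoundedExpansion2013}. Replacing them by \(\top\) or \(\bot\) leaves an \(r\)-local formula \(\phi_\A(\tx,\ty)\) such that \(\sem{\FOCCount{\tx}{\phi(\tx,\tw)}}^\A = \sem{\FOCCount{\tx}{\phi_\A(\tx,\tw)}}^\A\) for all \(\tw\).

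\textbf{Step 2: case split on the distance pattern.} Split \(\phi_\A\) into mutually exclusive cases. For a tuple \(\tv\), record which entries \(v_i\) lie at distance \(\le 2r+1\) from \(\tilde w\), and how the remaining entries cluster among themselves under the relation ``distance \(\le 2r+1\)'', taking its transitive closure. The count is then a sum over these cases.

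\textbf{Step 3: count each case by local types.} In a fixed case, the entries near \(\tw\) all lie in \(\neighb{R}{\A}{\tw}\) for \(R \deff (2r+1)\abs{\tx}\). By locality and \cref{lem:FV}, whether \(\phi_\A\) holds factorises. There is one factor from the local \(q\)-type of \((\tw, \text{near part})\). There is one factor from the local types of the far clusters, and these clusters must be pairwise far from each other and far from \(\tw\). Counting the far clusters is thus a count of tuples of far-apart tuples with prescribed local types, all at distance \(>2r+1\) from \(\tw\). We write this as a global count, computed once in preprocessing, minus the tuples that violate the far-from-\(\tw\) condition. By inclusion--exclusion over which clusters come close to \(\tw\), each correction term counts elements inside \(\neighb{R'}{\A}{\tw}\) for bounded \(R'\), so the same recursion applies to it. The global counts of far-apart tuples with given types reduce, again by inclusion--exclusion on the distance constraints, to products of counts of single clusters with a given local type. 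Those counts are obtained in almost-linear time from the enumeration/counting result of \cite{SegoufinVigny_LocallyBoundedExpansion2017} (their Theorem~6 handles counting for \(\FO\) on locally bounded expansion). The remaining local pieces, namely counting near-tuples inside a neighbourhood of \(\tw\) that satisfy a local formula together with \(\tw\), must be answered in time \(f(\phi,\epsilon)\) per query.

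\textbf{Main obstacle.} The hard part is the last step: answering, for arbitrary \(\tw\), the number of \(\tv\) inside \(\neighb{R}{\A}{\tw}\) satisfying a given local formula. This neighbourhood can have unbounded size, so we cannot count it at query time. The plan is to cover \(\A\) by a sparse neighbourhood cover of radius \(R\) and overlap \(\abs{A}^{\epsilon}\), which exists for nowhere dense classes \cite{GroheKreutzerSiebertz_NowhereDense2017}. Each cluster has bounded expansion, so on each cluster we invoke the constant-time counting-query result for bounded expansion classes \cite[Theorem~8]{Torunczyk_Aggregate2020} on the cluster structure, with \(\tw\) assigned to the cluster containing its \(R\)-neighbourhood. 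The total preprocessing is the sum of linear work over clusters, hence \(f \cdot \abs{A}^{1+\epsilon}\). The delicate point is to define the counted formula relative to the cluster so that its answer agrees with the global count. This needs a larger radius, chosen so that the locality radius of the formula stays inside the cluster.
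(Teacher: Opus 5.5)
Your architecture matches the paper's. You put \(\phi\) into Gaifman normal form and evaluate the basic local sentences once. You split on the \((2r+1)\)-distance pattern and use Feferman--Vaught plus inclusion--exclusion, recursing on the number of far components. Finally, you use a sparse neighbourhood cover whose clusters come from a class of bounded expansion, so that \cite[Theorem~8]{Torunczyk_Aggregate2020} runs on every cluster at total cost bounded via the overlap.

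One step fails as written: you answer the local piece ``with \(\tw\) assigned to the cluster containing its \(R\)-neighbourhood''. A neighbourhood cover only guarantees \(\neighb{r'}{G_\A}{v} \subseteq \K(v)\) for single vertices \(v\). The query tuple \(\tw\) is arbitrary, so its entries may be pairwise far apart. Then no cluster contains \(\neighb{R}{\A}{\tw}\), and the near part of \(\tv\) splits into groups around different \(w_j\). Your Step~2 records only how \(\tv\) relates to the set \(\tilde w\). It records neither the distance pattern among the entries of \(\tw\) nor which \(w_j\) each near group attaches to. So it gives no factorisation across these groups, and the remaining local count cannot be read off from a single cluster.

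The paper handles exactly this by splitting on the full graph \(G \in \G_{\abs{\tx}+\abs{\ty}}\). This graph records, for all pairs of positions of \(\tv\tw\) (including pairs of positions of \(\tw\)), whether their distance is at most \(2r+1\). Each \(\delta^{\sigma}_{G,2r+1} \land \phi_\A\) is counted separately, by induction on the number of connected components of \(G\) (\cref{lem:fo-dist-counting}).
\begin{itemize}
\item If \(G\) is connected and \(\abs{\ty}>0\), the \((2r+1)\)-neighbourhood of \(\tv\tw\) lies in \(\neighb{r'}{\A}{w_1} \subseteq \K(w_1)\) for \(r' = (2r+1)(\abs{\tx}+\abs{\ty})\). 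The count is then one query to the data structure for the cluster \(\K(w_1)\).
\item Otherwise the component of \(y_1\) is split off. Then \cref{thm:FV}, relativised to \(r\)-local formulas, writes the count as a sum of products of counts for the two parts. One subtracts the terms for the graphs \(G'\) that add edges between the two parts; these have fewer components.
\item Components containing no \(y\)-variable give global counts. The paper obtains them from the same cover as \(\sum_{C}\sem{\FOCCount{\tx}{(Q(x_1)\land\psi)}}^{\A_C}\), where \(Q\) marks the vertices \(v\) with \(\K(v)=C\). It does not use \cite{SegoufinVigny_LocallyBoundedExpansion2017}, whose algorithms it notes are uniform in \(\epsilon\) only for effectively locally bounded expansion.
\end{itemize}

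Separately, in Step~2 ``near'' must mean reachable from \(\tilde w\) by a chain of steps of length at most \(2r+1\) through entries of \(\tv\), as your choice \(R=(2r+1)\abs{\tx}\) suggests. Under the literal definition, a `far' entry can lie within distance \(2r+1\) of a near one, and the Feferman--Vaught factorisation breaks.
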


This generalises the following theorem,
which is a special case of \cite[Theorem~8]{Torunczyk_Aggregate2020},
from classes of bounded expansion to classes of locally bounded expansion.

\begin{theorem}
  \label{thm:fo-counting-testing-bounded-expansion}
  Let \(\C\) be a graph class of bounded expansion.
  There is a function \(f\) and an algorithm that does the following.
  Given an \(\FO\) formula \(\phi(\tx, \ty)\)
  and a \(\sigma\)-structure \(\A\) for some \(\sigma \supseteq \sigma(\phi)\)
  with \(G_\A \in \C\),
  after preprocessing in time \(f(\phi, \sigma) \cdot \abs{A}\),
  the algorithm can answer the following queries in time \(f(\phi, \sigma)\):
  given a tuple \(\tw \in A^{\abs{\ty}}\), output \(\sem{\FOCCount{\tx}{\phi(\tx, \tw)}}^\A\).
\end{theorem}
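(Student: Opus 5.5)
The plan is to obtain this theorem directly from \cite[Theorem~8]{Torunczyk_Aggregate2020}, checking that our setting fits its hypotheses, and to fall back on the low-treedepth-colouring machinery behind that result only where the translation is not literal.

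\textbf{Step~1 (reduction to the signature of \(\phi\)).} Given \(\A\) with \(\sigma \supseteq \sigma(\phi)\), I compute the \(\sigma(\phi)\)-reduct \(\A'\) of \(\A\) in time \(\bigO(\norm{\A})\). Since \(\C\) has bounded expansion, it is in particular nowhere dense and degenerate with constant degeneracy. The clique-counting argument of \cref{lem:nowhere-dense-cliques}, run with constant degeneracy instead of \(n^\epsilon\), then gives \(\norm{\A} \leq f(\sigma)\cdot\abs{A}\). The Gaifman graph of \(\A'\) is a subgraph of \(G_\A\), and bounded expansion is closed under subgraphs, so \(G_{\A'}\) lies in the subgraph closure of \(\C\), which still has bounded expansion. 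Moreover, \(\sem{\FOCCount{\tx}{\phi(\tx,\tw)}}^\A = \sem{\FOCCount{\tx}{\phi(\tx,\tw)}}^{\A'}\). Hence it suffices to work with \(\A'\).

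\textbf{Step~2 (expressing the count as an aggregate query).} In Toruńczyk's logic \(\FO_G[\mathbb{C}]\), I take the semiring \(\mathbb{C}\) to be \((\N,+,\cdot)\). I use the aggregate term \(\sum_{\tx} [\phi(\tx,\ty)]\), where \([\cdot]\) is the Iverson bracket mapping true to \(1\) and false to \(0\). This term has free variables \(\ty\), and its value on \((\A',\tw)\) is exactly \(\sem{\FOCCount{\tx}{\phi(\tx,\tw)}}^{\A'}\). The degenerate case \(\abs{\ty}=0\) is a ground term and is covered as well. \cite[Theorem~8]{Torunczyk_Aggregate2020} states that, for a class of structures whose Gaifman graphs have bounded expansion, such a term can be compiled in time \(f(\phi)\cdot\norm{\A'}\) into a data structure answering point queries \(\tw \mapsto\) value in time \(f(\phi)\). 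With Step~1 this yields preprocessing time \(f(\phi,\sigma)\cdot\abs{A}\) and query time \(f(\phi,\sigma)\), as required.

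\textbf{Step~3 (direct argument, in case a self-contained proof is needed).} I would reprove the special case as follows. Fix \(p \deff \abs{\tx}+\abs{\ty}+\qr(\phi)\) and compute a low-treedepth colouring with \(N(p)\) colours in linear time. By inclusion–exclusion over colour sets \(S\) with \(\abs{S}\le p\), the count decomposes as
\[\sem{\FOCCount{\tx}{\phi(\tx,\tw)}}^{\A} = \sum_{S} c_S \cdot \#\{\tv \text{ coloured within } S : \A[S]\models\phi[\tv,\tw]\},\]
with integer coefficients \(c_S\) that depend only on \(p\). On each \(\A[S]\), which has bounded treedepth, I encode an elimination forest as parent functions. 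I then apply the Kazana--Segoufin quantifier elimination, which rewrites \(\phi\) into a quantifier-free formula over unary functions and linearly computable unary predicates. The counting of \(\tx\) is pushed into precomputed per-vertex counters by a bottom-up dynamic program over the forest.

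The \textbf{main obstacle} is this last point: after quantifier elimination, the \(\tx\)-witnesses may depend on \(\tw\) only through boundedly many ancestors of the \(w_i\). The count must therefore be stored as a sum over the ``types'' of these ancestor configurations, so that a query reads only \(f(p)\) counters located at ancestors of \(\tw\). Making this case analysis precise, in particular handling \(\tx\)-components that are disconnected from \(\tw\) by subtracting global totals, is exactly the content of Toruńczyk's proof. For that reason I would primarily rely on the citation in Step~2.
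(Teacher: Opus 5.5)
Your proposal is correct and follows the paper, which gives no separate proof and takes this theorem directly as a special case of \cite[Theorem~8]{Torunczyk_Aggregate2020}, exactly as in your Step~2. Your Step~1 fills in a detail the paper leaves implicit: passing to the \(\sigma(\phi)\)-reduct and using bounded degeneracy to get \(\norm{\A} \leq f(\sigma)\cdot\abs{A}\), so that time linear in \(\norm{\A}\) becomes time linear in \(\abs{A}\).

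The fallback Step~3 is not needed, and as written its decomposition would fail. The reason is that \(\A[S] \models \phi[\tv,\tw]\) is not equivalent to \(\A \models \phi[\tv,\tw]\) once \(\phi\) has quantifiers, since quantified variables would then range only over vertices with colours in \(S\). The quantifier elimination therefore has to be carried out relative to \(\A\) before restricting to colour classes.
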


In the proof of \cref{thm:fo-counting-testing-locally-bounded-expansion},
we make use of neighbourhood covers, which are defined as follows.
For \(r \in \N\), a \emph{distance-\(r\) neighbourhood cover} of a graph \(G\)
is a mapping \(\K \colon V(G) \to 2^{V(G)}\)
of vertices of \(G\) to subsets of vertices of \(G\), called \emph{clusters},
such that, for each vertex \(v \in V(G)\),
it holds that \(\neighbr{G}{v} \subseteq \K(v)\).
We write \(C \in \K\) to express that \(C\) is a cluster of \(\K\),
that is, \(C = \K(v)\) for some \(v \in V(G)\).
The \emph{radius} of \(\K\) is the maximum radius of all subgraphs \(G[C]\)
for a cluster \(C \in \K\).
The \emph{overlap} of \(\K\) is the maximum number of clusters in \(\K\) that contain the same vertex,
that is, \(\overlap(\K) \deff \max_{v \in V(G)} \bigabs{\setc{C \in \K}{v \in C}}\).
Note that \(\sum_{C \in \K} \abs{C} \leq \abs{V(G)} \cdot \overlap(\K)\).

As shown in \cite{GroheKreutzerSiebertz_NowhereDense2017}, for nowhere dense classes,
we can efficiently compute neighbourhood covers with small radius and small overlap.

\begin{theorem}[{\cite[Theorem~6.2]{GroheKreutzerSiebertz_NowhereDense2017}}]
  \label{thm:sparse-neighbourhood-cover}
  Let \(\C\) be a nowhere dense graph class.
  There is a function \(f\) and an algorithm that does the following.
  Given an \(\epsilon \in \Qpos\), an \(r \in \N\),
  and a graph \(G \in \C\),
  the algorithm computes a distance-\(r\) neighbourhood cover of \(G\)
  of radius at most \(2r\) and overlap at most \(f(r, \epsilon) \cdot n^\epsilon\)
  in time \(f(r, \epsilon) \cdot n^{1+\epsilon}\).
  Furthermore, if \(\C\) is effectively nowhere dense, then \(f\) is computable.
\end{theorem}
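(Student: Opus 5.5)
This statement is cited from Grohe, Kreutzer and Siebertz, so we reconstruct it following their approach. The guiding idea is to build the cover from the weak colouring number, and to derive the overlap and running-time bounds from \cref{lem:nowhere-dense-weak-colouring}.

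First, we compute an ordering \(\pi\) of \(V(G)\) with \(\wcol_{2r}(G,\pi)\) small. On a nowhere dense class, one can compute in time \(f(r,\epsilon)\cdot n^{1+\epsilon}\) an ordering whose weak \(2r\)-colouring number is at most \(f(r,\epsilon)\cdot n^{\epsilon}\). This uses the admissibility-based greedy ordering of Dvořák, together with the polynomial relation between generalised colouring numbers and \cref{lem:nowhere-dense-weak-colouring}. For graphs with \(n<f(2r,\epsilon)\), everything is constant-size, so we absorb them into \(f\).

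For each \(v\), let \(\mathrm{WReach}_{2r}[v]\) be the set of vertices weakly \(2r\)-reachable from \(v\) with respect to \(\pi\). Its size is at most \(c\deff f(r,\epsilon)n^{\epsilon}\). For each vertex \(u\), define the cluster
\[X_u \deff \setc{w}{u \in \mathrm{WReach}_{r}[w]},\]
and set \(\K(v)\deff X_{u}\), where \(u\) is the \(\pi\)-minimal vertex of \(\neighbr{G}{v}\). We now verify the three properties.

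\emph{Covering.} Let \(w\in\neighbr{G}{v}\). A shortest path from \(w\) to \(v\), followed by a path from \(v\) to \(u\) of length at most \(r\), has length at most \(2r\), and its minimum is \(u\). To get length \(r\), we instead use clusters \(X_u=\setc{w}{u\in\mathrm{WReach}_{2r}[w]}\). Every \(w\in\neighbr{G}{v}\) reaches \(u\) via \(v\) by a path of length at most \(2r\) all of whose vertices lie in \(\neighbr{G}{v}\) or on a shortest path to \(u\). One has to check that \(u\) is minimal on that path; the cleanest route is to choose \(u\) as the minimum of \(\neighb{2r}{G}{v}\) restricted appropriately. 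This is the fiddly part of the argument.

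\emph{Radius.} Each \(w\in X_u\) has a path of length at most \(2r\) to \(u\), so \(G[X_u]\) has radius at most \(2r\) around \(u\). This needs the paths to stay inside \(X_u\), which holds because every vertex on a witness path also weakly reaches \(u\).

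\emph{Overlap.} A vertex \(w\) lies in \(X_u\) only if \(u\in\mathrm{WReach}_{2r}[w]\). Hence \(w\) lies in at most \(c\) clusters.

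\emph{Running time.} Computing all \(\mathrm{WReach}_{2r}[w]\) takes time \(\bigO(n\cdot c\cdot \mathrm{poly}(r))\) by bounded-depth BFS restricted to larger vertices. This uses that \(\abs{E(G)}\le n^{1+\epsilon}\) by \cref{lem:nowhere-dense-degeneracy}. Inverting these sets gives the clusters, whose total size \(\sum_{C\in\K}\abs{C}\) is at most \(nc\). Finally, assigning \(\K(v)\) requires finding the minimum of a bounded-radius ball. We do this implicitly, by taking the minimum of \(\mathrm{WReach}_r[v]\), which equals the minimum of \(\neighbr{G}{v}\). This costs \(\bigO(c)\) per vertex.

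The main obstacle is computing a good ordering in almost-linear time. For that step, I would rely on the known algorithmic fact from the cited literature, rather than reprove it.
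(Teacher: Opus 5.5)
Your construction is the one from Grohe, Kreutzer and Siebertz; the paper itself gives no proof beyond the citation plus the remark on graphs with fewer than \(f(r,\epsilon)\) vertices, which you also handle. The construction takes an ordering \(\pi\) with small \(\wcol_{2r}\), clusters \(X_u=\setc{w}{u\in\mathrm{WReach}_{2r}[w]}\), and \(\K(v)\deff X_u\) for \(u\) the \(\pi\)-minimum of \(\neighbr{G}{v}\). Your radius and overlap arguments are correct. However, you leave the covering step unresolved, and the fix you suggest would not work. If \(u\) were the minimum of \(\neighb{2r}{G}{v}\), the walk from \(w\) via \(v\) to \(u\) could have length up to \(3r\). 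You would then need \(\mathrm{WReach}_{3r}\) and would only get radius \(3r\), not \(2r\).

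In fact nothing is fiddly with \(u\) the minimum of \(\neighbr{G}{v}\). Let \(w\in\neighbr{G}{v}\), let \(P_1\) be a shortest \(w\)--\(v\) path, and let \(P_2\) be a shortest \(v\)--\(u\) path. Every vertex \(x\) of \(P_1\) satisfies \(\dist^G(x,v)\le\dist^G(w,v)\le r\). Every vertex \(x\) of \(P_2\) satisfies \(\dist^G(v,x)\le\dist^G(v,u)\le r\). So the concatenated walk has length at most \(2r\), lies entirely in \(\neighbr{G}{v}\), and all its vertices are \(\ge_\pi u\). Shortcutting it to a \(w\)--\(u\) path preserves these properties, hence \(u\in\mathrm{WReach}_{2r}[w]\), i.e.\ \(w\in X_u\). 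This is the same observation you already use in your running-time paragraph, where you show that the minimum of \(\mathrm{WReach}_r[v]\) equals the minimum of \(\neighbr{G}{v}\).

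There is also a minor imprecision in the running time. The restricted BFS computations cost \(\sum_w \deg(w)\cdot\abs{\mathrm{WReach}_{2r}[w]}\le 2c\abs{E(G)}\), not \(\bigO(n\cdot c\cdot\mathrm{poly}(r))\). With \(\abs{E(G)}\le f(\epsilon)\, n^{1+\epsilon}\) this is still almost linear after rescaling \(\epsilon\). With these repairs, and taking the almost-linear-time computation of a good ordering from the literature as you propose, the argument is complete.
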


As remarked in \cite[Section~8.1]{GroheSchweikardt2018},
for a given neighbourhood cover \(\K\) of \(G\),
we can compute in linear time a data structure that associates with each \(C \in \K\)
the list of all \(v \in V(G)\) with \(\K(v) = C\).
We note that \cite[Theorem~6.2]{GroheKreutzerSiebertz_NowhereDense2017}
states \cref{thm:sparse-neighbourhood-cover} for graphs of size \(\abs{V(G)} \geq f(r, \epsilon)\).
However, for graphs of size \(\abs{V(G)} < f(r, \epsilon)\),
a neighbourhood cover contains less than \(f(r, \epsilon)\) clusters,
so every vertex has overlap less than \(f(r, \epsilon)\).
Hence, the result also holds for graphs of size less than \(f(r, \epsilon)\).

As the last ingredient for the proof of
\cref{thm:fo-counting-testing-locally-bounded-expansion},
we will use the following result that allows us to test
whether a given tuple satisfies a formula
on nowhere dense classes.

\begin{theorem}[Testing, {\cite[Lemma~2.2, Corollary~2.4]{SchweikardtSegoufinVigny_Enumeration2022}}]
  \label{thm:fo-testing}
  Let \(\C\) be a nowhere dense graph class.
  There is a function \(f\) and an algorithm that does the following.
  Given an \(\FO\) formula \(\phi(\tx)\),
  a \(\sigma\)-structure \(\A\) for some \(\sigma \supseteq \sigma(\phi)\)
  with \(G_\A \in \C\),
  and a rational \(\epsilon \in \Qpos\),
  after preprocessing in time \(f(\phi, \sigma, \epsilon) \cdot \abs{A}^{1+\epsilon}\),
  the algorithm can answer the following queries in time \(f(\phi, \sigma, \epsilon)\):
  given a tuple \(\tv \in A^{\abs{\tx}}\), decide whether \(\A \models \phi[\tv]\).
  If \(\C\) is effectively nowhere dense, then \(f\) is computable.
\end{theorem}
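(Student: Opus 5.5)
Although this theorem is cited from \cite{SchweikardtSegoufinVigny_Enumeration2022}, I would verify that it holds in the generality stated here, namely for arbitrary relational \(\sigma\)-structures whose Gaifman graph lies in \(\C\). The plan is to reduce to coloured graphs by the standard incidence encoding and then apply the graph version of \cite[Lemma~2.2, Corollary~2.4]{SchweikardtSegoufinVigny_Enumeration2022}. Let \(k\) be the maximum arity in \(\sigma\). Given \(\A\), I build the coloured graph \(I_\A\) as follows.
\begin{itemize}
  \item \(I_\A\) has one vertex per element of \(A\), with a unary colour marking it as an element.
  \item \(I_\A\) has one vertex per tuple \(\tv \in R(\A)\) for each \(R \in \sigma\), with a colour \(R\) on it.
  \item Each such tuple vertex is joined to its \(i\)-th entry by an edge of colour \(E_i\) (encoded, e.g., by edge subdivision or by edge-colour relations \(E_1,\dots,E_k\)).
  \item \(0\)-ary relations become Boolean flags, handled by case distinction in the formula.
\end{itemize}
By \cref{lem:nowhere-dense-cliques}\ref{item:nowhere-dense-size}, \(I_\A\) has at most \(\abs{\sigma}\cdot f(k,\epsilon)\cdot\abs{A}^{1+\epsilon}\) vertices and edges. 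It can be built in time of this order, after listing each relation.

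Next, I translate \(\phi(\tx)\) into an \(\FO\) formula \(\hat\phi(\tx)\) over \(I_\A\). Quantifiers are relativised to element vertices. Each atom \(R(z_1,\dots,z_r)\) is replaced by \(\exists u\,(R(u)\land\Land_i E_i(u,z_i))\). Then \(\A\models\phi[\tv]\) if and only if \(I_\A\models\hat\phi[\tv]\) for element tuples \(\tv\), and \(\hat\phi\) is computable from \(\phi,\sigma\). A query \(\tv\) is passed to the graph algorithm unchanged. Preprocessing \(I_\A\) costs time \(f'(\hat\phi,\epsilon')\cdot\abs{V(I_\A)}^{1+\epsilon'}\). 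Choosing \(\epsilon'\) with \((1+\epsilon')(1+\epsilon/2)\le 1+\epsilon\) and building \(I_\A\) with \(\epsilon/2\) gives the stated bound \(f(\phi,\sigma,\epsilon)\cdot\abs{A}^{1+\epsilon}\), with constant query time.

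The main obstacle is showing that the class \(\C_I\deff\setc{I_\A}{G_\A\in\C}\) is (effectively) nowhere dense, so that the graph result applies to it. Suppose \(I_\A\) contains \(K_p\) as a depth-\(r\) minor with branch sets \(B_1,\dots,B_p\), where \(p>k+1\).
\begin{itemize}
  \item \emph{Branch sets that are a single tuple vertex.} Such a vertex has degree at most \(k\) in \(I_\A\), so it cannot touch \(p-1>k\) other branch sets. Hence every branch set contains an element vertex.
  \item \emph{Projecting branch sets.} Let \(B_j'\deff B_j\cap A\). Two element vertices adjacent to a common tuple vertex are adjacent in \(G_\A\). So \(G_\A[B_j']\) is connected, with radius at most \(r\).
  \item \emph{Preserving adjacency.} An edge between \(B_j\) and \(B_{j'}\) in \(I_\A\) runs either between an element and a tuple vertex or through tuple vertices. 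In both cases it yields a \(G_\A\)-edge between \(B_j'\) and \(B_{j'}\)', since a tuple vertex's neighbours form a clique.
\end{itemize}
Thus \(K_p\) is a depth-\(r\) minor of \(G_\A\) whenever \(p>k+1\). So \(g_{\C_I}(r)\deff\max(g_\C(r),k+2)\) witnesses nowhere denseness, and it is computable whenever \(g_\C\) is, which gives the effective statement.
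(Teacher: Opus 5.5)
Your proposal is correct. The paper gives no proof of its own and simply imports the statement from \cite[Lemma~2.2, Corollary~2.4]{SchweikardtSegoufinVigny_Enumeration2022}; your argument is the standard way of lifting the graph result to relational structures, in line with that citation: encode the structure as a coloured incidence graph, check that the encoded class stays (effectively) nowhere dense, and use \cref{lem:nowhere-dense-cliques} to express the preprocessing time in terms of \(\abs{A}\).

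Two refinements are worth making:
\begin{itemize}
  \item To stay within vertex-coloured graphs, and so avoid invoking a result for structures with several binary relations, use the subdivision encoding. Your minor argument still works there: a single subdivision vertex has degree \(2\), and a branch set consisting of a tuple vertex with some of its subdivision vertices can only touch branch sets that contain an entry of that tuple.
  \item Bound the number of distinct entries of a tuple by \(g_\C(0)-1\) instead of by the arity \(k\). This makes \(\C_I\) a single nowhere dense class over all signatures, which is what a single algorithm uniform in \(\sigma\) requires.
\end{itemize}
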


We remark that \cite[Corollary~15]{SegoufinVigny_LocallyBoundedExpansion2017}
states a similar result for classes of locally bounded expansion.
However, \cite[Corollary~15]{SegoufinVigny_LocallyBoundedExpansion2017}
only gives a uniform algorithm over all \(\epsilon \in \Qpos\)
if the class \(\C\) has effectively locally bounded expansion.
Thus, we decided to use the uniform result from \cite{SchweikardtSegoufinVigny_Enumeration2022}
instead.

For \(k \in \N\),
let \(\G_k\) be the set of all graphs \(G\)
with vertex set \(V(G) = [k]\).
Furthermore, for \(k, r \in \N\), for \(G \in \G_k\),
and for a signature \(\sigma\), we let
\[\delta^\sigma_{G, r}(x_1, \dots, x_k)
  \deff \Land_{\set{i,j} \in E(G)}\mkern-18mu \dist^\sigma_{\leq r}(x_i, x_j)
\ \land\ \mkern-18mu\Land_{\set{i,j} \not\in E(G)}\mkern-18mu \dist^\sigma_{> r}(x_i, x_j).\]

We first prove the following special case of
\cref{thm:fo-counting-testing-locally-bounded-expansion}.

\begin{lemma}
  \label{lem:fo-dist-counting}
  Let \(\C\) be a graph class of locally bounded expansion.
  There is a function \(f\) and an algorithm that does the following.
  Given an \(\FO\) formula
  \(\psi(\tx, \ty) = \delta^{\sigma(\phi)}_{G, 2r+1}(\tx, \ty) \land \phi(\tx, \ty)\),
  where \(r \in \N\),
  \(G \in \G_{\abs{\tx} + \abs{\ty}}\),
  and \(\phi\) is an \(r\)-local formula,
  given a \(\sigma(\phi)\)-structure \(\A\) with \(G_\A \in \C\),
  and given an \(\epsilon \in \Qpos\),
  after preprocessing in time \(f(\psi, \epsilon) \cdot \abs{A}^{1+\epsilon}\),
  the algorithm can answer the following queries in time \(f(\psi, \epsilon)\):
  given a tuple \(\tw \in A^{\abs{\ty}}\), output \(\sem{\FOCCount{\tx}{\psi(\tx, \tw)}}^\A\).
\end{lemma}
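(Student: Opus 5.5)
We split the count according to the connected components of the graph \(G\) that lie entirely among the \(\tx\)-positions and those that touch a \(\ty\)-position. Let \(I \subseteq [\abs{\tx}]\) be the \(\tx\)-indices whose \(G\)-component contains no \(\ty\)-index, and let \(J\) be the remaining \(\tx\)-indices. By the Feferman--Vaught decomposition (\cref{thm:FV}, in the form of \cref{lem:FV}), applied to the disjoint union of the \(r\)-neighbourhoods of the different components (which are disjoint because distinct components are at distance \(> 2r+1\)), the \(r\)-local formula \(\phi\) is equivalent, under \(\delta_{G,2r+1}\), to a disjunction of mutually exclusive conjunctions \(\Land_{K} \alpha_K(\tx_K\ty_K)\) over the components \(K\) of \(G\), each \(\alpha_K\) being \(r\)-local. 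Mutual exclusivity turns the count into a finite sum of counts of products. Consequently it suffices to count, for a fixed \(\tw\), tuples \(\tx_J\) near \(\tw\) satisfying the \(J\)-part, times tuples \(\tx_I\) satisfying the \(I\)-part while being at distance \(> 2r+1\) from \(\tw\) and from \(\tx_J\). The latter distance constraint breaks the product structure, so we handle it by inclusion--exclusion over which \(I\)-components come close to which other components. A violated separation constraint merges components, which yields a term with a strictly coarser graph \(G'\) (more edges). We therefore proceed by induction on the number of non-edges of \(G\).

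Concretely, the plan is as follows. First, the sentence-like factor counting \(\tx_I\)-tuples with the \(I\)-components pairwise far apart and satisfying \(\alpha_K\) is a ground term. We compute it in the preprocessing by the same induction: we write it as the unconstrained product of per-component counts \(\#\tx_K(\delta_{G[K],2r+1}\land\alpha_K)\) minus terms with coarser graphs. Each per-component count, where all variables are within distance \(O(\abs{K}r)\) of one another, is computed by summing over the vertex \(v\) interpreting the first variable, that is, a count localised to the ball of radius \(R=O(\abs{\tx}r)\) around \(v\). Second, the \(\tw\)-dependent factor counts \(\tx_J\) lying in the ball of radius \(R\) around \(\tw\). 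We obtain it from the per-vertex local counts \(c(u)=\sem{\#\tx'\,\chi(u,\tx')}\) of connected local patterns with one anchor.

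Both steps reduce to computing, for every vertex \(u\) (or every tuple inside a connected local pattern), the number of connected local extensions satisfying an \(r\)-local formula. For this we use \cref{thm:sparse-neighbourhood-cover} with radius \(R\). For each cluster \(C\) we consider \(\A[C]\), whose Gaifman graph lies in a bounded-expansion class because clusters have radius \(\le 2R\) and \(\C\) has locally bounded expansion. On \(\A[C]\) we run \cref{thm:fo-counting-testing-bounded-expansion} in time \(f\cdot\abs{C}\), and each query tuple \(\tw\) is answered in the cluster \(\K(w_1)\), which contains the whole radius-\(R\) neighbourhood, so local formulas are evaluated correctly there. The total time is \(\sum_C f\abs{C}\le f\cdot n^{1+\epsilon}\) by the overlap bound. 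The pieces of \(\tw\) that lie in different components of \(G\) restricted to \(\ty\) are handled per component, and their factors are multiplied. The remaining inclusion--exclusion corrections involving \(\tw\) (an \(I\)-component coming close to \(\tw\)) again produce coarser graphs, where those components join the \(\ty\)-side, and these are handled by the induction hypothesis.

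The main obstacle is making the inclusion--exclusion well-founded and correct. The relation ``at distance \(\le 2r+1\)'' is not transitive, so merging components must be tracked via the graphs \(G'\supsetneq G\) exactly as the \(\delta_{G',2r+1}\) formulas prescribe, summing over all supergraphs with appropriate signs. We also need to ensure that \(\phi\) stays expressible (via Feferman--Vaught, with quantifier rank bounded) as a local formula relative to each coarser \(G'\). I would first carry out this identity as \(\#\tx(\Land_{K}\alpha_K)\) restricted to \(\delta_G\) equals \(\sum_{G'\supseteq G}\mu(G,G')\cdot(\text{product-form count for }G')\), with Möbius coefficients, and then verify that all summands are of the two computable kinds above.
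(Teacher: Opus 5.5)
Your proposal is correct and follows essentially the paper's proof: a Feferman--Vaught split over the components of \(G\) (whose \(r\)-neighbourhoods are disjoint and non-adjacent because of the threshold \(2r+1\)), correction of the product of component counts by counts for graphs \(G'\) with additional cross edges, and evaluation of connected patterns inside the clusters of a sparse neighbourhood cover via the bounded-expansion counting algorithm. The paper organises the correction recursively rather than by a single M\"obius inversion, splitting off the component of \(y_1\) from the rest and inducting on the number of components; it also uses cover radius \((2r+1)(\abs{\tx}+\abs{\ty})\), which you need as well instead of \(O(\abs{\tx}r)\), because a component of a coarser \(G'\) may contain many \(\ty\)-variables and must still fit into the cluster of one of its \(\ty\)-elements.
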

\begin{proof}
  Let \(k \deff \abs{\tx}\) and \(\ell \deff \abs{\ty}\).
  If \(k = 0\), then \(\sem{\FOCCount{\tx}{\psi(\tx, \tw)}}^\A = \sem{\psi(\tw)}^\A\).
  Thus, in this case, the statement of \cref{lem:fo-dist-counting}
  is a direct consequence of \cref{thm:fo-testing}.
  Hence, in the following, let \(k > 0\).

  Let \(r' \deff (2r+1) \cdot (k + \ell)\),
  and let \(\C'\) be the class of all subgraphs of \(2r'\)-neighbourhoods
  of vertices \(v \in V(G')\) in graphs \(G' \in \C\).
  Then \(\C'\) has bounded expansion by \cref{def:nowhere-dense}.

  Let \(f_{\ref*{thm:sparse-neighbourhood-cover}}\) be the function from
  \cref{thm:sparse-neighbourhood-cover} for \(\C\),
  let \(f_{\ref*{thm:fo-counting-testing-bounded-expansion}}\) be the function from
  \cref{thm:fo-counting-testing-bounded-expansion} for \(\C'\),
  and let \(f_{\ref*{thm:fo-testing}}\) be the function from
  \cref{thm:fo-testing} for \(\C\).
  We let \(\sigma \deff \sigma(\phi)\).

  Using the algorithm from \cref{thm:sparse-neighbourhood-cover},
  we compute a distance-\(r'\) neighbourhood cover \(\K\) of \(G_\A\)
  of radius at most \(2r'\)
  and overlap at most
  \(f_{\ref*{thm:sparse-neighbourhood-cover}}(r', \epsilon) \cdot \abs{A}^\epsilon\)
  in time
  \(f_{\ref*{thm:sparse-neighbourhood-cover}}(r', \epsilon) \cdot \abs{A}^{1+\epsilon}\).

  We set \(\sigma' \deff \sigma \uplus \set{Q}\) for a fresh unary relation symbol \(Q\),
  and, for every cluster \(C \in \K\),
  we let \(\A_C\) be the \(\sigma'\)-expansion of \(\A[C]\)
  with \(Q(\A_C) \deff \setc{v \in C}{\K(v) = C}\).
  Note that \(G_{\A_C} \in \C'\).
  Hence, the graphs \(G_{\A_C}\) come from a class of bounded expansion,
  so we can run the algorithm from \cref{thm:fo-counting-testing-bounded-expansion}
  to test \(\FO\)-definable properties within a cluster as follows.

  Let \(\Phi[\sigma', k+\ell, \qr(\psi)]\) be the computable set of \(\FO[\sigma']\) formulas
  in the syntactic normal form from \cref{rem:finite-fo} of quantifier rank at most \(\qr(\psi)\)
  with free variables among \(x_1, \dots, x_k, y_1, \dots, y_\ell\).
For every \(C \in \K\) and every \(\FO[\sigma']\) formula
  \(\psi' \in \Phi[\sigma', k+\ell, \qr(\psi)]\),
  we run the algorithm of \cref{thm:fo-counting-testing-bounded-expansion}
  on \(\A_C\) and \(\psi'\).
  For every \(C \in \K\) and formula \(\psi'\),
  the preprocessing runs in time
  \(f_{\ref*{thm:fo-counting-testing-bounded-expansion}}(\psi', \sigma') \cdot \abs{C}\).
  Since \(\sum_{C \in \K} \abs{C} \leq \abs{A} \cdot \overlap(\K) \leq
  f_{\ref*{thm:sparse-neighbourhood-cover}}(r', \epsilon) \cdot \abs{A}^{1+\epsilon}\),
  and since \(\Phi[\sigma', k+\ell, \qr(\psi)]\) is computable from \(\psi\),
  there is a function \(f'\) such that the preprocessing over all clusters and formulas
  in total takes time at most \(f'(\psi) \cdot \abs{A}^{1+\epsilon}\).
  Moreover, after this preprocessing phase,
  the algorithm can answer the queries as described in
  \cref{thm:fo-counting-testing-bounded-expansion} for any cluster
  and any formula as described above in time \(f'(\psi)\).

  In addition, for every \(\FO[\sigma]\) formula \(\psi' \in \Phi[\sigma', k+\ell, \qr(\psi)]\),
  we run the algorithm of \cref{thm:fo-testing}
  on \(\A\) and \(\psi'\).
  In total, the preprocessing takes time at most \(f''(\psi, \epsilon) \cdot \abs{A}^{1+\epsilon}\)
  for a computable function \(f''\) that we can define based on \(f_{\ref*{thm:fo-testing}}\).
  After this preprocessing phase,
  we can check whether a tuple satisfies a formula as described above
  in time \(f''(\psi, \epsilon)\).

  In the remainder of this proof,
  we proceed by induction on the number of connected components of \(G\).
  First, we consider the case where \(G\) is connected.
  If \(\ell = 0\), then our goal is to compute \(\sem{\FOCCount{\tx}{\psi(\tx)}}^\A\).
  We then just store this value and return it in constant time when we are given the empty tuple.
  Let \(\tv \in A^k\).
  If \(\A \models \psi[\tv]\),
  then \(\A \models \delta^\sigma_{G, 2r+1}[\tv]\).
  Since \(G\) is connected, this implies that
  \(v_1, \dots, v_k \in \neighbA{(2r+1) \cdot (k-1)}{v_1}\).
  Furthermore, since \(\delta^\sigma_{G, 2r+1}\) and \(\phi\) are \((2r+1)\)-local,
  and \(\neighbA{2r+1}{\tv} \subseteq \neighbA{r'}{v_1} \subseteq \K(v_1)\),
  we have
  \(\A \models \psi[\tv]
  \iff \NeighbA{2r+1}{\tv} \models \psi[\tv]
  \iff \A_{\K(v_1)} \models \psi[\tv]\).
  In addition, for a cluster \(C \in \K\),
  it holds that \(v_1 \in Q(\A_C)\) if and only if \(C = \K(v_1)\).

  This shows that both of the values
  \(\sem{\FOCCount{\tx}{\psi(\tx)}}^\A
  = \sum_{C \in \K} \sem{\FOCCount{\tx}{\bigl(Q(x_1) \land \psi(\tx)\bigr)}}^{\A_C}\),
  and \(\sem{\FOCCount{\tx}{\bigl(Q(x_1) \land \psi(\tx)\bigr)}}^{\A_C}\)
  can be computed in time \(f'(\psi)\) because of the preprocessing above.
  Hence, for the case that \(G\) is connected and \(\ell = 0\),
  we can set \(f(\psi, \epsilon)\) such that the total running time is at most
  \(f(\psi, \epsilon) \cdot \abs{A}^{1+\epsilon}\).

  If \(\ell > 0\), then, analogously to the case \(\ell = 0\) above,
  for \(\tv \in A^k\) and \(\tw \in A^\ell\),
  we have that \(\A \models \psi[\tv, \tw]\)
  implies \(v_1, \dots, v_k \in \neighbA{(2r+1) \cdot (k + \ell - 1)}{w_1}\)
  and \(\neighbA{2r+1}{\tv \tw} \subseteq \neighbA{r'}{w_1} \subseteq \K(w_1)\).
  Thus, it holds that \(\A \models \psi[\tv, \tw]
  \iff \NeighbA{2r+1}{\tv \tw} \models \psi[\tv, \tw]
  \iff \A_{\K(w_1)} \models \psi[\tv, \tw]\).
  This shows that
  \(\sem{\FOCCount{\tx}{\psi(\tx, \tw)}}^\A
  = \sem{\FOCCount{\tx}{\psi(\tx, \tw)}}^{\A_{\K(w_1)}}\),
  which can be computed in time \(f'(\psi)\) because of the preprocessing above.
  Hence, also for the case \(\ell > 0\),
  we can set \(f(\psi, \epsilon)\) such that the preprocessing takes time at most
  \(f(\psi, \epsilon) \cdot \abs{A}^{1+\epsilon}\)
  and queries can be answered in time \(f(\psi, \epsilon)\).

  It remains to consider the case where \(G\) is not connected.
  For this, suppose that \(G\) has \(p > 1\) connected components
  and that the statement of \cref{lem:fo-dist-counting} holds for all graphs \(G'\)
  with at most \(p-1\) connected components.
  Let \(V_1 \subseteq V(G)\) be the vertices of the connected component of \(G\)
  that contains \(k+1\), corresponding to the variable \(y_1\),
  and let \(V_2 \deff V(G) \setminus V_1\).
  Then \(G_1 \deff G[V_1]\) is connected, and \(G_2 \deff G[V_2]\) has \(p-1\) connected components.

  For \(i \in [2]\),
  let \(\tx_i\) be the tuple obtained from \(\tx\) by keeping only those entries
  with indices \(j\) where \(j \in V_i\),
  and let \(\ty_i\) be the tuple obtained from \(\ty\) by keeping only those entries
  with indices \(j\) where \(k + j \in V_i\).
  For tuples \(\tv \in A^k\) and \(\tw \in A^\ell\),
  we analogously define \(\tv_1\), \(\tv_2\), \(\tw_1\), and \(\tw_2\).
  Let \(k_i \deff \abs{\tx_i}\) and \(\ell_i \deff \abs{\ty_i}\) for \(i \in [2]\).

  We use the induction hypothesis on \(G_1\) and \(G_2\)
  to run the preprocessing for all formulas of the form
  \(\delta^\sigma_{G_i, 2r+1}(\tx_i, \ty_i) \land \phi'(\tx_i, \ty_i)\) for \(i \in [2]\),
  where \(\phi' \in \FO[\sigma]\) has quantifier rank at most
  \(q \deff \qr(\phi) + \max_{R \in \sigma} \ar(R) \cdot \lceil\log(r+1)\rceil\).
  In addition, we also apply the induction hypothesis on every graph \(G' \in \G_{k+\ell}\)
  with at most \(p-1\) connected components for formulas of the form
  \(\delta^\sigma_{G', 2r+1}(\tx, \ty) \land \phi'(\tx, \ty)\)
  with \(\phi' \in \FO[\sigma]\) having quantifier rank at most \(q\).
  In total, the preprocessing takes time
  \(f'''(\psi, \epsilon) \cdot \abs{A}^{1+\epsilon}\)
  for some function \(f'''\) that we can define based on the values of \(f\)
  for graphs \(G\) with at most \(p-1\) connected components.
  After the preprocessing phase, given a tuple \(\tw_1 \in A^{\ell_1}\),
  we can compute
  \(\sem{\delta^\sigma_{G_1, 2r+1}(\tx_1, \tw_1) \land \phi'(\tx_1, \tw_1)}^\A\)
  for any \(\phi'\) as described above in time \(f'''(\psi, \epsilon)\),
  and this holds analogously for \(G_2\) and \(\tw_2 \in A^{\ell_2}\)
  and for every \(G' \in \G_{k+\ell}\) with at most \(p-1\) connected components
  and for every \(\tw \in A^\ell\).

  Let \(\Delta\) be a Feferman--Vaught decomposition of \(\phi\) with respect to \(V_1\) and \(V_2\)
  computed via \cref{thm:FV}.
  We let \(\Delta'\) be the set containing the pair \((\alpha'_1, \alpha'_2)\)
  for every \((\alpha_1, \alpha_2) \in \Delta\),
  where \(\alpha'_i\) is obtained from \(\alpha_i\)
  by recursively replacing subformulas of the form \(\exists x \phi\) by
  \(\exists x \bigl(\phi \land \Lor_{y \in \free(\alpha_i)} \dist^\sigma_{\leq r}(x,y)\bigr)\).
  Then \(\Delta'\) is computable from \(\psi\), \(V_1\), and \(V_2\),
  and every formula occurring in \(\Delta'\) is \(r\)-local and has quantifier rank at most
  \(q = \qr(\phi) + \max_{R \in \sigma} \ar(R) \cdot \lceil\log(r+1)\rceil\).

  Thus, for \(\tv \in A^k\) and \(\tw \in A^\ell\) with
  \(\NrA{\tv \tw} = \NrA{\tv_1 \tw_1} \uplus \NrA{\tv_2 \tw_2}\),
  we have \(\NrA{\tv \tw} \models \phi[\tv, \tw]\)
  if and only if there is some \((\alpha_1, \alpha_2) \in \Delta\)
  such that \(\NrA{\tv_1 \tw_1} \models \alpha_1[\tv_1, \tw_1]\)
  and \(\NrA{\tv_2 \tw_2} \models \alpha_2[\tv_2, \tw_2]\),
  and this holds if and only if there is some \((\alpha'_1, \alpha'_2) \in \Delta'\)
  such that \(\NrA{\tv_1 \tw_1} \models \alpha'_1[\tv_1, \tw_1]\)
  and \(\NrA{\tv_2 \tw_2} \models \alpha'_2[\tv_2, \tw_2]\).
  Since \(\phi\) and the formulas in \(\Delta'\) are \(r\)-local,
  this implies that \(\A \models \phi[\tv, \tw]\)
  if and only if there is some \((\alpha'_1, \alpha'_2) \in \Delta'\)
  such that \(\A \models \alpha'_1[\tv_1, \tw_1]\)
  and \(\A \models \alpha'_2[\tv_2, \tw_2]\).
  Furthermore, \(\A \models \delta^\sigma_{G, 2r+1}[\tv, \tw]\)
  already implies that \(\NrA{\tv \tw} = \NrA{\tv_1 \tw_1} \uplus \NrA{\tv_2 \tw_2}\).

  This shows that, for all \(\tv \in A^k\) and \(\tw \in A^\ell\), we have
  \(\A \models \psi[\tv, \tw]\)
  if and only if
  \(\NrA{\tv \tw} = \NrA{\tv_1 \tw_1} \uplus \NrA{\tv_2 \tw_2}\)
  and there is some \((\alpha'_1, \alpha'_2) \in \Delta'\)
  such that \(\A \models \delta^\sigma_{G_1, 2r+1}[\tv_1, \tw_1] \land  \alpha'_1[\tv_1, \tw_1]\)
  and \(\A \models \delta^\sigma_{G_2, 2r+1}[\tv_2, \tw_2] \land  \alpha'_2[\tv_2, \tw_2]\).

  Let \(\G' \subset \G_{k+\ell}\) be the class of graphs \(G'\)
  with \(G'[V_1] = G_1\), \(G'[V_2] = G_2\), and \(G' \neq G\).
  For every such graph \(G'\), there has to be an edge between \(V_1\) and \(V_2\),
  so \(G'\) has at most \(p-1\) connected components.
  Moreover, for every \(\tv \in A^k\) and \(\tw \in A^\ell\)
  with \(\NrA{\tv \tw} \neq \NrA{\tv_1 \tw_1} \cup \NrA{\tv_2 \tw_2}\)
  (that is, there is an additional edge in \(\NrA{\tv \tw}\)
  between a vertex from \(\nrA{\tv_1 \tw_1}\) and a vertex from \(\nrA{\tv_2 \tw_2}\)),
  \(\NrA{\tv_1 \tw_1} \models \delta^\sigma_{G_1, 2r+1}[\tv_1, \tw_1]\),
  and \(\NrA{\tv_2 \tw_2} \models \delta^\sigma_{G_2, 2r+1}[\tv_2, \tw_2]\),
  it holds that \(\A \models \delta^\sigma_{G', r}[\tv, \tw]\)
  for some \(G' \in \G'\).

  By definition, we have
  \[\sem{\FOCCount{\tx}{\psi(\tx, \tw)}}^\A
  = \bigabs{\bigsetc{\tv \in A^k}{\A \models \delta^\sigma_{G, 2r+1}[\tv, \tw]
  \land \phi[\tv, \tw]}}.\]
  Since the formulas in \(\Delta\) are mutually exclusive, this is equal to
  \[\sum_{(\alpha'_1, \alpha'_2) \in \Delta'}
  \bigabs{\bigsetc{\tv \in A^k}{\A \models \delta^\sigma_{G, 2r+1}[\tv, \tw]
  \land \alpha'_1[\tv_1, \tw_1] \land \alpha'_2[\tv_2, \tw_2]}}.\]
  Finally, by the definition of \(\G'\) above, the term is equal to
  \begin{align*}
    &\sum_{(\alpha'_1, \alpha'_2) \in \Delta'}
       \Bigl(
       \bigabs{\bigsetc{\tv_1 \in A^{k_1}}{\A \models \delta^\sigma_{G_1, 2r+1}[\tv_1, \tw_1]
       \land \alpha'_1[\tv_1, \tw_1]}}\\
    &\mkern60mu\cdot
       \bigabs{\bigsetc{\tv_2 \in A^{k_2}}{\A \models \delta^\sigma_{G_2, 2r+1}[\tv_2, \tw_2]
       \land \alpha'_2[\tv_2, \tw_2]}}\\
    &-\mkern-5mu \sum_{G' \in \G'}\mkern-5mu
       \bigabs{\bigsetc{\tv \in A^k}{\A \models \delta^\sigma_{G', 2r+1}[\tv, \tw]
       \land \alpha'_1[\tv_1, \tw_1] \land \alpha'_2[\tv_2, \tw_2]}}\Bigr).
  \end{align*}
  Since the size of \(\Delta'\) and \(\G'\) only depends on \(\psi\),
  it suffices to bound the time to compute the value of each of the summands.
We have
  \begin{align*}
     &\ \bigabs{\bigsetc{\tv_1 \in A^{k_1}}{\A \models \delta^\sigma_{G_1, 2r+1}[\tv_1, \tw_1]
    \land \alpha'_1[\tv_1, \tw_1]}}\\
    =&\ \sem{\FOCCount{\tx_1}{\bigl(\delta^\sigma_{G_1, 2r+1}(\tx_1, \tw_1)
    \land \alpha'_1(\tx_1, \tw_1)\bigr)}}^\A,
  \end{align*}
  and this can be computed in time \(f'''(\psi, \epsilon)\)
  by the preprocessing based on the induction hypothesis above.
  Analogously, it holds that
  \begin{align*}
     &\ \bigabs{\bigsetc{\tv_2 \in A^{k_2}}{\A \models \delta^\sigma_{G_2, 2r+1}[\tv_2, \tw_2]
    \land \alpha'_2[\tv_2, \tw_2]}}\\
    =&\ \sem{\FOCCount{\tx_2}{\bigl(\delta^\sigma_{G_2, 2r+1}(\tx_2, \tw_2)
    \land \alpha'_2(\tx_2, \tw_2)\bigr)}}^\A
  \end{align*}
  can be computed in time \(f'''(\psi, \epsilon)\).
  Lastly, each of the terms
  \begin{align*}
    &\ \bigabs{\bigsetc{\tv \in A^k}{\A \models \delta^\sigma_{G', 2r+1}[\tv, \tw]
       \land \alpha'_1[\tv_1, \tw_1] \land \alpha'_2[\tv_2, \tw_2]}}\\
    =&\ \sem{\FOCCount{\tx}{\bigl(\delta^\sigma_{G', 2r+1}(\tx, \tw)
    \land \alpha'_1(\tx_1, \tw_1) \land \alpha'_2(\tx_2, \tw_2)\bigr)}}^\A
  \end{align*}
  with \(G' \in \G'\) can be computed in time \(f'''(\psi, \epsilon)\)
  by the preprocessing based on the induction hypothesis above,
  since \(G'\) has at most \(p-1\) connected components.

  All in all, we can define a function \(f\) based on
  \(f_{\ref*{thm:sparse-neighbourhood-cover}}\),
  \(f'\), \(f''\), and \(f'''\) such that,
  after preprocessing in time \(f(\psi, \epsilon) \cdot \abs{A}^{1+\epsilon}\),
  we can answer the following queries in time \(f(\psi, \epsilon)\):
  given a tuple \(\tw \in A^\ell\), output \(\sem{\FOCCount{\tx}{\psi(\tx, \tw)}}^\A\).
  This is the statement of \cref{lem:fo-dist-counting}.
\end{proof}

Based on this special case,
we can now prove \cref{thm:fo-counting-testing-locally-bounded-expansion}.

\begin{proof}[Proof of \cref{thm:fo-counting-testing-locally-bounded-expansion}]
  If \(\abs{\tx} = 0\), then, for all \(\tw \in A^{\abs{\ty}}\), it holds that
  \(\sem{\FOCCount{\tx}{\phi(\tx, \tw)}}^\A = \sem{\phi(\tw)}^\A\)
  Hence, in this case, \cref{thm:fo-counting-testing-locally-bounded-expansion}
  is a special case of \cref{thm:fo-testing}.
  In the following, we assume \(\abs{\tx} > 0\).

  Let \(f_{\ref*{thm:fo-testing}}\) be the function from \cref{thm:fo-testing} for \(\C\).
  and let \(f_{\ref*{lem:fo-dist-counting}}\)
  be the function from \cref{lem:fo-dist-counting} for \(\C\).
Let \(\phi(\tx, \ty)\) be an \(\FO\) formula,
  let \(\A\) be a \(\sigma(\phi)\)-structure \(G_\A \in \C\),
  and let \(\epsilon \in \Qpos\).

  First, we use \cref{thm:gaifman} to transform \(\phi\) into an equivalent formula \(\phi'\)
  in Gaifman normal form with radius at most \(r \deff 7^{\qr(\phi)}\).
  For every basic local sentence \(\chi\) occurring in \(\phi'\),
  we use \cref{thm:fo-testing} to check in time
  \(f_{\ref*{thm:fo-testing}}(\chi, \sigma(\phi), \epsilon) \cdot \abs{A}^{1+\epsilon}\)
  whether \(\A \models \chi\) holds.
  Let \(\phi'_\A(\tx, \ty)\) be the formula obtained from \(\phi'\) by replacing every
  basic local sentence \(\chi\) by \(\top()\) if \(\A \models \chi\),
  and replacing \(\chi\) by \(\bot()\) if \(\A \not\models \chi\).
  Then \(\phi'_\A(\tx, \ty)\) is an \(r\)-local formula,
  and it is equivalent to \(\Lor_{G \in \G_{\abs{\tx} + \abs{\ty}}} \psi_G(\tx, \ty)\)
  for \(\psi_G(\tx, \ty) \deff \delta^{\sigma(\phi)}_{G, 2r+1}(\tx, \ty) \land \phi'_\A(\tx, \ty)\).
  Moreover, it holds that \(\sem{\FOCCount{\tx}{\phi(\tx, \tw)}}^\A
  = \sum_{G \in \G_{\abs{\tx} + \abs{\ty}}} \sem{\FOCCount{\tx}{\psi_G(\tx, \tw)}}^\A\)
  for all \(\tw \in A^{\abs{\ty}}\).

  We run the preprocessing from \cref{lem:fo-dist-counting} for \(\epsilon\)
  and for each of the formulas \(\psi_G\) on \(\A\).
  This runs in time
  \(f_{\ref*{lem:fo-dist-counting}}(\psi_G, \epsilon) \cdot \abs{A}^{1+\epsilon}\).
After this preprocessing, given a tuple \(\tw \in A^{\abs{\ty}}\),
  we can compute \(\sem{\FOCCount{\tx}{\psi_G(\tx, \tw)}}^\A\)
  in time \(f_{\ref*{lem:fo-dist-counting}}(\psi_G, \epsilon)\).
  Thus, we can compute \(\sem{\FOCCount{\tx}{\phi(\tx, \tw)}}^\A\)
  in time \(\sum_{G \in \G_{\abs{\tx} + \abs{\ty}}}
  f_{\ref*{lem:fo-dist-counting}}(\psi_G, \epsilon)\).

  Let \(\Psi\) be the set of formulas obtained from \(\phi'\)
  by replacing the basic local sentences in \(\phi'\) by \(\top()\) or \(\bot()\),
  and let \(\Psi'\) be the set of all formulas of the form
  \(\delta^{\sigma(\phi)}_{G, 2r+1}(\tx, \ty) \land \psi(\tx, \ty)\)
  for \(G \in \G_{\abs{\tx} + \abs{\ty}}\) and \(\psi \in \Psi\).
  Then \(\psi_G \in \Psi'\) for every \(G \in \G_{\abs{\tx} + \abs{\ty}}\),
  so \(f_{\ref*{lem:fo-dist-counting}}(\psi_G, \epsilon)
  \leq \max_{\psi' \in \Psi'} f_{\ref*{lem:fo-dist-counting}}(\psi', \epsilon)\).

  All in all, this shows that there is a function \(f\) such that,
  given the formula \(\phi\), the structure \(\A\), and \(\epsilon\),
  the preprocessing described above runs in time \(f(\phi, \epsilon) \cdot \abs{A}^{1+\epsilon}\)
  and, after the preprocessing, given a tuple \(\tw \in A^{\abs{\ty}}\),
  we can compute \(\sem{\FOCCount{\tx}{\phi(\tx, \tw)}}^\A\)
  in time \(f(\phi, \epsilon)\).
\end{proof}

For the proof of \cref{lem:cgfoc-preprocessing-bounded-expansion},
in addition to \cref{thm:fo-counting-testing-locally-bounded-expansion},
we use the following result to compute the list of all cliques of a given size.

\begin{theorem}[Enumeration, {\cite[Lemma~2.2, Corollary~2.5]{SchweikardtSegoufinVigny_Enumeration2022}}]
  \label{thm:fo-enumeration}
  Let \(\C\) be a nowhere dense graph class.
  There is a function \(f\) and an algorithm that does the following.
  Given an \(\FO\) formula \(\phi(\tx)\),
  a \(\sigma\)-structure \(\A\) for some \(\sigma \supseteq \sigma(\phi)\)
  with \(G_\A \in \C\),
  and a rational \(\epsilon \in \Qpos\),
  after preprocessing in time \(f(\phi, \sigma, \epsilon) \cdot \abs{A}^{1+\epsilon}\),
  the algorithm enumerates all tuples \(\tv \in A^{\abs{\tx}}\) such that \(\A \models \phi[\tv]\)
  with \(f(\phi, \sigma, \epsilon)\) delay in lexicographic order, without duplicates.
  If \(\C\) is effectively nowhere dense, then \(f\) is computable.
\end{theorem}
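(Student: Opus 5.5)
The statement is quoted from \cite{SchweikardtSegoufinVigny_Enumeration2022}, so the plan is to check that it follows from the results cited there. The argument has two parts. \cite[Corollary~2.5]{SchweikardtSegoufinVigny_Enumeration2022} gives constant-delay enumeration in lexicographic order for \(\FO\) queries on coloured graphs from a nowhere dense class, after preprocessing in time \(f(\phi,\epsilon)\cdot n^{1+\epsilon}\). \cite[Lemma~2.2]{SchweikardtSegoufinVigny_Enumeration2022} lifts such statements from graphs to relational structures whose Gaifman graph lies in the class. I would redo this lifting explicitly, because the statement here is uniform in \(\epsilon\) and asks for computability of \(f\) in the effective case.

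The reduction encodes a \(\sigma\)-structure \(\A\) with \(G_\A \in \C\) as a coloured graph \(H_\A\), its incidence graph. It has one vertex for each element \(v \in A\), coloured \(\mathsf{El}\), and one vertex for each tuple \(\tu \in R(\A)\), coloured \(R\). The tuple vertex is joined to the vertex of \(u_i\) by an edge subdivided by a vertex coloured \(i\), which records the position; this needs only \(\max_R \ar(R)\) extra colours. By \cref{lem:nowhere-dense-cliques}\ref{item:nowhere-dense-size}, \(\abs{V(H_\A)} \le c_\sigma\cdot f(k,\epsilon)\cdot\abs{A}^{1+\epsilon}\), and \(H_\A\) can be built in that time. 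Every \(\FO[\sigma]\) formula \(\phi(\tx)\) translates effectively into a formula \(\hat\phi(\tx)\) over coloured graphs. In \(\hat\phi\), every quantifier is relativised to \(\mathsf{El}\), and every atom \(R(\tz)\) becomes \(\exists t\,(R(t)\land\bigwedge_i \exists s\,(i(s)\land E(t,s)\land E(s,z_i)))\). Then \(\A\models\phi[\tv]\) iff \(H_\A\models\hat\phi[\tv]\). Ordering the element vertices first and in the given order of \(A\) makes the lexicographic order on answers agree.

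The main step is showing that \(\{H_\A : G_\A\in\C\}\), with colours forgotten, is (effectively) nowhere dense. For this I would take a depth-\(r\) minor model of \(K_p\) in \(H_\A\). Every branch set can be pushed onto element vertices: each tuple or subdivision vertex is replaced by an element of its tuple. All elements of one tuple are pairwise adjacent in \(G_\A\), so a path of length \(\ell\) in \(H_\A\) becomes a walk of length at most \(\ell\) in \(G_\A\). The only problem is collisions, where two branch sets are sent to the same element. I would handle this the standard way, using the fact that tuple vertices form bounded-size clique gadgets. A depth-\(r\) minor of \(K_p\) in \(H_\A\) yields a depth-\(O(r)\) minor of \(K_{p'}\) in \(G_\A\) with \(p' \ge p/(k+1)^{O(1)}\). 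Alternatively, one can argue via the weak colouring numbers of \cref{lem:nowhere-dense-weak-colouring}, which grow by at most a bounded factor under this operation. The resulting bound function is computable from the one for \(\C\).

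With these ingredients, I would run the graph algorithm of \cite[Corollary~2.5]{SchweikardtSegoufinVigny_Enumeration2022} on \(H_\A\) and \(\hat\phi\) with \(\epsilon/2\). After adjusting the polynomial blow-up \(\abs{V(H_\A)}\le O(\abs{A}^{1+\epsilon/2})\), this gives preprocessing time \(f(\phi,\sigma,\epsilon)\cdot\abs{A}^{1+\epsilon}\) and delay \(f(\phi,\sigma,\epsilon)\), with no duplicates. Computability of \(f\) follows in the effective case because every step is effective. I expect the nowhere-density transfer to be the only nonroutine point. If the minor argument becomes messy, I would fall back on \cite[Lemma~2.2]{SchweikardtSegoufinVigny_Enumeration2022} as a black box.
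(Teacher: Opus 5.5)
Your route is the one the paper relies on. The paper gives no proof of its own and only cites \cite[Lemma~2.2, Corollary~2.5]{SchweikardtSegoufinVigny_Enumeration2022}, i.e.\ a lifting from coloured graphs to structures plus the graph enumeration result. Your incidence-graph encoding, with \cref{lem:nowhere-dense-cliques} to bound \(\abs{V(H_\A)}\), is a reconstruction of that combination, but two steps need repair.

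First, the nowhere-density transfer does not work by the mechanism you cite. If gadget vertices are sent to elements without regard to branch sets, the gadget size does not bound how many branch sets have images meeting a given element. An element can lie in unboundedly many tuples, and different branch sets can each contain a gadget vertex attached to it. So \(p' \geq p/(k+1)^{O(1)}\) is not justified this way. You do not need a collision argument at all. A connected set of \(H_\A\) avoiding element vertices lies inside a single gadget. Vertices of different gadgets are never adjacent, and two subdivision vertices are never adjacent. Hence in a depth-\(r\) minor model of \(K_p\) in \(H_\A\), at most two branch sets contain no element: they all lie in one gadget, at most one contains its tuple vertex, and the rest are pairwise non-adjacent singletons. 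For every other branch set \(B_j\), take \(B_j \cap A\). These sets are disjoint and connected in \(G_\A\), since consecutive elements on a path inside \(B_j\) share a tuple. They have radius at most \(r\) for \(r \geq 1\). They are pairwise adjacent: an edge between \(B_i\) and \(B_j\) is incident to a gadget vertex of some tuple \(\tu\), and then \(B_i\) and \(B_j\) each contain an element of \(\tu\), and these are distinct. As \(H_\A\) is bipartite, this shows that \(K_p\) (\(p \geq 3\)) as a depth-\(r\) minor of \(H_\A\) gives \(K_{p-2}\) as a depth-\(r\) minor of \(G_\A\). This bound is independent of the arity, so \(\setc{H_\A}{G_\A \in \C}\) over \emph{all} signatures is a single (effectively) nowhere dense class. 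You need that, because \(\sigma\) is part of the input; a \(k\)-dependent bound gives only one class per signature.

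Second, in \(\hat\phi\) you relativise the quantifiers but not the free variables. The graph algorithm would then also output tuples containing tuple or subdivision vertices; for \(\phi(x) \deff x{=}x\), every vertex of \(H_\A\) is an answer. Conjoin \(\Land_i \mathsf{El}(x_i)\). Then any linear order on \(V(H_\A)\) extending that of \(A\) gives the required lexicographic order. Finally, \((1+\epsilon/2)^2 > 1+\epsilon\), so use, e.g., \(\min(\epsilon,1)/3\) both for the size bound and for the graph algorithm.
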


Again, \cite[Corollary~5]{SegoufinVigny_LocallyBoundedExpansion2017}
states a similar result for classes of locally bounded expansion,
and we use \cite[Lemma~2.2, Corollary~2.5]{SchweikardtSegoufinVigny_Enumeration2022} instead,
because it gives a uniform algorithm for all \(\epsilon \in \Qpos\).

We can now prove \cref{lem:cgfoc-preprocessing-bounded-expansion}.

\cgfocPreprocessingBoundedExpansion*

\begin{proof}
  Let \(\C\) be a graph class of locally bounded expansion,
  let \(\epsilon \in \Qpos\),
  let \(\phi(\tx)\) be a \(\cgFOC\) formula,
  and let \(\sigma_\phi\), \(\phi'\), and \(\A_\phi\) be as in the proof
  of \cref{lem:cgfoc-preprocessing}.
We prove the result by structural induction on \(\phi\)
  and thereby describe a recursive algorithm for computing \(\A_\phi\).
  Let \(f_{\ref*{lem:nowhere-dense-cliques}}\),
  \(f_{\ref*{thm:fo-counting-testing-locally-bounded-expansion}}\),
  and \(f_{\ref*{thm:fo-enumeration}}\)
  be the functions for \(\C\) from
  \cref{lem:nowhere-dense-cliques},
  \cref{thm:fo-counting-testing-locally-bounded-expansion},
  and \cref{thm:fo-enumeration}, respectively.

  If \(\phi\) is not the result of applying rule~\eqref{def:cgfoc-P} of \cref{def:cgfoc},
  then the computation of \(\A_\phi\) as described in \cref{lem:cgfoc-preprocessing}
  trivially runs in time linear in \(\norm{\A_\phi}\),
  based on the recursively computed structure(s) for the respective subformula(s).
  Furthermore, by \cref{lem:nowhere-dense-cliques},
  \(\norm{\A_\phi} \leq \bigabs{\sigma_\phi} \cdot f_{\ref*{lem:nowhere-dense-cliques}}(k, \epsilon)
  \cdot \abs{A}^{1+\epsilon}\),
  where \(k\) is the maximum arity of a relation symbol in \(\sigma_\phi\),
  and \(\sigma_\phi\) and \(k\) can be computed from \(\phi\) and \(\sigma\).

  It remains to consider the case where
  \(\phi\) is the result of applying rule~\eqref{def:cgfoc-P},
  that is, \(\phi\) is of the form
  \(\Land_{i=1}^k R_i(z_{i,1}, \dots, z_{i, \ar(R_i)}) \land \Pred(t_1, \dots, t_m)\).
  Let \(\tx'\) be a tuple of pairwise distinct variables such that
  \(\tilde{x}' = \bigcup_{i=1}^m \free(t_i)\),
  and let \(\ell \deff \abs{\tx'}\).
  For the computation of \(\A_\phi\), we only need to compute the relation \(R_\phi(\A_\phi)\)
  from the proof of \cref{lem:cgfoc-preprocessing}.
Recall that \(R_\phi(\A_\phi)\)
  is the set of all tuples \(\tw \in A^\ell\) that form a clique in \(G_\A\) and where
  \((\sem{t_1}^{(\A, \beta_{\tw})}, \dots, \sem{t_m}^{(\A, \beta_{\tw})}) \in \sem{\Pred}\)
  for \(\beta_{\tw}(x'_i) \deff w_i\) for all \(i \in [\ell]\).

  There is an \(\FO[\sigma]\) formula \(\phi_{\textup{clique}}(\tx')\)
  that only depends on \(\sigma\) and \(\tx'\)
  such that, for all \(\tw \in A^{\abs{\tx'}}\),
  it holds that \(\A \models \phi_{\textup{clique}}[\tw]\)
  if and only if \(\tw\) forms a clique in \(G_\A\).
  Moreover, by \cref{lem:nowhere-dense-cliques}, the number of such tuples \(\tw\)
  is bounded by \(f_{\ref*{lem:nowhere-dense-cliques}}(\ell, \epsilon) \cdot \abs{A}^{1+\epsilon}\).
  Hence, by applying the enumeration result \cref{thm:fo-enumeration}
  to \(\A\) and \(\phi_{\textup{clique}}\),
  we can compute a list of all such tuples \(\tw\) in time
  \begin{align*}
    &f_{\ref*{thm:fo-enumeration}}(\phi_{\textup{clique}}, \sigma, \epsilon) \cdot \abs{A}^{1+\epsilon}
    + f_{\ref*{thm:fo-enumeration}}(\phi_{\textup{clique}}, \sigma, \epsilon) \cdot
    f_{\ref*{lem:nowhere-dense-cliques}}(\ell, \epsilon) \cdot \abs{A}^{1+\epsilon}\\
    =
    \ &f_{\ref*{thm:fo-enumeration}}(\phi_{\textup{clique}}, \sigma, \epsilon)
    \cdot \bigl(1+f_{\ref*{lem:nowhere-dense-cliques}}(\ell, \epsilon)\bigr) \cdot \abs{A}^{1+\epsilon}.
  \end{align*}

  It remains to check for every such tuple \(\tw\) whether we have
  \((\sem{t_1}^{(\A, \beta_{\tw})}, \dots, \sem{t_m}^{(\A, \beta_{\tw})}) \in \sem{\Pred}\).
  For this, note that the counting terms \(t_1, \dots, t_m\) are built using
  rules~\eqref{def:foc-countingterm}--\eqref{def:foc-plustimesterm},
  that is, they are built using integer constants, addition, multiplication,
  and \#-terms of the form \(\FOCCount{\ty}{\psi}\).
  We recursively compute \(\sigma_\psi\), \(\psi'\), and \(\A_\psi\).
  By the induction hypothesis, these can be computed in time
  \(f(\psi, \sigma, \epsilon) \cdot \abs{A}^{1+\epsilon}\).
  Then \(\sem{\FOCCount{\ty}{\psi}}^{(\A, \beta_{\tw})}
  = \sem{\FOCCount{\ty}{\psi'}}^{(\A_\psi, \beta_{\tw})}\),
  and \(\psi' \in \FO[\sigma_\psi]\).
  Hence, we can run the algorithm from \cref{thm:fo-counting-testing-locally-bounded-expansion}
  on \(\A_\psi\) and \(\psi'\).
  After preprocessing in time
  \(f_{\ref*{thm:fo-counting-testing-locally-bounded-expansion}}(\psi', \epsilon) \cdot \abs{A}^{1+\epsilon}\),
  we can compute \(\sem{\FOCCount{\ty}{\psi'}}^{(\A_\psi, \beta_{\tw})}\)
  for a single tuple \(\tw\) in time
  \(f_{\ref*{thm:fo-counting-testing-locally-bounded-expansion}}(\psi', \epsilon)\),
  so we can compute the results after the preprocessing in total time
  \(f_{\ref*{thm:fo-counting-testing-locally-bounded-expansion}}(\psi', \epsilon) \cdot
  f_{\ref*{lem:nowhere-dense-cliques}}(\ell, \epsilon) \cdot \abs{A}^{1+\epsilon}\)
  for all tuples \(\tw \in A^\ell\) that form a clique in \(G_\A\).

  Doing this for every \#-term occurring in \(t_1, \dots, t_m\) allows us to compute the values
  \(\sem{t_1}^{(\A, \beta_{\tw})}, \dots, \sem{t_m}^{(\A, \beta_{\tw})}\)
  for all tuples \(\tw \in A^\ell\) that form a clique in \(G_\A\)
  in time \(f'(\phi, \sigma, \epsilon) \cdot \abs{A}^{1+\epsilon}\)
  for a function \(f'\) that only depends on \(\C\).
  Once we have computed \(\sem{t_1}^{(\A, \beta_{\tw})}, \dots, \sem{t_m}^{(\A, \beta_{\tw})}\),
  we can check whether \(\tw \in R_\phi(\A_\phi)\)
  with an oracle call to \(\sem{\Pred}\) in time \(\bigO(1)\).

  Thus, all in all we can define a function \(f\)
  based on \(f_{\ref*{lem:nowhere-dense-cliques}}\),
  \(f_{\ref*{thm:fo-counting-testing-locally-bounded-expansion}}\),
  and \(f_{\ref*{thm:fo-enumeration}}\)
  such that we can compute \(\A_\phi\)
  in time \(f(\phi, \sigma, \epsilon) \cdot \abs{A}^{1+\epsilon}\).
  This is the statement of \cref{lem:cgfoc-preprocessing-bounded-expansion}.
\end{proof}

If the function in \cref{thm:fo-counting-testing-bounded-expansion}
is computable for every class of effectively bounded expansion,
then the functions in
\cref{lem:fo-dist-counting},
\cref{thm:fo-counting-testing-locally-bounded-expansion},
and
\cref{lem:cgfoc-preprocessing-bounded-expansion}
are computable for every class of effectively locally bounded expansion.

\subsection{Proof of Theorem~\ref{thm:answering-enumeration}}
\label{subsec:appendix-evaluation}

\answeringAndEnumeration*

\begin{proof}
  First, we prove \cref{thm:answering-enumeration}\ref{item:answering}.
  Let
  \(f_{\ref*{lem:nowhere-dense-cliques}}\),
  \(f_{\ref*{lem:cgfoc-preprocessing-bounded-expansion}}\),
  \(f_{\ref*{thm:fo-testing}}\),
  and \(f_{\ref*{thm:fo-counting-testing-locally-bounded-expansion}}\)
  be the functions from
  \cref{lem:nowhere-dense-cliques,lem:cgfoc-preprocessing-bounded-expansion}
  and \cref{thm:fo-testing,thm:fo-counting-testing-locally-bounded-expansion}.

  If \(\sigma \neq \sigma(\xi)\),
  then we can let \(\A'\) be the \(\sigma(\xi)\)-reduct of \(\A'\)
  and continue with \(\A'\) instead of \(\A\).
  By \cref{lem:nowhere-dense-cliques},
  \(\norm{\A} \leq \abs{\sigma} \cdot f_{\ref*{lem:nowhere-dense-cliques}}(k, \epsilon)
  \cdot \abs{A}^{1+\epsilon}\),
  where \(k\) is the maximum arity of a relation symbol in \(\sigma\).
  Since we can compute \(\A'\) from \(\A\) in time linear in \(\norm{\A}\),
  in the following, we may assume that \(\sigma = \sigma(\xi)\).

  If \(\xi(\tx) = \phi(\tx)\) for a \(\cgFOC\) formula \(\phi\),
  then we apply \cref{lem:cgfoc-preprocessing-bounded-expansion}
  to \(\A\) and \(\phi\), and, in time
  \(f_{\ref*{lem:cgfoc-preprocessing-bounded-expansion}}(\phi, \sigma, \epsilon)
  \cdot \abs{A}^{1+\epsilon}\),
  we obtain a signature \(\sigma_\phi \supseteq \sigma\),
  an \(\FO[\sigma_\phi]\) formula \(\phi'\),
  and a \(\sigma_\phi\)-expansion \(\A_\phi\) of \(\A\) with \(G_\A = G_{\A_\phi}\)
  such that for all \(\tv \in A^{\abs{\tx}}\),
  we have \(\A \models \phi[\tv]\) if and only if \(\A_\phi \models \phi'[\tv]\).
Then, since \(G_{\A_\phi} \in \C\),
  we can apply \cref{thm:fo-testing} to \(\A_\phi\) and \(\phi'\).
  Thus, after preprocessing in time
  \(f_{\ref*{thm:fo-testing}}(\phi', \sigma_\phi, \epsilon) \cdot \abs{A}^{1+\epsilon}\),
  given a tuple \(\tv \in A^{\abs{\tx}}\),
  we can output \(\sem{\phi'(\tv)}^{\A_\sigma} = \sem{\phi(\tv)}^\A\)
  in time \(f_{\ref*{thm:fo-testing}}(\phi', \sigma_\phi, \epsilon)\).
  Since \(\sigma_\phi\) and \(\phi'\) only depend on \(\phi\) and \(\sigma\),
  we can set
  \(f_a(\phi, \sigma, \epsilon) \deff
  f_{\ref*{lem:cgfoc-preprocessing-bounded-expansion}}(\phi, \sigma, \epsilon)
  + f_{\ref*{thm:fo-testing}}(\phi', \sigma_\phi, \epsilon)\),
  which then satisfies the requirements of \cref{thm:answering-enumeration}\ref{item:answering}.

  If \(\xi(\tx)\) is a \(\cgFOC\) counting term,
  then \(\xi\) is built using rules~\eqref{def:foc-countingterm}--\eqref{def:foc-plustimesterm}
  of \cref{def:cgfoc},
  that is, it is built using integer constants, addition, multiplication,
  and \#-terms of the form \(\FOCCount{\ty}{\psi(\tx', \ty)}\).
  For every such \#-term \(\FOCCount{\ty}{\psi(\tx', \ty)}\),
  we apply \cref{lem:cgfoc-preprocessing-bounded-expansion}
  to \(\A\) and \(\psi\), and, in time
  \(f_{\ref*{lem:cgfoc-preprocessing-bounded-expansion}}(\psi, \sigma, \epsilon)
  \cdot \abs{A}^{1+\epsilon}\),
  we obtain a signature \(\sigma_\psi \supseteq \sigma\),
  an \(\FO[\sigma_\psi]\) formula \(\psi'\),
  and a \(\sigma_\psi\)-expansion \(\A_\psi\) of \(\A\) with \(G_\A = G_{\A_\psi}\)
  such that for all \(\tv' \in A^{\abs{\tx'}}\),
  we have \(\A \models \psi[\tv']\) if and only if \(\A_\psi \models \psi'[\tv']\).
  Hence, it holds that
  \(\sem{\FOCCount{\ty}{\psi(\tv', \ty)}}^\A = \sem{\FOCCount{\ty}{\psi'(\tv', \ty)}}^{\A_\psi}\).
  Since \(\psi'\) is a first-order formula and \(G_{\A_\psi} \in \C\),
  we can apply \cref{thm:fo-counting-testing-locally-bounded-expansion}
  to \(\A_\psi\) and \(\psi'\).
  Thus, after preprocessing in time
  \(f_{\ref*{thm:fo-counting-testing-locally-bounded-expansion}}(\psi', \epsilon)
  \cdot \abs{A}^{1+\epsilon}\),
  given a tuple \(\tv' \in A^{\abs{\tx'}}\),
  we can compute
  \(\sem{\FOCCount{\ty}{\psi(\tv', \ty)}}^\A = \sem{\FOCCount{\ty}{\psi'(\tv', \ty)}}^{\A_\psi}\)
  in time \(f_{\ref*{thm:fo-counting-testing-locally-bounded-expansion}}(\psi', \epsilon)\).
  Hence, whenever we are given a tuple \(\tv \in A^{\abs{\tx}}\),
  we can efficiently compute the results of all \#-terms that \(\xi\) consists of,
  and then compute \(\sem{\xi(\tv)}^\A\) with additions and multiplications.
  Thus, all in all, we can set \(f_a(\xi, \sigma, \epsilon)\) such that the preprocessing
  takes time at most \(f_a(\xi, \sigma, \epsilon) \cdot \abs{A}^{1+\epsilon}\)
  and, given \(\tv \in A^{\abs{\tx}}\), the computation of \(\sem{\xi(\tv)}^\A\)
  takes time at most \(f_a(\xi, \sigma, \epsilon)\).

  Now, we prove \cref{thm:answering-enumeration}\ref{item:enumeration}.
  Let \(f_{\ref*{lem:cgfoc-preprocessing-bounded-expansion}}\)
  and \(f_{\ref*{thm:fo-enumeration}}\)
  be the functions from
  \cref{lem:cgfoc-preprocessing-bounded-expansion}
  and \cref{thm:fo-enumeration}.
  As above, we may assume that \(\sigma = \sigma(\xi)\).

  We apply \cref{lem:cgfoc-preprocessing-bounded-expansion}
  to \(\A\) and \(\xi\), and, in time
  \(f_{\ref*{lem:cgfoc-preprocessing-bounded-expansion}}(\xi, \sigma, \epsilon)
  \cdot \abs{A}^{1+\epsilon}\),
  we obtain a signature \(\sigma_\xi \supseteq \sigma\),
  an \(\FO[\sigma_\xi]\) formula \(\xi'\),
  and a \(\sigma_\xi\)-expansion \(\A_\xi\) of \(\A\) with \(G_\A = G_{\A_\xi}\)
  such that for all \(\tv \in A^{\abs{\tx}}\),
  we have \(\A \models \xi[\tv]\) if and only if \(\A_\phi \models \xi'[\tv]\).
Then, since \(G_{\A_\xi} \in \C\),
  we can apply \cref{thm:fo-enumeration} to \(\A_\xi\) and \(\xi'\).
  Thus, after preprocessing in time
  \(f_{\ref*{thm:fo-enumeration}}(\xi', \sigma_\xi, \epsilon) \cdot \abs{A}^{1+\epsilon}\),
  we can enumerate all tuples \(\tv \in A^{\abs{\tx}}\) such that \(\A \models \xi[\tv]\)
  with \(f_{\ref*{thm:fo-enumeration}}(\xi', \sigma_\xi, \epsilon)\) delay in lexicographic order,
  without duplicates.

  Since \(\sigma_\xi\) and \(\xi'\) only depend on \(\xi\) and \(\sigma\),
  we can set
  \(f_b(\xi, \sigma, \epsilon) \deff
  f_{\ref*{lem:cgfoc-preprocessing-bounded-expansion}}(\xi, \sigma, \epsilon)
  + f_{\ref*{thm:fo-enumeration}}(\xi', \sigma_\xi, \epsilon)\),
  which then satisfies the requirements of \cref{thm:answering-enumeration}\ref{item:enumeration}.

  Finally, it suffices to set \(f(\xi, \sigma, \epsilon)
  \deff \max\set{f_a(\xi, \sigma, \epsilon), f_b(\xi, \sigma, \epsilon)}\).
\end{proof}

\subsection{Details on Remark~\ref{rem:weaker-guards}}
\label{subsec:appendix-weaker-guards}

Using constructs of the form \(E(x,y) \land E(y,z) \land \Pred_=(t_1(x), t_2(z))\),
we give a transduction \((\phi_V(x), \phi_E(x_1,x_2))\)
that produces the class of all graphs from the class of coloured trees of height at most \(2\).

For this, we consider the signature \(\sigma \deff \set{E, \firstvertexrelation, \secondvertexrelation}\)
of coloured graphs, where \(E\) is a binary relation symbol,
and \firstvertexrelation{} and \secondvertexrelation{} are unary relation symbols.

Let \(n \in \Npos\), and let \(G\) be a graph with \(V(G) = [n]\).
In the following, we define a coloured tree \(T\)
with colours \firstvertexrelation{} and \secondvertexrelation{}
and with \(V(G) \subseteq V(T)\) such that,
for all \(v \in V(T)\), we have \(T \models \phi_V[v]\) if and only if \(v \in V(G)\).
Moreover, for all \(v_1, v_2 \in V(G)\),
we will have \(T \models \phi_E[v_1, v_2]\) if and only if \(\set{v_1, v_2} \in E(G)\).
The formulas \(\phi_V\) and \(\phi_E\) will not depend on \(G\).

For all \(i \in [n]\), let \(V_i \deff \setc{v_{i,j}}{j \in [i]}\),
and let \(V_V \deff \bigcup_{i \in [n]} V_i\).
Moreover, for all \(e = \set{w_1, w_2} \in E(G)\) with \(w_1 < w_2\),
let \(V_{e,1} \deff \setc{v_{e,1,j}}{j \in [w_1]}\),
\(V_{e,2} \deff \setc{v_{e,2,j}}{j \in [w_2]}\),
and let \(V_{E, i} \deff \bigcup_{e \in E(G)} V_{e, i}\) for \(i \in [2]\).

We set \(V(T) \deff \set{0} \cup V(G) \cup E(G) \cup V_V \cup V_{E,1} \cup V_{E,2}\) and
\begin{align*}
  E(T) \deff\ &\bigsetc{\set{0,v}}{v \in V(G) \cup E(G)}\\
               &\cup \bigsetc{\set{i,v}}{i \in [n], v \in V_i}\\
               &\cup \bigsetc{\set{e, v}}{e \in E(G), v \in V_{e,1} \cup V_{e,2}}.
\end{align*}
Furthermore, we let \(\firstvertexrelation(T) \deff V_V \cup V_{E,1}\)
and \(\secondvertexrelation(T) \deff V_{E,2}\).

Intuitively, the vertex \(0\) is the root of the tree,
and the vertices from \(V(G) \cup E(G)\) are its children.
In addition, every vertex \(i \in V(G)\) has \(i\) \firstvertexrelation-neighbours,
and every vertex \(\set{i,j} \in E(G) \subseteq V(T)\)
with \(i < j\)
has \(i\) \firstvertexrelation-neighbours
and \(j\) \secondvertexrelation-neighbours.
See \cref{fig:hardness-example-tree} for an example.

We set \(\phi_V(x) \deff \exists y \bigl(E(x,y) \land \firstvertexrelation(y)\bigr)
\land \neg \exists y \bigl(E(x,y) \land \secondvertexrelation(y)\bigr)\).
As desired, for all \(v \in V(T)\), we have \(T \models \phi_V(v)\) if and only if \(v \in V(G)\),
since the first condition is only satisfied for \(v \in V(G) \cup E(G)\),
and the second condition is not satisfied for \(v \in E(G)\).
Furthermore, we set
\[\phi_E(x_1,x_2) \deff \exists y \Bigl(\bigl(\phi_{E,1}(x_1,y) \land \phi_{E,2}(x_2,y)\bigr)
\lor \bigl(\phi_{E,1}(x_2,y) \land \phi_{E,2}(x_1,y)\bigr)\Bigr)\]
for
\begin{align*}
  t_{\smallfirstvertexrelation}(x)
  &\deff \FOCCount{(z)}{\bigl(E(x,z) \land \firstvertexrelation(z)\bigr)},\\
  t_{\smallsecondvertexrelation}(x)
  &\deff \FOCCount{(z)}{\bigl(E(x,z) \land \secondvertexrelation(z)\bigr)},\\
  \phi_{E,1}(x_1,y)
  &\deff \exists z_0 \Bigl(E(x_1,z_0) \land E(z_0,y)
  \land \Pred_{=}\bigl(t_{\smallfirstvertexrelation}(x_1),
  t_{\smallfirstvertexrelation}(y)\bigr)\Bigr),\\
  \intertext{and}
  \phi_{E,2}(x_2,y)
  &\deff \exists z_0 \Bigl(E(x_2,z_0) \land E(z_0,y)
  \land \Pred_{=}\bigl(t_{\smallfirstvertexrelation}(x_2),
  t_{\smallsecondvertexrelation}(y)\bigr)\Bigr).
\end{align*}
For \(e = \set{v_1, v_2} \in E(G)\), we have \(e \in V(T)\).
If \(v_1 < v_2\),
then it holds that \(T \models \phi_{E,1}[v_1, e] \land \phi_{E,2}[v_2, e]\)
by setting the variable \(z_0\) to the vertex \(0 \in V(T)\).
If \(v_2 < v_1\),
then we have \(T \models \phi_{E,1}[v_2, e] \land \phi_{E,2}[v_1, e]\).
This shows that \(T \models \phi_E[v_1, v_2]\)
and \(T \models \phi_E[v_2, v_1]\).
It can be easily verified that \(T \not\models \phi_E[v_1, v_2]\)
for all \(v_1, v_2 \in E(G)\) with \(\set{v_1, v_2} \not\in E(G)\).

This shows that \(G\) is the result of applying the transduction \((\phi_V, \phi_E)\) on \(T\).
Moreover, \(T\) can be computed from \(G\) in polynomial time.

\subsection{Details on Remark~\ref{rem:dense-classes}}
\label{subsec:appendix-dense-classes}

Using only constructs of the form \(E(x,y)\land \Pred_=(t_1(x), t_2(y))\),
we give a transduction\linebreak[4] \((\phi_V(x), \phi_E(x_1,x_2))\)
that produces the class of all graphs from a class of coloured graphs of shrub-depth at most \(2\).
The construction will be similar to the one given in \cref{subsec:appendix-weaker-guards}.

Let \(n \in \Npos\), and let \(G\) be a graph with \(V(G) = [n]\).
As above, for all \(i \in [n]\), let \(V_i \deff \setc{v_{i,j}}{j \in [i]}\),
and let \(V_V \deff \bigcup_{i \in [n]} V_i\).
Moreover, for all \(e = \set{w_1, w_2} \in E(G)\) with \(w_1 < w_2\),
let \(V_{e,1} \deff \setc{v_{e,1,j}}{j \in [w_1]}\),
\(V_{e,2} \deff \setc{v_{e,2,j}}{j \in [w_2]}\),
and let \(V_{E, i} \deff \bigcup_{e \in E(G)} V_{e, i}\) for \(i \in [2]\).

We set \(V(G') \deff V(G) \cup E(G) \cup V_V \cup V_{E,1} \cup V_{E,2}\) and
\begin{align*}
  E(G')
  &\deff \binom{V(G) \cup E(G)}{2}\\
  &\cup \bigsetc{\set{i,v}}{i \in [n], v \in V_i}\\
  &\cup \bigsetc{\set{e, v}}{e \in E(G), v \in V_{e,1} \cup V_{e,2}}.
\end{align*}
Furthermore, we let \(\firstvertexrelation(G') \deff V_V \cup V_{E,1}\)
and \(\secondvertexrelation(G') \deff V_{E,2}\).

Intuitively, the vertices in \(V(G) \cup E(G)\) form a clique.
In addition, every vertex \(i \in V(G)\) has \(i\) \firstvertexrelation-neighbours,
and every vertex \(\set{i,j} \in E(G) \subseteq V(G')\) with \(i < j\)
has \(i\) \firstvertexrelation-neighbours
and \(j\) \secondvertexrelation-neighbours.
See \cref{fig:hardness-example-dense} for an example.
\begin{figure}
  \begin{subfigure}{.2\textwidth}
    \centering
    \begin{tikzpicture}
      \node[vertex]                                (1) {\(1\)};
\node[vertex, right of=1, node distance=3.5em] (2) {\(2\)};
\node[vertex, below of=1, node distance=3.5em] (3) {\(3\)};
\node[vertex, right of=3, node distance=3.5em] (4) {\(4\)};
\draw[edge] (1) -- (2) -- (3) -- (1) -- (4) -- (3);
     \end{tikzpicture}\caption{Input graph \(G\).}
  \end{subfigure}
  \hfill
  \begin{subfigure}{.7\textwidth}
    \centering
    \begin{tikzpicture}
      \coordinate (topcenter);
\node[vertex, left of=topcenter, node distance=9em]  (1) {\(1\)};
\node[vertex, left of=topcenter, node distance=3em]  (2) {\(2\)};
\node[vertex, right of=topcenter, node distance=3em] (3) {\(3\)};
\node[vertex, right of=topcenter, node distance=9em] (4) {\(4\)};

\node[edge vertex, below of=topcenter, node distance=8em] (e14) {\(\set{1,4}\)};
\node[edge vertex, left of=e14, node distance=11em]        (e12) {\(\set{1,2}\)};
\node[edge vertex, left of=e14, node distance=5.5em]         (e13) {\(\set{1,3}\)};
\node[edge vertex, right of=e14, node distance=5.5em]        (e23) {\(\set{2,3}\)};
\node[edge vertex, right of=e14, node distance=11em]       (e34) {\(\set{3,4}\)};

\node[first number vertex] at ($(1) + ( 0.0em, 3em)$) (1v1) {};
\node[first number vertex] at ($(2) + (-0.4em, 3em)$) (2v1) {};
\node[first number vertex] at ($(2) + ( 0.4em, 3em)$) (2v2) {};
\node[first number vertex] at ($(3) + (-0.8em, 3em)$) (3v1) {};
\node[first number vertex] at ($(3) + ( 0.0em, 3em)$) (3v2) {};
\node[first number vertex] at ($(3) + ( 0.8em, 3em)$) (3v3) {};
\node[first number vertex] at ($(4) + (-1.2em, 3em)$) (4v1) {};
\node[first number vertex] at ($(4) + (-0.4em, 3em)$) (4v2) {};
\node[first number vertex] at ($(4) + ( 0.4em, 3em)$) (4v3) {};
\node[first number vertex] at ($(4) + ( 1.2em, 3em)$) (4v4) {};
\draw[edge] (1) -- (1v1);
\draw[edge] (2) -- (2v1);
\draw[edge] (2) -- (2v2);
\draw[edge] (3) -- (3v1);
\draw[edge] (3) -- (3v2);
\draw[edge] (3) -- (3v3);
\draw[edge] (4) -- (4v1);
\draw[edge] (4) -- (4v2);
\draw[edge] (4) -- (4v3);
\draw[edge] (4) -- (4v4);

\node[first number vertex]  at ($(e12) + (-0.8em, -3em)$) (e12v11) {};
\node[second number vertex] at ($(e12) + ( 0.0em, -3em)$) (e12v21) {};
\node[second number vertex] at ($(e12) + ( 0.8em, -3em)$) (e12v22) {};
\node[first number vertex]  at ($(e13) + (-1.2em, -3em)$) (e13v11) {};
\node[second number vertex] at ($(e13) + (-0.4em, -3em)$) (e13v21) {};
\node[second number vertex] at ($(e13) + ( 0.4em, -3em)$) (e13v22) {};
\node[second number vertex] at ($(e13) + ( 1.2em, -3em)$) (e13v23) {};
\node[first number vertex]  at ($(e14) + (-1.6em, -3em)$) (e14v11) {};
\node[second number vertex] at ($(e14) + (-0.8em, -3em)$) (e14v21) {};
\node[second number vertex] at ($(e14) + ( 0.0em, -3em)$) (e14v22) {};
\node[second number vertex] at ($(e14) + ( 0.8em, -3em)$) (e14v23) {};
\node[second number vertex] at ($(e14) + ( 1.6em, -3em)$) (e14v24) {};
\node[first number vertex]  at ($(e23) + (-1.6em, -3em)$) (e23v11) {};
\node[first number vertex]  at ($(e23) + (-0.8em, -3em)$) (e23v12) {};
\node[second number vertex] at ($(e23) + ( 0.0em, -3em)$) (e23v21) {};
\node[second number vertex] at ($(e23) + ( 0.8em, -3em)$) (e23v22) {};
\node[second number vertex] at ($(e23) + ( 1.6em, -3em)$) (e23v23) {};
\node[first number vertex]  at ($(e34) + (-2.4em, -3em)$) (e34v11) {};
\node[first number vertex]  at ($(e34) + (-1.6em, -3em)$) (e34v12) {};
\node[first number vertex]  at ($(e34) + (-0.8em, -3em)$) (e34v13) {};
\node[second number vertex] at ($(e34) + ( 0.0em, -3em)$) (e34v21) {};
\node[second number vertex] at ($(e34) + ( 0.8em, -3em)$) (e34v22) {};
\node[second number vertex] at ($(e34) + ( 1.6em, -3em)$) (e34v23) {};
\node[second number vertex] at ($(e34) + ( 2.4em, -3em)$) (e34v24) {};
\draw[edge] (e12) -- (e12v11);
\draw[edge] (e12) -- (e12v21);
\draw[edge] (e12) -- (e12v22);
\draw[edge] (e13) -- (e13v11);
\draw[edge] (e13) -- (e13v21);
\draw[edge] (e13) -- (e13v22);
\draw[edge] (e13) -- (e13v23);
\draw[edge] (e14) -- (e14v11);
\draw[edge] (e14) -- (e14v21);
\draw[edge] (e14) -- (e14v22);
\draw[edge] (e14) -- (e14v23);
\draw[edge] (e14) -- (e14v24);
\draw[edge] (e23) -- (e23v11);
\draw[edge] (e23) -- (e23v12);
\draw[edge] (e23) -- (e23v21);
\draw[edge] (e23) -- (e23v22);
\draw[edge] (e23) -- (e23v23);
\draw[edge] (e34) -- (e34v11);
\draw[edge] (e34) -- (e34v12);
\draw[edge] (e34) -- (e34v13);
\draw[edge] (e34) -- (e34v21);
\draw[edge] (e34) -- (e34v22);
\draw[edge] (e34) -- (e34v23);
\draw[edge] (e34) -- (e34v24);

\draw[edge] (1) -- (2) -- (3) -- (4);
\draw[edge] (1) edge [bend right=20] (3);
\draw[edge] (2) edge [bend right=20] (4);
\draw[edge] (1) edge [bend right=20] (4);

\draw[edge] (e12) -- (e13) -- (e14) -- (e23) -- (e34);
\draw[edge] (e12) edge [bend left=20] (e14);
\draw[edge] (e13) edge [bend left=20] (e23);
\draw[edge] (e14) edge [bend left=20] (e34);
\draw[edge] (e12) edge [bend left=20] (e23);
\draw[edge] (e13) edge [bend left=20] (e34);
\draw[edge] (e12) edge [bend left=20] (e34);

\draw[edge] (1) -- (e12) (2) -- (e12) (3) -- (e12) (4) -- (e12)
            (1) -- (e13) (2) -- (e13) (3) -- (e13) (4) -- (e13)
            (1) -- (e14) (2) -- (e14) (3) -- (e14) (4) -- (e14)
            (1) -- (e23) (2) -- (e23) (3) -- (e23) (4) -- (e23)
            (1) -- (e34) (2) -- (e34) (3) -- (e34) (4) -- (e34);
     \end{tikzpicture}\caption{Resulting graph \(G'\).}
  \end{subfigure}
  \caption{Construction of a coloured graph \(G'\) based on a graph \(G\)
    such that the transduction \((\phi_V, \phi_E)\) from \cref{subsec:appendix-dense-classes},
  applied on \(G'\), produces \(G\).}
  \label{fig:hardness-example-dense}
\end{figure}

We set \(\phi_V(x) \deff \exists y \bigl(E(x,y) \land \firstvertexrelation(y)\bigr)
\land \neg \exists y \bigl(E(x,y) \land \secondvertexrelation(y)\bigr)\).
As desired, for all \(v \in V(G')\), we have \(G' \models \phi_V(v)\) if and only if \(v \in V(G)\),
since the first condition is only satisfied for \(v \in V(G) \cup E(G)\),
and the second condition is not satisfied for \(v \in E(G)\).
Furthermore, we set
\[\phi_E(x_1,x_2) \deff \exists y \Bigl(\bigl(\phi_{E,1}(x_1,y) \land \phi_{E,2}(x_2,y)\bigr)
\lor \bigl(\phi_{E,1}(x_2,y) \land \phi_{E,2}(x_1,y)\bigr)\Bigr)\]
for
\begin{align*}
  t_{\smallfirstvertexrelation}(x)
  &\deff \FOCCount{(z)}{\bigl(E(x,z) \land \firstvertexrelation(z)\bigr)},\\
  t_{\smallsecondvertexrelation}(x)
  &\deff \FOCCount{(z)}{\bigl(E(x,z) \land \secondvertexrelation(z)\bigr)},\\
  \phi_{E,1}(x_1,y)
  &\deff E(x_1,y)
  \land \Pred_{=}\bigl(t_{\smallfirstvertexrelation}(x_1),
  t_{\smallfirstvertexrelation}(y)\bigr),\\
  \intertext{and}
  \phi_{E,2}(x_2,y)
  &\deff E(x_2,y)
  \land \Pred_{=}\bigl(t_{\smallfirstvertexrelation}(x_2),
  t_{\smallsecondvertexrelation}(y)\bigr).
\end{align*}

For \(e = \set{v_1, v_2} \in E(G)\), we have \(e \in V(G')\).
If \(v_1 < v_2\),
then it holds that \(G' \models \phi_{E,1}[v_1, e] \land \phi_{E,2}[v_2, e]\).
If \(v_2 < v_1\),
then we have \(G' \models \phi_{E,1}[v_2, e] \land \phi_{E,2}[v_1, e]\).
This shows that \(G' \models \phi_E[v_1, v_2]\)
and \(G' \models \phi_E[v_2, v_1]\).
It can be easily verified that \(G' \not\models \phi_E[v_1, v_2]\)
for all \(v_1, v_2 \in E(G)\) with \(\set{v_1, v_2} \not\in E(G)\).

This shows that \(G\) is the result of applying the transduction \((\phi_V, \phi_E)\) on \(G'\).
Moreover, \(G'\) can be computed from \(G\) in polynomial time.

Finally, we show that the class of \(\set{E}\)-reducts of the resulting coloured graphs \(G'\)
has shrub-depth at most \(2\).
As defined in \cite{GanianHNOMR_Shrubdepth},
a class of graphs has shrub-depth at most \(d\)
if there is a number \(m\) of labels such that for every graph \(G'\) in the class,
there is a tree \(T\) with the following properties.
\begin{enumerate}
  \item The set of leaves of \(T\) is exactly \(V(G')\).
  \item The length of each root-to-leaf path in \(T\) is exactly \(d\).
  \item Each leaf of \(T\) is assigned one of \(m\) labels.
  \item The existence of an edge \(\set{v_1, v_2} \in E(G')\)
    depends solely on the labels of \(v_1\) and \(v_2\)
    and the distance between \(v_1\) and \(v_2\) in \(T\).
\end{enumerate}
We show that the \(\set{E}\)-reduct of every coloured graph \(G'\) resulting from our construction above
can be defined via such a tree with \(2\) labels.
For this, we let the leaves of the tree \(T\) be
\(V(G') = V(G) \cup E(G) \cup V_V \cup V_{E,1} \cup V_{E,2}\).
The leaves in \(V(G) \cup E(G)\) are given the label \(1\),
and the leaves in \(V_V \cup V_{E,1} \cup V_{E,2}\) are given the label \(2\).
In addition to the leaves,
we let \(V(T)\) contain a fresh root node \(r\)
as well as a fresh node \(i'\) for every \(i \in V(G)\)
and a fresh node \(e'\) for every \(e \in E(G)\).
We let the fresh nodes \(i'\) and \(e'\) be the children of \(r\).
For a node \(i'\) with \(i \in V(G)\),
we let the children of \(i'\) be the vertices in \(V_i\) as well as the vertex \(i\).
For a node \(e'\) with \(e \in V(G)\),
we let the children of \(e'\) be the vertices in \(V_{e,1} \cup V_{e,2}\)
as well as the vertex \(e\).
The edges of \(G'\) are exactly the pairs of vertices
of distance \(2\) in \(T\) with labels \(1\) and \(2\)
(these are the connections between the vertices in \(V(G) \cup E(G)\)
with their \firstvertexrelation{}- and \secondvertexrelation{}-neighbours)
and the pairs of vertices of distance \(4\) in \(T\) with labels \(1\) and \(1\)
(this is the clique on \(V(G) \cup E(G)\)).
 \section{Proofs Omitted in Section~\ref{sec:graph-dimension}}
\label{sec:appendix-graph-dimension}

\subsection{Proof of Lemma~\ref{lem:cgfoc-rank}}
\label{subsec:appendix-cgfoc-rank}

For proving \cref{lem:cgfoc-rank},
we show the following result for \(\FO\) instead of \(\cgFOC\).

\begin{lemma}
  \label{lem:fo-rank}
  There is a computable function \(T \colon \FO \times \N \to \N\) such that,
  for every \(\FO\) formula \(\phi(\tx, \ty, \tz)\),
  every \(p \in \N\),
  every \(\sigma(\phi)\)-structure \(\A\),
  for \(r \deff 7^{\qr(\phi)}\),
  and for all \(V, W \subseteq A\) that are \(\bigl((2r+1)(\abs{\tz} + 1)\bigr)\)-separated
  in \(G_\A\) by a set of size at most \(p\),
  we have \(\rank\bigl(\MTypeTAVW\bigr) \leq T(\phi, p)\)
  for \(t(\tx, \ty) \deff \FOCCount{\tz}{\phi(\tx, \ty, \tz)}\).
\end{lemma}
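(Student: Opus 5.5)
The plan is to induct on \(p\). In the step we delete one separator vertex and encode it into the signature; in the base case \(p=0\) we use Gaifman locality and a Feferman--Vaught decomposition to write \(\MTypeTAVW\) as a bounded sum of rank-\(1\) matrices. Throughout, \(q\deff q'(\phi)\) is as in \cref{lem:gaifman}, and we write \(k\deff\abs{\tz}\) and \(d\deff 2r+1\).

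\emph{Base case \(p=0\).} First, by \cref{thm:gaifman}, replace \(\phi\) by \(\phi'_\A\): the Gaifman normal form with every basic local sentence replaced by \(\top\) or \(\bot\) according to \(\A\). This formula is \(r\)-local, has quantifier rank at most \(q\), and is equivalent to \(\phi\) on \(\A\). Next, write \(\phi'_\A\) as the mutually exclusive disjunction \(\Lor_{G\in\G_{\abs{\tx}+\abs{\ty}+k}}\bigl(\delta^{\sigma}_{G,d}\land\phi'_\A\bigr)\). Then \(t\) is the sum of the corresponding \#-terms, and the rank is subadditive.

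Fix \(G\). Suppose some connected component of \(G\) contains both an \(\tx\)-position and a \(\ty\)-position. Then there is a path of length at most \(k+1\) in \(G\) from an \(\tx\)-position to a \(\ty\)-position. Consequently, any \(\tv\tw\tc\) satisfying \(\delta_{G,d}\) has \(\dist(v_i,w_j)\le d(k+1)\). This is impossible for \(\tv\in V^{\abs{\tx}}\), \(\tw\in W^{\abs{\ty}}\), since \(V,W\) are \(d(k+1)\)-separated by the empty set. Hence that block of the matrix is \(0\).

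Otherwise, split the positions of \(G\) into three groups:
\begin{itemize}
  \item \(P_1\): the components meeting \(\tx\);
  \item \(P_2\): the components meeting \(\ty\);
  \item \(P_3\): the components consisting only of \(\tz\)-positions.
\end{itemize}
Under \(\delta_{G,d}\), the \(r\)-neighbourhood structure of \(\tv\tw\tc\) is the disjoint union of the three \(r\)-neighbourhood structures belonging to \(P_1\), \(P_2\) and \(P_3\), with no edges between them. By \cref{thm:FV}, applied twice and relativised to \(r\)-local formulas exactly as in the proof of \cref{lem:fo-dist-counting}, we obtain finitely many mutually exclusive triples \((\alpha,\beta,\gamma)\). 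Each \(\alpha(\tx,\tz_1)\), \(\beta(\ty,\tz_2)\) and \(\gamma(\tz_3)\) is \(r\)-local with quantifier rank bounded in terms of \(\phi\).

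The count with the full constraint \(\delta_{G,d}\) is not yet a product, because \(\delta_{G,d}\) also demands distance \(>d\) across the three groups. Dropping these cross conditions yields
\[
\sem{\FOCCount{\tz_1}{(\delta_{G[P_1],d}\land\alpha)}}(\tv)\cdot\sem{\FOCCount{\tz_2}{(\delta_{G[P_2],d}\land\beta)}}(\tw)\cdot\sem{\FOCCount{\tz_3}{(\delta_{G[P_3],d}\land\gamma)}},
\]
which has rank at most \(1\) as a matrix indexed by \((\tv,\tw)\). We correct the error by inclusion--exclusion: subtract the terms for all graphs \(G'\supsetneq G\) that agree with \(G\) inside each group. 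As in \cref{lem:fo-dist-counting}, the correctness of these correction terms uses the fact that disjointness of the \(r\)-neighbourhoods makes the Feferman--Vaught split valid. Every such \(G'\) has fewer components, so we argue by an inner induction on the number of components of \(G\) (equivalently, on the number of missing edges). Each correction term is either again of the separated form, giving rank at most \(1\) recursively, or has a component joining \(\tx\) and \(\ty\), and then vanishes identically on \(V\times W\) by the argument above. The total number of terms depends only on \(\phi\), which gives a computable bound \(T(\phi,0)\).

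\emph{Step \(p>0\).} Pick \(s\) in the separator \(S\) and let \(\A'\) be \(\A[A\setminus\{s\}]\), expanded as follows. For every \(R\in\sigma\) and every nonempty set \(J\) of argument positions, add a new relation of arity \(\ar(R)-\abs{J}\) that holds the tuples \(\tu\) such that inserting \(s\) at the positions in \(J\) gives a tuple in \(R(\A)\). The new tuples use only vertices of \(A\setminus\{s\}\) that already lie together in a tuple of \(R(\A)\), so \(G_{\A'}\) is a subgraph of \(G_\A-s\).

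Now translate \(\phi\) syntactically. Each quantifier \(\exists u\,\psi\) becomes \(\exists u\,\psi\lor\psi[u{=}s]\). Each \#-variable \(z_i\) is likewise split into the case \(z_i=s\) and the case \(z_i\neq s\), so the count \(\FOCCount{\tz}{\phi}\) becomes a sum of at most \(2^k\) counts over \(\A'\) with shorter \(\tz\). Atoms mentioning \(s\) become the new relations or \(\top/\bot\). This keeps the quantifier rank, so \(r\) does not increase, while \(k\) can only decrease.

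Rows and columns of the matrix are split into at most \(2^{\abs{\tx}+\abs{\ty}}\) blocks according to which coordinates equal \(s\). Each block is the matrix, over \(\A'\), of a term with fewer free variables for the sets \(V\setminus\{s\}\) and \(W\setminus\{s\}\). These sets are still \(d(k+1)\)-separated by \(S\setminus\{s\}\) in \(G_{\A'}\): a short path in \(G_{\A'}\) is also a short path in \(G_\A\) avoiding \(s\), so it must meet \(S\setminus\{s\}\). Applying the induction hypothesis to each block and summing gives \(T(\phi,p)\), where the sum is taken over the finitely many translated formulas.

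The main obstacle is the base-case inclusion--exclusion: one must check that every correction term either factorises into functions of \(\tv\) alone and \(\tw\) alone, or is forced to be zero by the separation. This is precisely why the separation radius is \((2r+1)(\abs{\tz}+1)\).
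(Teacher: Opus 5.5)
Your proposal is correct and matches the paper's proof: induction on \(p\) by deleting a separator vertex and encoding its incidences into new relation symbols, and a base case via Gaifman normal form, the split over distance graphs \(G\), Feferman--Vaught, and inclusion--exclusion with an induction on the number of components (which the paper isolates as \cref{lem:fo-dist-rank}). The only differences are cosmetic: you make the \(\tz\)-only components a third, constant factor where the paper lumps them with the \(\ty\)-side; also, the non-vanishing correction terms have bounded (not necessarily \(\le 1\)) rank via the inner induction, and \(T(\phi,0)\) is a maximum over the finitely many \(\top/\bot\)-substitutions of the basic local sentences.
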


Before we prove \cref{lem:fo-rank},
we first show how to obtain \cref{lem:cgfoc-rank} from it.

\cgfocRank*

\begin{proof}
  Let \(t(\tx, \ty)\) be a \(\cgFOC\) counting term,
  let \(\A\) be a \(\sigma\)-structure for some \(\sigma \supseteq \sigma(t)\),
  and let \(V, W \subseteq A\) be sets that are \(r(t)\)-separated in \(G_\A\)
  by a set of size at most \(m\),
  where \(r \colon \cgFOC \to \N\) will be specified below.

  If \(\sigma \neq \sigma(t)\),
  then we let \(\A'\) be the \(\sigma(t)\)-reduct of \(\A\).
  It holds that \(\MTypeTAVW = \MTypeTVW{\A'}\),
  and \(V\) and \(W\) are also \(r(t)\)-separated in \(G_{\A'}\)
  by a set of size at most \(m\).
  Thus, it suffices to provide a computable upper bound \(T(t, m)\)
  for \(\rank\bigl(\MTypeTVW{\A'}\bigr)\),
  since the same value is also an upper bound for \(\rank\bigl(\MTypeTAVW\bigr)\).
  Hence, in the following, we may assume that \(\sigma = \sigma(t)\).

  We prove the result by structural induction on \(t(\tx, \ty)\).
  As a \(\cgFOC\) counting term, \(t\) is the result of applying
  rule~\eqref{def:foc-countingterm},
  \eqref{def:foc-constterm},
  or~\eqref{def:foc-plustimesterm}
  of \cref{def:cgfoc}.

  If \(t\) is the result of applying rule~\eqref{def:foc-countingterm},
  then it is a \#-term of the form
  \(t(\tx, \ty) = \FOCCount{\tz}{\psi(\tx, \ty, \tz)}\) for a \(\cgFOC\) formula \(\psi\).
  We apply \cref{lem:cgfoc-preprocessing} to \(\A\) and \(\psi\),
  and we obtain a signature \(\sigma_\psi \supseteq \sigma\),
  an \(\FO[\sigma_\psi]\) formula \(\psi'\),
  and a \(\sigma_\psi\)-expansion \(\A_\psi\) of \(\A\) with \(G_\A = G_{\A_\psi}\)
  such that for all \(\tv \in A^{\abs{\tx}}\), \(\tw \in A^{\abs{\ty}}\),
  and \(\tc \in A^{\abs{\tz}}\),
  we have \(\A \models \psi[\tv, \tw, \tc]\) if and only if \(\A_\psi \models \psi'[\tv, \tw, \tc]\).
  Hence, for \(t'(\tx, \ty) \deff \FOCCount{\tz}{\psi'(\tx, \ty, \tz)}\),
  it holds that \(\sem{t(\tv, \tw)}^\A = \sem{t'(\tv, \tw)}^{\A_\psi}\),
  so \(\MTypeTAVW = \MType{t'}{\A_\psi}{V/W}\).
  Let \(T_\FO \colon \FO \times \N \to \N\) be the computable function from \cref{lem:fo-rank}.
  We set \(r(t) \deff (2 \cdot 7^{\qr(\psi')}+1)(\abs{\tz}+1)\)
  and \(T(t, m) \deff T_\FO(\psi', m)\).
  Note that, since \(\psi'\) is computable from \(t\),
  \(r(t)\) is also computable from \(t\),
  and \(T(t, m)\) is computable from \(t\) and \(m\).
  By \cref{lem:fo-rank}, since \(\psi' \in \FO\),
  and since \(V\) and \(W\) are \(r(t)\)-separated in \(G_\A\) by a set of size at most \(m\)
  if and only if they are \(r(t)\)-separated in \(G_{\A_\phi}\) by a set of size at most \(m\),
  we have
  \(\rank\bigl(\MTypeTAVW\bigr) = \rank\bigl(\MType{t'}{\A_\psi}{V/W}\bigr)
  \leq T_\FO(\psi', m) = T(t, m)\).

  If \(t\) is the result of applying rule~\eqref{def:foc-constterm},
  then \(t = i\) for some \(i \in \Z\).
  Hence, \(\bigl(\MTypeTAVW\bigr)_{\tv, \tw} = i\)
  for all \(\tv \in V^{\abs{\tx}}\) and \(\tw \in W^{\abs{\ty}}\).
  Thus, we have \(\rank\bigl(\MTypeTAVW\bigr) \leq 1\),
  and we do not rely on \(V\) and \(W\) being separated,
  so we can set \(r(t) \deff 0\) and \(T(t,m) \deff 1\).

  Lastly, if \(t\) is the result of applying rule~\eqref{def:foc-plustimesterm},
  then \(t = t_1 + t_2\) or \(t = t_1 \cdot t_2\) for \(\cgFOC\) counting terms \(t_1\) and \(t_2\).
  By the induction hypothesis, for \(i \in [2]\),
  \(r(t_i)\) is computable from \(t_i\) and
  \(T(t_i, m)\) is computable from \(t_i\) and \(m\)
  such that \(\rank\bigl(\MTypeAVW{t_i}\bigr) \leq T(t_i, m)\)
  for all sets \(V, W \subseteq A\) that are \(r(t_i)\)-separated by a set of size at most \(m\).
  We set \(r(t) \deff \max\set{r(t_1), r(t_2)}\).
If \(t = t_1 + t_2\), then \(\MTypeTAVW = \MTypeAVW{t_1} + \MTypeAVW{t_2}\), so
  \begin{align*}
    \rank\bigl(\MTypeTAVW\bigr)
    &\leq \rank\bigl(\MTypeAVW{t_1}\bigr) + \rank\bigl(\MTypeAVW{t_2}\bigr)\\
    &\leq T(t_1, m) + T(t_2, m) \ffed T(t, m).
  \end{align*}
If \(t = t_1 \cdot t_2\), then \(\MTypeTAVW = \MTypeAVW{t_1} \odot \MTypeAVW{t_2}\),
  where \(\odot\) denotes the element-wise multiplication of two matrices,
  for which it holds that \(\rank(M_1 \odot M_2) \leq \rank(M_1) \cdot \rank(M_2)\)
  for matrices \(M_1, M_2\) (see, \eg, \cite[Theorem~5.1.7]{HadamardProduct}).
  Thus,
  \begin{align*}
    \rank\bigl(\MTypeTAVW\bigr)
    &\leq \rank\bigl(\MTypeAVW{t_1}\bigr) \cdot \rank\bigl(\MTypeAVW{t_2}\bigr)\\
    &\leq T(t_1, m) \cdot T(t_2, m) \ffed T(t, m).
  \end{align*}

  All in all, \(T\) and \(r\) as defined above are suitable computable functions,
  and this concludes the proof.
\end{proof}

Towards \cref{lem:fo-rank}, we first prove the following result,
where the formula \(\delta^\sigma_{G, r}\) is defined as in
\cref{subsec:appendix-cgfoc-preprocessing-bounded-expansion}.

\begin{lemma}
  \label{lem:fo-dist-rank}
  There is a computable function \(T \colon \FO \to \N\) such that, for every \(\FO\) formula
  \(\psi(\tx, \ty, \tz) = \delta^{\sigma(\phi)}_{G, 2r+1}(\tx, \ty, \tz) \land \phi(\tx, \ty, \tz)\)
  for some \(r \in \N\),
  \(G \in \G_{\abs{\tx} + \abs{\ty} + \abs{\tz}}\),
  and an \(r\)-local \(\FO\) formula \(\phi\),
  for every \(\sigma(\phi)\)-structure \(\A\),
  and all \(V, W \subseteq A\) that are \(\bigl((2r+1)(\abs{\tz} + 1)\bigr)\)-separated in \(G_\A\),
  we have \(\rank\bigl(\MTypeTAVW\bigr) \leq T(\psi)\)
  for \(t(\tx, \ty) \deff \FOCCount{\tz}{\psi(\tx, \ty, \tz)}\).
\end{lemma}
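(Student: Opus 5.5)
Let \(k \deff \abs{\tx}\), \(\ell \deff \abs{\ty}\), \(n \deff \abs{\tz}\), and index the vertices of \(G\) so that \(\tx\), \(\ty\), \(\tz\) correspond to disjoint blocks \(X\), \(Y\), \(Z\) of \([k+\ell+n]\). The plan is to show that \(\MTypeTAVW\) is a sum of a bounded number of matrices, each of rank at most \(1\). First we use the structure of \(G\). If some connected component of \(G\) contains vertices of both \(X\) and \(Y\), then \(t(\tv,\tw) = 0\) for all \(\tv \in V^k\), \(\tw \in W^\ell\). Indeed, such a component has at most \(n+2\) vertices on a path from an \(X\)-vertex to a \(Y\)-vertex, with inner vertices in \(Z\). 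So any satisfying \(\tc\) would give a path in \(G_\A\) of length at most \((2r+1)(n+1)\) from \(\tilde v\) to \(\tilde w\), contradicting separation. In that case the rank is \(0\). Otherwise we partition the components of \(G\) into \(I_1\), the components meeting \(X\); \(I_2\), those meeting \(Y\); and \(I_0\), those meeting only \(Z\). This splits \(\tz\) into \(\tz_1\), \(\tz_2\), and \(\tz_0\).

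Next we factorise. For a tuple satisfying \(\delta^{\sigma}_{G,2r+1}\), the \(r\)-neighbourhoods of the three groups \(\tv\tc_1\), \(\tw\tc_2\), \(\tc_0\) are pairwise disjoint and non-adjacent. This holds because distance \(>2r+1\) between all cross-group elements, so \(\NrA{\tv\tw\tc} = \NrA{\tv\tc_1}\uplus\NrA{\tw\tc_2}\uplus\NrA{\tc_0}\). We apply \cref{thm:FV} twice and relativise quantifiers to \(r\)-neighbourhoods, as in the proof of \cref{lem:fo-dist-counting}. This yields a bounded, mutually exclusive set \(\Delta\) of triples \((\alpha_1,\alpha_2,\alpha_0)\) of \(r\)-local formulas with the following property. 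For such tuples, \(\A\models\phi[\tv,\tw,\tc]\) iff some triple has \(\alpha_1[\tv\tc_1]\), \(\alpha_2[\tw\tc_2]\), \(\alpha_0[\tc_0]\).

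The count is then handled by inclusion–exclusion. The constraint \(\delta_{G,2r+1}\) equals the conjunction of \(\delta_{G[I_j],2r+1}\) within groups and the "all cross distances \(>2r+1\)" condition. The difficulty is the cross condition involving \(\tc_0\), and between \(\tc_1\) and \(\tc_2\); the \(\tv\)–\(\tw\) and \(\tv\)–\(\tc_2\) parts are largely automatic. The \(\tv\)–\(\tc_2\) distances are not automatically large, so we cannot simply drop them. As in the proof of \cref{lem:fo-dist-counting}, we write
\[\sem{t(\tv,\tw)}^\A=\sum_{\Delta} N_1(\tv)\,N_2(\tw)\,N_0-\sum_{G'\in\G'}(\text{count with } G'),\]
where \(\G'\) ranges over graphs agreeing with \(G\) inside each group but having some cross edge. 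This means \(G'\) has fewer components. Each correction term is again of the form treated by this lemma, with a strictly coarser graph and a formula \(\phi\) whose quantifier rank is bounded. We therefore run an induction on the number of components of \(G\), with \(T\) defined recursively. The base case is connected \(G\): then either \(k=0\) or \(\ell=0\), or \(X\) and \(Y\) lie in one component so the matrix is zero. The matrix \(\bigl(N_1(\tv)N_2(\tw)N_0\bigr)\) has rank at most \(1\), so \(\rank \le \abs{\Delta}+\sum_{G'} T(\psi_{G'})\), which is computable.

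The main obstacle is making the induction well-founded and computable. The correction terms use \(\delta_{G',2r+1}\) with the same \(r\), and a Boolean combination of the \(\alpha\)'s. These are again \(r\)-local with quantifier rank bounded by \(\qr(\phi)\) plus a distance overhead. So the induction runs on the number of components together with a finite, computable set \(\Phi\) of normal-form formulas (\cref{rem:finite-fo}), and \(T(\psi)\) is defined as the maximum over \(\Phi\). A second point to check is that the separation hypothesis \((2r+1)(n+1)\) indeed suffices for every merged graph \(G'\), where components linking \(X\) and \(Y\) yield zero matrices. Each such linking path has at most \(n\) inner vertices, each step of length \(\le 2r+1\), so this holds.
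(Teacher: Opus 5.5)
Your proposal is correct and follows essentially the same route as the paper: induction on the number of components of \(G\), vanishing of the matrix when a component links \(\tx\) and \(\ty\), a Feferman--Vaught factorisation relativised to \(r\)-neighbourhoods, and inclusion--exclusion over the coarser graphs \(G' \in \G'\). The differences are minor. The paper puts the \(\tz\)-only components on the \(\ty\)-side rather than in a third group, and it bounds the product terms via the Hadamard-product rank inequality rather than noting directly that they have rank at most \(1\); your finite normal-form set \(\Phi\) is also unnecessary, because the recursive definition of \(T(\psi)\) already terminates and is computable, as the number of components strictly decreases at each step.
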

\begin{proof}
  Let \(k \deff \abs{\tx}\), \(\ell \deff \abs{\ty}\), and \(m \deff \abs{\tz}\).
  If \(k = 0\), then \(\MTypeTAVW\) only has a single row,
  so \(\rank\bigl(\MTypeTAVW\bigr) \leq 1 \ffed T(\psi)\)
  for all \(V, W \subseteq A\).
  Analogously, if \(\ell = 0\), then \(\MTypeTAVW\) only has a single column,
  and we again have \(\rank\bigl(\MTypeTAVW\bigr) \leq 1 \ffed T(\psi)\)
  for all \(V, W \subseteq A\).
  Thus, in the following, we assume \(k > 0\) and \(\ell > 0\).

  We prove the result by induction on the number of connected components of \(G\).
  For the base case, let \(G\) be a connected graph.
  Then it holds that \(\A \not\models \delta^{\sigma(\phi)}_{G, 2r+1}[\tv, \tw, \tc]\)
  for all \(\tv \in V^k\), \(\tw \in W^\ell\), and \(\tc \in A^m\),
  since \(V\) and \(W\) are \(\bigl((2r+1)(m+1)\bigr)\)-separated,
  which implies that \(\dist^A(\tv, \tw) > (2r+1)(m+1)\).
  Thus, we have \(\sem{t(\tv, \tw)}^\A = 0\) for all \(\tv \in V^k\) and \(\tw \in W^\ell\),
  so \(\rank\bigl(\MTypeTAVW\bigr) = 0 \ffed T(\psi)\).

  Now suppose that \(G\) has \(p > 1\) connected components,
  and suppose the statement of \cref{lem:fo-dist-rank} holds for all graphs \(G\)
  with at most \(p-1\) connected components.

  If there is a connected component of \(G\) that has a non-empty intersection
  with both \([k]\) and \([k+1, \ell]\),
  then, analogously to the case that \(G\) is connected,
  it holds that \(\sem{t(\tv, \tw)}^\A = 0\) for all \(\tv \in V^k\) and \(\tw \in W^\ell\),
  and \(\rank\bigl(\MTypeTAVW\bigr) = 0 \ffed T(\psi)\).

  Otherwise, let \(V_1 \subseteq V(G)\) be the vertices of all connected components of \(G\)
  that contain a vertex from \([k]\),
  and let \(V_2 \deff V(G) \setminus V_1\).
  Then \([k] \subseteq V_1 \subseteq [k] \cup [k+\ell+1, k+\ell+m]\),
  \([k+1, \ell] \subseteq V_2 \subseteq [k+1, k+\ell+m]\),
  and \(G_1 \deff G[V_1]\) and \(G_2 \deff G[V_2]\) have at most \(p-1\) connected components.
  The following argumentation is similar to the one in the proof of \cref{lem:fo-dist-counting}.

  For \(i \in [2]\), we let \(\tz_i\) be the tuple obtained from \(\tz\)
  by keeping only those entries at index \(j\) with \(k + \ell + j \in V_i\),
  we analogously define \(\tc_i\) for a tuple \(\tc \in A^m\),
  and we set \(m_i \deff \abs{\tz_i}\),

  Let \(\Delta\) be a Feferman--Vaught decomposition of \(\phi\) with respect to \(V_1\) and \(V_2\)
  computed via \cref{thm:FV}.
  We let \(\Delta'\) be the set containing the pair \((\alpha'_1, \alpha'_2)\)
  for every \((\alpha_1, \alpha_2) \in \Delta\),
  where \(\alpha'_i\) is obtained from \(\alpha_i\)
  by recursively replacing subformulas of the form \(\exists x \phi\) by
  \(\exists x \bigl(\phi \land \Lor_{y \in \free(\alpha_i)} \dist^\sigma_{\leq r}(x,y)\bigr)\).
  Then \(\Delta'\) is computable from \(\psi\), \(V_1\), and \(V_2\),
  and every formula occurring in \(\Delta'\) is \(r\)-local.

  Thus, for \(\tv \in A^k\), \(\tw \in A^\ell\), and \(\tc \in A^m\) with
  \(\NrA{\tv \tw \tc} = \NrA{\tv \tc_1} \uplus \NrA{\tw \tc_2}\),
  it holds that \(\NrA{\tv \tw \tc} \models \phi[\tv, \tw, \tc]\)
  if and only if there is some \((\alpha_1, \alpha_2) \in \Delta\)
  with \(\NrA{\tv \tc_1} \models \alpha_1[\tv, \tc_1]\)
  and \(\NrA{\tw \tc_2} \models \alpha_2[\tw, \tc_2]\),
  and this holds if and only if there is some \((\alpha'_1, \alpha'_2) \in \Delta'\)
  such that \(\NrA{\tv \tc_1} \models \alpha'_1[\tv, \tc_1]\)
  and \(\NrA{\tw \tc_2} \models \alpha'_2[\tw, \tc_2]\).
  Since \(\phi\) and the formulas in \(\Delta'\) are \(r\)-local,
  this implies that \(\A \models \phi[\tv, \tw, \tc]\)
  if and only if there is some \((\alpha'_1, \alpha'_2) \in \Delta'\)
  such that \(\A \models \alpha'_1[\tv, \tc_1]\)
  and \(\A \models \alpha'_2[\tw, \tc_2]\).
  Furthermore, \(\A \models \delta^\sigma_{G, 2r+1}[\tv, \tw, \tc]\)
  implies that \(\NrA{\tv \tw \tc} = \NrA{\tv \tc_1} \uplus \NrA{\tw \tc_2}\).

  This shows that, for all \(\tv \in A^k\), \(\tw \in A^\ell\), \(\tc \in A^m\),
  we have \(\A \models \psi[\tv, \tw, \tc]\)
  if and only if
  \(\NrA{\tv \tw \tc} = \NrA{\tv \tc_1} \uplus \NrA{\tw \tc_2}\)
  and there is some \((\alpha'_1, \alpha'_2) \in \Delta'\)
  such that \(\A \models \delta^\sigma_{G_1, 2r+1}[\tv, \tc_1]
  \land \alpha'_1[\tv, \tc_1]\)
  and \(\A \models \delta^\sigma_{G_2, 2r+1}[\tw, \tc_2]
  \land \alpha'_2[\tw, \tc_2]\).

  Let \(\G' \subset \G_{k+\ell+m}\) be the class of graphs \(G'\)
  with \(G'[V_1] = G_1\), \(G'[V_2] = G_2\), and \(G' \neq G\).
  For every such graph \(G'\), there has to be an edge between \(V_1\) and \(V_2\),
  so \(G'\) has at most \(p-1\) connected components.
  Moreover, for every \(\tv \in A^k\), \(\tw \in A^\ell\), and \(\tc \in A^m\)
  with \(\NrA{\tv \tw \tc} \neq \NrA{\tv \tc_1} \cup \NrA{\tw \tc_2}\)
  (that is, there is an additional edge in \(\NrA{\tv \tw \tc}\)
  between a vertex from \(\nrA{\tv \tc_1}\) and a vertex from \(\nrA{\tw \tc_2}\)),
  \(\NrA{\tv \tc_1} \models \delta^\sigma_{G_1, 2r+1}[\tv, \tc_1]\),
  and \(\NrA{\tw \tc_2} \models \delta^\sigma_{G_2, 2r+1}[\tw, \tc_2]\),
  it holds that \(\A \models \delta^\sigma_{G'\!, 2r+1}[\tv, \tw, \tc]\)
  for some \(G' \in \G'\).

  By definition, we have
  \[\sem{\FOCCount{\tz}{\psi(\tv, \tw, \tz)}}^\A
  = \bigabs{\bigsetc{\tc \in A^m}{\A \models \delta^\sigma_{G, 2r+1}[\tv, \tw, \tc]
  \land \phi[\tv, \tw, \tc]}}.\]
  Since the formulas in \(\Delta'\) are mutually exclusive, this is equal to
  \[\sum_{(\alpha'_1, \alpha'_2) \in \Delta'} \bigabs{\bigl\{\tc \in A^m \bigmid
  \A \models \delta^\sigma_{G, 2r+1}[\tv, \tw, \tc]
  \land \alpha'_1[\tv, \tc_1] \land \alpha'_2[\tw, \tc_2]\bigr\}}.\]
  Lastly, by the definition of \(\G'\) above, the term is equal to
  \begin{align*}
    &\sum_{(\alpha'_1, \alpha'_2) \in \Delta'}
       \Bigl(
       \bigabs{\bigsetc{\tc_1 \in A^{m_1}}{\A \models \delta^\sigma_{G_1, 2r+1}[\tv, \tc_1]
       \land \alpha'_1[\tv, \tc_1]}}\\
    &\mkern60mu\cdot
       \bigabs{\bigsetc{\tc_2 \in A^{m_2}}{\A \models \delta^\sigma_{G_2, 2r+1}[\tw, \tc_2]
       \land \alpha'_2[\tw, \tc_2]}}\\
    &- \sum_{G' \in \G'}
       \bigabs{\bigsetc{\tc \in A^m}{\A \models \delta^\sigma_{G'\!, 2r+1}[\tv, \tw, \tc]
       \land \alpha'_1[\tv, \tc_1] \land \alpha'_2[\tw, \tc_2]}}\Bigr).
  \end{align*}
For every \((\alpha'_1, \alpha'_2) \in \Delta'\), we set
  \begin{align*}
    \psi_{\alpha'_1}(\tx, \tz_1)
    &\deff \delta^\sigma_{G_1, 2r+1}(\tx, \tz_1) \land \alpha'_1(\tx, \tz_1),\\
    \psi_{\alpha'_2}(\ty, \tz_2)
    &\deff \delta^\sigma_{G_2, 2r+1}(\ty, \tz_2) \land \alpha'_2(\ty, \tz_2),
  \end{align*}
  and we define
  \(t_{\alpha'_1}(\tx) \deff \FOCCount{\tz_1}{\psi_{\alpha'_1}(\tx, \tz_1)}\),
  \(t_{\alpha'_2}(\ty) \deff \FOCCount{\tz_2}{\psi_{\alpha'_2}(\ty, \tz_2)}\),
  \(u_{\alpha'_1}(\tx, \ty) \deff t_{\alpha'_1}(\tx)\),
  and
  \(u_{\alpha'_2}(\tx, \ty) \deff t_{\alpha'_2}(\ty)\).
  Furthermore, for all \(G' \in \G'\), we set
  \[\psi_{\alpha'_1, \alpha'_2, G'}(\tx, \ty, \tz) \deff \delta^\sigma_{G'\!, 2r+1}(\tx, \ty, \tz)
  \land \alpha'_1(\tx, \tz_1) \land \alpha'_2(\ty, \tz_2)\]
  and
  \(t_{\alpha'_1, \alpha'_2, G'}(\tx, \ty) \deff
  \FOCCount{\tz}{\psi_{\alpha'_1, \alpha'_2, G'}(\tx, \ty, \tz)}\).
  Then
  \[\sem{t(\tv, \tw)}^\A = \sum_{(\alpha'_1, \alpha'_2) \in \Delta'} \Bigl(
  \sem{u_{\alpha'_1}(\tv, \tw)}^\A \cdot \sem{u_{\alpha'_2}(\tv, \tw)}^\A
  - \sum_{G' \in \G'} \sem{t_{\alpha'_1, \alpha'_2, G'}(\tv, \tw)}^\A\Bigr)\]
  for all \(\tv \in V^k\) and \(\tw \in W^\ell\).
  Thus,
  \[\MTypeTAVW = \sum_{(\alpha'_1, \alpha'_2) \in \Delta'}
  \Bigl(\MTypeAVW{u_{\alpha'_1}} \odot \MTypeAVW{u_{\alpha'_2}}
  - \sum_{G' \in \G'} \MTypeAVW{t_{\alpha'_1, \alpha'_2, G'}}\Bigr),\]
  where \(\odot\) denotes the element-wise multiplication of two matrices.
By the induction hypothesis applied to \(\psi_{\alpha'_1}\), \(\psi_{\alpha'_2}\),
  and \(\psi_{\alpha'_1, \alpha'_2, G'}\),
  since \(G_1\), \(G_2\), and \(G'\) have at most \(p-1\) connected components
  and \(\alpha'_1\) and \(\alpha'_2\) are \(r\)-local,
  it holds that
  \(\rank\bigl(\MTypeAVW{u_{\alpha'_i}}\bigr) \leq T(\psi_{\alpha'_i})\) for \(i\in\set{1,2}\) and
  \(\rank\bigl(\MTypeAVW{t_{\alpha'_1, \alpha'_2, G'}}\bigr)
  \leq T(\psi_{\alpha'_1, \alpha'_2, G'})\).
  Hence, \(\rank\bigl(\MTypeTAVW\bigr)\) is at most
  \begin{align*}
    &\sum_{(\alpha'_1, \alpha'_2) \in \Delta'} \Bigl(
      \rank\bigl(\MTypeAVW{u_{\alpha'_1}}\bigr)
      \cdot \rank\bigl(\MTypeAVW{u_{\alpha'_2}}\bigr)
      + \sum_{G' \in \G'} \rank\bigl(\MTypeAVW{t_{\alpha'_1, \alpha'_2, G'}}\bigr)
    \Bigr)\\
    \leq
    &\ \sum_{(\alpha'_1, \alpha'_2) \in \Delta'} \Bigl(
      T(\psi_{\alpha'_1})
      \cdot T(\psi_{\alpha'_2})
      + \sum_{G' \in \G'} T(\psi_{\alpha'_1, \alpha'_2, G'})\bigr)
      \Bigr)\\
      \ffed
    &\ T(\psi).
  \end{align*}
  Since \(\Delta'\) is computable from \(\psi\)
  and, by the induction hypothesis,
  \(T(\psi')\) is computable from \(\psi'\)
  for all \(\psi'\) defined via a graph \(G'\) with at most \(p-1\) connected components,
  \(T(\psi)\) is computable from \(\psi\).

  All in all, this shows that \(T\) as defined above is a suitable computable function,
  and this concludes the proof.
\end{proof}

With \cref{lem:fo-dist-rank} at hand, we can now prove \cref{lem:fo-rank}.

\begin{proof}[Proof of \cref{lem:fo-rank}]
  Let \(\phi(\tx, \ty, \tz) \in \FO\) be an \(\FO\) formula,
  let \(k \deff \abs{\tx}\), \(\ell \deff \abs{\ty}\), \(m \deff \abs{\tz}\),
  let \(t(\tx, \ty) \deff \FOCCount{\tz}{\phi(\tx, \ty, \tz)}\),
  and \(r \deff 7^{\qr(\phi)}\).
  In order to prove \cref{lem:fo-rank},
  we define computable values \(T(\phi, p)\) such that, for every \(p \in \N\),
  every \(\sigma(\phi)\)-structure \(\A\),
  and all \(V, W \subseteq A\) that are \(\bigl((2r+1)(m+1)\bigr)\)-separated in \(G_\A\)
  by a set of size at most \(p\),
  we have \(\rank\bigl(\MTypeTAVW\bigr) \leq T(\phi, p)\).

  The cases \(k = 0\) and \(\ell = 0\) can be handled analogously
  to the corresponding cases in the proof of \cref{lem:fo-dist-rank}
  by setting \(T(\phi, p) \deff 1\).
  Hence, in the following, we assume \(k > 0\) and \(\ell > 0\).

  Using \cref{thm:gaifman},
  we transform \(\phi(\tx, \ty, \tz)\) into an equivalent formula
  \(\psi(\tx, \ty, \tz)\) in Gaifman normal form with radius at most \(r\).

  In the following, by induction on \(p \in \N\),
  we define \(T(\phi, p)\) such that \(T\) is computable
  and has the properties mentioned above.
Consider the base case \(p=0\).
  Let \(\Psi\) be the set of all formulas obtained from \(\psi\)
  by replacing every basic local sentence occurring in \(\psi\) by \(\top\) or \(\bot\).
  Then every formula in \(\Psi\) is \(r\)-local,
  and \(\Psi\) is a finite set that is computable from \(\phi\).
  Let \(\G \deff \G_{k + \ell + m}\).
  For every formula \(\psi'\) in \(\Psi\) and every graph \(G \in \G\),
  we let \(\psi'_G(\tx, \ty, \tz)
  \deff \delta^{\sigma(\phi)}_{G, 2r+1}(\tx, \ty, \tz) \land \psi'(\tx, \ty, \tz)\)
  and \(t_{\psi', G}(\tx, \ty) \deff \FOCCount{\tz}{\psi'_G(\tx, \ty, \tz)}\).

  Let \(T_{\ref*{lem:fo-dist-rank}}\) be the computable function from \cref{lem:fo-dist-rank}.
  For every \(\sigma(\phi)\)-structure \(\A\) and for all sets \(V, W \subseteq A\)
  that are \(\bigl((2r+1)(\abs{\tz}+1)\bigr)\)-separated in \(G_\A\), it holds that
  \(\rank\bigl(\MTypeAVW{t_{\psi', G}}\bigr) \leq T_{\ref*{lem:fo-dist-rank}}(\psi'_G)\).
  Let
  \(T(\phi, 0) \deff \max_{\psi' \in \Psi} \sum_{G \in \G} T_{\ref*{lem:fo-dist-rank}}(\psi'_G)\),
  which is computable from \(\phi\).

  For a \(\sigma(\phi)\)-structure \(\A\),
  we let \(\psi_\A \in \Psi\) be the formula obtained from \(\psi\)
  where we replace every basic local sentence \(\chi\) occurring in \(\psi\)
  by \(\top\) if \(\A \models \chi\),
  and we replace it by \(\bot\) if \(\A \not\models \chi\).
  Then \(\psi_\A \in \Psi\).
  Furthermore, for all \(\tv \in A^k\), \(\tw \in A^\ell\), and \(\tc \in A^m\),
  it holds that \(\A \models \phi[\tv, \tw, \tc]\) if and only if
  \(\A \models \psi_\A[\tv, \tw, \tc]\).
  Moreover, for \(\psi_{\A, G}(\tx, \ty, \tz) = \delta^{\sigma(\phi)}_{G, 2r+1}(\tx, \ty, \tz)
  \land \psi_\A(\tx, \ty, \tz)\)
  and \(t_{\A, G}(\tx, \ty) = \FOCCount{\tz}{\psi_{\A, G}(\tx, \ty, \tz)}\),
  we have
  \(\sem{t(\tv, \tw)}^\A = \sum_{G \in \G} \sem{t_{\A, G}(\tv, \tw)}^\A\).

  Let \(V, W \subseteq A\) be two sets that are
  \(\bigl((2r+1)(\abs{\tz}+1)\bigr)\)-separated in \(G_\A\).
  Then it holds that
  \(\rank\bigl(\MTypeAVW{t}\bigr) \leq \sum_{G \in \G} T_{\ref*{lem:fo-dist-rank}}(\psi_{\A, G})
  \leq T(\phi, 0)\).
  This finishes the proof for \(p = 0\).

  Now let \(p \in \Npos\),
  and suppose the result holds for all \(p' < p\).
  Let \(V, W \subseteq A\) be two sets
  that are \(\bigl((2r+1)(\abs{\tz}+1)\bigr)\)-separated in \(G_\A\)
  by a set \(S \subseteq A\) of size at most \(p\).
  If \(\abs{S} < p\), then the result already holds by induction hypothesis,
  so suppose \(\abs{S} = p\) with \(S = \set{s_1, \dots, s_p}\).

  Intuitively, in the following, we remove \(s_p\) from \(\A\),
  and we encode the tuples in relations of \(\A\) involving \(s_p\)
  by using new relation symbols.
  We call the resulting structure \(\A'\).
  The values in \(\MTypeTAVW\) will correspond to values in
  \(\MType{t'}{\A'}{V'/W'}\) for
  \(V' \deff V \setminus \set{s_p}\),
  \(W' \deff W \setminus \set{s_p}\),
  and for counting terms \(t'\) that can be computed from \(t\).
  Hence, to bound the rank of \(\MTypeTAVW\),
  it suffices to bound the rank of \(\MType{t'}{\A'}{V'/W'}\).
  Since \(V'\) and \(W'\) are \(\bigl((2r+1)(\abs{\tz}+1)\bigr)\)-separated in \(G_{\A'}\)
  by the set \(\set{s_1, \dots, s_{p-1}}\) of size \(p-1\),
  we can provide a computable bound by the induction hypothesis.

  Formally, we proceed as follows.
  We let \(\sigma'\) be the signature obtained from \(\sigma\)
  by replacing every relation symbol \(R \in \sigma\)
  by fresh relation symbols \(R_f\), for every \(f \colon [\ar(R)] \to \set{0,1}\),
  of arity \(\ar(R_f) \deff \abs{\setc{i \in [\ar(R)]}{f(i) = 0}}\).
  We let \(\A'\) be the \(\sigma'\)-structure obtained from \(\A\)
  with \(A' \deff A \setminus \set{s_p}\) by letting
  \(R_f(\A')\) contain all tuples that are obtained from a tuple \(\tv \in R(\A)\),
  with \(v_i = s_p\) if and only if \(f(i) = 1\),
  by dropping all entries from \(\tv\) that are equal to \(s_p\)
  (or, equivalently, all entries with indices \(i\) where \(f(i) = 1\)).
  For every formula \(\psi \in \FO\) and every function \(g \colon \free(\psi) \to \set{0,1}\),
  we let \(\psi_g\) be the formula obtained from \(\psi\) as follows.
  \begin{itemize}
    \item If \(\psi\) is of the form \(z_1 = z_2\) and
      \begin{itemize}
        \item \(g(z_1) = g(z_2) = 0\), then we set \(\psi_g \deff z_1 = z_2\);
        \item \(g(z_1) = g(z_2) = 1\), then we set \(\psi_g \deff \top\);
        \item \(g(z_1) \neq g(z_2)\), then we set \(\psi_g \deff \bot\).
      \end{itemize}
    \item If \(\psi\) is of the form \(R(\tz)\),
      then we set \(\psi_g \deff R_h(\tz_g)\),
      where \(h \colon [\ar(R)] \to \set{0,1}\)
      with \(h(i) = 0\) if and only if \(g(z_i) = 0\),
      and \(\tz_g\) is obtained from \(\tz\) by dropping all variables \(z_i\)
      with \(g(z_i) = 1\).
    \item If \(\psi\) is of the form \(\neg \psi'\),
      then we set \(\psi_g \deff \neg \psi'_g\).
    \item If \(\psi\) is of the form \(\psi' \lor \psi''\),
      then we set \(\psi_g \deff \psi'_g \lor \psi''_g\).
    \item If \(\psi\) is of the form \(\exists x\, \psi'\),
      then we set \(\psi_g \deff \psi'_{g'} \lor \exists x\, \psi'_{g''}\),
      where \(g', g'' \colon \free(\psi) \uplus \set{x} \to \set{0,1}\),
      \(g'(z) \deff g''(z) \deff g(z)\) for all \(z \in \free(\psi)\)
      and \(g'(x) \deff 1\) and \(g''(x) \deff 0\).
  \end{itemize}
  All in all, it holds that \(\free(\psi_g) \subseteq \setc{x \in \free(\psi)}{g(x) = 0}\)
  and \(\qr(\psi_g) = \qr(\psi)\).
  Moreover, for every formula \(\psi(z_1, \dots, z_q) \in \FO[\sigma]\)
  and every tuple \(\tv \in A^q\),
  it holds that \(\A \models \psi[\tv]\) if and only if
  \(\A' \models \psi_g[\tv_g]\),
  where \(g \colon \free(\psi) \to \set{0,1}\),
  with \(g(z_i) = 1\) if and only if \(v_i = s_p\),
  and \(\tv_g\) is obtained from \(\tv\) by dropping all indices \(i\)
  with \(g(z_i) = 1\), or, equivalently, all entries that are equal to \(s_p\).
In addition, we have \(G_{\A'} = G_\A[A'] = G_\A[A \setminus \set{s_p}]\).

Let \(F\) be the set of all functions \(f \colon \free(\phi) \to \set{0,1}\).
We set \(T(\phi, p) \deff \sum_{f \in F} T(\phi_f, p-1)\).
  By the induction hypothesis, the values \(T(\phi_f, p-1)\) are computable,
  so \(T(\phi, p)\) is computable as well.

  Now let \(\A\) be a \(\sigma(\phi)\)-structure,
  and let \(V, W \subseteq A\) be two sets that are \(\bigl((2r+1)(\abs{\tz}+1)\bigr)\)-separated
  in \(G_\A\) by a set \(S = \set{s_1, \dots, s_p} \subseteq A\) of size \(\abs{S} = p\).
  We shall prove that \(\rank\bigl(\MTypeTAVW\bigr) \leq T(\phi, p)\).

  Let \(\A'\) be the \(\sigma'\)-structure based on \(\A\) and \(s_p\) as described above.
  Then \(V' \deff V \setminus \set{s_p}\) and \(W' \deff W \setminus \set{s_p}\)
  are \(\bigl((2r+1)(\abs{\tz}+1)\bigr)\)-separated in \(G_{\A'}\)
  by the set \(S' \deff \set{s_1, \dots, s_{p-1}}\) of size \(p-1\).

  Let \(F_x\) be the set of functions \(f_x \colon \tilde{x} \to \set{0,1}\),
  let \(F_y\) be the set of functions \(f_y \colon \tilde{y} \to \set{0,1}\),
  and let \(F_z\) be the set of functions \(f_z \colon \tilde{z} \to \set{0,1}\).
  For every \(f_z \in F_z\),
  we let \[A^m_{f_z} \deff \bigsetc{\tc \in A^m}{c_i = s_p \iff f_z(z_i) = 1},\]
  and we let \(M_{f_z} \in \Z^{V^k \times W^\ell}\) with
  \[(M_{f_z})_{\tv, \tw} \deff \bigabs{\bigsetc{\tc \in A^m_{f_z}}{\A \models \phi[\tv, \tw, \tc]}}.\]
  Then \(A^m = \biguplus_{f_z \in F_z} A^m_{f_z}\) and \(\MTypeTAVW = \sum_{f_z \in F_z} M_{f_z}\).
  This shows that
  \(\rank\bigl(\MTypeTAVW\bigr) \leq \sum_{f_z \in F_z} \rank(M_{f_z})\).

  For \(f_x \in F_x\), \(f_y \in F_y\), and \(f_z \in F_z\),
  let \(V^k_{f_x} \deff \bigsetc{\tv \in V^k}{v_i = s_p \iff f_x(x_i) = 1}\),
  \(W^\ell_{f_y} \deff \bigsetc{\tw \in W^\ell}{w_i = s_p \iff f_y(y_i) = 1}\),
  and let \(M_{f_x, f_y, f_z} \in \Z^{V^k_{f_x} \times W^\ell_{f_y}}\) with
  \[(M_{f_x, f_y, f_z})_{\tv, \tw} \deff \bigabs{\bigsetc{\tc \in A^m_{f_z}}
  {\A \models \phi[\tv, \tw, \tc]}}.\]
  Let
  \(\phi_{f_x, f_y, f_z}(\tx_f, \ty_f, \tz_f) \deff \phi_f(\tx_f, \ty_f, \tz_f)\)
  for the function \(f \in F\)
  with \(f(x_i) = f_x(x_i)\), \(f(y_i) = f_y(y_i)\), and \(f(z_i) = f_z(z_i)\),
  and let
  \(t_{f_x, f_y, f_z}(\tx_f, \ty_f) \deff t_f(\tx_f, \ty_f)
  \deff \FOCCount{\tz_f}{\phi_f(\tx_f, \ty_f, \tz_f)}\).
  Then \((M_{f_x, f_y, f_z})_{\tv, \tw} = \bigl(\MType{t_f}{\A'}{V'/W'}\bigr)_{\tv_f, \tw_f}\),
  so \(\rank(M_{f_x, f_y, f_z}) = \rank\bigl(\MType{t_f}{\A'}{V'/W'}\bigr)\).
  Moreover, by the induction hypothesis, since \(V'\) and \(W'\)
  are \(\bigl((2r+1)(\abs{\tz}+1)\bigr)\)-separated in \(G_{\A'}\) by a set of size at most \(p-1\),
  and \(\qr(\phi_f) = \qr(\phi)\) and \(\abs{\tz_f} \leq \abs{\tz}\),
  we have \(\rank\bigl(\MType{t_f}{\A'}{V'/W'}\bigr) \leq T(\phi_f, p-1)\).
  Lastly, \(M_{f_z}\) consists of the submatrices \(M_{f_x, f_y, f_z}\)
  for all \(f_x \in F_x\) and \(f_y \in F_y\),
  so \(\rank(M_{f_z}) \leq \sum_{f_x \in F_x, f_y \in F_y} \rank(M_{f_x, f_y, f_z})\).
All in all, this shows that
  \[\rank\bigl(\MTypeTAVW\bigr) \leq \sum_{f_z \in F_z} \rank(M_{f_z})
  \leq \sum_{f \in F} T(\phi_f, p-1) = T(\phi, p),\]
  which concludes the proof.
\end{proof}

\subsection{Proofs of Theorems~\ref{thm:ladder-index-term} and~\ref{thm:graph-dimension}}
\label{subsec:appendix-ladder-index-term}

For the proof of \cref{thm:ladder-index-term}, we use the following result.

\begin{theorem}[{Uniform quasi-wideness for tuples \cite[Theorem~2.9]{PilipczukSiebertzTorunczyk_Types2018}}]
  \label{thm:uniform-quasi-wideness}
  Let \(p, r \in \N\),
  and let \(\C\) be a class of graphs that do not include \(K_p\) as a depth-\(18r\) minor.
  For every \(d \in \N\),
  there is a number \(s\) and a polynomial \(N \colon \N \to \N\)
  computable from \(p, r\), and \(d\) with the following property.

  For every \(G \in \C\), every \(m \in \N\),
  and every set \(X \subseteq (V(G))^d\) with \(\abs{X} \geq N(m)\),
  there are sets \(S \subseteq V(G)\) and \(Y \subseteq X\)
  with \(\abs{S} \leq s\) and \(\abs{Y} \geq m\)
  such that all distinct \(\tv, \tv' \in Y\)
  are \(r\)-separated by \(S\) in \(G\).
\end{theorem}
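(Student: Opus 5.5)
We argue by induction on the coordinates. First we reduce to the classical case \(d=1\), uniform quasi-wideness for single vertices. Then we handle the interaction between different coordinates of different tuples with a bounded-degree conflict argument. Throughout, every loss must be polynomial, so Ramsey-type arguments are ruled out. For \(d=1\) we invoke the known polynomial uniform quasi-wideness for classes excluding \(K_p\) as a depth-\(\O(r)\) minor (Kreutzer, Rabinovich, Siebertz; Pilipczuk, Siebertz, Toru\'nczyk). We apply it with radius \(2r\), which is the source of the constant \(18\) in the depth. It gives a number \(s_1(p,r)\) and a polynomial \(N_1\) such that from any \(N_1(m)\) vertices we can keep \(m\) of them that are pairwise \(2r\)-separated by a set of size at most \(s_1\).

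\textbf{Step 1 (normalising coordinates).} We process the coordinates \(i=1,\dots,d\) one after another. By pigeonhole, among \(M\) tuples either some value occurs at least \(\sqrt{M}\) times at coordinate \(i\), or at least \(\sqrt{M}\) tuples have pairwise distinct \(i\)-th coordinates. In the first case we keep those tuples and record the common value in a set \(S_0\) (at most \(d\) vertices in total). In the second case we keep tuples whose \(i\)-th entries are pairwise distinct. After \(d\) rounds we lose only a polynomial factor. Every coordinate is then either constant, with its value lying in \(S_0\), or injective over the current family. Constant coordinates are harmless, since any path through them hits \(S_0\).

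\textbf{Step 2 (separating each coordinate family).} For each non-constant coordinate \(j\) in turn, we apply the \(d=1\) case in \(G\) to the set \(\setc{v_j}{\tv\in Y}\). This yields a set \(S_j\) with \(\abs{S_j}\le s_1\) and a polynomially large subfamily whose \(j\)-th coordinates are pairwise \(2r\)-separated by \(S_j\). Passing to subfamilies preserves what was obtained for earlier coordinates. We set \(S\deff S_0\cup\bigcup_j S_j\), so \(\abs{S}\le d+d s_1 \ffed s\). After composing \(d\) polynomials we still have a polynomially large family \(Y'\).

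\textbf{Step 3 (cross-coordinate conflicts; the main obstacle).} Pairwise separation of each coordinate family does not yet separate \(v_i\) from \(v'_j\) for \(i\neq j\). Here we avoid Ramsey arguments by observing that conflicts are sparse. Fix \(\tv\in Y'\) and a coordinate \(i\). If \(v_i\) had an \(S\)-avoiding path of length at most \(r\) to both \(v'_j\) and \(v''_j\) for two distinct tuples \(\tv',\tv''\), then \(v'_j\) and \(v''_j\) would be joined by an \(S\)-avoiding walk of length at most \(2r\), contradicting Step~2. So each \(v_i\) conflicts with at most one tuple via each coordinate \(j\). Hence the conflict graph on \(Y'\), where \(\tv\) and \(\tv'\) are adjacent if they are not \(r\)-separated by \(S\), has maximum degree at most \(d^2\). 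A greedy independent set gives \(Y\) with \(\abs{Y}\ge\abs{Y'}/(d^2+1)\) in which all distinct tuples are \(r\)-separated by \(S\).

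Inverting the chain of polynomial losses defines \(N\). We expect the delicate point to be the bookkeeping of radii and of the depth of excluded minors in Step~2. We would first try to take the \(d=1\) statement at radius \(2r\) directly from the cited polynomial quasi-wideness bound, checking that its depth requirement fits within \(18r\).
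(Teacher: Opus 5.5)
The paper gives no proof of this statement. It imports it verbatim from \cite[Theorem~2.9]{PilipczukSiebertzTorunczyk_Types2018} and uses it as a black box, so there is no in-paper argument to compare against. Your self-contained derivation from the single-vertex theorem is correct.

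Step~1 makes each coordinate either constant with value in \(S_0 \subseteq S\), so no \(S\)-avoiding path ever reaches it, or injective on the family. Injectivity is exactly what Step~3 needs. It guarantees \(v'_j \neq v''_j\) for distinct tuples, so the concatenated \(S\)-avoiding walk of length at most \(2r\) really contradicts the \(2r\)-separation from Step~2.

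Read `\(S\)-avoiding' as `no vertex on the path, endpoints included, lies in \(S\)'. This matches the paper's notion of separation, where paths of length \(0\) count. Tuples sharing a vertex outside \(S\) are then simply adjacent in your conflict graph. That graph still has maximum degree at most \(d^2\), since each coordinate pair \((i,j)\) yields at most one partner, so the greedy step handles them too.

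Compared with the bare citation, your route gives explicit, computable parameters using nothing beyond the case \(d=1\): \(s = d(s_1+1)\) and \(N(m) = \bigl(N_1^{\circ d}((d^2+1)m)\bigr)^{2^d}\).

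The depth bookkeeping you defer is not a real obstacle. Excluding \(K_p\) as a depth-\(18r\) minor excludes it at every smaller depth. So you only need the single-vertex theorem you cite to apply at radius \(2r\) under exclusion at depth at most \(18r\), and you should state this explicitly when you fix the citation. Whether that is literally where the constant \(18\) comes from is irrelevant to your argument.
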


We can now prove \cref{thm:ladder-index-term}.

\ladderIndexTerm*

\begin{proof}
  Let \(r \colon \cgFOC \to \N\) and \(T \colon \cgFOC \times \N \to \N\)
  be the functions from \cref{lem:cgfoc-rank}.
  We set \(g \colon \cgFOC \to \N, \xi \mapsto 18r(\xi)\).

  Let \(t(\tx, \ty)\) be a \(\cgFOC\) counting term,
  let \(p \in \N\),
  and let \(\C\) be the class of graphs excluding \(K_p\) as a depth-\(g(t)\) minor.
  Let \(d \deff \abs{\tx} + \abs{\ty}\),
  and let \(s \in \N\) be the number and \(N \colon \N \to \N\) be the polynomial
  computed from \(r(t)\), \(p\), and \(d\) using \cref{thm:uniform-quasi-wideness}.
  Moreover, let \(L \deff f(t, p) \deff N(2T(t, s) + 6)\).
  Since \(N\) and \(s\) are computable from \(\phi\) and \(p\),
  we have that \(f\) is a computable function.

  Towards a contradiction, suppose there is a signature \(\sigma \supseteq \sigma(t)\),
  a \(\sigma\)-structure \(\A\) with \(G_\A \in \C\),
  and tuples \(\tv_1, \dots, \tv_L \in A^{\abs{\tx}}\)
  and \(\tw_1, \dots, \tw_L \in A^{\abs{\ty}}\)
  that form a \(t\)-ladder in \(\A\),
  that is, there exists a function \(g \colon \set{\tw_1, \dots, \tw_L} \to \Z\)
  such that \(\sem{t(\tv_i, \tw_j)}^\A = g(\tw_j)\) if and only if \(i \leq j\).
  In particular, the tuples \(\tv_1, \dots, \tv_L\) are pairwise distinct,
  and the same holds for the tuples \(\tw_1, \dots, \tw_L\).
  Let \(X \deff \setc{\tv_i \tw_i}{i \in [L]} \subseteq A^d\).
  By \cref{thm:uniform-quasi-wideness},
  for \(m \deff 2T(t, s) + 6\), since \(\abs{X} = L \geq N(m)\),
  there are sets \(S \subseteq A\) and \(Y \subseteq X\)
  with \(\abs{S} \leq s\) and \(\abs{Y} \geq m\) such that all distinct \(\tu, \tu' \in Y\)
  are \(r(t)\)-separated by \(S\) in \(G_\A\).
  Let \(Y' \subseteq Y\) with \(\abs{Y'} = m\),
  and let \(I \deff \setc{i \in [L]}{\tv_i \tw_i \in Y'}\).
  Let \(I_1, I_2\) be an alternating partition of \(I\), that is,
  for all successive \(i, j \in I_1\), there is exactly one \(k \in I_2\) with \(i < k < j\),
  and the same holds the other way round.
  Note that \(\abs{I} = \abs{Y'} = m\),
  and hence \(\abs{I_1} = \abs{I_2} = m/2 = T(t, s) + 3\).
  Let \(V \subseteq A\) be the set of vertices
  appearing in a tuple \(\tv_i \tw_i\) with \(i \in I_1\),
  and let \(W \subseteq A\) be the set of vertices
  appearing in a tuple \(\tv_i \tw_i\) with \(i \in I_2\),
  Since all distinct \(\tu, \tu' \in Y\) are \(r(t)\)-separated by \(S\) in \(G_\A\),
  it also holds that \(V\) and \(W\) are \(r(t)\)-separated by \(S\) in \(G_\A\).

  Now we can apply \cref{lem:cgfoc-rank} to \(V\) and \(W\),
  which shows that \(\rank\bigl(\MTypeTAVW\bigr) \leq T(t, s) = \abs{I_1} - 3\).

  On the other hand, we have that \((\tv_i)_{i \in I_1}\), \((\tw_i)_{i \in I_2}\)
  forms a \(t\)-ladder in \(\A\) via the function
  \(g' \colon \setc{\tw_j}{j \in I_2} \to \Z, \tw \mapsto g(\tw)\).
  Let \(M_{g'} \in \Z^{V^{\abs{\tx}} \times W^{\abs{\ty}}}\)
  with \((M_{g'})_{\tv, \tw_j} \deff -g'(\tw_j)\) for all \(\tv \in V^{\abs{\tx}}\) and \(j \in I_2\),
  and let \((M_{g'})_{\tv, \tw_j} \deff 0\) for all \(\tv \in V^{\abs{\tx}}\) and \(j \not\in I_2\).
  Furthermore, let \(M' \deff \MTypeTAVW + M_{g'}\).
  Then, for all \(i \in I_1\) and \(j \in I_2\),
  we have \(M'_{\tv_i, \tw_j} = 0\) if and only if \(i \leq j\).
  This shows that \(M'\) has rank at least \(\abs{I_1} - 1\),
  and \(M_{g'}\) has rank \(1\),
  so \(\MTypeTAVW\) has rank at least \(\abs{I_1} - 2\).
  This is a contradiction.

  Thus, this shows that there is no \(t\)-ladder in \(\A\) of size at least \(L = f(t, p)\),
  so the ladder index of \(t\) in \(\A\) is at most \(f(t, p)\).
\end{proof}

\graphDimension*

\begin{proof}
  Let \(\C\) be a nowhere dense graph class.
  By \cref{def:nowhere-dense},
  there is a function \(p \colon \N \to \N\) such that for all \(r \in \N\) and \(G \in \C\),
  it holds that \(G\) does not contain \(K_{p(r)}\) as a depth-\(r\) minor.

  Let \(f' \colon \cgFOC \times \N \to \N\) and \(g \colon \cgFOC \to \N\)
  be the computable functions from \cref{thm:ladder-index-term}.
  We set \(f \colon \cgFOC \to \N, \xi \mapsto f'\bigl(\xi, p(g(\xi))\bigr)\).
  If \(\C\) is effectively nowhere dense, then \(p\) is computable,
  so \(f\) is computable as a composition of computable functions.

  Let \(t\) be a \(\cgFOC\) counting term
  and let \(\A\) be a \(\sigma\)-structure for some \(\sigma \supseteq \sigma(t)\)
  such that \(G_\A \in \C\).
  Then \(G_\A\) does not contain \(K_{p(g(t))}\) as a depth-\(g(t)\) minor.
  Hence, by \cref{thm:ladder-index-term}, the ladder index of \(t\) in \(\A\)
  is at most \(f'\bigl(t, p(g(t))\bigr) = f(t)\).

  Let \(d \deff f(t)\).
  Then there is no \(t\)-ladder in \(\A\) of length \(d\).
  Thus, for all \(\tw_1, \dots, \tw_d \in A^{\abs{\ty}}\)
  and all functions \(g \colon \set{\tw_1, \dots, \tw_d} \to \Z\),
  there are no \(\tv_1, \dots, \tv_d \in A^{\abs{\tx}}\)
  such that \(\sem{t(\tv_i, \tw_j)}^\A = g(\tw_j)\) if and only if \(i \leq j\).
  Hence, for \(W \deff \set{\tw_1, \dots, \tw_d}\),
  and every function \(g \colon W \to \Z\),
  there is an index \(i \in [d]\)
  such that there is no function \(h \in \STypeTA{A/A}\)
  with \(h(\tw) = g(\tw)\) if \(\tw \in \set{\tw_i, \dots, \tw_d}\)
  and \(h(\tw) \neq g(\tw)\) if \(\tw \in W \setminus \set{\tw_i, \dots, \tw_d}\).
  Since \(g\) is arbitrary, this shows that \(W\) is not graph shattered by \(\STypeTA{A/A}\).
  Furthermore, since \(W\) was chosen arbitrarily,
  there is no set of size \(d\) that is graph shattered by \(\STypeTA{A/A}\),
  so the graph dimension of \(t\) in \(\A\) is less than \(d = f(t)\).
\end{proof}

\subsection{Proof of Corollary~\ref{cor:VCdimension}}
\label{subsec:appendix-VC-dimension}

For a formula \(\phi(\tx, \ty)\) and \(L \in \Npos\), a \emph{\(\phi\)-ladder} of length \(L\)
in a structure \(\A\) is a sequence of tuples \(\tv_1, \dots, \tv_L, \tw_1, \dots, \tw_L\)
such that \(\tv_i \in A^{\abs{\tx}}\) and \(\tw_i \in A^{\abs{\ty}}\) for all \(i \in [L]\)
and, for all \(i, j \in [L]\), it holds that \(\A \models \phi[\tv_i, \tw_j]\)
if and only if \(i \leq j\).
The smallest \(L\) for which there is no \(\phi\)-ladder of length \(L\) in \(\A\)
is called the \emph{ladder index of \(\phi(\tx,\ty)\) in \(\A\)}.

\VCdimension*

\Cref{cor:VCdimension} is a direct consequence of \cref{thm:graph-dimension,thm:ladder-index-term}
via the following result.

\begin{lemma}
  For every \(\cgFOC\) formula \(\phi(\tx, \ty)\),
  every \(\sigma\)-structure \(\A\) with \(\sigma \supseteq \sigma(\phi)\),
  and for the \(\cgFOC\) counting term \(t(\tx, \ty) \deff \FOCCount{()}{\phi}\),
  the VC dimension of \(\phi(\tx, \ty)\) in \(\A\)
  is equal to the graph dimension of \(t(\tx, \ty)\) in \(\A\),
  and the ladder index of \(\phi(\tx, \ty)\) in \(\A\)
  is at most the ladder index of \(t(\tx, \ty)\) in \(\A\).
\end{lemma}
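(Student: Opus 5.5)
The plan is to reduce both claims to \cref{lem:vc-dim-vs-graph-dim} and to a direct comparison of ladders. First I observe that for \(t(\tx,\ty) \deff \FOCCount{()}{\phi}\), counting the empty tuple gives \(\sem{t(\tv,\tw)}^\A = 1\) if \(\A \models \phi[\tv,\tw]\) and \(0\) otherwise. So for every \(\tv \in A^{\abs{\tx}}\), the function \(\tp^t_\A(\tv/A)\colon A^{\abs{\ty}} \to \Z\) is exactly the indicator function \(h_Y\) of the set \(Y = \tp^\phi_\A(\tv/A)\). In other words, \(\STypeTA{A/A} = \setc{h_Y}{Y \in \STypePhiA{A/A}}\). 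This is precisely the set \(\Hypo'\) built from \(\Hypo \deff \STypePhiA{A/A}\) in \cref{lem:vc-dim-vs-graph-dim}, with \(X \deff A^{\abs{\ty}}\).

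The first claim then follows directly. \Cref{lem:vc-dim-vs-graph-dim} says a set \(X' \subseteq A^{\abs{\ty}}\) is shattered by \(\STypePhiA{A/A}\) if and only if it is graph shattered by \(\STypeTA{A/A}\). Taking maxima, the VC dimension of \(\phi\) in \(\A\) equals the graph dimension of \(t\) in \(\A\).

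For the ladder index, I show that every \(\phi\)-ladder is a \(t\)-ladder. Let \(\tv_1,\dots,\tv_L,\tw_1,\dots,\tw_L\) be a \(\phi\)-ladder of length \(L\), and define \(g(\tw_j) \deff 1\) for all \(j\). Then \(\sem{t(\tv_i,\tw_j)}^\A = 1 = g(\tw_j)\) if and only if \(\A \models \phi[\tv_i,\tw_j]\), which holds if and only if \(i \le j\). So the sequence is a \(t\)-ladder of length \(L\).

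One subtlety is that \(g\) must be a well-defined function on \(\set{\tw_1,\dots,\tw_L}\). The \(\tw_j\) need not be distinct a priori, but since \(g\) is constant this causes no problem. Hence the smallest \(L\) with no \(t\)-ladder of length \(L\) is at least the smallest \(L\) with no \(\phi\)-ladder of length \(L\), i.e. the ladder index of \(\phi\) is at most that of \(t\).

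No step is a real obstacle. The only point needing care is identifying the semantics of \(\FOCCount{()}{\phi}\) as a \(0/1\) value, which follows from the definition of \#-terms with \(k=0\) (\(A^0\) has the single element \(()\)). With this lemma, \cref{cor:VCdimension} follows by applying \cref{thm:graph-dimension,thm:ladder-index-term} to \(t\), taking the bound \(f(\phi) \deff f(\FOCCount{()}{\phi})\).
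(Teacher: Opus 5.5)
Your proposal is correct and follows the paper's proof essentially verbatim. You identify \(\STypeTA{A/A}\) with the indicator functions of the sets in \(\STypePhiA{A/A}\) and apply \cref{lem:vc-dim-vs-graph-dim}, and you turn every \(\phi\)-ladder into a \(t\)-ladder using the constant function \(g \equiv 1\). Your remark on the well-definedness of \(g\) is harmless, though the \(\tw_j\) in any ladder are automatically pairwise distinct anyway.
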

\begin{proof}
  For a set \(X \subseteq A^{\abs{\ty}}\),
  we let \(h_X \colon A^{\abs{\ty}} \to \Z\) be the indicator function for \(X\)
  with \(h_X(x) \deff 1\) if \(x \in X\), and \(h_X(x) \deff 0\) otherwise.
  We set
  \begin{align*}
    \Hypo \deff \STypePhiA{A/A}
    &= \setc{\tp^\phi_\A(\tv/A)}{\tv \in A^{\abs{\tx}}}\\
    &= \bigsetc{\setc{\tw \in A^{\abs{\ty}}}{\A \models \phi[\tv, \tw]}}{\tv \in A^{\abs{\tx}}}
    \intertext{and}
    \Hypo' \deff \STypeTA{A/A}
    &= \setc{\tp^t_\A(\tv/A)}{\tv \in A^{\abs{\tx}}}\\
    &= \bigsetc{A^{\abs{\ty}} \to \Z, \tw \mapsto \sem{t(\tv, \tw)}^\A}{\tv \in A^{\abs{\tx}}}\\
    &= \bigsetc{A^{\abs{\ty}} \to \Z, \tw \mapsto \sem{\phi(\tv, \tw)}^\A}{\tv \in A^{\abs{\tx}}}.
  \end{align*}
  By definition, the VC dimension of \(\phi(\tx, \ty)\) in \(\A\) is the VC dimension of \(\Hypo\),
  and the graph dimension of \(t(\tx, \ty)\) in \(\A\) is the graph dimension of \(\Hypo'\).
  Furthermore, \(\Hypo' = \setc{h_X}{X \in \Hypo}\).
  Hence, by \cref{lem:vc-dim-vs-graph-dim},
  the VC dimension of \(\Hypo\) is equal to the graph dimension of \(\Hypo'\),
  so the VC dimension of \(\phi(\tx, \ty)\) in \(\A\)
  is equal to the graph dimension of \(t(\tx, \ty)\) in \(\A\).

  Now let \(\tv_1, \dots, \tv_L, \tw_1, \dots, \tw_L\) be a \(\phi\)-ladder in \(\A\).
  Then, for all \(i, j \in [L]\), it holds that \(\A \models \phi[\tv_i, \tw_j]\)
  if and only if \(i \leq j\).
  Hence, for \(g \colon \set{\tw_1, \dots, \tw_L} \to \Z\)
  with \(g(\tw_j) \deff 1\) for all \(j \in [L]\),
  we have \(\sem{t(\tv_i, \tw_j)}^\A = g(\tw_j)\) if and only if \(i \leq j\).
  This shows that \(\tv_1, \dots, \tv_L, \tw_1, \dots, \tw_L\) is also a \(t\)-ladder in \(\A\).
  Since this holds for every \(\phi\)-ladder in \(\A\),
  the ladder index of \(\phi(\tx, \ty)\) in \(\A\)
  is at most the ladder index of \(t(\tx, \ty)\) in \(\A\).
\end{proof}

\subsection{Proof of Theorem~\ref{thm:VCdensity}}
\label{subsec:appendix-number-of-types-nowhere-dense}

\VCdensity*

Our proof will rely on the following stronger version of \cref{lem:cgfoc-rank} for formulas:

\begin{corollary}
  \label{cor:number-of-types}
  There are computable functions \(T \colon \cgFOC \times \N \to \N\)
  and \(r \colon \cgFOC \to \N\) such that,
  for every \(\cgFOC\) formula \(\phi(\tx, \ty)\),
  every \(m \in \N\),
  every \(\sigma\)-structure \(\A\) for some \(\sigma \supseteq \sigma(\phi)\),
  and all \(V, W \subseteq A\) that are \(r(\phi)\)-separated in \(G_\A\)
  by a set of size at most \(m\),
  we have \(\absSTypePhiAVW \leq T(\phi, m)\).
\end{corollary}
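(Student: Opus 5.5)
The plan is to apply \cref{lem:cgfoc-rank} to the counting term \(t(\tx,\ty) \deff \FOCCount{()}{\phi}\) and then turn the resulting rank bound into a bound on the number of distinct rows. I take \(r(\phi) \deff r_{\ref*{lem:cgfoc-rank}}(t)\). For \(V, W\) that are \(r(\phi)\)-separated by a set of size at most \(m\), the matrix \(M \deff \MTypeTAVW\) is then a \(\set{0,1}\)-matrix of rank at most \(R \deff T_{\ref*{lem:cgfoc-rank}}(t,m)\) over the reals. Its rows are exactly the indicator vectors of the sets \(\tp^\phi_\A(\tv/W)\), \(\tv \in V^{\abs{\tx}}\). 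Consequently \(\absSTypePhiAVW\) equals the number of distinct rows of \(M\).

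The key step is the elementary fact that a \(0/1\) matrix of real rank at most \(R\) has at most \(2^R\) distinct rows. To see this, choose a set \(C\) of at most \(R\) columns such that the restriction of \(M\) to \(C\) has the same rank as \(M\); take \(C\) to index a maximal linearly independent set of columns. Every column of \(M\) is then a fixed linear combination of the columns in \(C\). Hence each row of \(M\) is determined by its restriction to \(C\): if two rows agree on \(C\), they agree on every column, because every entry is the same linear combination of their \(C\)-entries.

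The restriction of a row to \(C\) is a \(0/1\) vector of length at most \(R\), so there are at most \(2^R\) possibilities. This gives \(\absSTypePhiAVW \leq 2^{T_{\ref*{lem:cgfoc-rank}}(t,m)} \ffed T(\phi,m)\). The function \(T\) is computable because \(t\) is computable from \(\phi\) and \(T_{\ref*{lem:cgfoc-rank}}\) is computable.

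No step here is a serious obstacle. The only point needing care is checking that the rows of \(M\) correspond bijectively to the types in \(\STypePhiAVW\). This holds because \(\sem{\FOCCount{()}{\phi}}\) is exactly the truth value of \(\phi\).
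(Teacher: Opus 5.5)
Your proposal is correct and matches the paper's proof: apply \cref{lem:cgfoc-rank} to \(t \deff \FOCCount{()}{\phi}\), then bound the number of distinct rows of the resulting \(0/1\) matrix by \(2^{\rank(M)}\), which is the paper's \cref{lem:boolean-number-of-rows}. Your column-basis justification of that fact is a slightly more streamlined version of the paper's argument, which extracts a full-rank \(c \times c\) submatrix; the underlying idea is the same, namely that every row is determined by its entries on at most \(\rank(M)\) columns.
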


\Cref{cor:number-of-types} follows immediately from the statement of \cref{lem:cgfoc-rank}
for the counting term \(t \deff \FOCCount{()}{\phi}\),
using the following fact on Boolean matrices over the real numbers \(\R\).

\begin{lemma}
  \label{lem:boolean-number-of-rows}
  Let \(M \in \set{0,1}^{m \times n}\) with \(\rank(M) \ffed c\) be a Boolean matrix
  with \(m\) pairwise distinct rows. Then \(m \leq 2^c\).
\end{lemma}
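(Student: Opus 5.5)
We prove \(m \leq 2^c\) by induction on \(c\), using the fact that rank is taken over \(\R\). The key observation is that a Boolean matrix of rank \(c\) has a set of \(c\) linearly independent rows, say \(u_1, \dots, u_c\). Every row of \(M\) is then a real linear combination of them. The aim is to show that each row of \(M\) is determined by its restriction to a small set of columns. From that, the number of distinct rows is bounded by the number of 0/1 patterns on those columns.

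First, we choose columns. Since the \(c \times n\) submatrix formed by \(u_1, \dots, u_c\) has rank \(c\), it has \(c\) linearly independent columns, indexed by \(J\) with \(\abs{J} = c\). The \(c \times c\) submatrix \(B \deff (u_i)_{i \in [c]}\) restricted to the columns \(J\) is therefore invertible.

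Next, we show uniqueness. Take any row \(x\) of \(M\) and write \(x = \sum_i \lambda_i u_i\) with \(\lambda \in \R^c\). Restricting to the columns \(J\) gives \(x|_J = \lambda^\top B\). Because \(B\) is invertible, \(\lambda\) is uniquely determined by \(x|_J\), and hence so is \(x = \lambda^\top (u_i)_i\). The map \(x \mapsto x|_J\) is therefore injective on the set of rows of \(M\).

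Finally, we count. Each \(x|_J\) lies in \(\set{0,1}^J\), which has \(2^c\) elements. Since the rows of \(M\) are pairwise distinct and the map is injective, \(m \leq 2^c\). The case \(c = 0\) is covered too: then every row is the zero row, so \(m \leq 1 = 2^0\).

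The only point needing care is the existence of the invertible \(c \times c\) minor. This is standard: the row rank and the column rank of the \(c \times n\) matrix of basis rows are both \(c\). No real obstacle is expected, so no induction is actually needed.
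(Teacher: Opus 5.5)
Your proof is correct and follows the same route as the paper. Both pick $c$ linearly independent rows and an invertible $c \times c$ minor on a column set $J$, then note that each row is determined by its restriction to $J$ through its unique coefficient vector, so the rows inject into $\set{0,1}^c$. The only blemish is the opening announcement of an induction on $c$, which you never use (as you note yourself at the end), so delete it.
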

\begin{proof}
  For sets \(S \subseteq [m]\) and \(T \subseteq [n]\),
  let \(M_{S\times T}\) be the submatrix of \(M\) with entries
  whose coordinates are in \(S \times T\).

  \begin{claim*}
    There are sets \(S \subseteq [m]\) and \(T \subseteq [n]\)
    of size \(\abs{S} = \abs{T} = c\) with \(\rank(M_{S\times T}) = c\).
  \end{claim*}
  \begin{claimproof}
    Let \(S' \subseteq [m]\) be a set of \(\abs{S'} < c\) row indices
    so that all rows in \(S'\) are linearly independent.
    Since \(\abs{S'} < c\), there is an index \(i \in [m] \setminus S\)
    so that all rows in \(M_{(S' \cup \set{i}) \times [n]}\) are linearly independent.
    Therefore, we can inductively expand \(S'\) to a set \(S\) of size exactly \(c\)
    that satisfies \(\rank(M_{S \times [n]}) = c\).
    By applying the same reasoning to the columns of \(M_{S \times [n]}\),
    we obtain a set \(T \subseteq [n]\) with \(\rank(M_{S \times T}) = c\).
  \end{claimproof}

  Let \(S \subseteq [m]\) and \(T \subseteq [n]\)
  of size \(\abs{S} = \abs{T} = c\) with \(\rank(M_{S \times T}) = c\).
  The matrix \(M_{[m] \times T}\) contains at most \(2^c\) distinct rows.
  We show that it cannot contain two identical rows, which implies that \(m \leq 2^c\).

  Since \(M_{S \times T}\) has full rank, the equation system \(x \cdot M_{S \times T} = b\)
  has a unique solution for every \(b \in \set{0,1}^c\).
  Therefore, the system \(x' \cdot M_{S \times [n]} = b'\) has at most one solution
  \(x' \in \R^c\) for every \(b' \in \set{0,1}^n\).
  This implies that there can be no two rows in \(M\) that agree on \(T\)
  but differ on \([n] \setminus T\), since every row \(b' \in \set{0,1}^n\) of \(M\)
  can be expressed as \(x' \cdot M_{S \times [n]}\) for some \(x' \in \R^c\)
  due to \(M_{S \times [n]}\) having the same rank as the entire matrix \(M\).
\end{proof}

In the remainder of this section, using \cref{cor:number-of-types},
we prove the following theorem, which directly implies \cref{thm:VCdensity}.

\begin{theorem}
  \label{thm:number-of-types-nowhere-dense}
  Let \(\C\) be a nowhere dense graph class,
  and let \(\phi(\tx, \ty)\) be a \(\cgFOC\) formula.
  For every \(\epsilon > 0\), there exists a constant \(c \in \N\)
  such that for every \(\sigma\)-structure \(\A\) with \(\sigma \supseteq \sigma(\phi)\)
  and \(G_\A \in \C\)
  and every non-empty \(W \subseteq A\),
  we have \(\absSTypePhiA{A/W} \leq c \cdot \abs{W}^{\abs{\tx} + \epsilon}\).
\end{theorem}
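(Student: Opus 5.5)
The plan is to adapt the proof of \cite[Theorem~1.3]{PilipczukSiebertzTorunczyk_Types2018}, with \cref{cor:number-of-types} taking the place of their bound on the number of types over a separated pair. Let \(k \deff \abs{\tx}\), \(r \deff r(\phi)\), and write \(\phi(\tx,\ty)\) with \(\abs{\ty} = \ell\). By induction on \(k\) (or by splitting \(\tx\) into its coordinates and combining), it suffices to treat the essential case and then combine. The key combinatorial input is the statement from \cite{PilipczukSiebertzTorunczyk_Types2018} for nowhere dense classes: for every \(\epsilon>0\) and \(r\), every \(W\subseteq V(G)\) can be enlarged to a set \(\closure(W) \supseteq W\) with \(\abs{\closure(W)} \le c_1\abs{W}\), such that for every vertex \(u\) the set of vertices of \(\closure(W)\) that are reachable from \(u\) by a path of length at most \(r\) avoiding other vertices of \(\closure(W)\) (the ``\(r\)-projection'' of \(u\)) has size at most \(c_2\), and the number of distinct \(r\)-projections is at most \(c_3\abs{W}^{1+\epsilon}\). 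Since \(G_\A\in\C\), this applies to \(G_\A\) directly, and the definitions concern only the Gaifman graph.

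Next, for a tuple \(\tv\in A^k\), I would consider the union \(P(\tv)\) of the projections of its entries onto \(\closure(W)\) (entries lying in \(\closure(W)\) project to themselves). This set has size at most \(k c_2\). The number of possible values of \(P(\tv)\), viewed as a tuple of \(k\) projections, is at most \((c_3\abs{W}^{1+\epsilon})^k\). That is too weak, so I would follow their refinement, which shows that the number of distinct projection profiles of \(k\)-tuples is \(O(\abs{W}^{k+\epsilon})\). Now fix a profile. All tuples \(\tv\) with that profile satisfy the following: \(\tilde v\setminus \closure(W)\) and \(W\) are \(r\)-separated by the set \(S \deff P(\tv)\) of size at most \(kc_2\), since any short path from \(\tv\) to \(W\subseteq\closure(W)\) first hits \(\closure(W)\) in a projection vertex. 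Taking \(V\) to be the union of all such tuples, I need \(V\) and \(W\) to be \(r\)-separated by a common set. The set \(S\) is common for the fixed profile, but vertices of \(\tv\) inside \(\closure(W)\) may equal elements of \(W\). I would handle this by putting those entries into \(S\) as well; they are among the projection data anyway. Then \cref{cor:number-of-types} gives at most \(T(\phi, kc_2)\) distinct types \(\tp^\phi_\A(\tv/W)\) per profile.

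Summing over profiles then yields \(\absSTypePhiA{A/W} \le T(\phi,kc_2)\cdot O(\abs{W}^{k+\epsilon})\), which is the claim. To make the separation argument work, I would use radius \(r\) in the closure construction and argue that a path of length at most \(r\) from \(v_i\) to \(w\in W\) must meet \(\closure(W)\) first at some vertex of the projection of \(v_i\).

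The main obstacle is the counting of profiles with exponent \(k+\epsilon\) rather than \(k(1+\epsilon)\). Here I would rely on the argument of \cite[Theorem~1.3]{PilipczukSiebertzTorunczyk_Types2018} essentially verbatim, since it is purely graph-theoretic and independent of the logic. For \(k(1+\epsilon)\) one simply rescales \(\epsilon\), so if their lemma gives \(\abs{W}^{1+\epsilon}\) projections per vertex, the bound \(\abs{W}^{k+k\epsilon}\) with \(\epsilon\) replaced by \(\epsilon/k\) already suffices. In that case the obstacle dissolves and only the uniform separation by a bounded set needs care.
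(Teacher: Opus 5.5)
There is a genuine gap at the per-profile step. Your argument works for classes of bounded expansion, where a closure of size \(c_1\abs{W}\) with every \(r\)-projection of constant size \(c_2\) exists. For nowhere dense classes, \cref{lem:nowhere-dense-closure} only gives \(\abs{\closure_{r,\delta}(W)}\le f_{\closure}(r,\delta)\cdot\abs{W}^{1+\delta}\) and projections of size at most \(f_{\closure}(r,\delta)\cdot\abs{W}^{\delta}\). The first weakening is harmless after rescaling. The second is not: within a profile, your common separator \(S = P(\tv)\) has size up to \(k\cdot f_{\closure}(r,\delta)\cdot\abs{W}^{\delta}\). So \cref{cor:number-of-types} only yields \(T\bigl(\phi, k\cdot f_{\closure}(r,\delta)\cdot\abs{W}^{\delta}\bigr)\) types per profile, which is not polynomial in \(\abs{W}\).

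No sharper form of the corollary can fix this. Take \(\phi(x,y) = E(x,y)\) and \(2^p\) vertices adjacent to pairwise distinct subsets of a \(p\)-element set \(W\). These sets \(V, W\) are \(r\)-separated by \(S \deff W\) for every \(r\) and realise \(2^p\) distinct types, so every valid \(T\) satisfies \(T(\phi,p)\ge 2^p\). The real obstacle is therefore not counting profiles, which your rescaling handles. It is bounding the number of types inside one profile class polynomially in the size of its projection.

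The missing ingredient is uniform quasi-wideness (\cref{thm:uniform-quasi-wideness}), which trades the large separator for one of constant size. This is what the paper does.
Fix a class \(V_j\) of tuples with common projection \(X\) onto \(W' \deff \closure_{r,\delta}(W)\), so \(\abs{X}\in\bigO(\abs{W}^{\delta})\). Take a maximal \(V'_j\subseteq V_j\) with pairwise distinct types, and let \(m\) be maximal with \(\abs{V'_j}\ge N(m)\). Then there are \(S\) with \(\abs{S}\le s\) and \(Y\subseteq V'_j\) with \(\abs{Y}\ge m\) whose tuples are pairwise \(2r\)-separated by \(S\).

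The tuples in \(Y\) that are \(r\)-separated by \(S\) from \(W'\) number at most \(T(\phi,s)\) by \cref{cor:number-of-types}, now legitimately a constant. Each remaining tuple reaches some vertex of \(X\) by a path of length at most \(r\) avoiding \(S\): the first vertex of \(W'\) on such a path lies in the projection, which is your own observation. The \(2r\)-separation forces distinct tuples to reach distinct vertices, so there are at most \(\abs{X}\) of them.

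Hence \(m\in\bigO(\abs{X})\) and \(\abs{V'_j} < N(m+1)\in\bigO(\abs{X}^{d})\subseteq\bigO(\abs{W'}^{d\delta})\), where \(d\) is the degree of \(N\). Choosing \(\delta\) small relative to \(\epsilon/(k+d)\) absorbs this factor, together with the \(\bigO(\abs{W'}^{(1+\delta)k})\) profiles and the bound \(\abs{W'}\le f_{\closure}(r,\delta)\cdot\abs{W}^{1+\delta}\).
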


For the proof of \cref{thm:number-of-types-nowhere-dense},
we rely on the following lemma on the neighbourhood complexity
in nowhere dense graph classes.
Let \(G\) be a graph, and let \(X \subseteq V(G)\).
For vertices \(v \in X\) and \(w \in V(G)\),
a path \(P\) from \(v\) to \(w\) in \(G\) is called \emph{\(X\)-avoiding}
if all vertices on the path except for \(v\) are not contained in \(X\).
For an \(r \in \N\) and \(w \in V(G)\),
the \emph{\(r\)-projection of \(w\) on \(X\)}, denoted by \(M_r^G(w,X)\),
is the set of all vertices \(v \in X\) that are connected to \(w\)
by an \(X\)-avoiding path of length at most \(r\).

\begin{lemma}[{\cite[Lemmas~21 and 22]{EickmeyerGKKPRS_NeighbourhoodComplexity}},
  see also {\cite[Lemma~3.3]{PilipczukSiebertzTorunczyk_Types2018}}]
  \label{lem:nowhere-dense-closure}
  Let \(\C\) be a nowhere dense graph class,
  There is a function \(f_\closure \colon \N \times \Qpos \to \N\)
  and a polynomial-time algorithm that,
  given a graph \(G \in \C\), \(X \subseteq V(G)\),
  \(r \in \N\), and \(\delta \in \Qpos\),
  computes a set \(\closure_{r, \delta}(X)\),
  called the \emph{\(r\)-closure of \(X\) with respect to \(\delta\)},
  with the following properties.
  \begin{enumerate}
    \item
      \label{item:closure-superset}
      \(X \subseteq \closure_{r, \delta}(X) \subseteq V(G)\),
    \item
      \label{item:closure-size}
      \(\abs{\closure_{r, \delta}(X)} \leq f_\closure(r, \delta) \cdot \abs{X}^{1+\delta}\), and
    \item
      \label{item:closure-projection-size}
      \(\abs{M_r^G\bigl(u, \closure_{r, \delta}(X)\bigr)} \leq f_\closure(r, \delta) \cdot \abs{X}^\delta\)
      for all \(u \in V(G) \setminus \closure_{r, \delta}(X)\).
  \end{enumerate}
  Moreover, for all \(X \subseteq V(G)\), it holds that
  \begin{enumerate}
    \setcounter{enumi}{3}
    \item
      \label{item:closure-projection-number}
      \(\abs{\bigsetc{M_r^G(u, X)}{u \in V(G)}} \leq f_\closure(r, \delta) \cdot \abs{X}^{1+\delta}\).
  \end{enumerate}
\end{lemma}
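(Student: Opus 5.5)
We sketch how one could prove \cref{lem:nowhere-dense-closure} (in the paper it is imported from the literature) using generalised colouring numbers, in the spirit of \cref{lem:nowhere-dense-weak-colouring}. Fix a linear order \(\preceq\) of \(V(G)\) witnessing a small weak \(3r\)-colouring number. For a vertex \(u\), let \(\mathrm{WReach}_{s}[u]\) denote the set of vertices \(w \preceq u\) reachable from \(u\) by a path of length at most \(s\) on which \(w\) is the \(\preceq\)-minimal vertex. We would define
\[\closure_{r,\delta}(X) \deff \bigcup_{x \in X} \mathrm{WReach}_{2r}[x].\]
Property~\ref{item:closure-superset} is immediate, and property~\ref{item:closure-size} follows from \(\abs{\closure_{r,\delta}(X)} \leq \abs{X} \cdot \wcol_{2r}(G)\).

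For property~\ref{item:closure-projection-size}, let \(u \notin \closure_{r,\delta}(X)\) and \(v \in M_r^G(u, \closure_{r,\delta}(X))\). Let \(P\) be an avoiding path from \(v\) to \(u\) of length at most \(r\), and let \(w\) be the \(\preceq\)-minimal vertex on \(P\).
\begin{itemize}
  \item If \(w \neq v\), then \(w \in \mathrm{WReach}_r[v]\) with \(v \in \mathrm{WReach}_{2r}[x]\) for some \(x \in X\). Concatenating paths (with \(w \prec v\)) places \(w\) in a weak reachability set of radius \(3r\) from \(x\). To close this gap, we would either use radius \(3r\) in the definition of the closure or iterate the closure a bounded number of times. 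In either case \(w\) would lie in the closure, contradicting that \(P\) avoids it, unless \(w = u\), which is impossible since \(u\) is outside the closure.
  \item Hence \(w = v\), that is, \(v \in \mathrm{WReach}_r[u]\).
\end{itemize}
This gives \(\abs{M_r^G(u,\closure_{r,\delta}(X))} \leq \wcol_r(G)\). Property~\ref{item:closure-projection-number} would then follow by bounding the number of distinct traces on \(X\): each projection \(M_r^G(u,X)\) is either \(\set{u}\)-determined for \(u\) in the closure (at most \(\abs{\closure_{r,\delta}(X)}\) cases), or is determined by the small set \(M_r^G(u,\closure_{r,\delta}(X))\) together with distance information inside the closure. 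The latter is then counted via a charging argument on the \(\preceq\)-maximal element of that small set.

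The main obstacle is that all bounds must be in terms of \(\abs{X}\) rather than \(\abs{V(G)}\). \Cref{lem:nowhere-dense-weak-colouring} only gives \(\wcol_{3r}(G) \leq n^{\epsilon}\) with \(n = \abs{V(G)}\), which is useless when \(X\) is tiny. To handle this, we would first replace \(G\) by an auxiliary graph \(H\) with the following properties:
\begin{itemize}
  \item \(X \subseteq V(H)\) and \(\abs{V(H)} \leq \abs{X}^{c}\) for a constant \(c\) depending only on \(\C\) and \(r\);
  \item \(H\) lies in a fixed nowhere dense class, namely bounded-depth minors of \(\C\);
  \item \(H\) preserves \(X\)-avoiding distances up to \(3r\) between relevant vertices.
\end{itemize}
Such an \(H\) can be built by keeping one representative per distance profile towards \(X\), together with short connecting paths. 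Polynomially many profiles suffice, by the (non-constructive) polynomial bound on \(r\)-neighbourhood complexity in nowhere dense classes, which we would obtain first in a weaker form. Applying \cref{lem:nowhere-dense-weak-colouring} to \(H\) with \(\epsilon \deff \delta/c\) then yields bounds \(f_\closure(r,\delta)\cdot\abs{X}^{\delta}\) on the relevant weak colouring number.

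Finally, we would lift the closure computed in \(H\) back to \(G\), checking that projections of vertices \(u \in V(G)\) coincide with those of their representatives. Computing \(\preceq\), the representatives, and the closure all in polynomial time is routine.
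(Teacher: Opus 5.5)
The paper does not prove this lemma; it imports it from Eickmeyer et al.\ (see also Pilipczuk--Siebertz--Toru\'nczyk). Your sketch therefore has to stand on its own, and its central step fails: a bounded-radius weak-reachability closure need not have property~3. With closure \(\bigcup_{x\in X}\mathrm{WReach}_s[x]\), your concatenation only gives \(w\in\mathrm{WReach}_{s+r}[x]\). A \(b\)-fold iterate is still contained in \(\bigcup_{x\in X}\mathrm{WReach}_{bs}[x]\), so enlarging \(s\) or iterating only moves the overshoot.

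Here is a concrete counterexample. Let \(r=1\), and let \(G\) be the tree with centre \(u\) adjacent to \(v_1,\dots,v_k\), where each \(v_i\) is joined to a leaf \(x_i\) by a path of length \(s\) (length \(bs\) against the \(b\)-fold iterate). Put \(X=\set{x_1,\dots,x_k}\) and order \(V(G)\) breadth-first from \(u\). In a tree, such an order lets a vertex weakly reach only its own ancestors, so \(\wcol_t\le t+1\) for all \(t\); it is a legitimate choice of \(\preceq\). Every \(v_i\) lies in the closure, since it is an ancestor of \(x_i\) at distance \(s\) (resp.\ \(bs\)), while \(u\) does not. Hence \(M_1^G(u,\closure_{r,\delta}(X))=\set{v_1,\dots,v_k}\) has size \(\abs{X}\), far above \(f_{\closure}(1,\delta)\cdot\abs{X}^{\delta}\).

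This is exactly your case \(w=u\), which you excluded only via the false claim that \(w\) lies in the closure. Trees form a nowhere dense class. Moreover, for \(s=2\) every vertex outside \(X\) has its own distance profile towards \(X\), so your auxiliary graph \(H\) is all of \(G\) and the detour changes nothing. What is missing is a closure operation that also absorbs such hub vertices, which may be arbitrarily far from \(X\) in the weak-reachability sense.

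The transfer from \(H\) back to \(G\) is a second gap, not a routine check. Property~3 quantifies over all \(u\in V(G)\setminus\closure_{r,\delta}(X)\), and equal distance profiles towards \(X\) do not control projections onto the closure. Take \(K_{2,k}\) with \(X\) the side of size \(k\). The two remaining vertices have the same profile, so only one of them is kept in \(H\). Every closure contained in \(V(H)\) then misses the other one, whose projection onto it is all of \(X\). So the lift must add vertices outside \(H\), and bounding how many (property~2) while keeping projections onto the enlarged set small (property~3) is exactly the work your sketch skips.

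Finally, property~4 is only gestured at. The sets \(M_r^G(u,\closure_{r,\delta}(X))\) are subsets of size up to \(f_{\closure}(r,\delta)\cdot\abs{X}^{\delta}\) of a set of size roughly \(\abs{X}^{1+\delta}\), so naive counting is superpolynomial in \(\abs{X}\). The charging argument that brings this down to \(\bigO(\abs{X}^{1+\delta})\) is the actual content of that property, and it is not supplied.
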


\begin{proof}[Proof of \cref{thm:number-of-types-nowhere-dense}]
  The proof is similar to the proof of the analogous result
  for graphs in \cite{PilipczukSiebertzTorunczyk_Types2018},
  using \cref{cor:number-of-types} instead of the corresponding result for graphs.

  Let \(\C\) be a nowhere dense graph class,
  let \(\phi(\tx, \ty)\) be a \(\cgFOC\) formula,
  and let \(\epsilon > 0\).
  Without loss of generality, we assume \(\epsilon \leq 1\).
  Let \(k \deff \abs{\tx}\),
  \(\ell \deff \abs{\ty}\),
  let \(r \colon \cgFOC \to \N\) and \(T \colon \cgFOC \times \N \to \N\)
  be the functions from \cref{cor:number-of-types},
  let \(p \colon \N \to \N\) be the function from \cref{def:nowhere-dense} for \(\C\),
  and let \(r \deff r(\phi)\) and \(p \deff p(36r)\).
  It holds that no graph \(G \in \C\) contains \(K_p\) as a depth-\(36r\) minor.

  By \cref{thm:uniform-quasi-wideness},
  there is a number \(s \in \N\) and a polynomial \(N \colon \N \to \N\)
  such that, for every graph \(G \in \C\), every \(m \in \N\),
  and every set \(X \subseteq (V(G))^k\) with \(\abs{X} \geq N(m)\),
  there are sets \(S \subseteq V(G)\) and \(Y \subseteq X\)
  with \(\abs{S} \leq s\) and \(\abs{Y} \geq m\)
  such that all distinct \(\tv, \tv' \in Y\)
  are \(2r\)-separated by \(S\) in \(G\).
  Let \(d\) be the degree of \(N\).

  Let \(\A\) be a \(\sigma\)-structure with \(\sigma \supseteq \sigma(\phi)\)
  and \(G_\A \in \C\),
  and let \(W \subseteq A\) be a non-empty set.
We set \(\delta \deff \frac{\epsilon}{4k+4d}\),
  and we let \(W' \deff \closure_{r, \delta}(W)\)
  be the \(r\)-closure of \(W\) with respect to \(\delta\),
  obtained via \cref{lem:nowhere-dense-closure} applied to \(G_\A\).
We shall prove that
  \begin{equation}
    \label{eqn:number-of-types-closure}
    \absSTypePhiA{A/W'} \in \bigO_{\epsilon, \phi}\bigl(\abs{W'}^{k+\epsilon'}\bigr)
    \text{ for } \epsilon' \deff \epsilon/2 > 0,
    \tag{\(\star\)}
  \end{equation}
  where \(\bigO_{\epsilon, \phi}(\cdot)\) omits factors depending only on \(\epsilon\) and \(\phi\).
  Since \(W \subseteq W'\), we have \(\absSTypePhiA{A/W} \leq \absSTypePhiA{A/W'}\).
  Moreover, by \cref{lem:nowhere-dense-closure},
  we have \(\abs{W'} = \abs{\closure_{r, \delta}(W)}
  \leq f_\closure(r, \delta) \cdot \abs{W}^{1+\delta}\),
  and we have
  \((1+\delta)(k + \epsilon') = (1+\delta)(k + \epsilon/2) \leq k + \epsilon\)
  by the choice of \(\delta\),
  so
  \[\absSTypePhiA{A/W} \in
    \bigO_{\epsilon, \phi}\Bigl(\bigl(f_\closure(r, \delta) \cdot \abs{W}^{1+\delta}\bigr)^{k+\epsilon'}\Bigr)
  \subseteq \bigO_{\epsilon, \phi}\bigl(\abs{W}^{k+\epsilon}\bigr),\]
  which is the statement of \cref{thm:number-of-types-nowhere-dense}.

  It remains to prove \eqref{eqn:number-of-types-closure}.
  Recall that
  \(\STypePhiA{A/W'} = \bigsetc{\tp^\phi_\A(\tv/W')}{\tv \in A^k}\).
  We partition the tuples
  \(\tv = (v_1, \dots, v_k) \in A^k\)
  based on their projection \(M_r^{G_\A}(\tv, W') \deff \bigcup_{i=1}^k M_r^{G_\A}(v_i, W')\)
  into sets \(V_1, \dots, V_t\).
  That is, two tuples \(\tv, \tv' \in A^k\)
  are contained in the same set \(V_j\) for some \(j \in [t]\) if and only if
  \(M_r^{G_\A}(\tv, W') = M_r^{G_\A}(\tv', W')\).
  By \cref{item:closure-projection-number} of \cref{lem:nowhere-dense-closure},
  there are at most \(f_\closure(r, \delta) \cdot \abs{W'}^{1+\delta}\)
  different projections of vertices in \(A = V(G_\A)\) on \(W'\), so we have
  \(t \in \bigO_{\epsilon, \phi}\bigl(\abs{W'}^{(1+\delta) k}\bigr)\).
  Hence, to prove \eqref{eqn:number-of-types-closure},
  it suffices to show that
  \begin{equation}
    \label{eqn:number-of-types-closure-projection}
    \bigabs{\bigsetc{\tp^\phi_\A(\tv/W')}{\tv \in V_j}}
    \in \bigO_{\epsilon, \phi}\bigl(\abs{W'}^{\epsilon''}\bigr),
    \tag{\(\star\star\)}
  \end{equation}
  for \(\epsilon'' \deff \epsilon' - k \delta > 0\)
  and all \(j \in [t]\), implying
  \(\absSTypePhiA{A/W'} \in \bigO_{\epsilon, \phi}\bigl(\abs{W'}^{(1+\delta)k}\abs{W'}^{\epsilon'-k\delta}\bigr)
  = \bigO_{\epsilon, \phi}\bigl(\abs{W'}^{k + \epsilon'}\bigr)\).

  Let \(j \in [t]\),
  and let \(X \deff M_r^{G_\A}(\tv, W')\) be the \(r\)-projection of \(\tv\) on \(W'\)
  for any (and, due to the definition of \(V_j\), for all) \(\tv \in V_j\).
  By \cref{item:closure-projection-size} of \cref{lem:nowhere-dense-closure},
  we have \(\abs{X} \leq k \cdot f_\closure(r, \delta) \cdot \abs{W}^\delta
  \in \bigO_{\epsilon, \phi}\bigl(\abs{W}^\delta\bigr)\).

  Let \(V'_j\) be a maximal subset of \(V_j\) such that
  all pairwise distinct tuples \(\tv, \tv'\) from \(V'_j\)
  have different types
  \(\tp^\phi_\A(\tv/W') \neq \tp^\phi_\A(\tv'/W')\).
  Note that
  \(\bigabs{\bigsetc{\tp^\phi_\A(\tv/W')}{\tv \in V_j}} = \bigabs{V'_j}\).
  Now let \(m \in \N\) be the maximum number with
  \(\bigabs{V'_j} \geq N(m)\).
  Then \(\bigabs{V'_j} < N(m+1) \in \bigO_{\epsilon, \phi}(m^d)\).

  By \cref{thm:uniform-quasi-wideness},
  as described above,
  there are sets \(S \subseteq A = V(G_\A)\) and \(Y \subseteq V'_j\)
  with \(\abs{S} \leq s\) and \(\abs{Y} \geq m\)
  such that all distinct \(\tv, \tv' \in Y\)
  are \(2r\)-separated by \(S\) in \(G_\A\).

  We partition \(Y\) into two sets \(Y_1 \uplus Y_2\),
  where \(Y_1\) contains all tuples that are \(r\)-separated by \(S\) from \(W'\),
  and \(Y_2\) contains the remaining tuples.
  By \cref{cor:number-of-types}, since all tuples in \(Y_1\) are \(r\)-separated by \(S\) from \(W'\),
  and all tuples in \(Y_1\) have distinct types,
  we know that \(\abs{Y_1} \leq T(\phi, s) \in \bigO_{\epsilon, \phi}(1)\).
  Moreover, for every tuple \(\tv \in Y_2\),
  there is a vertex \(w \in W'\) such that \(\tv\) and \(w\)
  are not \(r\)-separated by \(S\) in \(G_\A\).
  Note that we can choose \(w\) to be contained in \(X\).
  Furthermore, since all tuples in \(Y_2\) are mutually \(2r\)-separated by \(S\) in \(G_\A\),
  we know that for two distinct tuples \(\tv, \tv' \in Y_2\),
  the vertices in \(X\) connected to them by paths of length at most \(r\)
  avoiding \(S\) must also be distinct.
  This shows that \(\abs{Y_2} \leq \abs{X}\).
  Combined, we obtain that \(\abs{Y} \in \bigO_{\epsilon, \delta}(\abs{X})\).
  Furthermore, since \(\abs{Y} \geq m\),
  we have
  \[\abs{V'_j} \in \bigO_{\epsilon, \phi}(m^d)
    \ \subseteq
    \ \bigO_{\epsilon, \phi}(\abs{Y}^d)
    \ \subseteq
    \ \bigO_{\epsilon, \phi}(\abs{X}^d)
    \ \subseteq
    \ \bigO_{\epsilon, \phi}(\abs{W'}^{d\delta})
    \ \subseteq
    \ \bigO_{\epsilon, \phi}(\abs{W'}^{\epsilon''}),
  \]
  where the last inclusion holds because
  \(d\delta \leq \epsilon/4 \leq \epsilon/2 - k\delta = \epsilon''\)
  by the choice of \(\delta\).
  This proves \eqref{eqn:number-of-types-closure-projection}, which,
  as discussed above, implies the statement of \cref{thm:number-of-types-nowhere-dense}.
\end{proof}
 \section{Proofs Omitted in Section~\ref{sec:learning-formulas}}
\label{sec:appendix-learning}

\subsection{Proof of Theorem~\ref{thm:pac-learn-locally-bounded-expansion}}
\label{subsec:appendix-pac-learn}

\begin{lemma}[Sauer--Shelah--Perles \cite{Sauer1972,Shelah1972,VapnikChervonenkis1971}]
  \label{lem:sauer-shelah-perles}
  Let \(X\) be a set, and let \(\Hypo \subseteq 2^X\) have finite VC dimension \(d \in \N\).
  Then, for every finite subset \(Y \subseteq X\), it holds that
  \(\abs{\bigsetc{Y \cap H}{H \in \Hypo}} \leq \sum_{i=0}^d \binom{\abs{Y}}{i}\).
\end{lemma}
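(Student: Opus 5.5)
We prove the Sauer--Shelah--Perles bound by induction on \(n \deff \abs{Y}\) and \(d\) jointly, using the standard ``shifting by one element'' argument. Throughout, write \(\Hypo|_Y \deff \setc{Y \cap H}{H \in \Hypo}\) and \(\Phi_d(n) \deff \sum_{i=0}^d \binom{n}{i}\). Since only the trace on \(Y\) matters, we may replace \(\Hypo\) by the finite family \(\Hypo|_Y \subseteq 2^Y\). This replacement does not increase the VC dimension: a set \(Z \subseteq Y\) shattered by \(\Hypo|_Y\) is shattered by \(\Hypo\), because \(Z \cap (Y \cap H) = Z \cap H\). We therefore show, for every family \(\mathcal{F} \subseteq 2^Y\) of VC dimension at most \(d\), that \(\abs{\mathcal{F}} \leq \Phi_d(\abs{Y})\).

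For the base cases: if \(n = 0\), then \(\abs{\mathcal{F}} \leq 1 = \Phi_d(0)\). If \(d = 0\), then \(\mathcal{F}\) shatters no singleton, so \(\mathcal{F}\) has at most one member. Indeed, two distinct members differ on some \(y\), and then \(\set{y}\) would be shattered. Hence \(\abs{\mathcal{F}} \leq 1 = \Phi_0(n)\).

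For the induction step, fix \(y \in Y\) and set \(Y' \deff Y \setminus \set{y}\). Define two families on \(Y'\):
\[\mathcal{F}_1 \deff \setc{F \setminus \set{y}}{F \in \mathcal{F}}, \qquad \mathcal{F}_2 \deff \setc{F \subseteq Y'}{F \in \mathcal{F} \text{ and } F \cup \set{y} \in \mathcal{F}}.\]
The map \(F \mapsto F \setminus \set{y}\) is at most two-to-one, and it is exactly two-to-one on the preimages of \(\mathcal{F}_2\). Therefore \(\abs{\mathcal{F}} = \abs{\mathcal{F}_1} + \abs{\mathcal{F}_2}\).

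The family \(\mathcal{F}_1\) has VC dimension at most \(d\), since any \(Z \subseteq Y'\) shattered by \(\mathcal{F}_1\) is already shattered by \(\mathcal{F}\). The key point is that \(\mathcal{F}_2\) has VC dimension at most \(d-1\). If \(\mathcal{F}_2\) shattered some \(Z \subseteq Y'\), then \(Z \cup \set{y}\) would be shattered by \(\mathcal{F}\): for each trace \(Z \cap F\) with \(F \in \mathcal{F}_2\), both \(F\) and \(F \cup \set{y}\) lie in \(\mathcal{F}\), so both extensions of that trace, with and without \(y\), are realised.

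Applying the induction hypothesis to \(\mathcal{F}_1\) and \(\mathcal{F}_2\) gives
\[\abs{\mathcal{F}} \leq \Phi_d(n-1) + \Phi_{d-1}(n-1) = \Phi_d(n),\]
where the last equality is Pascal's rule \(\binom{n-1}{i} + \binom{n-1}{i-1} = \binom{n}{i}\), summed over \(i\).

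The only step that needs real care is the VC dimension bound for \(\mathcal{F}_2\); the counting identity and the reduction to a finite family are routine.
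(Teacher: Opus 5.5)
Your proof is correct. It is the standard inductive argument: pass to the trace family \(\mathcal{F} = \setc{Y \cap H}{H \in \Hypo}\), split off one element \(y\) into \(\mathcal{F}_1\) and \(\mathcal{F}_2\), show that \(\mathcal{F}_2\) shatters no set of size \(d\), and finish with Pascal's rule. You also rightly strengthen the hypothesis to ``VC dimension at most \(d\)'' so that the induction goes through.

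The paper gives no proof of its own; it simply quotes the lemma as a classical result, so there is no competing route to compare against. One point is worth making explicit. Read ``VC dimension at most \(d\)'' as ``no set of size \(d+1\) is shattered'', so that the induction hypothesis also covers \(\mathcal{F}_2 = \emptyset\). For an empty family, the paper's ``maximum cardinality of a shattered set'' is undefined, since not even \(\emptyset\) is shattered.
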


This implies the following.

\begin{lemma}
  \label{lem:vc-dimension-union}
  Let \(X\) be a set, and let \(\Hypo_1, \Hypo_2 \subseteq 2^X\)
  such that \(\Hypo_1\) has finite VC dimension \(d_1 \in \N\)
  and \(\Hypo_2\) has finite VC dimension \(d_2 \in \N\).
  It holds that \(\Hypo_1 \cup \Hypo_2\) has VC dimension at most \(d_1 + d_2 + 1\).
\end{lemma}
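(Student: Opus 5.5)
The plan is to argue directly from the definition of shattering and reduce everything to the Sauer--Shelah--Perles bound of \cref{lem:sauer-shelah-perles}. Suppose \(Y \subseteq X\) with \(\abs{Y} = n\) is shattered by \(\Hypo_1 \cup \Hypo_2\). Then every subset of \(Y\) arises as \(Y \cap H\) for some \(H \in \Hypo_1\) or some \(H \in \Hypo_2\). Since \(\bigsetc{Y \cap H}{H \in \Hypo_1 \cup \Hypo_2}\) is the union of the two traces, we get
\[2^n \leq \bigabs{\bigsetc{Y \cap H}{H \in \Hypo_1}} + \bigabs{\bigsetc{Y \cap H}{H \in \Hypo_2}}.\]

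Next, bound each trace with \cref{lem:sauer-shelah-perles}. Shattering needs at least one hypothesis, so I may assume both classes are non-empty; otherwise the union is just one of them and the claim is trivial. This gives
\[2^n \leq \sum_{i=0}^{d_1}\binom{n}{i} + \sum_{i=0}^{d_2}\binom{n}{i}.\]
I will then show this fails for \(n = d_1 + d_2 + 2\), and hence for every \(n \geq d_1 + d_2 + 2\), since a shattered set has shattered subsets of every smaller size.

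The key combinatorial step uses the symmetry \(\binom{n}{i} = \binom{n}{n-i}\) to rewrite the second sum as \(\sum_{i=n-d_2}^{n}\binom{n}{i}\). With \(n = d_1 + d_2 + 2\), we have \(n - d_2 = d_1 + 2\). So the two index ranges \([0, d_1]\) and \([d_1+2, n]\) are disjoint and both miss the index \(d_1 + 1\). Their total is therefore at most \(2^n - \binom{n}{d_1+1} < 2^n\), which is the contradiction. Consequently no set of size \(d_1 + d_2 + 2\) is shattered, and the VC dimension of \(\Hypo_1 \cup \Hypo_2\) is at most \(d_1 + d_2 + 1\).

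The only delicate points are the disjointness of the index ranges, which is exactly why the bound is \(d_1 + d_2 + 1\) rather than \(d_1 + d_2\), and the trivial empty-class case. I expect no real obstacle beyond that.
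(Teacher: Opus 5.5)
Your proposal is correct and follows the paper's proof essentially verbatim: you bound the trace of the union by the sum of the two traces, apply \cref{lem:sauer-shelah-perles} to each with \(n = d_1 + d_2 + 2\), and use \(\binom{n}{j} = \binom{n}{n-j}\) to see that the index ranges \([0,d_1]\) and \([d_1+2,n]\) are disjoint and miss \(d_1+1\), so the total is strictly below \(2^n\). Your aside about empty classes is unnecessary, since the Sauer--Shelah--Perles bound holds trivially for an empty class, but it is harmless.
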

\begin{proof}
  Let \(X' \subseteq X\) be a finite set of size \(\abs{X'} = d_1 + d_2 + 2\).
  By \cref{lem:sauer-shelah-perles}, it holds that
  \(\abs{\bigsetc{X' \cap H}{H \in \Hypo_i}} \leq \sum_{i=0}^{d_j} \binom{d_1 + d_2 + 2}{i}\)
  for \(j \in [2]\).
  Hence, we have
  \begin{align*}
    &\abs{\bigsetc{X' \cap H}{H \in \Hypo_1 \cup \Hypo_2}}\\
    =~
    &\abs{\bigsetc{X' \cap H}{H \in \Hypo_1} \cup \bigsetc{X' \cap H}{H \in \Hypo_2}}\\
    \leq~
    &\abs{\bigsetc{X' \cap H}{H \in \Hypo_1}} + \abs{\bigsetc{X' \cap H}{H \in \Hypo_2}}\\
    \leq~
    &\sum_{j=0}^{d_1} \binom{d_1 + d_2 + 2}{j} + \sum_{j=0}^{d_2} \binom{d_1 + d_2 + 2}{j}\\
    =~
    &\sum_{j=0}^{d_1} \binom{d_1 + d_2 + 2}{j}
    + \sum_{j=0}^{d_2} \binom{d_1 + d_2 + 2}{d_1 + d_2 + 2 - j}\\
    <~
    &\sum_{j=0}^{d_1 + d_2 + 2} \binom{d_1 + d_2 + 2}{j}
    = 2^{d_1 + d_2 + 2}.
  \end{align*}
  This shows that \(X'\) is not shattered by \(\Hypo_1 \cup \Hypo_2\).
  Since \(X'\) was chosen arbitrarily,
  this implies that the VC dimension of \(\Hypo_1 \cup \Hypo_2\) is at most \(d_1 + d_2 + 1\).
\end{proof}

We now prove an analogous result on the graph dimension.

\begin{lemma}
  \label{lem:graph-dimension-union}
  Let \(X\) be a set, and let \(\Hypo_1, \Hypo_2\) be two sets of functions
  \(h \colon X \to \Z\) such that \(\Hypo_1\) has finite graph dimension \(d_1 \in \N\)
  and \(\Hypo_2\) has finite graph dimension \(d_2 \in \N\).
  It holds that \(\Hypo_1 \cup \Hypo_2\) has graph dimension at most \(d_1 + d_2 + 1\).
\end{lemma}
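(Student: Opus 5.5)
Since graph shattering is a statement about a single finite set together with a witness function, I will reduce the claim to the VC-dimension statement of \cref{lem:vc-dimension-union}, applied to suitable Boolean classes.

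Let \(X' \subseteq X\) with \(\abs{X'} = d_1 + d_2 + 2\). Suppose, towards a contradiction, that \(X'\) is graph shattered by \(\Hypo_1 \cup \Hypo_2\) via a witness \(g \colon X' \to \Z\). For each function \(h\), let \(H_h \deff \setc{x \in X'}{h(x) = g(x)}\), and set \(\Hypo'_i \deff \setc{H_h}{h \in \Hypo_i} \subseteq 2^{X'}\) for \(i \in [2]\). Graph shattering of \(X'\) by \(\Hypo_1 \cup \Hypo_2\) says exactly that every \(Y \subseteq X'\) equals \(H_h\) for some \(h \in \Hypo_1 \cup \Hypo_2\). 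Hence \(\Hypo'_1 \cup \Hypo'_2\) shatters \(X'\), so its VC dimension is at least \(d_1 + d_2 + 2\).

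Next, I bound the VC dimension of each \(\Hypo'_i\) by \(d_i\). Suppose some \(Z \subseteq X'\) is shattered by \(\Hypo'_i\). Then for every \(Y \subseteq Z\) there is \(h \in \Hypo_i\) with \(H_h \cap Z = Y\). This means \(h(x) = g(x)\) for \(x \in Y\) and \(h(x) \neq g(x)\) for \(x \in Z \setminus Y\). So \(Z\) is graph shattered by \(\Hypo_i\), witnessed by \(g|_Z\), and therefore \(\abs{Z} \leq d_i\).

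Applying \cref{lem:vc-dimension-union} to \(\Hypo'_1, \Hypo'_2 \subseteq 2^{X'}\) gives that \(\Hypo'_1 \cup \Hypo'_2\) has VC dimension at most \(d_1 + d_2 + 1\), contradicting the previous paragraph. Thus no set of size \(d_1 + d_2 + 2\) is graph shattered by \(\Hypo_1 \cup \Hypo_2\).

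The only delicate point is fixing the witness \(g\) once for the union and using the same \(g\) for both classes; this is what makes the reduction to sets uniform. The rest is bookkeeping.
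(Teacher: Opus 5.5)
Your proof is correct and follows the paper's argument: fix the witness \(g\), pass to the Boolean classes of agreement sets \(\setc{x}{h(x)=g(x)}\), bound each one's VC dimension by \(d_i\), and invoke \cref{lem:vc-dimension-union}. The differences are cosmetic: you argue by contradiction and work over the ground set \(X'\), whereas the paper extends \(g\) to all of \(X\); your choice is, if anything, slightly cleaner.
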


\begin{proof}
  We prove the statement based on the analogous statement for the VC dimension
  in \cref{lem:vc-dimension-union}.
  Let \(X' \subseteq X\) be a finite set of size \(\abs{X'} = d_1 + d_2 + 2\),
  and let \(g \colon X' \to \Z\).
  We show that for every such \(X'\) and \(g\),
  there is a set \(Y \subseteq X'\) such that there is no function
  \(h \in \Hypo \deff \Hypo_1 \cup \Hypo_2\) with
  \(h(x) = g(x)\) for all \(x \in Y\)
  and \(h(x) \neq g(x)\) for all \(x \in X' \setminus Y\).
  This shows that there is no set \(X'\) of size \(d_1 + d_2 + 2\)
  that is graph shattered by \(\Hypo\),
  so the graph dimension of \(\Hypo\) is at most \(d_1 + d_2 + 1\).

  Let \(g' \colon X \to \Z\) be a function with \(g'(x) = g(x)\) for all \(x \in X'\),
  let \(\Hypo'_i \deff \bigsetc{\setc{x \in X}{h(x) = g'(x)}}{h \in \Hypo_i} \subseteq 2^X\)
  for \(i \in [2]\),
  and let \(\Hypo' \deff \Hypo'_1 \cup \Hypo'_2 =
  \bigsetc{\setc{x \in X}{h(x) = g'(x)}}{h \in \Hypo}\).

  Since \(\Hypo_1\) has graph dimension \(d_1\),
  for every set \(X'_1 \subseteq X\) of size \(\smallabs{X'_1} = d_1 + 1\),
  there is no function \(g_1 \colon X'_1 \to \Z\) such that
  for every set \(Y_1 \subseteq X'_1\), there is a function \(h \in \Hypo_1\)
  with \(h(x) = g_1(x)\) for all \(x \in Y_1\)
  and \(h(x) \neq g_1(x)\) for all \(x \in X'_1 \setminus Y_1\).
  In particular, there is a set \(Y_1 \subseteq X'_1\)
  such that there is no function \(h \in \Hypo_1\)
  with \(h(x) = g'(x)\) for all \(x \in Y_1\)
  and \(h(x) \neq g'(x)\) for all \(x \in X'_1 \setminus Y_1\).
  This shows that \(Y_1 \not\in \setc{X'_1 \cap H}{H \in \Hypo'_1}\),
  so \(\Hypo'_1\) does not shatter \(X'_1\).
  Since this holds for every set \(X'_1 \subseteq X\) of size \(\smallabs{X'_1} = d_1 + 1\),
  this implies that \(\Hypo'_1\) has VC dimension at most \(d_1\).
  Analogously, it holds that \(\Hypo'_2\) has VC dimension at most \(d_2\).

  Using \cref{lem:vc-dimension-union}, we obtain that \(\Hypo'\)
  has VC dimension at most \(d_1 + d_2 + 1\),
  so \(\Hypo'\) does not shatter \(X'\).
  Thus, there is a set \(Y \subseteq X'\)
  with \(Y \not\in \setc{X' \cap H}{H \in \Hypo'}\),
  so there is no function \(h \in \Hypo\) with
  \(h(x) = g'(x) = g(x)\) for all \(x \in Y\)
  and \(h(x) \neq g'(x) = g(x)\) for all \(x \in X' \setminus Y\).
  As described above, since \(X'\) and \(g\) were chosen arbitrarily,
  this shows that \(\Hypo\) has graph dimension at most \(d_1 + d_2 + 1\).
\end{proof}

For a set \(X\), a probability distribution \(\D\) on \(X \times \set{0,1}\),
a sequence \(S = (x_1, \lambda_1), \dots, (x_s, \lambda_s)\)
over \(X \times \set{0,1}\),
and a subset \(X' \subseteq X\) of \(X\),
we let \(\err_\D(X') \deff \Pr_{(x, \lambda) \sim \D} \bigl(x \in X' \iff \lambda = 0\bigr)\)
and \(\err_S(X') \deff \frac{1}{s} \cdot \bigabs{\setc{i \in [s]}{x_i \in X' \iff \lambda_i = 0}}\).

\begin{lemma}[Uniform Convergence,
  {\cite[Sections~6.4 and~6.5]{Shalev-ShwartzBen-David_UnderstandingMachineLearning}}]
  \label{lem:uniform-convergence-vc}
  There is a computable function \(s_\UC\) with the following property.
  Let \(X\) be a set, let \(\Hypo \subseteq 2^X\) be of finite VC dimension \(c \in \N\),
  let \(\D\) be a probability distribution on \(X \times \set{0,1}\),
  and let \(\epsilon, \delta \in \Qpos\).
  For every sequence \(S = (x_1, \lambda_1), \dots, (x_s, \lambda_s)\)
  of length \(s \geq s_\UC(c, \epsilon, \delta)\) drawn \iid from \(\D\),
  where \((x_i, \lambda_i) \in X \times \set{0,1}\),
  with probability at least \(1-\delta\),
  it holds that \(\bigabs{\err_\D(h) - \err_S(h)} \leq \epsilon\) for all \(h \in \Hypo\).
  Moreover, \(s_\UC(c, \epsilon, \delta)\) is monotonically increasing in \(c\)
  and monotonically decreasing in \(\epsilon\) and \(\delta\).
\end{lemma}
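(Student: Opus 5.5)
The plan is to reduce the statement to the classical uniform convergence theorem for classes of finite VC dimension, via the \emph{loss class} and the growth-function bound of \cref{lem:sauer-shelah-perles}, and then to read off an explicit computable sample size. For \(H \in \Hypo\) I will consider the error set
\[L_H \deff \setc{(x,\lambda) \in X \times \set{0,1}}{x \in H \iff \lambda = 0}.\]
Then \(\err_\D(H) = \D(L_H)\) and \(\err_S(H)\) is the empirical frequency of \(L_H\) on \(S\). So it suffices to show uniform convergence of empirical frequencies to true probabilities for the set system \(\mathcal{L} \deff \setc{L_H}{H \in \Hypo}\) over the domain \(X \times \set{0,1}\).

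The first step is to bound the growth function of \(\mathcal{L}\). Fix finitely many points \((x_1,\lambda_1),\dots,(x_n,\lambda_n)\), and let \(Y \deff \set{x_1,\dots,x_n}\). The trace of \(L_H\) on these points is determined by \(Y \cap H\), since membership of \((x_i,\lambda_i)\) in \(L_H\) is \([x_i \in H]\) XOR-ed with a fixed bit depending on \(\lambda_i\). Hence the number of distinct traces is at most \(\abs{\setc{Y \cap H}{H \in \Hypo}}\). By \cref{lem:sauer-shelah-perles}, this is at most \(\sum_{i=0}^c \binom{n}{i} \leq (n+1)^c\). (Repeated \(x\) with different labels only helps, because the XOR map is injective on traces.)

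The second step is the standard symmetrization (ghost sample) argument. For \(s \geq 2/\epsilon^2\) this gives
\[\Pr_S\Bigl(\exists H \in \Hypo : \bigabs{\err_\D(H) - \err_S(H)} > \epsilon\Bigr) \leq 4\,\Pi_{\mathcal{L}}(2s)\, e^{-s\epsilon^2/8},\]
where \(\Pi_{\mathcal{L}}\) is the growth function. The argument runs as follows: compare \(S\) with an independent copy \(S'\); condition on the multiset \(S \cup S'\), on which only \(\Pi_{\mathcal{L}}(2s)\) traces occur; apply random swaps with Hoeffding's inequality to each trace; finish with a union bound. Alternatively, I can cite the bound of \cite[Theorem~6.11]{Shalev-ShwartzBen-David_UnderstandingMachineLearning},
\[\bigabs{\err_\D(H) - \err_S(H)} \leq \frac{4 + \sqrt{\log \Pi_{\mathcal{L}}(2s)}}{\delta\sqrt{2s}},\]
which holds with probability \(1-\delta\). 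Since \(X\) is an arbitrary set, I note that the events involved only depend on finitely many traces after conditioning, so no measurability issue arises beyond what the cited argument handles.

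The final step is to choose \(s_\UC\) explicitly. Plugging in \(\Pi_{\mathcal{L}}(2s) \leq (2s+1)^c\), it suffices that
\[4(2s+1)^c e^{-s\epsilon^2/8} \leq \delta \quad\text{and}\quad s \geq 2/\epsilon^2.\]
I will define
\[s_\UC(c,\epsilon,\delta) \deff \Bigl\lceil \frac{C}{\epsilon^2}\Bigl(c \log\frac{c+2}{\epsilon} + \log\frac{4}{\delta}\Bigr) \Bigr\rceil\]
for a suitable absolute constant \(C\). This formula is computable and visibly monotonically increasing in \(c\) and decreasing in \(\epsilon\) and \(\delta\). Every \(s \geq s_\UC\) satisfies the inequality, because \(s \mapsto (2s+1)^c e^{-s\epsilon^2/8}\) is eventually decreasing, and decreasing from the threshold on.

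I expect the main obstacle to be this last calculation: verifying that the inequality holds for \emph{all} \(s \geq s_\UC\), not merely at \(s = s_\UC\), while keeping the closed form monotone. I would handle it with the elementary estimate \(\log(2s+1) \leq s\epsilon^2/(16c) + \log(32c/\epsilon^2)\), which follows from \(\log y \leq y/a + \log a\). This turns the condition into one that is linear in \(s\), and that linear condition is clearly monotone in \(s\).
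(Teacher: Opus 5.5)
Your proposal is correct and is essentially the argument the paper delegates to \cite[Sections~6.4--6.5]{Shalev-ShwartzBen-David_UnderstandingMachineLearning}: you bound the growth function of the loss class via \cref{lem:sauer-shelah-perles}, then apply symmetrisation (or directly their Theorem~6.11), and here you also make the sample size explicit. The one thing to patch is your closed form for \(s_\UC\): since \(\epsilon, \delta \in \Qpos\) may exceed \(1\), the logarithms can become negative and the claimed monotonicity can then fail (\eg for \(\epsilon > 3\) the value can drop from \(c = 0\) to \(c = 1\)), so evaluate the formula at \(\min(\epsilon,1)\) and \(\min(\delta,1)\) instead, which is harmless because the statement is trivial once \(\epsilon \geq 1\) or \(\delta \geq 1\).
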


For a set \(X\), a probability distribution \(\D\) on \(X \times \Z\),
a sequence \(S = (x_1, \lambda_1), \dots, (x_s, \lambda_s)\)
over \(X \times \Z\),
and a function \(h \colon X \to \Z\),
we let \(\err_\D(h) = \Pr_{(x, \lambda) \sim \D} \bigl(h(x) \neq \lambda\bigr)\)
and \(\err_S(h) = \frac{1}{s} \cdot \bigabs{\setc{i \in [s]}{h(x_i) \neq \lambda_i}}\).
We prove the following analogue of \cref{lem:uniform-convergence-vc} for the graph dimension.

\begin{lemma}
  \label{lem:uniform-convergence-graph-dimension}
  There is a computable function \(s_{\UC}\) with the following property.
  Let \(X\) be a set, let \(\Hypo\) be a set of functions \(h \colon X \to \Z\)
  such that \(\Hypo\) has finite graph dimension \(c \in \N\),
  let \(\D\) be a probability distribution on \(X \times \Z\),
  and let \(\epsilon, \delta \in \Qpos\).
  For every sequence \(S = (x_1, \lambda_1), \dots, (x_s, \lambda_s)\) over \(X \times \Z\)
  of length \(s \geq s_\UC(c, \epsilon, \delta)\) drawn \iid from \(\D\),
  with probability at least \(1-\delta\),
  it holds that \(\bigabs{\err_\D(h) - \err_S(h)} \leq \epsilon\) for all \(h \in \Hypo\).
  Moreover, \(s_\UC(c, \epsilon, \delta)\) is monotonically increasing in \(c\)
  and monotonically decreasing in \(\epsilon\) and \(\delta\).
\end{lemma}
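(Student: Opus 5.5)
We reduce the statement to the Boolean case, \cref{lem:uniform-convergence-vc}, by fixing the correct labels. Define \(g^\ast \colon X \times \Z \to \set{0,1}\) and, for every \(h \in \Hypo\), the set
\[H_h \deff \setc{(x,\lambda) \in X \times \Z}{h(x) = \lambda} \subseteq X \times \Z.\]
Let \(\Hypo^\ast \deff \setc{H_h}{h \in \Hypo}\). Now view a labelled example \((x,\lambda) \sim \D\) as the point \(x' \deff (x,\lambda) \in X' \deff X \times \Z\) carrying the Boolean label \(1\). This yields a distribution \(\D'\) on \(X' \times \set{0,1}\) concentrated on label \(1\). The two error notions then agree: \(\err_{\D'}(H_h) = \Pr\bigl((x,\lambda) \notin H_h\bigr) = \Pr\bigl(h(x) \neq \lambda\bigr) = \err_\D(h)\). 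In the same way, \(\err_{S'}(H_h) = \err_S(h)\) for the transformed sample \(S'\), which is again drawn i.i.d.\ from \(\D'\). Hence \cref{lem:uniform-convergence-vc}, applied to \(\Hypo^\ast\) on \(X'\), gives the claim with the same function \(s_\UC\), provided we bound the VC dimension of \(\Hypo^\ast\) by something computable from \(c\).

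The central step is therefore to show that the VC dimension of \(\Hypo^\ast\) is at most \(c\), or at least bounded by a computable function of \(c\). Suppose \(\Hypo^\ast\) shatters a set \(\set{(x_1,\lambda_1),\dots,(x_n,\lambda_n)}\).

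First, the \(x_i\) must be pairwise distinct. If \(x_i = x_j\) with \(i \neq j\), then \(\lambda_i \neq \lambda_j\), and no single function \(h\) can satisfy both \(h(x_i) = \lambda_i\) and \(h(x_j) = \lambda_j\). So the subset containing both points is never cut out, contradicting shattering.

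With the \(x_i\) distinct, set \(g(x_i) \deff \lambda_i\). Shattering now says exactly that for every \(Y \subseteq \set{x_1,\dots,x_n}\) there is an \(h \in \Hypo\) agreeing with \(g\) precisely on \(Y\). This is graph shattering of \(\set{x_1,\dots,x_n}\) by \(\Hypo\), so \(n \leq c\).

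Conversely, graph shattering of a set \(X'' \subseteq X\) via \(g\) yields shattering of \(\setc{(x,g(x))}{x \in X''}\) by \(\Hypo^\ast\), so the two dimensions coincide; this mirrors \cref{lem:vc-dim-vs-graph-dim}.

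The remaining points are routine. Monotonicity of \(s_\UC\) is inherited directly from \cref{lem:uniform-convergence-vc}. The only subtlety I expect is measurability when \(X \times \Z\) is infinite. Both \cref{lem:uniform-convergence-vc} and the present lemma are stated for arbitrary sets and distributions, implicitly discrete ones, so \(\D'\) is a legitimate distribution on \(X' \times \set{0,1}\). If \cref{lem:uniform-convergence-vc} assumes a label-aware error (\(x \in X' \iff \lambda = 0\) counts as an error), then, since every label in \(\D'\) equals \(1\), an error occurs exactly when \(x' \notin H_h\), which matches the computation above.
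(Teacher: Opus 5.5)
Your proposal is correct and follows essentially the same route as the paper. The paper also passes to \(X' = X \times \Z\) with the graph sets \(X'_h = \{(x,h(x)) : x \in X\}\) (your \(H_h\)) and puts all of \(\D'\) on label \(1\); it bounds the VC dimension of the resulting class by \(c\) via the same two-case argument (a repeated \(x\) with distinct labels, versus graph shattering via \(g\)), and then applies the Boolean uniform-convergence lemma with the same \(s_\UC\). The only blemish is the stray, unused definition of \(g^\ast\) in your first sentence, which you can simply delete.
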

\begin{proof}
  Let \(s_\UC\) be the computable function from \cref{lem:uniform-convergence-vc},
  let \(s \in \Npos\) with \(s \geq s_\UC(c, \epsilon, \delta)\),
  and let \(\CS\) be the set of all sequences \(S = (x_1, \lambda_1), \dots, (x_s, \lambda_s)\)
  over \(X \times \Z\) such that
  \(\abs{\err_\D(h) - \err_S(h)} \leq \epsilon\) for all \(h \in \Hypo\).
  We show that \(\D^s(\CS) \geq 1-\delta\),
  which proves \cref{lem:uniform-convergence-graph-dimension}.\footnote{\(\D^s\) is the probability distribution on \((X \times \Z)^s\)
    where we draw \(s\) times \iid from \(\D\).}

  We set \(X' \deff X \times \Z\).
  For a function \(h \colon X \to \Z\),
  let \(X'_h \deff \bigsetc{\bigl(x, h(x)\bigr)}{x \in X} \subset X'\),
  and set \(\Hypo' \deff \setc{X'_h}{h \in \Hypo} \subseteq 2^{X'}\).

  \begin{claim*}
    \(\Hypo'\) has VC dimension at most \(c\).
  \end{claim*}
  \begin{claimproof}
    Let \(Y' \subseteq X'\) be a set of size \(\abs{Y'} = c+1\).
    If there are \((x, i_1), (x, i_2) \in Y'\) for some \(x \in X\) and \(i_1, i_2 \in \Z\)
    with \(i_1 \neq i_2\),
    then there is no \(X'_h \in \Hypo'\) with \(Y' \subseteq X'_h\),
    since either \(i_1 \neq h(x)\) or \(i_2 \neq h(x)\),
    so \(Y' \not\in \setc{Y' \cap X'_h}{X'_h \in \Hypo'}\).
    This shows that \(Y'\) is not shattered by \(\Hypo'\).

    In the other case, for every \(x \in X\),
    there is at most one \(i \in \Z\) with \((x,i) \in Y'\).
    Let \(Y\) be the set of all \(x \in X\) such that \((x,i) \in Y'\) for some \(i \in \Z\).
    Then \(\abs{Y} = \abs{Y'} = c + 1\).
    Let \(g \colon Y \to \Z\) be the function with \(g(x) \deff i\)
    for every \((x,i) \in Y'\).
    Since \(\Hypo\) has graph dimension \(c\),
    we know that \(Y\) is not graph shattered by \(\Hypo\).
    Hence, there exists a set \(A \subseteq Y\) such that for every function \(h \in \Hypo\),
    we have \(g(x) \neq h(x)\) for some \(x \in A\)
    or \(g(x) = h(x)\) for some \(x \in Y \setminus A\).
    Let \(A' \subseteq Y'\) be the set
    that contains exactly those \((x,i) \in Y'\) with \(x \in A\).
    Then, for every \(X'_h \in \Hypo'\),
    we have \(g(x) \neq h(x)\) for some \(x \in A\),
    which implies \((x,i) \in A'\) and \((x,i) \not\in X'_h\),
    or we have \(g(x) = h(x)\) for some \(x \in Y \setminus A\),
    which implies \((x,i) \not\in A'\) and \((x,i) \in X'_h\).
    Thus, for every \(X'_h \in \Hypo'\),
    it holds that \(A' \neq X'_h \cap Y'\).
    As in the first case, this shows that \(Y'\) is not shattered by \(\Hypo'\).

    Hence, in all cases, \(Y'\) is not shattered by \(\Hypo'\).
    Since \(Y'\) was chosen arbitrarily of size \(c+1\),
    there is no set of size \(c+1\) that is shattered by \(\Hypo'\),
    implying that \(\Hypo'\) has VC dimension at most \(c\).
  \end{claimproof}

  Let \(\D'\) be the probability distribution on \(X' \times \set{0,1}\)
  obtained from \(\D\) by setting \(\D'(A) \deff \D\bigl(\setc{x \in X'}{(x,1) \in A}\bigr)\)
  for every \(A \subseteq X' \times \set{0,1}\).
  Moreover, let \(\CS'\) be the set of all sequences
  \(S' = (x_1, \lambda_1), \dots, (x_s, \lambda_s)\)
  over \(X' \times \set{0,1}\) with
  \(\abs{\err_{\D'}(h') - \err_{S'}(h')} \leq \epsilon\) for all \(h' \in \Hypo'\).
  Equivalently, for a sequence \(S' \in \CS'\),
  we have that \(\bigabs{\err_{\D'}(X'_h) - \err_{S'}(X'_h)} \leq \epsilon\) for all \(h \in \Hypo\).
  Since \(\Hypo'\) has VC dimension at most \(c\)
  and \(s \geq s_\UC(c, \epsilon, \delta)\), it holds that
  \[\Pr_{S' \sim (\D')^s}\bigl(\abs{\err_{\D'}(h') - \err_{S'}(h')}
  \leq \epsilon \text{ for all } h' \in \Hypo'\bigr) \geq 1-\delta\]
  by \cref{lem:uniform-convergence-vc}.
  Hence, \((\D')^s(\CS') \geq 1-\delta\).
  Let \(\CS'' \subseteq \CS'\) be the set that contains only those sequences
  \(S' = (x_1, \lambda_1), \dots, (x_s, \lambda_s)\)
  from \(\CS'\) with \(\lambda_i = 1\) for all \(i \in [s]\).
  By the definition of \(\D'\), we have
  \((\D')^s(\CS'') = (\D')^s(\CS')\).
  Furthermore, for a hypothesis \(h \in \Hypo\), it holds that
  \begin{align*}
    \err_{\D'}(X'_h)
    &= \Pr_{((x, i), \lambda) \sim \D'}\bigl((x,i) \not\in X'_h \iff \lambda = 1\bigr)\\
    &= \Pr_{(x,i) \sim \D}\bigl((x,i) \not\in X'_h\bigr)\\
    &= \Pr_{(x,i) \sim \D}\bigl(h(x) \neq i\bigr)\\
    &= \err_\D(h).
  \end{align*}
  Moreover, for a sequence \(S = (x_1, \lambda_1), \dots, (x_s, \lambda_s)\)
  over \(X \times \Z\) and the sequence
  \(S' \deff \bigl((x_1, \lambda_1), 1\bigr), \dots, \bigl((x_s, \lambda_s), 1\bigr)\),
  it holds that
  \begin{align*}
    \err_{S'}(X'_h)
    &= {\textstyle \frac{1}{s}} \cdot \abs{\setc{i \in [s]}{(x_i, \lambda_i) \not\in X'_h}}\\
    &= {\textstyle \frac{1}{s}} \cdot \abs{\setc{i \in [s]}{h(x_i) \neq \lambda_i}}\\
    &= \err_S(h).
  \end{align*}
  Hence, \(\abs{\err_\D(h) - \err_S(h)} = \bigabs{\err_{\D'}(X'_h) - \err_{S'}(X'_h)}\)
  for all \(h \in \Hypo\),
  so \(S \in \CS\) if and only if \(S' \in \CS''\).
  This shows that \(\D^s(\CS) = (\D')^s(\CS'')\).
  Thus, all in all, it holds that \(\D^s(\CS) \geq 1-\delta\),
  which finishes this proof.
\end{proof}

In order to prove \cref{thm:pac-learn-locally-bounded-expansion}, we first give an ERM algorithm that can then be turned into an agnostic-PAC-learning algorithm.

\begin{proposition}
  \label{prop:erm-locally-bounded-expansion}
  For every graph class \(\C\) of locally bounded expansion,
  there is a function \(f\) and an algorithm that does the following.
  Given a signature \(\sigma\), a \(\sigma\)-structure \(\A\) with \(G_\A \in \C\),
  natural numbers \(k \in \N\) and \(\ell, m \in \Npos\) with \(k + \ell \leq m\),
  a rational number \(\alpha \in \Qpos\),
  a finite set of integers \(I\),
  and a sequence \(S = (\tw_1, \lambda_1), \dots, (\tw_s, \lambda_s)\)
  over \(A^\ell \times \Z\)
  for some \(s \in \Npos\),
  the algorithm runs in time \(f(\alpha, \sigma, I, m) \cdot s \cdot \abs{A}^{\max(k, 1+\alpha)}\),
  and it returns a list consisting of the tuples \(\bigl(t, \tv, \err_S(t, \tv)\bigr)\)
  for all \(t(\tx, \ty) \in T[\sigma, k, \ell, m, I]\) and \(\tv \in A^{\abs{\tx}}\),
  sorted in ascending lexicographic order on \(\bigl(\err_S(t, \tv), \abs{\tv}, \abs{t}\bigr)\).
\end{proposition}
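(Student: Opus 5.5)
The plan is a brute-force empirical-risk computation that reuses the query-answering part of \cref{thm:answering-enumeration}. First, compute the finite set \(T[\sigma, k, \ell, m, I]\); its size depends only on \(\sigma, k, \ell, m, I\), and \(k,\ell \le m\), so it is bounded in terms of \(\sigma, I, m\). For each counting term \(t(\tx,\ty)\) in this set, run the preprocessing of \cref{thm:answering-enumeration}\ref{item:answering} on \(\A\) with \(\epsilon \deff \alpha\). This takes time \(f'(t,\sigma,\alpha)\cdot\abs{A}^{1+\alpha}\). Afterwards, for any tuple \(\tv\tw\) we can obtain \(\sem{t(\tv,\tw)}^\A\) in time \(f'(t,\sigma,\alpha)\).

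Next, for each \(t\) and each \(\tv \in A^{\abs{\tx}}\) (at most \(\abs{A}^{k}\) tuples, since \(\abs{\tx}\le k\)), compute
\[\err_S(t,\tv) = \tfrac{1}{s}\bigabs{\setc{i\in[s]}{\sem{t(\tv,\tw_i)}^\A \neq \lambda_i}}\]
with \(s\) constant-time queries. Over all \(t\), this phase costs \(f''(\alpha,\sigma,I,m)\cdot s\cdot \abs{A}^{k}\). When \(k=0\) the factor is \(1\), and the preprocessing term \(\abs{A}^{1+\alpha}\) dominates. Adding the two phases gives a total of \(f\cdot s\cdot\abs{A}^{\max(k,1+\alpha)}\).

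Finally, sort the list of triples \((t,\tv,\err_S(t,\tv))\) by the key \((\err_S, \abs{\tv}, \abs{t})\). The error values are rationals with denominator \(s\), i.e.\ integers in \([0,s]\) after scaling, and \(\abs{\tv}\le k\) and \(\abs{t}\le m\) are bounded. So radix or bucket sort runs in time linear in the list length plus \(s\), which stays within the bound.

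The step I expect to be the main obstacle is uniformity: the bound \(f\) must depend only on \((\alpha,\sigma,I,m)\), not on \(t\). This follows because \(T[\sigma,k,\ell,m,I]\) is finite and computable from these parameters (with \(k,\ell\le m\)), so we can take the maximum of \(f'\) over it. A second concern is that \cref{thm:answering-enumeration} hides a dependence on \(\sigma\) in the preprocessing, but \(\sigma\) is among the allowed parameters.
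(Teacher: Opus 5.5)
Your proposal is correct and follows essentially the same route as the paper: run the preprocessing of \cref{thm:answering-enumeration}\ref{item:answering} with \(\epsilon = \alpha\) for each \(t \in T[\sigma,k,\ell,m,I]\), then compute \(\err_S(t,\tv)\) with \(s\) constant-time queries per tuple \(\tv\). The only cosmetic difference is that the paper skips a separate sorting pass: it iterates over the terms in ascending order of \((\abs{\tx},\abs{t})\) and appends each triple to one of \(s+1\) lists indexed by \(s\cdot\err_S(t,\tv)\), which amounts to the bucket sort you describe.
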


\begin{proof}
  Let \(f_{\ref*{thm:answering-enumeration}}\) be the function from \cref{thm:answering-enumeration}.
  Moreover, for every \(i \in [0,s]\), let \(L_i\) be an empty list.

  For every counting term \(t(\tx, \ty) \in T[\sigma, k, \ell, m, I]\),
  in ascending lexicographic order on \((\abs{\tx}, \abs{t})\), we do the following.
  First, we run the preprocessing from \cref{thm:answering-enumeration} on \(t\), \(\A\), and \(\alpha\).
  This takes time \(f_{\ref*{thm:answering-enumeration}}(t, \sigma(t), \alpha) \cdot \abs{A}^{1+\alpha}\).
  Next, for every tuple \(\tv \in A^{\abs{\tx}}\) and every \(i \in [s]\),
  we compute \(\sem{t(\tv, \tw_i)}^\A\) based on the preprocessing above.
  This gives us the value
  \(\err_S(t, \tv) =
  \frac{1}{s} \cdot \abs{\setc{i \in [s]}{\sem{t(\tv, \tw_i)}^\A \neq \lambda_i}}\),
  and we append \(\bigl(t, \tv, \err_S(t, \tv)\bigr)\) to the list \(L_{s \cdot \err_S(t, \tv)}\).
  Over all tuples \(\tv \in A^{\abs{\tx}}\) and all \(i \in [s]\), this takes time
  \(f_{\ref*{thm:answering-enumeration}}(t, \sigma(t), \alpha) \cdot s \cdot \abs{A}^{\abs{\tx}}
  \leq f_{\ref*{thm:answering-enumeration}}(t, \sigma(t), \alpha) \cdot s \cdot \abs{A}^k\).

  In the end, we output the concatenation of the lists \(L_0, L_1, \dots, L_s\).
  All in all, since the set \(T[\sigma, k, \ell, m, I]\) is computable
  from \(\sigma\), \(m\), and \(I\), there is a function \(f\) such that the described algorithm
  runs in time \(f(\alpha, \sigma, I, m) \cdot s \cdot \abs{A}^{\max(k, 1+\alpha)}\).
\end{proof}

We are now ready to prove \cref{thm:pac-learn-locally-bounded-expansion}.

\PACLearnLocallyBoundedExpansion*

\begin{proof}
  Let \(\C\) be a graph class of effectively locally bounded expansion,
  and let \(f_{\textup{graph}}\) be the computable function
  from \cref{thm:graph-dimension} for \(\C\)
  such that for every signature \(\sigma\), every \(\cgFOC[\sigma]\) counting term \(t(\tx, \ty)\),
  and every \(\sigma\)-structure \(\A\) with \(G_\A \in \C\),
  the graph dimension of \(t(\tx, \ty)\) in \(\A\) is at most \(f_{\textup{graph}}(t)\).
  Let \(s_{\UC}\) be the computable function from \cref{lem:uniform-convergence-graph-dimension},
  let \(c \deff \bigabs{T[\sigma, k, \ell, m, I]} +
  \sum_{t \in T[\sigma, k, \ell, m, I]} f_{\textup{graph}}(t)\),
  and \(s \deff s_{\UC}(c, \epsilon, \delta)\).
  Then \(s\) is computable from \(\sigma\), \(k\), \(\ell\), \(m\), \(I\),
  \(\epsilon\), and \(\delta\).

  Our algorithm for \cref{thm:pac-learn-locally-bounded-expansion} draws a sequence \(S\)
  of \(s\) samples \((\tw_1, \lambda_1), \dots, (\tw_s, \lambda_s)\) \iid from \(\D\)
  using the oracle and then runs the algorithm from \cref{prop:erm-locally-bounded-expansion}
  on \(\A\), \(k\), \(\ell\), \(m\), \(\alpha\), \(I\), and \(S\).
  The running-time bound and the order of the output
  for \cref{thm:pac-learn-locally-bounded-expansion} directly follow from
  \cref{prop:erm-locally-bounded-expansion}.

  Let \(\Hypo \deff \bigcup_{t \in T[\sigma, k, \ell, m, I]} S^t_\A(A/A)\).
  Then, by \cref{lem:graph-dimension-union}, the graph dimension of \(\Hypo\) is at most \(c\).
  Hence, by \cref{lem:uniform-convergence-graph-dimension},
  with probability at least \(1-\delta\) over the oracle calls to \(\D\),
  it holds that \(\abs{\err_\D(h) - \err_S(h)} \leq \epsilon\)
  for all \(h \in \Hypo\),
  so \(\abs{\err_\D(t, \tv) - \err_S(t, \tv)} \leq \epsilon\)
  for all \(t(\tx, \ty) \in T[\sigma, k, \ell, m, I]\)
  and \(\tv \in A^{\abs{\tx}}\).
  By \cref{prop:erm-locally-bounded-expansion},
  the triples that we output are of the form \(\bigl(t, \tv, \err_S(t, \tv)\bigr)\).
  Thus, the error bound from \cref{thm:pac-learn-locally-bounded-expansion} is also met.
\end{proof}

\subsection{Proof of Theorem~\ref{thm:pac-learn-bounded-degree}}

For the proof of \cref{thm:pac-learn-bounded-degree},
we use the following two results from~\cite{KuskeSchweikardt_FOCN}.

\begin{theorem}[{\cite[Theorem~3.2]{KuskeSchweikardt_FOCN}}]
  \label{thm:hanf-normal-form}
  For every \(d \in \N\) and every \(\FOC\) formula \(\phi(\tx)\),
  there is an \(\FOC[\sigma(\phi)]\) formula \(\phi'(\tx)\)
  that is computable from \(d\) and \(\phi\) such that
  \(\phi\) and \(\phi'\) are \(d\)-equivalent and
  \(\phi'(\tx)\) is a Boolean combination of \(\FO[\sigma(\phi)]\) formulas
  and \(\FOC[\sigma(\phi)]\) sentences.
\end{theorem}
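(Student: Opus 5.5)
The plan is to prove the statement by simultaneous structural induction over formulas and counting terms, strengthening it to the following invariant.

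For a counting term \(t(\tx)\), we show that there are finitely many \(\FO[\sigma]\) formulas \(\chi_1(\tx), \dots, \chi_n(\tx)\) and ground terms \(g_1, \dots, g_n\) with the following property. On every structure of degree at most \(d\), the \(\chi_i\) are mutually exclusive and exhaustive, and \(t\) is \(d\)-equivalent to \(\sum_i [\chi_i]\cdot g_i\). Here \([\chi_i]\) can be written as \(\FOCCount{()}{\chi_i}\).

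For formulas, the invariant is exactly the statement of the theorem.

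Throughout, we fix the radius \(r\) needed for the Gaifman/Hanf locality of the formulas at hand. For structures of degree \(\le d\), there are only finitely many isomorphism types \(\tau\) of \(r\)-neighbourhoods of tuples of a given length. Each such type is definable by an \(\FO\) "sphere formula" \(\mathrm{sph}_\tau\), computable from \(d\), \(r\), \(\tau\).

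The Boolean and \(\FO\) cases are routine. Atoms are already \(\FO\), and negation and disjunction preserve the normal form. For \(\exists y\,\psi\), we write \(\psi\) in disjunctive form over the finitely many truth assignments to its \(\FOC\) sentences. This gives \(\psi \equiv \bigvee_j (\theta_j \land \alpha_j(\tx,y))\) with \(\theta_j\) sentences and \(\alpha_j\) in \(\FO\), and hence \(\exists y\,\psi \equiv \bigvee_j(\theta_j \land \exists y\,\alpha_j)\).

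For counting terms, integers are trivial. For \(t_1+t_2\) and \(t_1\cdot t_2\), we refine the two partitions to the common refinement \(\chi^1_i\land\chi^2_j\) and add or multiply the corresponding ground terms. For \(\Pred(t_1,\dots,t_m)\), we take the common refinement \(\{\chi_i\}\) of the partitions of all \(t_p\). Then \(\Pred(t_1,\dots,t_m)\) is \(d\)-equivalent to \(\bigvee_i \bigl(\chi_i \land \Pred(g^1_i,\dots,g^m_i)\bigr)\), where each \(\Pred(g^1_i,\dots,g^m_i)\) is an \(\FOC\) sentence.

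The main case, and the expected obstacle, is the \#-term \(\FOCCount{\tz}{\psi(\tx,\tz)}\). By induction and the same case split as for \(\exists y\), it suffices to treat \(\FOCCount{\tz}{\alpha(\tx,\tz)}\) for \(\alpha \in \FO\); the result is then multiplied by \([\theta_j]\), which is a ground term. By Hanf locality on degree-\(\le d\) structures, whether \(\alpha(\tv,\tc)\) holds depends only on the \(r\)-neighbourhood type of \(\tv\tc\), up to finitely many Hanf sentences, which we again absorb into ground factors. Hence the count splits as a sum over pairs of a type \(\tau\) of \(\tx\tz\) and a "distance pattern" recording which entries of \(\tz\) lie within distance \(2r+1\) of \(\tx\) or of each other. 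We fix one such pair.

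The entries of \(\tz\) are split into near and far entries. The near entries range over a set of size bounded in terms of \(d\), \(r\), \(|\tz|\), so their contribution is a bounded number determined by the type of \(\tv\). This is \(\FO\)-expressible by case distinction over the finitely many types of \(\tx\), each carrying an integer constant. The far entries are grouped into clusters that are pairwise far apart. For each cluster, the number of choices realising a given local type is at first approximated by the global count \(\FOCCount{\tz'}{\mathrm{sph}_{\tau'}(\tz')}\), which is a ground term. Taking the product of these counts overcounts: it includes configurations where a far cluster comes close to \(\tx\) or to another cluster. These are removed by inclusion–exclusion over the coarser distance patterns, by induction on the number of components of the pattern graph. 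This is exactly the component-merging argument used in the proof of \cref{lem:fo-dist-counting} and \cref{lem:fo-dist-rank}, where the subtracted terms correspond to patterns \(G'\) with fewer components. The correction terms are again either bounded local counts, determined by the type of \(\tx\), or ground terms.

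Collecting everything yields the invariant form \(\sum_i[\chi_i]\cdot g_i\) with \(\chi_i\) ranging over sphere formulas of \(\tx\). Every step is effective in \(d\) and \(\phi\), which gives computability of \(\phi'\).
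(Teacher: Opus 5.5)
The paper gives no proof of this statement (it is imported from \cite{KuskeSchweikardt_FOCN}). Your argument is a correct reconstruction of essentially the standard proof: a structural induction that keeps counting terms with free variables in the form $\sum_i [\chi_i]\cdot g_i$ by case distinction on neighbourhood types, with the key $\#$-term case reduced, for a fixed sphere type and distance pattern, to ground terms by the same component-merging inclusion--exclusion that the paper itself uses in \cref{lem:fo-dist-counting}.
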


\begin{theorem}[{\cite[Corollary~5.7]{KuskeSchweikardt_FOCN}}]
  \label{thm:foc-enumeration}
  There are a computable function \(f\) and an algorithm that does the following.
  Given a \(\sigma\)-structure \(\A\) for some signature \(\sigma\),
  given \(d \in \N\) such that \(G_\A\) has degree at most \(d\),
  and given an \(\FOC[\sigma]\) formula \(\phi(\tx)\),
  after preprocessing in time \(f(\abs{\phi}, d) \cdot \abs{A}\),
  the algorithm enumerates all tuples \(\tv \in A^{\abs{\tx}}\)
  such that \(\A \models \phi[\tv]\) with \(f(\abs{\phi}, d)\) delay.
\end{theorem}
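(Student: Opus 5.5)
The plan is to reduce the \(\FOC\) enumeration problem on structures of degree at most \(d\) to plain \(\FO\) enumeration. The reduction evaluates all genuinely numerical parts of \(\phi\) once, during preprocessing. The central tool is the Hanf-type normal form of \cref{thm:hanf-normal-form}. It lets us replace \(\phi(\tx)\) by a \(d\)-equivalent formula \(\phi'(\tx)\) that is a Boolean combination of \(\FO[\sigma]\) formulas and \(\FOC[\sigma]\) \emph{sentences}. Since \(\A\) has degree at most \(d\), \(\A \models \phi[\tv]\) iff \(\A \models \phi'[\tv]\) for all \(\tv\). The first preprocessing step is therefore to compute \(\phi'\) from \(\phi\) and \(d\); this takes time depending only on \(\abs{\phi}\) and \(d\).

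Next, decide in time \(f(\abs{\phi},d)\cdot\abs{A}\) the truth value of every \(\FOC\) sentence \(\chi\) occurring in \(\phi'\). Replace each such \(\chi\) by \(\top\) or \(\bot\); the result is an \(\FO[\sigma]\) formula \(\phi''(\tx)\) equivalent to \(\phi\) on \(\A\). The final step runs a known \(\FO\) enumeration algorithm for bounded-degree structures (Durand--Grandjean, Kazana--Segoufin) on \(\A\) and \(\phi''\). That algorithm has linear preprocessing and constant delay, with constants depending on \(\abs{\phi''}\) and \(d\), hence on \(\abs{\phi}\) and \(d\).

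Evaluating the sentences is the main obstacle; I would handle it by induction on the nesting depth of counting and predicate constructs. A sentence is a Boolean combination of \(\FO\)-style quantifications and formulas \(\Pred(t_1,\dots,t_m)\) with ground terms \(t_i\). Each ground term is a polynomial in ground \(\#\)-terms \(\FOCCount{\ty}{\psi(\ty)}\), where \(\psi\) has smaller nesting depth. For such a \(\#\)-term, apply \cref{thm:hanf-normal-form} again to \(\psi\). By induction, evaluate the sentences inside the normal form of \(\psi\) in linear time. This reduces \(\psi\) to an \(\FO\) formula \(\psi''(\ty)\), whose number of satisfying tuples must then be computed in time \(f\cdot\abs{A}\).

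That counting step is the key ingredient. I would use Hanf locality on bounded-degree structures: \(\psi''\) is equivalent to a Boolean combination of statements about isomorphism types of \(r\)-neighbourhoods of the tuple \(\ty\). We then split according to which components of \(\ty\) lie close together, and count tuples by an inclusion--exclusion argument over the far-apart parts. This is analogous to the argument in \cref{lem:fo-dist-counting}, but with sphere types instead of neighbourhood covers. Since each \(r\)-ball has size bounded in \(d\) and \(r\), all sphere-type counts can be computed in linear time. Once every \(\#\)-term has been evaluated to an integer, each \(\Pred\)-atom is decided by an oracle call. Remaining quantifiers over sentences are then handled by \(\FO\) model checking on bounded degree, which also runs in linear time.

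Combining the levels of the induction gives the preprocessing bound \(f(\abs{\phi},d)\cdot\abs{A}\). The enumeration phase inherits the constant delay of the \(\FO\) enumeration algorithm.
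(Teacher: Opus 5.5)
Your overall architecture is the right one: normal form, evaluate the sentences on \(\A\) once during preprocessing, then run a constant-delay \(\FO\) enumeration algorithm for bounded degree on the resulting formula. The paper gives no proof of its own and only cites Kuske and Schweikardt, whose argument follows this outline. The gap is in the step you yourself flag as the main obstacle, the linear-time evaluation of the \(\FOC\) sentences, and it has two parts.

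First, an \(\FOC\) sentence is in general not a Boolean combination of quantifications and atoms \(\Pred(t_1,\dots,t_m)\) with \emph{ground} terms. \(\Pred\)-atoms may contain variables bound further out, as in \(\forall x\,\forall x'\,\Pred_=\bigl(\FOCCount{(y)}{E(x,y)},\FOCCount{(y)}{E(x',y)}\bigr)\). There no \(\#\)-term is ground, so nothing can be evaluated to an integer. Nor does `\(\FO\) model checking for the remaining quantifiers' apply, because the quantified body is not first-order.

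Second, the recursion `apply \cref{thm:hanf-normal-form} to \(\psi\) and evaluate the sentences of its normal form by induction' is not well founded. As stated, \cref{thm:hanf-normal-form} says nothing about the sentences it outputs; for a sentence \(\chi\) it is already satisfied by taking \(\phi'\) to be \(\chi\) itself. So nothing guarantees that the output sentences have smaller nesting depth (or any other smaller measure) than \(\psi\). The same circularity appears if you try to repair the first problem by normalising the quantified body.

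What is missing is the actual strength of the Kuske--Schweikardt normal form. Its first-order part is a Boolean combination of sphere formulas describing the \(r\)-neighbourhood of \(\tx\). Its sentences speak only about the numbers of elements of \(\A\) having a given single-centre \(r\)-neighbourhood type, i.e.\ they are built from basic counting terms \(\FOCCount{(y)}{\mathrm{sph}_\tau(y)}\). Such sentences need no recursion: compute the \(r\)-neighbourhood type of every element (constant time each, since balls have size bounded in \(d\) and \(r\)), tally the counts, evaluate the terms, and make one oracle call per predicate atom. All of this takes time \(f(\abs{\phi},d)\cdot\abs{A}\).

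If you want to prove this rather than cite it, your `counting step' must be sharpened into precisely the lemma behind that normal form. The lemma: for \(\FO\) formulas \(\psi(\tx,\ty)\) on structures of degree at most \(d\), and a suitable \(r\) computable from \(\psi\), the value \(\sem{\FOCCount{\ty}{\psi(\tv,\ty)}}^\A\) is obtained from the \(r\)-sphere type of \(\tv\) and the numbers of elements of \(\A\) of each \(r\)-sphere type, by an expression computable from \(\psi\) and \(d\). With this lemma, an innermost \(\Pred\)-atom, even one with free variables, is equivalent on \(\A\) to the disjunction of the sphere formulas \(\mathrm{sph}_\tau(\tx)\) over the finitely many types \(\tau\) for which the resulting numerical condition holds in \(\A\). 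That condition can be decided in preprocessing, and the replacement strictly decreases the number of \(\Pred\)-atoms. So this bottom-up induction terminates and leaves an \(\FO\) formula, to which your final enumeration step applies.
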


We can now prove \cref{thm:pac-learn-bounded-degree}.

\label{subsec:appendix-pac-learn-bounded-degree}

\PACLearnBoundedDegree*

\begin{proof}
  Let \(g_{\ref*{thm:ladder-index-term}}\) and \(f_{\ref*{thm:ladder-index-term}}\)
  be the computable functions from \cref{thm:ladder-index-term},
  and let \(f_{\ref*{thm:fo-enumeration}}\) be the computable function
  from \cref{thm:fo-enumeration}.

  Let \(\sigma\) be a signature, let \(d \in \N\),
  and let \(\A\) be a \(\sigma\)-structure such that \(G_\A\) has degree at most \(d\).
  Then, for every \(r \in \N\) and every \(v \in A\),
  we have \(\bigabs{\neighbA{r+1}{v}} < (d+1)^{r+2} + 1 \ffed p(r,d)\),
  so \(G_\A\) excludes \(K_{p(r,d)}\) as a depth-\(r\) minor.
  Hence, by \cref{thm:ladder-index-term},
  analogously to the proof of \cref{thm:graph-dimension},
  the graph dimension of a \(\cgFOC[\sigma]\) counting term \(t\) in \(\A\) is at most
  \(f_{\textup{graph}}(t,d) \deff f_{\ref*{thm:ladder-index-term}}
  \bigl(t, p\bigl(g_{\ref*{thm:ladder-index-term}}(t), d\bigr) \bigr)\),
  which is computable from \(t\) and \(d\).

  Let \(T \deff T_{\FOC}[\sigma, k, \ell, m, I]\),
  and let \(t(\tx, \ty) \in T\).
  Then \(t\) is a polynomial of integers from \(I\)
  and of \#-terms \(u\) of the form \(\FOCCount{\tz}{\phi}\)
  for an \(\FOC[\sigma]\) formula \(\phi\).
  For every such \#-term \(u\), we apply \cref{thm:hanf-normal-form} to \(\phi\) and \(d\),
  and we obtain an \(\FOC[\sigma]\) formula \(\phi'\)
  that is a Boolean combination of \(\FO[\sigma]\) formulas and \(\FOC[\sigma]\) sentences.
  Let \(\Phi\) be the set of all \(\FO[\sigma]\) formulas obtained from \(\phi'\)
  by replacing the \(\FOC\) sentences by \(\top()\) or \(\bot()\),
  let \(T_u\) be the set of all \(\cgFOC[\sigma]\) counting terms
  of the form \(\FOCCount{\tz}{\psi}\) for a formula \(\psi \in \Phi\),
  and let \(T'_t\) be the set of all \(\cgFOC[\sigma]\) counting terms obtained from \(t\)
  by replacing every \#-term \(u\) in \(t\) by a counting term from \(T_u\).
  Then \(T'_t\) can be computed from \(t\) and \(d\).
  Furthermore, for every \(\sigma\)-structure \(\A\) of degree at most \(d\),
  there is a counting term \(t' \in T'_t\) that is equivalent to \(t\) on \(\A\).
Let \(T' \deff \bigcup_{t \in T} T'_t\).
  Then \(T' \subseteq \cgFOC[\sigma]\) is computable from
  \(\sigma\), \(k\), \(\ell\), \(m\), \(I\), and \(d\).

  Let \(s_{\UC}\) be the computable function from \cref{lem:uniform-convergence-graph-dimension},
  let \(c \deff \abs{T'} + \sum_{t' \in T'} f_{\textup{graph}}(t', d)\),
  and \(s \deff s_{\UC}(c, \epsilon, \delta)\).
  Then \(s\) is computable from \(\sigma\), \(k\), \(\ell\), \(m\), \(I\), \(d\),
  \(\epsilon\), and \(\delta\).

  Our algorithm for \cref{thm:pac-learn-bounded-degree} computes \(T\) and \(s\).
  Next, it draws a sequence \(S = (\tw_1, \lambda_1), \dots, (\tw_s, \lambda_s)\)
  from \(\D\) using the oracle.

  Let \(\sigma' \deff \sigma \uplus \setc{R_{i,j}}{i \in [s], j \in [\ell]}\)
  for fresh unary relation symbols \(R_{i,j}\),
  and let \(\A_S\) be the \(\sigma'\)-expansion of \(\A\)
  with \(R_{i,j}(\A_S) \deff \set{(\tw_i)_j}\).

  For every counting term \(t(\tx, \ty) \in T\) and every subsequence \(S'\) of \(S\),
  we define an \(\FOC[\sigma']\) formula \(\phi_{t, S'}(\tx)\)
  such that, for every tuple \(\tv \in A^{\abs{\tx}}\),
  we have \(\A_S \models \phi_{t, S'}[\tv]\)
  if and only if \(\sem{t(\tv, \tw)}^\A = \lambda\) for all \((\tw, \lambda) \in S'\)
  and \(\sem{t(\tv, \tw)}^\A \neq \lambda\) for all \((\tw, \lambda) \in S \setminus S'\).
  For this, we let
  \[\phi_{t, S'}(\tx) \deff \forall y_1 \dots \forall y_\ell \Bigl(
    \mkern-5mu\Land_{(\tw_i, \lambda_i) \in S'}\mkern-5mu \psi_{t, i, +}(\tx, \ty)
    \land
    \mkern-18mu\Land_{(\tw_i, \lambda_i) \in S \setminus S'}\mkern-5mu \psi_{t, i, -}(\tx, \ty)
  \Bigr)\]
  with
  \[\psi_{t, i, +}(\tx, \ty) \deff \bigl(\Land_{j \in [\ell]} R_{i,j}(y_j)\bigr)
  \rightarrow \Pred_=\bigl(t(\tx, \ty), \lambda_i\bigr)\]
  and
  \[\psi_{t, i, -}(\tx, \ty) \deff \bigl(\Land_{j \in [\ell]} R_{i,j}(y_j)\bigr)
  \rightarrow \neg \Pred_=\bigl(t(\tx, \ty), \lambda_i\bigr).\]
  It can easily be verified that \(\phi_{t, S'}\) satisfies the properties above,
  and there is a bound for the length \(\smallabs{\phi_{t, S'}}\) of \(\phi_{t, S'}\)
  that can be computed from \(t\), \(s\), and \(\ell\),
  which are computable from \(\sigma\), \(k\), \(\ell\), \(m\), \(I\), \(d\),
  \(\epsilon\), and \(\delta\).

  In our algorithm for \cref{thm:pac-learn-bounded-degree},
  for every \(t(\tx, \ty) \in T\) and every subsequence \(S'\) of \(S\),
  we run the preprocessing from the \(\FOC\)-enumeration result \cref{thm:foc-enumeration}
  on \(\A_S\) and \(\phi_{t, S'}\)
  in time \(f_{\ref*{thm:foc-enumeration}}\bigl(\smallabs{\phi_{t, S'}}, d\bigr) \cdot \abs{A}\).
  This finishes the preprocessing of our algorithm,
  and there is a computable function \(f_1\)
  such that the preprocessing takes time at most
  \(f_1(\epsilon, \delta, \sigma, I, m, d) \cdot \abs{A}\).

  In the enumeration phase, we iterate over all subsequences \(S'\) of \(S\)
  with decreasing length, starting with the longest subsequence \(S' = S\).
  We iterate over all counting terms \(t(\tx, \ty) \in T\),
  in ascending lexicographic order on \((\abs{\tx}, \abs{t})\).
  For every such counting term,
  based on the preprocessing above,
  we iterate over all tuples \(\tv \in A^{\abs{\tx}}\)
  that satisfy \(\A_S \models \phi_{t, S'}[\tv]\),
  and we output \((t, \tv, 1 - \abs{S'} / \abs{S})\) for every such tuple.
  Because of the preprocessing above, for every counting term \(t\) and subsequence \(S'\),
  the enumeration has
  \(f_{\ref*{thm:foc-enumeration}}\bigl(\smallabs{\phi_{t, S'}}, d\bigr)\) delay.
  Since an upper bound on the number of counting terms \(t \in T\)
  and the number of subsequences \(S'\) of \(S\) can be computed from
  \(\epsilon\), \(\delta\), \(\sigma\), \(I\), \(m\), and \(d\),
  there is a computable function \(f_2\) such that the enumeration overall
  has at most \(f_2(\epsilon, \delta, \sigma, I, m, d)\) delay.
  Setting \(f(\epsilon, \delta, \sigma, I, m, d)
  \deff \max\bigl(f_1(\epsilon, \delta, \sigma, I, m, d),
  f_2(\epsilon, \delta, \sigma, I, m, d)\bigr)\) finishes the proof of the running-time bounds.

  Lastly, the set
  \(\Hypo \deff \bigcup_{t \in T} S^t_\A(A/A) \subseteq \bigcup_{t \in T'} S^t_\A(A/A)\)
  has graph dimension at most \(c\) by \cref{lem:graph-dimension-union}.
  Thus, analogously to the proof of \cref{thm:pac-learn-locally-bounded-expansion},
  the error bounds follow from \cref{lem:uniform-convergence-graph-dimension}.
\end{proof}

\subsection{Proof of Theorem~\ref{thm:learning-formulas-pac-enumeration}}
\label{subsec:appendix-learning-formulas}

For a signature \(\sigma\), numbers \(k \in \N\), \(\ell, m \in \Npos\),
and a finite set of integers \(I \subset \Z\),
let \(\Phi[\sigma, k, \ell, m, I]\) be the set of all \(\cgFOC[\sigma]\) formulas
\(\phi(x_1, \dots, x_{k'}, y_1, \dots, y_\ell)\)
with \(k' \leq k\), \(\abs{\phi} \leq m\)
and such that \(\phi\) only uses variables from
\(\set{x_1, \dots, x_k, y_1, \dots, y_\ell, z_1, \dots, z_m}\)
and integers from \(I\).
Analogously to the set \(T[\sigma, k, \ell, m, I]\),
the set \(\Phi[\sigma, k, \ell, m, I]\) is finite and computable from
\(\sigma\), \(k\), \(\ell\), \(m\), and  \(I\).

\learningFormulasPacEnumeration*

\begin{proof}
  Let \(\C\) be a graph class of effectively locally bounded expansion,
  and let \(\sigma\) be a signature.
  By \cref{cor:VCdimension},
  there is a computable function \(f_{\VC} \colon \cgFOC \to \N\) such that
  for every \(\cgFOC[\sigma]\) formula \(\phi\)
  and every \(\sigma\)-structure \(\A\) with \(G_\A \in \C\),
  the VC dimension of \(\phi\) in \(\A\) is a most \(f_{\VC}(\phi)\).
  Hence, given \(k\), \(\ell\), \(m\), and \(I\),
  we can compute \(\Phi \deff \Phi[\sigma, k, \ell, m, I]\)
  and \(c \deff \abs{\Phi} + \sum_{\phi \in \Phi} f_{\VC}(\phi)\).
  Then, for every \(\sigma\)-structure \(\A\) with \(G_\A \in \C\),
  by \cref{lem:vc-dimension-union},
  the set \(\Hypo_\A \deff \bigcup_{\phi \in \Phi} S^\phi_\A(A/A)\) has VC dimension at most \(c\).

  Let \(s_{\UC}\) be the computable function from \cref{lem:uniform-convergence-vc},
  let \(s \deff s_{\UC}(c, \epsilon, \delta)\),
  and let \(S = (\tw_1, \lambda_1), \dots, (\tw_s, \lambda_s)\)
  be a sequence drawn \iid from \(\D\).
  Then, by \cref{lem:uniform-convergence-vc},
  with probability at least \(1-\delta\),
  it holds that \(\abs{\err_\D(\phi, \tv) - \err_S(\phi, \tv)} \leq \epsilon\)
  for every formula \(\phi(\tx, \ty) \in \Phi\) and every tuple \(\tv \in A^k\),
  where
  \[\err_S(\phi, \tv) \deff
  {\textstyle \frac{1}{s}} \cdot \abs{\setc{i \in [s]}{\sem{\phi(\tv, \tw_i)}^\A \neq \lambda_i}}.\]

  Let \(\sigma' \deff \sigma \uplus \setc{R_{i,j}}{i \in [s], j \in [\ell]}\)
  for fresh unary relation symbols \(R_{i,j}\).
  We let \(\A_S\) be the \(\sigma'\)-expansion of \(\A\)
  with \(R_{i,j}(\A_S) \deff \set{(\tw_i)_j}\).
For every subsequence \(S'\) of \(S\) and every formula \(\phi(\tx, \ty) \in \Phi\), we set
  \begin{align*}
    \phi_{S'}(\tx)
    &\deff \forall y_1 \dots \forall y_\ell
    \Bigl(\mkern-5mu\Land_{(\tw_i, \lambda_i) \in S'}\mkern-5mu \psi_{i, +}(\tx, \ty)
    \land\mkern-25mu \Land_{(\tw_i, \lambda_i) \in S \setminus S'}\mkern-5mu \psi_{i, -}(\tx, \ty)\Bigr)\\
    \intertext{with}
    \psi_{i, +}(\tx, \ty)
    &\deff \bigl(\Land_{j \in [\ell]} R_{i,j}(y_j)\bigr)
    \rightarrow \phi_{\lambda_i}(\tx, \ty),\\
    \psi_{i, -}(\tx, \ty)
    &\deff \bigl(\Land_{j \in [\ell]} R_{i,j}(y_j)\bigr)
    \rightarrow \phi_{1-\lambda_i}(\tx, \ty),\\
    \phi_0(\tx, \ty)
    &\deff \neg \phi(\tx, \ty), \quad\text{and}\quad
    \phi_1(\tx, \ty)
    \deff \phi(\tx, \ty).
  \end{align*}
  Then, for every \(\tv \in A^k\), it holds that
  \(\A_S \models \phi_{S'}[\tv]\)
  if and only if \(\sem{\phi(\tv, \tw)}^\A = \lambda\) for all \((\tw, \lambda)\) in \(S'\)
  and \(\sem{\phi(\tv, \tw)}^\A \neq \lambda\) for all \((\tw, \lambda)\) in \(S \setminus S'\).
  Hence, for all \(\tv \in A^k\) with \(\A_S \models \phi_{S'}[\tv]\),
  we have \(\err_S(\phi, \tv) = 1- \abs{S'}/\abs{S}\).

  In the algorithm for \cref{thm:learning-formulas-pac-enumeration},
  the preprocessing works as follows.
  First, we compute \(\Phi\), \(c\), and \(s \deff s_{\UC}(c, \epsilon, \delta)\) as described above.
  Next, we draw a sequence \(S = (\tw_1, \lambda_1), \dots, (\tw_s, \lambda_s)\)
  of length \(s\) from \(\D\) using the oracle,
  and we compute the structure \(\A_S\).
Finally, for every subsequence \(S'\) of \(S\), we run the preprocessing
  from \cref{thm:answering-enumeration} on \(\A_S\) and \(\phi_{S'}\).
  This allows us to enumerate all tuples \(\tv\) that satisfy \(\A_S \models \phi_{S'}[\tv]\)
  with constant delay.
  Since \(s\) and \(\Phi\) only depend on \(\sigma\), \(k\), \(\ell\), \(m\), \(I\),
  \(\epsilon\), and \(\delta\),
  using the running-time bounds from \cref{thm:answering-enumeration},
  there is a function \(f_1\) such that the preprocessing in total takes time at most
  \(f_1(\alpha, \epsilon, \delta, \sigma, I, m) \cdot \abs{A}^{1+\alpha}\).

  For the enumeration, we iterate over all subsequences \(S'\) of \(S\) with decreasing length,
  starting with the longest subsequence \(S' = S\).
  For every subsequence \(S'\), we iterate over all formulas \(\phi(\tx, \ty) \in \Phi\),
  in ascending lexicographic order on \((\abs{\tx}, \abs{\phi})\).
  For every such formula, using the algorithm from \cref{thm:answering-enumeration}\ref{item:enumeration}
  based on our preprocessing above,
  we iterate over all tuples \(\tv\) that satisfy \(\A_S \models \phi_{S'}[\tv]\),
  and we output \((\phi, \tv, 1 - \abs{S'}/\abs{S})\) for every such tuple.
  Again, using the running-time bounds from \cref{thm:answering-enumeration},
  there is a function \(f_2\) such that we enumerate the tuples with delay at most
  \(f_2(\alpha, \epsilon, \delta, \sigma, I, m)\).
  The statement of \cref{thm:learning-formulas-pac-enumeration} then follows with
  \(f(\alpha, \epsilon, \delta, \sigma, I, m) \deff \max\bigl(
    f_1(\alpha, \epsilon, \delta, \sigma, I, m),
    f_2(\alpha, \epsilon, \delta, \sigma, I, m)
  \bigr)\).
\end{proof}

\end{document}